\newcommand\fakeslant[1]{%
  \pdfliteral{1 0 0.167 1 0 0 cm}#1\pdfliteral{1 0 -0.167 1 0 0 cm}}
\newcommand\mathbbmsl[1]{\mathds{\fakeslant{#1}}}
\setlist{noitemsep}
\newtheorem{thm}{Theorem}[section]
\newtheorem{lem}[thm]{Lemma}
\newcommand{\HH}{\mathbbmsl{H}}
\newcommand{\HHdagger}{{\mathbbmsl{H}\hspace*{0.12em}}^\dagger}
\newcommand{\HHprime}{{\mathbbmsl{H}\hspace*{0.08em}}'}
\newcommand{\HHstar}{{\mathbbmsl{H}\hspace*{0.1em}}^*}
\newcommand{\BB}{\mathbbmsl{B}}
\newcommand{\TT}{\mathbbmsl{T}}
\newcommand{\GG}{\mathbbmsl{G}}
\newcommand{\CC}{\mathbbmsl{C}}
\newcommand{\FF}{\mathbbmsl{F}}
\newcommand{\KK}{\mathbbmsl{K}}
\newcommand{\PP}{\mathbbmsl{P}}
\newcommand{\QQ}{\mathbbmsl{Q}}
\newcommand{\RR}{\mathbbmsl{R}}
\newcommand{\UU}{\mathbbmsl{U}}
\newcommand{\VV}{\mathbbmsl{V}}
\newcommand{\WW}{\mathbbmsl{W}}
\newcommand{\LL}{\mathbbmsl{L}}
\renewcommand{\SS}{\mathbbmsl{S}}
\newcommand{\Xomit}[1]{}
\title{Three-in-a-Tree in Near Linear Time\thanks{An extended
    abstract to appear in {\em Proceedings of the 52nd Annual ACM
      Symposium on Theory of Computing}, 2020.}}
\author{Kai-Yuan Lai\thanks{Department of Computer Science and
    Information Engineering, National Taiwan University. A preliminary
    version of the paper appeared as the master's thesis of this
    author~\cite{Lai2018}. {baaldiablo3@gmail.com}.}
\and
Hsueh-I Lu\thanks{Corresponding author. Department of Computer Science
  and Information Engineering, National Taiwan University.  Address: 1
  Roosevelt Road, Section 4, Taipei 106, Taiwan, ROC. Research of this
  author is supported 
  by MOST grants
107--2221--E--002--032--MY3 and
104--2221--E--002--044--MY3.
{hil@csie.ntu.edu.tw}.
}
\and
  Mikkel Thorup\thanks{
  Department of Computer Science,
  University of Copenhagen, Denmark. 
  {mikkel2thorup@gmail.com}.
 Research of this author is supported by VILLUM Investigator Grant 16582, Basic Algorithms Research Copenhagen (BARC).  
} 
  }
\date{}
\begin{document}
\begin{CJK*}{UTF8}{bkai}
\maketitle
\begin{abstract}
The three-in-a-tree problem is to determine if a simple undirected
graph contains an induced subgraph which is a tree connecting three
given vertices. Based on a beautiful characterization that is proved
in more than twenty pages, Chudnovsky and Seymour~[{\em
    Combinatorica}~2010] gave the previously only known
polynomial-time algorithm, running in $O(mn^2)$ time, to solve the
three-in-a-tree problem on an $n$-vertex $m$-edge graph. Their
three-in-a-tree algorithm has become a critical subroutine in several
state-of-the-art graph recognition and detection algorithms.

In this paper we solve the three-in-a-tree problem in $\tilde{O}(m)$
time, leading to improved algorithms for recognizing perfect graphs
and detecting thetas, pyramids, beetles, and odd and even holes.  Our
result is based on a new and more constructive characterization than
that of Chudnovsky and Seymour. Our new characterization is stronger
than the original, and our proof implies a new simpler proof for the
original characterization. The improved characterization gains the
first factor $n$ in speed. The remaining improvement is based on
dynamic graph algorithms.
\end{abstract}


\section{Introduction}
\label{section:section1}

The graphs considered in this paper are all assumed to be undirected.
Also, it is convenient to think of them as connected.  Let $G$ be such
a graph with $n$ vertices and $m$ edges.  An {\em induced} subgraph of
$G$ is a subgraph $H$ that contains all edges from $G$ between
vertices in $H$.  For the {\em three-in-a-tree} problem, we are given
three specific terminals in $G$, and we want to decide if $G$ has an
induced tree $T$, that is, a tree $T$ which is an induced subgraph of
$G$, containing these terminals.  Chudnovsky and
Seymour~\cite{ChudnovskyS10} gave the formerly only known
polynomial-time algorithm, running in $O(mn^2)$ time, for the
three-in-a-tree problem. In this paper, we reduce the complexity of
three-in-a-tree from $O(mn^2)$ to $O(m\log^2 n)=\tilde O(m)$ time.

\begin{thm}
\label{theorem:theorem1.1}
It takes $O(m\log^2 n)$ time to solve the three-in-a-tree problem on
an $n$-vertex $m$-edge simple graph.
\end{thm}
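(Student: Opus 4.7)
The plan is to attack Theorem~\ref{theorem:theorem1.1} in two layers. First, I want to replace the Chudnovsky--Seymour characterization by a more algorithmic one that can be certified (or ruled out) by a short sequence of local graph reductions. Second, I want to implement these reductions in amortized polylogarithmic time per step using dynamic graph data structures.

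For the characterization, let the three terminals be $a,b,c$. Any induced tree $T$ containing them either has a Steiner vertex $s\in T$ with three internally disjoint induced paths to $a,b,c$, or one terminal lies on the induced path between the other two. I would aim for a reduction-based characterization: a short list of operations---vertex deletions (for vertices provably in no feasible induced tree), contractions of degree-2 vertices on forced induced subpaths, and operations around small separators of the terminal set---each of which is sound in the sense of preserving the answer. The goal is to prove that exhaustive application leaves a canonical graph on $O(1)$ vertices from which the answer is read off directly. Soundness comes from local exchange arguments on any hypothetical induced tree. The deeper direction is completeness: whenever the canonical graph is of the ``yes'' form, reconstruct an induced tree through $a,b,c$. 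This direction is the main combinatorial content and, as promised by the abstract, should imply the Chudnovsky--Seymour characterization as a corollary.

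Algorithmically, I would maintain the current reduced graph together with auxiliary structure---two-edge-connected components, small cuts around the terminal set, and degree information---using dynamic trees (link-cut or top trees) and a dynamic connectivity/2-edge-connectivity structure. The hope is that each reduction changes only $O(\log n)$ pieces of auxiliary data and that each such change either immediately triggers another reduction or permanently removes a vertex or edge; summing over the at most $O(m)$ removals and $O(\log n)$ cost per removal then gives the target $O(m\log^2 n)$ bound. Establishing that the reductions are confluent (so they may be scheduled opportunistically) would simplify the amortized analysis, and I would try to prove it as an intermediate lemma.

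The principal obstacle, I expect, is the completeness half of the characterization: proving that a fully reduced graph of the chosen canonical shape genuinely admits an induced tree through $a,b,c$, without enumerating the hard configurations that made the Chudnovsky--Seymour proof long. I would try to sidestep their case analysis by making the reductions strong enough to eliminate the obstructing structures (long induced ``ladders'' between pairs of terminals, theta-like configurations around a candidate Steiner vertex) automatically rather than having to recognize them. A secondary obstacle is designing the reductions so that their effect on the dynamic data structures is truly local; this may require a potential function tied to reduction depth rather than vertex count, to avoid a chain of reductions cascading too far on a single update.
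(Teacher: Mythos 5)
Your plan and the paper's actual proof diverge at the very foundation. The paper does not reduce or kernelize the input graph: it does the opposite. It maintains a growing vertex subset $X\subseteq V(G)$ together with a structural witness over $X$ (an ``$X$-web'' $\HH$, a net whose arcs are all simple or \emph{flexible}, plus its ``aiding net'' $\HHdagger$ obtained by merging split components via SPQR-decomposition), and in each round either finds a sapling or extends $\HH$ to a web over $X\cup Y$ for a minimal untamed $Y$. The characterization you would need to prove---a constant-size canonical form after exhaustive local reductions---is not what is available here and is probably not attainable: the ``no''-certificate for three-in-a-tree in both Chudnovsky--Seymour's Theorem~\ref{theorem:theorem2.2} and the paper's sharper Theorem~\ref{theorem:theorem3.1} is a \emph{global} object of size $\Theta(n)$ (a local/taming net spanning the whole graph), and a sequence of vertex deletions and degree-2 contractions has no reason to collapse it. Moreover the specific reductions you propose are not obviously sound for an \emph{induced} subgraph problem: contracting a degree-2 vertex changes which vertex pairs are adjacent, hence which trees are induced; ``vertices provably in no feasible induced tree'' are exactly as hard to identify as solving the problem; and confluence of a family of reductions for an NP-hard-adjacent problem is a substantial claim that you leave unjustified.

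The part you correctly flag as the ``principal obstacle''---completeness---is precisely where the paper's new combinatorial machinery lives (flexible arcs, $(S,U,V)$-sprouts, chunks and pods, $\HH$-solid vs.~$\HH$-podded sets, and the two-part Lemma~\ref{lemma:lemma3.2}), and your proposal offers no concrete substitute for it; ``make the reductions strong enough to eliminate the obstructions'' is a restatement of the goal, not a method. On the implementation side you are in the right neighborhood (dynamic 2-/3-connectivity and top trees do appear in the paper, as do decremental connectivity on $G-X$), but your amortization sketch ($O(m)$ removals times $O(\log n)$ per removal) gives $O(m\log n)$, not the $O(m\log^2 n)$ you are claiming to achieve, and it is tied to a reduction dynamics you have not shown terminates or is even well-defined. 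The paper's $O(n)$-round bound, by contrast, follows immediately because $|X|$ strictly grows each round; its $\log^2 n$ factor comes from the decremental dynamic-connectivity data structure of Holm--de Lichtenberg--Thorup plus a logarithmic number of $\HHdagger$-arc recolorings per vertex, not from a reduction potential.
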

To prove Theorem~\ref{theorem:theorem1.1}, we first improve the
running time to $O(m n)$ using a simpler algorithm with a simpler
correctness proof than that of Chudnovsky and Seymour. The remaining
improvement is done employing dynamic graph algorithms.

\subsection{Significance of three-in-a-tree}
The three-in-a-tree problem may seem like a toy problem, but it has
proven to be of general importance because many difficult graph
detection and recognition problems reduce to it. The reductions are
often highly non-trivial and one-to-many, solving three-in-a-tree on
multiple graph instances with different placements of the three
terminals. With our near-linear three-in-a-tree algorithm and some
improved reductions, we get the results summarized
Figure~\ref{figure:figure1}. These results will be explained in more
detail in Section~\ref{subsection:subsection1.1}.

\begin{figure}
\centerline{\scalebox{0.9}{
\renewcommand{\arraystretch}{1.2}
\begin{tabular}{|r||c|c|}
\hline
 &best previously known results
 &our work\\
\hline
\hline
three-in-a-tree
& $O(n^4)$~\cite{ChudnovskyS10} & 
$\tilde{O}(n^2)$: Theorem~\ref{theorem:theorem1.1} \\
\hline
theta
& $O(n^{11})$~\cite{ChudnovskyS10}&   $\tilde{O}(n^6)$: Theorem~\ref{theorem:theorem1.2}\\
\hline
pyramid
& $O(n^9)$~\cite{ChudnovskyCLSV05} & $\tilde{O}(n^5)$: Theorem~\ref{theorem:theorem1.3}\\
\hline
perfect graph
& $O(n^9)$~\cite{ChudnovskyCLSV05} & $O(n^8)$: Theorem~\ref{theorem:theorem1.4}\\
\hline
odd hole
& $O(n^9)$~\cite{ChudnovskySSS19} & $O(n^8)$: Theorem~\ref{theorem:theorem1.4}\\
\hline
beetle
& $O(n^{11})$~\cite{ChangL15} &  $\tilde{O}(n^6)$: Theorem~\ref{theorem:theorem1.5}\\
\hline
even hole
& $O(n^{11})$~\cite{ChangL15}  & $O(n^9)$: Theorem~\ref{theorem:theorem1.6}\\
\hline
\end{tabular}
\renewcommand{\arraystretch}{1.2}
}
}
\caption{Comparing our work with the best previously known results for
  an $n$-vertex graph.}
\label{figure:figure1}
\end{figure}

Showcasing some of the connections, our improved three-in-a-tree
algorithm leads to an improved algorithm to detect if a graph has an
\emph{odd hole}, that is, an induced cycle of odd length above
three. This is via the recent odd-hole algorithm of Chudnovsky, Scott,
Seymour, and Spirkl~\cite{ChudnovskySSS19}. A highly nontrivial
consequence of odd-hole algorithm is that we can use it to recognize
if a graph $G$ is \emph{perfect}, that is, if the chromatic number of
each induced subgraph $H$ of $G$ equals the clique number of $H$.  The
celebrated Strong Perfect Graph Theorem states that a graph is perfect
if and only if neither the graph nor its complement has an odd hole.
An odd-hole algorithm can therefore trivially test if a graph is
perfect. The Strong Perfect Graph Theorem, implying the last reduction
was a big challenge to mathematics, conjectured by Berge in
1960~\cite{Berge60,Berge61,Berge85} and proved by Chudnovsky,
Robertson, Seymour, and Thomas~\cite{ChudnovskyRST06}, earned them the
2009 Fulkerson prize. Our improved three-in-a-tree algorithm improves
the time to recognize if a graph is perfect from $O(n^9)$ to
$O(n^8)$. While this is a modest polynomial improvement, the point is
that three-in-a-tree is a central sub-problem on the path to solve
many other problems.

The next obvious question is why three-in-a-tree? Couldn't we have
found a more general subproblem to reduce to? The dream would be to
get something like disjoint paths and graph minor theory where we
detect a constant sized minor or detect if we have disjoint paths
connecting of a constant number of terminal pairs (one path connecting
each pair) in $O(n^2)$ time. This is using the algorithm of
Kawarabayashi, Kobayashi, and Reed~\cite{KawarabayashiKR12}, improving
the original cubic algorithm of Robertson and
Seymour~\cite{RobertsonS95b}.

In light of the above grand achievements, it may seem unambitious for
Chudnovsky and Seymour to work on three-in-a-tree as a general
tool. The difference is that the above disjoint paths and minors are
not necessarily induced subgraphs. Working with induced paths, many of
the most basic problems become NP-hard. Obviously, we can decide if
there is an induced path between two terminals, but Bienstock
\cite{Bienstock91} has proven that it is NP-hard to decide
two-in-a-cycle, that is, if two terminals are in an induced
cycle. From this we easily get that it is NP-hard to decide
three-in-a-path, that is if there is an induced path containing three
given terminals.  Both of these problems would be trivial if we could
solve the induced disjoint path problem for just two terminal
pairs. In connection with the even and odd holes and perfect graphs,
Bienstock also proved that it is NP-hard to decide if there is an even
(respectively, odd) hole containing a given terminal.

In light of these NP-hardness results it appears quite lucky that
three-in-a-tree is tractable, and of sufficient generality that it can
be used as a base for solving other graph detection and recognition
problems nestled between NP-hard problems. In fact, three-in-a-tree
has become such a dominant tool in graph detection that authors
sometimes explained when they think it cannot be
used~\cite{ChudnovskyST13,TrotignonV10}, e.g., Trotignon and
Vu\v{s}kovi\'{c}~\cite{TrotignonV10} wrote ``A very powerful tool for
solving detection problems is the algorithm three-in-a-tree of
Chudnovsky and Seymour [...] But as far as we can see, three-in-a-tree
cannot be used to solve $\Pi_{H_{1|1}}$.''

While proving that a problem is in P is the first big step in
understanding the complexity, there has also been substantial prior
work on improving the polynomial complexity for many of the problems
considered in this paper. In the next subsection, we will explain in
more detail how our near-linear three-in-a-tree algorithm together
with some new reductions improve the complexity of different graph
detection and recognition problems. In doing so we also hope to
inspire more new applications of three-in-a-tree in efficient graph
algorithms.

\subsection{Implications}
\label{subsection:subsection1.1}

\begin{figure}[t]
\centering
\centerline{\scalebox{0.38}{{\includegraphics{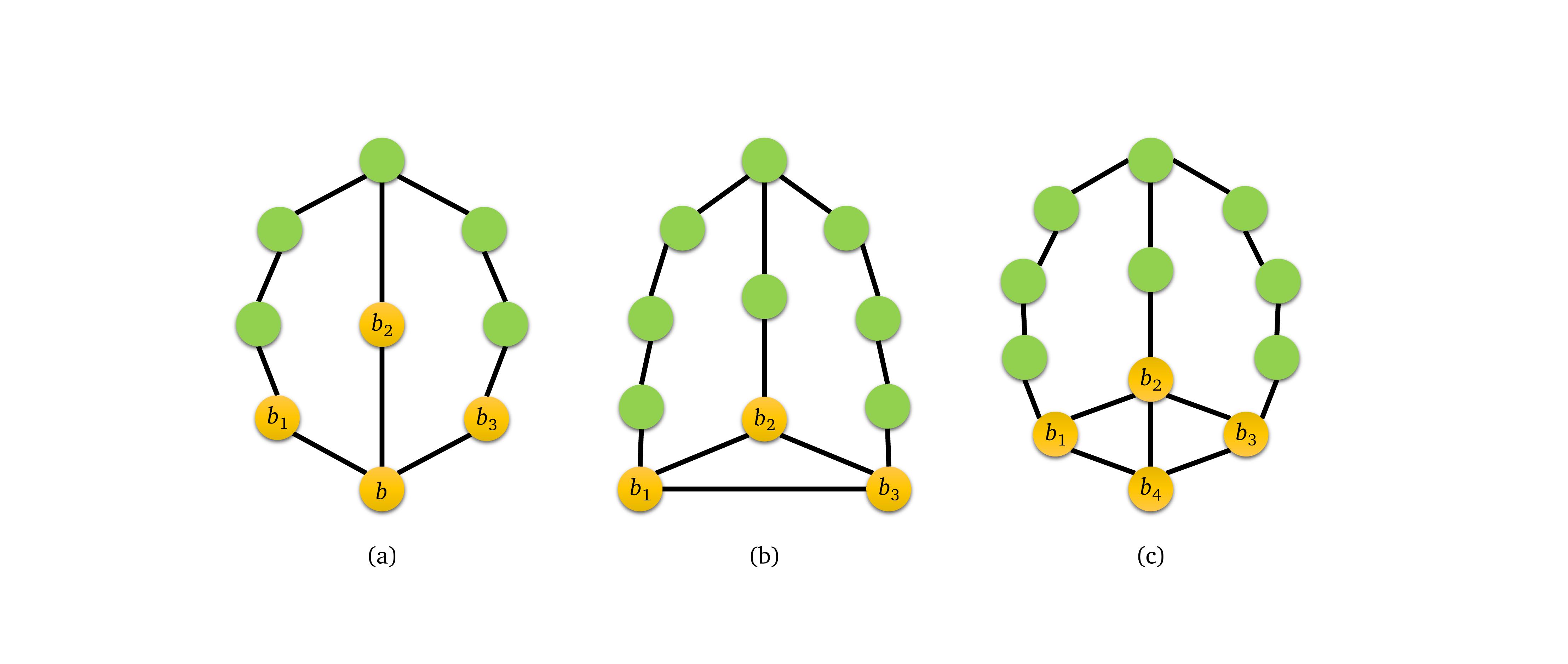}}}}
\caption{(a) Theta. (b) Pyramid. (c) Beetle.}
\label{figure:figure2}
\end{figure}
We are now going to describe the use of our three-in-a-tree algorithm
to improve the complexity of several graph detection and recognition
problems. The reader less familiar with structural graph theory may
find it interesting to see how the route to solve the big problems
takes us through several toy-like subproblems, starting from
three-in-a-tree.  Often we look for some simple configuration implying
an easy answer.  If the simple configuration is not present, then this
tells us something about the structure of the graph that we can try to
exploit.

We first define the big problems in context. A {\em hole} is an
induced simple cycle with four or more vertices. A graph is
\emph{chordal} if and only if it has no hole. Rose, Tarjan, and
Leuker~\cite{RoseTL76} gave a linear-time algorithm for recognizing
chordal graphs. A hole is {\em odd} (respectively, {\em even}) if it
consists of an odd (respectively, even) number of vertices.  $G$ is
{\em Berge} if $G$ and its complement are both odd-hole-free. The
celebrated Strong Perfect Graph Theorem, which was conjectured by
Berge~\cite{Berge60,Berge61, Berge85} and proved by Chudnovsky,
Robertson, Seymour, and Thomas~\cite{ChudnovskyRST06}, states that $G$
is Berge if and only if $G$ is perfect, i.e., the chromatic number of
each induced subgraph $H$ of $G$ equals the clique number of $H$.

The big problems considered here are the detection of odd and even
holes, but related to this we are going to look for ``thetas'',
``pyramids'', and ``beetles'', as illustrated in
Figure~\ref{figure:figure2}. These are different induced subdivisions
where a \emph{subdivision} of a graph is one where edges are replaced
by paths of arbitrary length.  A hole is thus an induced subdivision
of a length-4 cycle, and a minimal three-in-a-tree is an induced
subdivision of a star with two or three leaves that are all
prespecified terminals.

The first problem Chudnovsky and Seymour~\cite{ChudnovskyS10} solved
using their three-in-tree algorithm was to detect a {\em theta} which
is any induced subdivision of
$K_{2,3}$~\cite{Bang-JensenHT12}. Chudnovsky and Seymour are
interested in thetas because they trivially imply an even hole.  They
developed the previously only known polynomial-time algorithm, running
in $O(n^{11})$ time, for detecting thetas in $G$ via solving the
three-in-a-tree problem on $O(n^7)$ subgraphs of $G$.  Thus,
Theorem~\ref{theorem:theorem1.1} reduces the time to
$\tilde{O}(n^9)$. Moreover, we show in Lemma~\ref{lemma:lemma6.1} that
thetas in $G$ can be detected via solving the three-in-a-tree problem
on $O(mn^2)$ $n$-vertex graphs, leading to an $\tilde{O}(n^6)$-time
algorithm as stated in Theorem~\ref{theorem:theorem1.2}.

\begin{thm}
\label{theorem:theorem1.2}
It takes $O(mn^4\log^2n)$ time to detect thetas in an $n$-vertex
$m$-edge graph.
\end{thm}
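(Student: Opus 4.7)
The plan is to combine Lemma~\ref{lemma:lemma6.1} with the three-in-a-tree algorithm of Theorem~\ref{theorem:theorem1.1}. Lemma~\ref{lemma:lemma6.1} asserts that theta detection in $G$ reduces to running three-in-a-tree on $O(mn^2)$ auxiliary graphs, each an $n$-vertex (induced or slightly modified) subgraph of $G$ equipped with a prescribed triple of terminals. Once the reduction is in hand, the time bound is immediate: an $n$-vertex instance has at most $O(n^2)$ edges, hence takes $O(n^2 \log^2 n)$ time by Theorem~\ref{theorem:theorem1.1}, and summing over $O(mn^2)$ instances gives $O(mn^4 \log^2 n)$.

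The substantive work is therefore the reduction. I would parameterize it by a triple $(ax, b, y)$, where $ax$ is an edge of $G$ (thought of as the first edge, at branch vertex $a$, of one of the three paths of the sought theta), $b$ is the other candidate branch vertex, and $y$ is a candidate first vertex at $b$ on one of the other two paths. There are $O(m \cdot n \cdot n) = O(mn^2)$ such triples. For each triple I build $G_{ax, b, y}$ from $G$ by deleting $a$ together with selected neighbors of $a$ and $b$ -- chosen to block chords between two of the three sought paths and to forbid shortcuts at either branch -- and then query three-in-a-tree with terminals $\{x, y, b\}$.

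Two directions must be checked. For completeness, if $G$ contains a theta with branches $a, b$ in which $ax$ is the first edge of one path and $by$ is the first edge of another, then the theta minus the vertex $a$ is an induced tree through $\{x, y, b\}$ that survives in $G_{ax, b, y}$. For soundness, any induced tree through $\{x, y, b\}$ in $G_{ax, b, y}$ can be pruned to a minimal subtree with $x, y, b$ as marked vertices and then closed up with $a$ via the edge $ax$; the deletion rules must guarantee that the result is an induced subdivision of $K_{2,3}$ without chords, yielding a genuine theta in $G$. The main obstacle is exactly this combinatorial design of the deletion rules -- strong enough to enforce soundness yet loose enough not to destroy real thetas and break completeness; the algorithmic accounting of $O(mn^2)$ invocations of Theorem~\ref{theorem:theorem1.1} is then routine.
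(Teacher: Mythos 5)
Your high-level strategy --- reduce theta detection to $O(mn^2)$ instances of three-in-a-tree on $n$-vertex graphs and invoke Theorem~\ref{theorem:theorem1.1} --- is the right one, and the running-time arithmetic ($O(n^2\log^2 n)$ per instance, $O(mn^4\log^2 n)$ total) is correct. However, your choice of parameterization and terminals does not admit a correct reduction, and the gap cannot be closed by tuning the deletion rules.

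Consider the soundness direction under your choice of terminals $\{x,y,b\}$. A positive answer from three-in-a-tree can always be pruned (repeatedly delete non-terminal leaves) to a sapling: an induced subdivided star whose leaves are exactly the three terminals $x$, $y$, $b$, with a single degree-3 center $c$. To turn this into a theta by adjoining $a$, you would need $a$ to be adjacent in $G$ to exactly three pairwise-nonadjacent vertices of the tree, and these three must be the leaves, so that $a$ and $c$ become the two branch vertices. But $b$ is one of the leaves, and $a$ and $b$ are the two branch vertices of the theta you set out to find, hence nonadjacent; so $a$ cannot be adjacent to the leaf $b$, and $b$ is left with degree~1 in $G[V(T)\cup\{a\}]$, which no theta permits. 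If instead you literally add only the edge $ax$, the result is a tree, not a theta. The root problem is a role clash: $b$ must be the degree-3 \emph{center} of the tree that witnesses the theta, yet as a terminal it will be returned as a \emph{leaf}. No vertex-deletion rule fixes this, because the mismatch is about where $b$ sits inside the tree, not about which vertices are present.

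The paper's reduction (Lemma~\ref{lemma:lemma6.1}) avoids this by parameterizing over a single branch vertex $b$ together with all three of its neighbors $b_1,b_2,b_3$ on the three paths (still $O(mn^2)$ quadruples, since $\sum_b \deg_G(b)^3 = O(mn^2)$), taking $\{b_1,b_2,b_3\}$ as terminals and deleting $N[b]\setminus\{b_1,b_2,b_3\}$ --- so $b$ itself is gone and its only surviving neighbors are $b_1,b_2,b_3$. The ingredient you are missing is not a deletion rule at all but an edge-\emph{addition} rule: the auxiliary graph makes each $N(b_i)$ into a clique, which forces each $b_i$ to have degree at most one in any induced tree, i.e.\ forces the terminals to be leaves. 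A minimal induced tree through $\{b_1,b_2,b_3\}$ is then a subdivided 3-star with exactly those leaves, and reinserting $b$ closes it to a theta with branches $b$ and the star's center. Your proposal only removes vertices and never adds edges, so it cannot impose the leaf constraint on the terminals; and even granting arbitrary deletions, the terminal/center conflict at $b$ remains.
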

The next problem Chudnovsky and Seymour solved using their
three-in-tree algorithm was to detect a {\em pyramid} which is an
induced subgraph consisting of an apex vertex $u$ and a triangle
$v_1v_2v_3$ and three paths $P_1$, $P_2$, and $P_3$ such that $P_i$
connects $u$ to $v_i$ and touch $P_j$, $j\neq i$, only in $u$, and
such that at most one of $P_1$, $P_2$, and $P_3$ has only one
edge. The point in a pyramid is that it must contain an odd hole. An
$O(n^9)$-time algorithm for detecting pyramids was already contained
in the perfect graph algorithm of Chudnovsky et
al.~\cite[\S2]{ChudnovskyCLSV05}, but Chudnovsky and Seymour use their
three-in-a-tree to give a more natural ``less miraculous'' algorithm
for pyramid detection, but with a slower running time of $O(n^{10})$.
With our faster three-in-a-tree algorithm, their more natural pyramid
detection also becomes the faster algorithm with a running time of
$\tilde{O}(n^8)$. Moreover, as for thetas, we improve the reductions
to three-in-a-tree. We show (see Lemma~\ref{lemma:lemma6.2}) that
pyramids in $G$ can be detected via solving the three-in-a-tree
problem on $O(mn)$ $n$-vertex graphs, leading to an
$\tilde{O}(mn^3)$-time algorithm as stated in
Theorem~\ref{theorem:theorem1.3}.

\begin{thm}
\label{theorem:theorem1.3}
It takes $O(mn^3\log^2n)$ time to detect pyramids in an $n$-vertex
$m$-edge graph.
\end{thm}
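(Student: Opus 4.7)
The plan is to combine Lemma~\ref{lemma:lemma6.2}, which reduces pyramid detection in an $n$-vertex $m$-edge graph $G$ to $O(mn)$ instances of three-in-a-tree each on an $n$-vertex graph, with our fast three-in-a-tree algorithm from Theorem~\ref{theorem:theorem1.1}. Since an $n$-vertex graph has at most $O(n^2)$ edges, each such instance is solvable in $O(n^2\log^2 n)$ time. Multiplying the $O(mn)$ calls by the $O(n^2\log^2 n)$ per-call cost yields a total running time of $O(mn^3\log^2 n)$, as claimed.

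The real content lies in establishing Lemma~\ref{lemma:lemma6.2}, that is, in obtaining a reduction that uses only $O(mn)$ three-in-a-tree calls rather than the many more calls produced by a naive enumeration of the pyramid structure. The plan there is to enumerate only a minimal ``skeleton'' of the hypothetical pyramid---for instance, a candidate apex $u$ (giving $n$ choices) together with a single distinguished edge of $G$ (giving $m$ choices)---and, for each such skeleton, to construct an induced subgraph of $G$ together with a triple of terminals so that the existence of an induced tree on those terminals in the subgraph encodes the existence of a full pyramid extending the skeleton.

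The hard part will be proving correctness of such an encoding. A pyramid imposes three conditions absent from three-in-a-tree: the three paths $P_1, P_2, P_3$ from the apex must be pairwise induced-disjoint away from the apex itself; the three far endpoints $v_1, v_2, v_3$ must form a triangle; and at most one of the $P_i$ is allowed to have length $1$. The trimming of vertices must be aggressive enough to forbid any induced tree returned by three-in-a-tree from violating these conditions, yet gentle enough that no valid pyramid extending the skeleton is ever destroyed. Establishing this structural correspondence---likely through a case analysis on how an induced tree connecting the chosen terminals can look inside the trimmed subgraph, together with a local surgery lemma converting such a tree into an actual pyramid---is where the technical work of Lemma~\ref{lemma:lemma6.2} is concentrated. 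Granting that lemma, Theorem~\ref{theorem:theorem1.3} follows immediately by the arithmetic above.
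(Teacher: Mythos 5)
Your reduction of Theorem~\ref{theorem:theorem1.3} to Lemma~\ref{lemma:lemma6.2} via Theorem~\ref{theorem:theorem1.1}, and the resulting $O(mn)\cdot O(n^2\log^2 n)=O(mn^3\log^2 n)$ arithmetic, matches the paper exactly. But you have not proved Lemma~\ref{lemma:lemma6.2}; you have only described the difficulty and explicitly deferred it (``Granting that lemma...''). Since Lemma~\ref{lemma:lemma6.2} is essentially the entire content of Theorem~\ref{theorem:theorem1.3} beyond Theorem~\ref{theorem:theorem1.1}, this is a genuine gap, not an omitted calculation.

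Two concrete problems with the sketch you do give. First, the skeleton you propose enumerating---a candidate apex $u$ ($n$ choices) plus one distinguished edge ($m$ choices)---is not what the paper does and is a poor fit: the paper enumerates the triangle $\{b_1,b_2,b_3\}$ of the pyramid (also $O(mn)$ choices, an edge plus a common neighbor), because the three triangle vertices serve directly as the three terminals, and the apex is then \emph{recovered} as the degree-three vertex of the induced tree rather than guessed. The apex cannot sensibly be a terminal since it is an internal vertex, not a leaf, and your apex-plus-edge skeleton still leaves a third triangle vertex unspecified, so it does not even pin down the three terminals. Second, and more fundamentally, you give no construction and no correctness argument. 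The paper builds $G(b_1,b_2,b_3)$ by (1) turning each $N(b_i)\setminus\{b_1,b_2,b_3\}$ into a clique (forcing $b_1,b_2,b_3$ to be leaves of any induced tree), (2) deleting the three triangle edges, and (3) deleting $(N(b_i)\cap N(b_j))\setminus\{b_1,b_2,b_3\}$ for $i\ne j$ (forbidding any triangle other than $b_1b_2b_3$ among the tree's vertices), and then proves both directions of the equivalence. You correctly list the three obstructions that distinguish pyramid detection from plain three-in-a-tree, but you supply no mechanism enforcing any of them, so the theorem is not established by your argument.
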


We now turn to odd holes and perfect graphs. Since a graph is perfect
if and only if it and its complement are both odd-hole-free, an
odd-hole algorithm implies a perfect graph algorithm, but not vice
versa. Cornu\'{e}jols, Liu, and Vu\v{s}kovi\'{c}~\cite{CornuejolsLV03}
gave a decomposition-based algorithm for recognizing perfect graphs
that runs in $O(n^{18})$ time, which was reduced to $O(n^{15})$ time
by Charbit, Habib, Trotignon, and
Vu\v{s}kovi\'{c}~\cite{CharbitHTV12}.  The best previously known
algorithm, due to Chudnovsky, Cornu\'{e}jols, Liu, Seymour, and
Vu\v{s}kovi\'{c}~\cite{ChudnovskyCLSV05}, runs in $O(n^9)$ time.
However, the tractability of detecting odd holes was open for
decades~\cite{ChudnovskyS18,ConfortiCKV99,ConfortiCLVZ06,Hsu87} until
recently.  Chudnovsky, Scott, Seymour, and
Spirkl~\cite{ChudnovskySSS19} announced an $O(n^9)$-time algorithm for
detecting odd holes, which also implies a simpler $O(n^9)$-time
algorithm for recognizing perfect graphs. An $O(n^9)$-time bottleneck
of both of these perfect-graph recognition algorithms was the above
mentioned algorithm for detecting
pyramids~\cite[\S2]{ChudnovskyCLSV05}.

By Theorem~\ref{theorem:theorem1.3}, the pyramids can now be detected
in $\tilde{O}(mn^3)$-time, but Chudnovsky et al.'s odd-hole algorithm
has six more $O(n^9)$-time subroutines~\cite[\S4]{ChudnovskySSS19}.
By improving all these bottle-neck subroutines, we improve the
detection time for odd holes to $O(m^2n^4)$, hence the recognition
time for perfect graphs to $O(n^8)$.
\begin{thm}
\label{theorem:theorem1.4}
\begin{enumerate*}[label=(\arabic*), ref=\arabic*]
\item  
\label{item1:theorem1.4}
It takes $O(m^2n^4)$ time to detect odd holes in an $n$-vertex
$m$-edge graph, and hence
\item
\label{item2:theorem1.4}
it takes $O(n^8)$ time to recognize an $n$-vertex perfect graph.
\end{enumerate*}
\end{thm}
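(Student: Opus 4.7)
The plan is to take the odd-hole detection algorithm of Chudnovsky, Scott, Seymour, and Spirkl and speed up each of its seven $O(n^9)$-time bottleneck subroutines to $O(m^2n^4)$ time. The first such bottleneck is pyramid detection, which Theorem~\ref{theorem:theorem1.3} immediately replaces by an $\tilde O(mn^3)$-time algorithm, comfortably within the target bound. For the remaining six subroutines, the plan is to argue, one by one, that each admits an accelerated implementation running in $O(m^2n^4)$ time.

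My expectation is that the same two sources of speedup that already drive Theorems~\ref{theorem:theorem1.2} and~\ref{theorem:theorem1.3} can be applied inside each of these subroutines: first, the near-linear three-in-a-tree algorithm of Theorem~\ref{theorem:theorem1.1} replaces any previously used $\tilde O(n^4)$ call by an $\tilde O(m)$ call, saving roughly a factor of $n^2$ per invocation; second, one can usually reorganize the enumeration of anchor vertices or edges so that the number of three-in-a-tree instances drops by a factor of $n$, exactly as we did for thetas and pyramids by summing over edges rather than vertex triples. The hard and technical part will be carrying out this reorganization inside each of the six subroutines of Chudnovsky, Scott, Seymour, and Spirkl without breaking the structural invariants on which their correctness proofs depend; I expect this to be the main obstacle and the step that consumes most of the write-up, since the original analysis is finely tuned to an $O(n^9)$ budget rather than to an $m$-sensitive budget.

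Given Part~(\ref{item1:theorem1.4}), Part~(\ref{item2:theorem1.4}) follows from the Strong Perfect Graph Theorem: a graph $G$ is perfect if and only if neither $G$ nor its complement $\bar G$ contains an odd hole. So the plan for Part~(\ref{item2:theorem1.4}) is to run the Part~(\ref{item1:theorem1.4}) algorithm on $G$ and on $\bar G$, and declare $G$ perfect iff both calls report ``no odd hole''. Both inputs are $n$-vertex graphs with $O(n^2)$ edges, so substituting $m=O(n^2)$ into $O(m^2n^4)$ gives $O(n^8)$ for each of the two calls, and the total running time remains $O(n^8)$.
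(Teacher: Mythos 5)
Your Part~(\ref{item2:theorem1.4}) reduction is exactly what the paper does: apply the Part~(\ref{item1:theorem1.4}) algorithm to $G$ and to $\bar G$, substitute $m=O(n^2)$, and use the Strong Perfect Graph Theorem. That part is fine.

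The Part~(\ref{item1:theorem1.4}) plan contains a genuine gap. You propose to accelerate the six remaining $O(n^9)$-time subroutines of Chudnovsky, Scott, Seymour, and Spirkl by the ``same two sources of speedup'' that drive Theorems~\ref{theorem:theorem1.2} and~\ref{theorem:theorem1.3}: substituting the near-linear three-in-a-tree algorithm for slow calls, and reorganizing the enumeration of anchors. But this cannot work, because only one of the seven bottleneck subroutines --- pyramid detection --- goes through three-in-a-tree at all. The other six are the ``cleaning'' subroutines: they enumerate candidate near cleaners, test each by shortest-path computations, and stitch together candidate odd holes. There is no three-in-a-tree call inside them to speed up, so the first ``source of speedup'' gives nothing, and the second alone does not get you from $O(n^9)$ to $O(m^2n^4)$. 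The paper handles these six by entirely different machinery: it introduces the notion of an \emph{approximate cleaner} (a cleaner whose intersection with the hole is just two vertices $c_1,c_2$ with $d_C(c_1,c_2)=3$), proves that such a set can be verified and exploited in $O(n^3)$ time (Lemma~\ref{lemma:lemma6.8}, which is the real technical workhorse), adds a purely combinatorial $O(n^2)$-time tree-traversal subroutine for assembling an odd hole from two candidate half-paths (Lemma~\ref{lemma:lemma6.9}), and then designs a two-phase enumeration --- $O(m^2n)$ quintuples with the $O(n^3)$ test, and $O(m^2n^2)$ sextuples with the $O(n^2)$ test --- whose anchors $(x,d_1,d_2,b_1,b_2)$ or $(x,d,d_1,d_2,b_1,b_2)$ are tied to the $x$-gap structure of Lemma~\ref{lemma:lemma6.7}, not to three-in-a-tree terminals. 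In short, the step you identify as ``the hard and technical part'' requires a completely different idea (approximate cleaners and the two-phase enumeration), and the approach you sketch would not produce it.
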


Even-hole-free graphs have been extensively
studied~\cite{Addario-BerryCHRS08,ConfortiCKV00,
  ConfortiCKV02a,daSilvaV07,daSilvaV13,FraserHH18,KloksMV09,SilvaSS10}.
Vu\v{s}kovi\'{c} \cite{Vuskovic10} gave a comprehensive survey.
Conforti, Cornu{\'e}jols, Kapoor, and
Vu\v{s}kovi\'{c}~\cite{ConfortiCKV97con,ConfortiCKV02b} gave the first
polynomial-time algorithm for detecting even holes, running in
$O(n^{40})$ time.  Chudnovsky, Kawarabayashi, and
Seymour~\cite{ChudnovskyKS05} reduced the time to $O(n^{31})$.  A {\em
  prism}\label{b2} consists of two vertex-disjoint triangles together
with three vertex-disjoint paths between the two triangles such that
the union of every two of the three paths induces a cycle.  Chudnovsky
et al.~\cite{ChudnovskyKS05} also observed that the time of detecting
even holes can be further reduced to $O(n^{15})$ as long as detecting
prisms is not too expensive, but this turned out to be
NP-hard~\cite{MaffrayT05}.  However, Chudnovsky and
Kapadia~\cite{ChudnovskyK08} and Maffray and
Trotignon~\cite[Algorithm~2]{MaffrayT05} devised $O(n^{35})$-time and
$O(n^5)$-time algorithms for detecting prisms in theta-free and
pyramid-free graphs $G$, respectively.  Later, da~Silva and
Vu\v{s}kovi\'{c}~\cite{daSilvaV13} improved the time of detecting even
holes in $G$ to $O(n^{19})$.  The best formerly known algorithm, due
to Chang and Lu~\cite{ChangL15}, runs in $O(n^{11})$ time.  One of its
two $O(n^{11})$-time bottlenecks \cite[Lemma~2.3]{ChangL15} detects
so-called beetles in $G$ via solving the three-in-a-tree problem on
$O(n^7)$ subgraphs of $G$.  Theorem~\ref{theorem:theorem1.1} reduces
the time to $\tilde{O}(n^9)$.  Moreover, we show in
Lemma~\ref{lemma:lemma6.3} that beetles can be detected via solving
the three-in-a-tree problem on $O(m^2)$ $n$-vertex graphs, leading to
an $\tilde{O}(n^6)$-time algorithm as stated in
Theorem~\ref{theorem:theorem1.5}.

\begin{thm}
\label{theorem:theorem1.5}
It takes $O(m^2 n^2\log^2 n)$ time to detect beetles in an $n$-vertex
$m$-edge graph.
\end{thm}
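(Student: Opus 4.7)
The plan is to derive Theorem~\ref{theorem:theorem1.5} by composing Lemma~\ref{lemma:lemma6.3} with Theorem~\ref{theorem:theorem1.1}. Lemma~\ref{lemma:lemma6.3} reduces beetle detection in an $n$-vertex $m$-edge graph $G$ to solving the three-in-a-tree problem on $O(m^2)$ auxiliary graphs, each on $n$ vertices. Since any $n$-vertex simple graph has at most $\binom{n}{2}=O(n^2)$ edges, Theorem~\ref{theorem:theorem1.1} solves each such instance in $O(n^2\log^2 n)$ time. Multiplying the number of instances by the per-instance cost gives $O(m^2)\cdot O(n^2\log^2 n)=O(m^2 n^2\log^2 n)$, which is the claimed bound; one also checks that each auxiliary graph can be built in $O(n^2)$ time so that the construction cost is absorbed into the three-in-a-tree cost.

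The real content therefore sits in Lemma~\ref{lemma:lemma6.3}, which is analogous in spirit to Lemmas~\ref{lemma:lemma6.1} and~\ref{lemma:lemma6.2} but with an $O(m^2)$ rather than $O(mn^2)$ or $O(mn)$ enumeration. I would approach it by fixing the formal definition of a beetle (per Figure~\ref{figure:figure2}(c)) and guessing two edges of $G$ that pin down enough of the beetle that the remainder reduces to an induced-tree question on three specified terminals. A natural choice is to enumerate pairs of edges serving as the two ``distinguishing'' edges of the beetle---for instance, edges anchoring the two endpoints of the spine or the head-and-triangle portion---and, for each such guess, delete the vertices that are forbidden from appearing in the remaining induced paths, then invoke three-in-a-tree on the resulting graph with three appropriately chosen terminals.

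The hardest part will be the combinatorial design of the auxiliary graph in Lemma~\ref{lemma:lemma6.3}: one must argue both \emph{completeness}---every beetle in $G$ yields a positive answer on at least one of the $O(m^2)$ queries---and \emph{soundness}---every positive answer can be lifted to an actual induced beetle in $G$. Soundness typically requires a careful argument that the induced tree returned by the subroutine, together with the two guessed edges and the deleted forbidden set, necessarily completes to a full induced beetle subgraph in $G$ without spurious chords between its branches. Once Lemma~\ref{lemma:lemma6.3} is in hand, the running-time calculation above closes the theorem.
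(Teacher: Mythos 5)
Your reduction---compose Lemma~\ref{lemma:lemma6.3} with Theorem~\ref{theorem:theorem1.1}---is exactly the paper's proof of Theorem~\ref{theorem:theorem1.5}, and the running-time accounting ($O(m^2)$ calls on $n$-vertex auxiliary graphs, each with $O(n^2)$ edges, hence $O(n^2\log^2 n)$ apiece) is the correct computation. For the sketch of Lemma~\ref{lemma:lemma6.3} you append, you correctly identify both the $O(m^2)$ enumeration (over diamonds $b_1b_2b_3b_4b_1$ with chord $b_2b_4$, pinned down by a pair of edges) and the vertex deletion of $N[b_4]\setminus\{b_1,b_2,b_3\}$. The soundness step you flag as the hard part, however, hinges on one specific device you do not name: after the deletion, add edges so that each $N(b_i)\setminus\{b_1,b_2,b_3\}$ with $i\in\{1,2,3\}$ becomes a clique. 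This forces any induced tree of the auxiliary graph to use at most one neighbor of each $b_i$, so the three terminals must be degree-one leaves of the returned tree, and only then can $b_4$ be pasted back to recover a genuine induced beetle; without it the subroutine could return a tree branching at some $b_i$ that does not lift to a beetle. The same neighborhood-cliqueification is the shared engine behind Lemmas~\ref{lemma:lemma6.1} and~\ref{lemma:lemma6.2}, so once you see it for thetas or pyramids it transfers immediately to beetles.
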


Combining our faster beetle-detection algorithm with our $O(n^9)$-time
algorithm in~\S\ref{subsection:subsection6.3}, which is carefully
improved from the other $O(n^{11})$-time bottleneck
subroutine~\cite[Lemma~2.4]{ChangL15}, we reduce the time of detecting
even holes to $O(n^9)$ as stated in Theorem~\ref{theorem:theorem1.6}.
\begin{thm}
\label{theorem:theorem1.6}
It takes $O(m^2n^5)$ time to detect even holes in an $n$-vertex
$m$-edge graph.
\end{thm}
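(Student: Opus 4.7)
The plan is to follow the high-level framework of Chang and Lu's $O(n^{11})$-time even-hole detection algorithm, which reduces the problem to two bottleneck subroutines, namely their Lemma~2.3 (a reduction to beetle detection) and their Lemma~2.4 (a structural case analysis), each originally running in $O(n^{11})$ time. The strategy is to speed up each bottleneck separately and to verify that the remaining pieces of the Chang-Lu framework already run within the target bound $O(m^2 n^5)$.

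For the first bottleneck we simply invoke Theorem~\ref{theorem:theorem1.5}, which detects beetles in $O(m^2 n^2 \log^2 n)$ time; this is comfortably dominated by $O(m^2 n^5)$. For the second bottleneck we develop in~\S\ref{subsection:subsection6.3} a new algorithm running in $O(n^9)$ time, obtained by a careful restructuring of the Chang-Lu subroutine that cuts two factors of $n$ from their enumeration of candidate structural configurations. Plugging both improved subroutines into the Chang-Lu framework yields the $O(m^2 n^5)$ bound stated in the theorem, which collapses to $O(n^9)$ on dense graphs.

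The main technical obstacle is the~\S\ref{subsection:subsection6.3} redesign. Beetle detection speeds up cleanly via Lemma~\ref{lemma:lemma6.3} combined with our faster three-in-a-tree algorithm, but the second bottleneck admits no such clean reduction to three-in-a-tree: its $n^{11}$ cost comes from an exhaustive sweep over structural configurations, and we must shave two factors of $n$ without losing any of the structural guarantees that Chang and Lu's correctness argument relies on. Once that redesign is in place, Theorem~\ref{theorem:theorem1.6} follows immediately by summing the costs of the two revised subroutines, since no other step of the Chang-Lu framework exceeds the target budget.
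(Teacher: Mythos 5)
Your high-level framing is right — the paper does improve Chang and Lu's two $O(n^{11})$-time phases separately, invoking Theorem~\ref{theorem:theorem1.5} for beetle detection and redesigning the other phase. But the proposal stops there and is, on two counts, not yet a proof.

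First, the arithmetic doesn't close. You claim a new Phase-2 algorithm running in $O(n^9)$ time and then assert that plugging it in ``yields the $O(m^2 n^5)$ bound stated in the theorem.'' For sparse graphs with $m = o(n^2)$ we have $m^2 n^5 = o(n^9)$, so a plain $O(n^9)$ bound for Phase~2 is strictly weaker than what the theorem claims; the inference goes in the wrong direction. The paper's Phase~2 actually runs in $O(m^2 n^5)$ time: it produces a reliable set of $O(mn^2)$ tracers (Lemma~\ref{lemma:lemma6.10}, at cost $O(m^2 n^2)$), processes each tracer in $O(mn^3)$ time to extract a star-cutset-free piece (Lemma~\ref{lemma:lemma6.11}), and then detects even holes in each star-cutset-free piece in $O(mn^3)$ time (Lemma~\ref{lemma:lemma6.12}). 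The $O(n^9)$ you quote is only the specialization of that parameterized bound to dense graphs.

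Second, and more substantively, the proposal contains no argument at the point where the real work happens. ``A careful restructuring of the Chang-Lu subroutine that cuts two factors of $n$'' is exactly the statement to be proved, not a step toward it. The paper's actual gains are concrete and nontrivial: one factor comes from shrinking the initial tracer set from $O(n^5)$ to $O(mn^2)$ by guessing only $X$ and the endpoints of the $3$-path $P$ and dropping the guesses of $Y$ and the internal vertex of $P$ (which then forces a more delicate star-cutset decomposition, handled in Lemma~\ref{lemma:lemma6.11} via Lemma~\ref{lemma:lemma6.17} and the dominated-vertex reduction); the other factor comes from a potential-function argument (on $h(H) = |V(H)| + |W(H)|$ with Equations~\eqref{eq:eq1}--\eqref{eq:eq2}) showing that each star-cutset-free piece incurs only $O(n)$ parity-preserving $2$-join decompositions rather than $O(n^2)$. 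Without at least sketching these two mechanisms and verifying they preserve Chang and Lu's correctness invariants, the proposal is a roadmap rather than a proof.
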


For other implications of Theorem~\ref{theorem:theorem1.1},
L{\'{e}}v{\^{e}}que, Lin, Maffray, and Trotignon gave $O(n^{13})$-time
and $O(n^{14})$-time algorithms for certain properties $\Pi_{B_4}$ and
$\Pi_{B_6}$, respectively~\cite[Theorems~3.1 and~3.2]{LevequeLMT09}.
By Theorem~\ref{theorem:theorem1.1} and the technique
of~\S\ref{subsubsection:subsubsection6.2.1}, the time can be reduced
by a $\Theta({n^5}/{\log^2 n})$ factor.
Theorem~\ref{theorem:theorem1.1} also improves the algorithms of
van~'t~Hof, Kaminski, and Paulusma~\cite[Lemmas~4 and~5]{HofKP12}.  We
hope and expect that three-in-a-tree with its new near-optimal
efficiency will find many other applications in efficient graph
algorithms.

\subsection{Other related work}
For the general {\em $k$-in-a-tree} problem, we are given $k$ specific
terminals in $G$, and we want to decide if $G$ has an induced tree
$T$. The $k$-in-a-tree problem is NP-complete~\cite{DerhyP09} when $k$
is not fixed. With our Theorem~\ref{theorem:theorem1.1}, it can be
solved in near-linear time for $k\leq 3$, and the tractability is
unknown for any fixed $k \geq 4$~\cite{GolovachPL12}. Solving it in
polynomial time for constant $k$ would be a huge result. It is,
however, not clear that $k$-in-a-tree for $k>3$ would be as powerful a
tool in solving other problems as three-in-a-tree has proven to be.

While $k$-in-a-tree with bounded $k$ is unsolved for general graphs,
there has been substantial work devoted to $k$-in-a-tree for special
graph classes. Derhy, Picouleau, and Trotignon~\cite{DerhyPT09} and
Liu and Trotignon~\cite{LiuT10} studied $k$-in-a-tree on graphs with
girth at least~$k$ for $k=4$ and general $k\geq 4$, respectively.
Dos~Santos, da~Silva, and Szwarcfiter~\cite{dosSantosMurilodSS15}
studied the $k$-in-a-tree problem on chordal graphs.  Golovach,
Paulusma, and van~Leeuwen~\cite{GolovachPL12} studied the
$k$-in-a-tree, $k$-in-a-cycle, and $k$-in-a-path problems on AT-free
graphs~\cite{LekkerkerkerB62}.  Bruhn and Saito~\cite{BruhnS12},
Fiala, Kaminski, Lidick{\'{y}}, and Paulusma~\cite{FialaKLP12}, and
Golovach, Paulusma, and van~Leeuwen~\cite{GolovachPL15} studied the
$k$-in-a-tree and $k$-in-a-path problems on claw-free graphs.

See~\cite{AboulkerRTV12, Bang-JensenHM15, BoncompagniRV19,
  CameronCH18, ChangKL15, ChudnovskyL17, ChudnovskyLMTV19,
  ChudnovskyMSS18a, ChudnovskyMSS18, ChudnovskySSS20, ConfortiCLVZ06,
  DiotRTV18, DiotTT14, ErdosSS86, FominTV15, GitlerRV17, Hoang18,
  RadovanovicTV19, ScottS16} for more work on graph detection,
recognition, and characterization.  Also
see~\cite[Appendix~A]{BrandstadtLS99} for a survey of the recognition
complexity of more than $160$ graph classes.

On the hardness side, recall that three-in-a-tree can also be viewed
as three in a subdivided star with two or three terminal leaves.
However, detecting such a star with 4 terminal leaves is NP-hard.
(This follows from Bienstock's NP-hardness of 2-in-a-cycle
\cite{Bienstock91}, asking if there exists a hole containing two
vertices $u$ and $v$, which may be assumed to be nonadjacent: Add two
new leaves $u_1$ and $u_2$ adjacent to $u$ and then, for every two
neighbors $v_1$ and $v_2$ of $v$, check if the new graph contains an
induced subdivision of a star with exactly four terminal leaves
$u_1,u_2,v_1,v_2$.)  Even without terminals, it is NP-hard to detect
induced subdivisions of any graph with minimum degree at least
four~\cite{Bang-JensenHM15,LevequeLMT09}. Finally, we note that if we
allow multigraphs with parallel edges, then even 2-in-a-path becomes
NP-hard. This NP-hardness is an easy exercise since the induced path
cannot contain both end-points of parallel edges.

We note that it is the subdivisions that make induced graph detection
hard for constant sized pattern graphs. Without subdivisions, we can
trivially check for any induced $k$-vertex graph in $O(n^k)$
time. Nesetril and Poljak has improved this to roughly
$O(n^{k\omega/3})$ where $\omega$ is the exponent of matrix
multiplication \cite{NesetrilP85}.  On the other hand, the ETH
hypothesis implies that we cannot detect if a $k$-clique is a(n
induced) subgraph in $n^{o(k)}$ time~\cite{ImpagliazzoP01}.  A more
general understanding of the hardness of detecting induced graphs has
been presented recently in~\cite{DalirrooyfardVW19}.

\subsection{Techniques}
\label{subsection:subsection1.2}

Chudnovsky and Seymour's $O(n^2m)$-time algorithm for the
three-in-a-tree problem is based upon their beautiful characterization
for when a graph with three given terminals are contained in some
induced tree~\cite{ChudnovskyS10}. The aim is to either find a
three-in-a-tree or a witness that it cannot exist. During the course
of the algorithm, they develop the witness to cover more and more of
the graph. In each iteration, they take some part that is not covered
by the current witness and try to add it in, but then some other part
of the witness may pop out.  They then need a potential function
argument to show progress in each iteration.

What we do is to introduce some extra structure to the witness when no
three-in-a-tree is found, so that when things are added, nothing pops
out. This leads to a simpler more constructive algorithm that is
faster by a factor $n$. Our new witness has more properties than that
of Chudnovsky and Seymour, so our characterization of no
three-in-a-tree is strictly stronger, yet our overall proof is
shorter. Essentially the point is that by strengthening the inductive
hypothesis, we get a simpler inductive step. The remaining improvement
in speed is based on dynamic graph algorithms.

\subsection{Road map}
\label{subsection:subsection1.3}

The rest of the paper is organized as
follows. Section~\ref{section:section2} is a background section where
we review Chudnovsky and Seymour's characterization for
three-in-a-tree, sketch how it is used algorithmically, as well as the
bottleneck for a fast implementation. Section~\ref{section:section3}
presents our new stronger characterization as well as a high level
description of the algorithms and proofs leading to our $\tilde{O}(m)$
implementation. Section~\ref{section:section4} proves the correctness
of our new characterization. Section~\ref{section:section5} provides
an efficient implementation. Finally, Section~\ref{section:section6}
shows how our improved three-in-a-tree algorithm, in tandem with other
new ideas, can be used to improve many state-of-the-art graph
recognition and detection algorithms.  Section~\ref{section:section7}
concludes the paper.

\newpage
\section{Background}
\label{section:section2}

\subsection{Preliminaries}
\label{subsection:subsection2.1}

Let $|S|$ denote the cardinality of set $S$.  Let $R\setminus S$ for
sets $R$ and $S$ consist of the elements of $R$ not in $S$.  Let $G$
and $H$ be graphs.  Let $V(G)$ (respectively, $E(G)$) consist of the
vertices (respectively, edges) of $G$.  Let $u$ and $v$ be vertices.
Let $U$ and $V$ be vertex sets.  Let $N_G(u)$ consist of the neighbors
of $u$ in $G$.  The {\em degree} of $u$ in $G$ is $ |N_G(u)|$.  Let
$N_G[u]=N_G(u)\cup \{u\}$.  Let $N_G(U)$ be the union of
$N_G(u)\setminus U$ over all vertices $u\in U$.  Let
$N_G(u,H)=N_G(u)\cap V(H)$ and $N_G(U,H)=N_G(U)\cap V(H)$.  The
subscript $G$ in notation $N_G$ may be omitted.  A {\em leaf} of $G$
is a degree-one vertex of $G$.  Let $\nabla(G)$ denote the graph
obtained from $G$ by adding an edge between each pair of leaves of
$G$.  Let $G[H]$ denote the subgraph of $G$ induced by $V(H)$.  Let
$G-U=G[V(G)\setminus U]$.  Let $G-u=G-\{u\}$.  Let $uv$ denote an edge
with end-vertices $u$ and $v$.  Graphs $H_1$ and $H_2$ are {\em
  disjoint} if $V(H_1)\cap V(H_2)=\varnothing$.  Graphs $H_1$ and
$H_2$ are {\em adjacent} in $G$ if $H_1$ and $H_2$ are disjoint and
there is an edge $uv$ of $G$ with $u\in V(H_1)$ and $v\in V(H_2)$.  A
{\em $UV$-path} is either a vertex in $U\cap V$ or a path having one
end-vertex in $U$ and the other end-vertex in $V$.  A {\em
  $UV$-rung}~\cite{ChudnovskyS10} is a vertex-minimal induced
$UV$-path.  If $U=\{u\}$, then a $UV$-path is also called a {\em
  $uV$-path} and a {\em $Vu$-path}.  If $U=\{u\}$ and $V=\{v\}$, then
a $UV$-path is also called a {\em $uv$-path}. Let {\em $Uv$-rung},
{\em $uV$-rung}, and {\em $uv$-rung} be defined similarly.

For the three-in-a-tree problem, we assume without loss of generality
that the three given terminals of the input $n$-vertex $m$-edge simple
undirected graph $G$ are exactly the leaves of $G$. A {\em sapling} of
$G$ is an induced tree containing all three leaves of $G$, so the
three-in-a-tree problem is the problem of finding a sapling.

\subsection{Chudnovsky and Seymour's characterization}
\label{subsection:subsection2.2}

\begin{figure}[t]
\centerline{\scalebox{0.5}{{\includegraphics{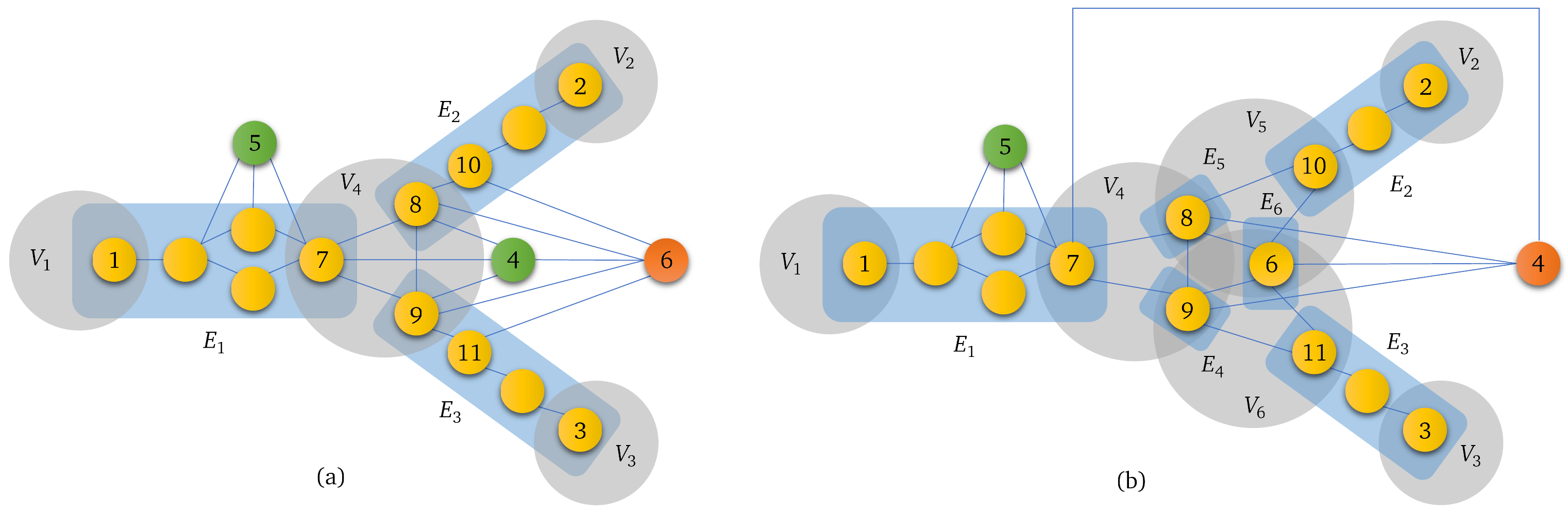}}}}
\caption{(a) An $X$-net $\HH$ with nodes $V_1,\ldots,V_4$ and arcs
  $E_1,E_2,E_3$, where $X$ consists of the vertices other than
  $4,5,6$. Vertices $4$ and $5$ are $\HH$-local.  Vertex $6$ is
  $\HH$-nonlocal.  (b) A nonlocal net $\HH$ having a triad
  $\Delta(V_4,V_5,V_6)=\{6,8,9\}$. Vertex $5$ is $\HH$-local.  Vertex
  $4$ is $\HH$-nonlocal.}
\label{figure:figure3}
\end{figure}

Let $\HH$ be a graph such that each member of $V(\HH)$ and $E(\HH)$,
called {\em node} and {\em arc} respectively, is a subset of
$X\subseteq V(G)$.  $\HH$ is an {\em $X$-net} of $G$ if the following
Conditions~\ref{condition:N} hold (see
Figure~\ref{figure:figure3}(a)):
\begin{enumerate}[label={}, ref={N}, leftmargin=0pt]
\item 
\label{condition:N}
\begin{enumerate}[label={\em \ref{condition:N}\arabic*:},ref={\ref{condition:N}\arabic*},leftmargin=*]
    \item
    \label{N1}
    Graph $\HH$ is connected and graph $\nabla(\HH)$ is biconnected.

    \item
    \label{N2}
    The arcs of $\HH$ form a nonempty disjoint partition of the vertex
    set $X$.

    \item
    \label{N3} 
    Graph $\HH$ has exactly three leaf nodes, each of which
    consists of a leaf vertex of $G$.

    \item
    \label{N4}
    For any arc $E=UV$ of $\HH$, each vertex of $X$ in $E$ is on a
    $UV$-rung of $G[E]$.
    
    \item
    \label{N5}
    For any arc $E$ and node $V$ of $\HH$, $E\cap V\ne\varnothing$
    if and only if $V$ is an end-node of $E$ in $\HH$.

    \item
    \label{N6}
    For any vertices $u$ and $v$ in $X$ contained by distinct arcs $E$
    and $F$ of $\HH$, $uv$ is an edge of $G$ if and only if arcs $E$
    and $F$ share a common end-node $V$ in $\HH$ with
    $\{u,v\}\subseteq V$.
\end{enumerate}
\end{enumerate}
An arc $E=UV$ is {\em simple} if $G[E]$ is a $UV$-rung.  A {\em net}
is an $X$-net for an $X$. A {\em base net} is a net obtained via the
next lemma, for which we include a proof to make our paper
self-contained.
\begin{lem}[Chudnovsky and Seymour~\cite{ChudnovskyS10}]
\label{lemma:lemma2.1}
It takes $O(m)$ time to find a sapling of $G$ or a net of $G$ whose
arcs are all simple.
\end{lem}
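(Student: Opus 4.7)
The plan is to build a Y-shape by one BFS and then either certify it as a sapling or convert it into a simple-arc net using a single carefully chosen cross-chord.

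Let $t_1,t_2,t_3$ be the three leaves of $G$ and $s_i$ the unique neighbor of $t_i$. Run BFS from $s_1$ in $G-\{t_1,t_2,t_3\}$ in $O(m)$ time, and extract the BFS-tree paths $P_{12}$ to $s_2$ and $P_{13}$ to $s_3$; being shortest paths, both are induced. Let $c$ denote the deepest vertex shared by $P_{12}$ and $P_{13}$, and split them into the common trunk $A$ from $s_1$ to $c$ and the private branches $B$ from $c$ to $s_2$ and $C$ from $c$ to $s_3$. Set $Y=A\cup\{c\}\cup B\cup C$.

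Scan each edge of $G$ incident to $Y$ in $O(m)$ time; since $P_{12}$ and $P_{13}$ are individually chord-free, the only possible chords of $G[Y]$ are \emph{cross-chords} with one endpoint in $B$ and the other in $C$. If there are no cross-chords, then $Y$ together with the pendant edges $t_is_i$ is an induced tree containing the three terminals---a sapling---and we return it. Otherwise, choose a cross-chord $(b,c')$ that is lex-maximal toward the terminals: first pick $b\in B$ as the $B$-endpoint closest to $s_2$ among all cross-chords, then pick $c'\in C$ as the $C$-endpoint closest to $s_3$ among cross-chords whose $B$-endpoint is $b$.

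Build the net $\HH$ with four nodes $\{t_1\},\{t_2\},\{t_3\},V$, where $V=\{c,b,c'\}$, and three simple arcs $E_1=\{t_1\}\cup A\cup\{c\}$, $E_2=\{t_2\}$ together with the sub-path of $B$ from $b$ to $s_2$, and $E_3$ analogously. Vertices of $B$ strictly between $c$ and $b$ (and similarly on $C$) are deliberately \emph{excluded} from the net's vertex set $X$, so they do not participate in N6. Conditions N1--N5 follow immediately from the construction. The delicate check is N6: the maximality of $b$ rules out any cross-chord whose $B$-endpoint is strictly past $b$; the subsequent maximality of $c'$ rules out cross-chords with $B$-endpoint $b$ and $C$-endpoint strictly past $c'$; and chords from the trunk $A$ into $B$ or $C$ are excluded because $P_{12}$ and $P_{13}$ are shortest paths. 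Hence the only cross-arc edge is $(b,c')$ itself, whose endpoints both lie in $V$.

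The main obstacle will be handling degenerate cases (when $c$ coincides with $s_2$ or $s_3$ so the Y collapses to a straight path yielding a direct sapling; when $B$ or $C$ is empty; or when $b=s_2$ or $c'=s_3$, in which case some of the three arcs degenerate into single pendant edges), and confirming in each case that the construction still satisfies all of N1--N6 or directly gives a sapling. All algorithmic steps---the single BFS, the single chord scan, and the maximal-chord selection---run in $O(m)$ time with constant work per edge, yielding the claimed total.
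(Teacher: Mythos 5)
Your construction of the central node breaks Condition~N6, and this is not a degenerate-case issue but the generic case. You set $V = \{c, b, c'\}$, where $c$ is the branch point of the two BFS paths and $(b, c')$ is the lex-maximal cross-chord, with $E_1 \cap V = \{c\}$, $E_2 \cap V = \{b\}$, $E_3 \cap V = \{c'\}$. Condition~N6 is an ``if and only if,'' and its ``if'' direction demands that whenever $u \in E_i \cap V$ and $v \in E_j \cap V$ with $i \neq j$, the pair $uv$ is an edge of $G$; concretely, $cb$, $cc'$, and $bc'$ must all be edges. You do get $bc' \in E(G)$ from the cross-chord, but $cb$ and $cc'$ are generally not edges: you chose $b$ to be the cross-chord endpoint on $B$ \emph{closest to $s_2$}, which can be arbitrarily far from $c$ along $B$, and $P_{12}$ is an induced path, so $cb \notin E(G)$ unless $b$ happens to be the immediate successor of $c$; the same applies to $cc'$ along $P_{13}$. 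Hence $V$ is not a clique in $G$, and the resulting $\HH$ fails to be a net. The forward direction of N6 (that the only cross-arc edge is $bc'$) does follow from your lex-maximal choice and the exclusion of the $c$-to-$b$ and $c$-to-$c'$ interiors from $X$, but that is not enough.

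The paper builds the junction in a way that makes the center a triangle automatically. It first computes an induced $s_2s_3$-path $S$, then an induced path $R$ from $s_1$ to $S$; with $x_1$ the vertex of $R$ just before $S$, and $x_2, x_3 \in N(x_1) \cap S$ chosen closest to $s_2$ and $s_3$ respectively, the edges $x_1x_2$ and $x_1x_3$ hold by definition of $x_2, x_3$, and the only case that does not immediately yield a sapling is when $x_2x_3$ is \emph{also} an edge, whence $\{x_1,x_2,x_3\}$ is a triangle for free. Your BFS-from-$s_1$ branch point $c$ carries no such adjacency to the cross-chord endpoints, and I do not see a local modification of the choice of $(b,c')$ or of $c$ that recovers the clique property without essentially reverting to the paper's attach-$s_1$-to-the-$s_2s_3$-rung scheme.
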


\begin{proof}
Let $s_1,s_2,s_3$ be the leaves of $G$.  Obtain vertex sets $R$ and
$S$ such that $G[S]$ is an $s_2s_3$-rung of $G$ and $G[R]$ is an
$s_1S$-rung of $G$.  Let $x_1\in R\setminus S$ be closest to $S$ in
$G[R]$.  Let each $x_j\in N(x_1,S)$ with $j\in\{2,3\}$ be closest to
$s_j$ in $G[S]$.  Since $s_2$ and $s_3$ are leaves of $G$, $x_2$ and
$x_3$ are internal vertices of path $G[S]$.  If $x_2=x_3$, then
$G[R\cup S]$ is a sapling of $G$.  If $x_2$ and $x_3$ are distinct and
nonadjacent, then $G[R\cup S]-I$ is a sapling of $G$, where $I$
consists of the internal vertices of the $x_2x_3$-path in $G[S]$.  If
$x_2$ and $x_3$ are adjacent in $G$, then $G$ admits an $R\cup S$-net
having nodes $V_0=\{x_1,x_2,x_3\}$ and $V_i=\{s_i\}$ with
$i\in\{1,2,3\}$ and simple arcs $E_i=V_0V_i$ with $i\in\{1,2,3\}$
consisting of the vertices of the $s_ix_i$-rung of $G[R\cup S]$.
\end{proof}

The original definition of Chudnovsky et al.~only used nets with no
parallel arcs, but for our own more efficient construction, we need to
use parallel arcs.  A {\em triad} of $\HH$ is
$\Delta(V_1,V_2,V_3)=(V_1\cap V_2)\cup (V_2\cap V_3)\cup (V_3\cap
V_1)$ for nodes $V_1$, $V_2$, and $V_3$ that induce a triangle in
graph $\HH$.  A subset $S$ of $X$ is {\em $\HH$-local} if $S$ is
contained by a node, arc, or triad of $\HH$~\cite{ChudnovskyS10}.  A
set $Y\subseteq V(G-X)$ is {\em $\HH$-local} if $N(Y,X)$ is
$\HH$-local.  $\HH$ is {\em local} if every $Y\subseteq V(G-X)$ with
connected $G[Y]$ is $\HH$-local.  See Figure~\ref{figure:figure3}.
The following theorem is Chudnovsky and Seymour's characterization.

\begin{thm}[{Chudnovsky and Seymour~\cite[3.2]{ChudnovskyS10}}]
\label{theorem:theorem2.2}
$G$ is sapling-free if and only if $G$ admits a local net with no
parallel arcs.
\end{thm}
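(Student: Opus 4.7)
The plan is to prove Theorem~\ref{theorem:theorem2.2} as a biconditional, with the ``local net with no parallel arcs implies sapling-free'' direction handled by a structural argument and the converse handled by an iterative refinement starting from the base net of Lemma~\ref{lemma:lemma2.1}. The forward direction is routine; the reverse direction is where all the work lives.

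For the forward direction, I would assume $\HH$ is a local net of $G$ with no parallel arcs and suppose for contradiction that $G$ contains a sapling $T$. For each connected component $Y$ of $T-X$, locality forces $N(Y,X)$ to sit inside a single node, arc, or triad of $\HH$. Combined with Condition~\ref{N6}, which only permits edges between distinct arcs at shared end-nodes, this pins $V(T)\cap E$ for each arc $E=UV$ to a $UV$-subpath of the rung structure of $G[E]$ by Condition~\ref{N4}. Since the three leaves of $G$ lie in three distinct leaf nodes by Condition~\ref{N3}, the tree $T$ must traverse $\HH$ between these three nodes. Using the biconnectedness of $\nabla(\HH)$ from Condition~\ref{N1} together with the absence of parallel arcs, one shows that any such traversal either closes a cycle inside $T$ (contradicting that $T$ is a tree) or forces some component $Y$ to reach outside a single node/arc/triad (contradicting locality).

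For the reverse direction, the plan is to maintain a net $\HH$ covering some $X\subseteq V(G)$, initialized via Lemma~\ref{lemma:lemma2.1}, and iteratively either exhibit a sapling or grow $\HH$ so that eventually every connected $Y\subseteq V(G-X)$ is $\HH$-local. If some connected $Y$ violates locality, its attachment set $N(Y,X)$ spans beyond any single node, arc, or triad. I would case-split on the configuration of these attachments: if $N(Y,X)$ meets three ``well-separated'' regions of $\HH$, then combining $Y$ with suitable rungs inside $\HH$ yields a sapling; otherwise, $Y$ must be absorbable, and one refines $\HH$ by subdividing the affected arcs at attachment points and incorporating $Y$ (and possibly parts of the old arcs) into freshly created arcs. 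A potential such as $|V(G)\setminus X|$ decreases strictly with each refinement, giving termination, and at every step one verifies Conditions~\ref{N1}--\ref{N6}.

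The main obstacle is controlling the absence of parallel arcs while simultaneously preserving all of Conditions~\ref{N1}--\ref{N6} under refinement. A naive absorption of $Y$ tends to produce parallel arcs between two nodes, and eliminating them on the fly can break the biconnectedness of $\nabla(\HH)$ (Condition~\ref{N1}) or the edge-structure constraint (Condition~\ref{N6}); re-absorbing to fix one condition may in turn break another. This is the delicate ``pop-out'' phenomenon that Chudnovsky and Seymour tame with a carefully chosen potential function. A cleaner route, foreshadowed in Section~\ref{subsection:subsection1.2} and to be pursued in Section~\ref{section:section3}, is to temporarily permit parallel arcs together with extra witness structure that prevents the pop-out, and only to excise the parallel arcs in a final post-processing step; this gives the no-parallel-arcs net required by Theorem~\ref{theorem:theorem2.2}.
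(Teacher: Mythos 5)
Your high-level framing matches the paper's: the forward direction is a structural argument that a local net excludes saplings (your sketch is in the spirit of Lemma~\ref{lemma:lemma3.5}\eqref{item1:lemma3.5}, which traces out a three-leaf connected subgraph of $\HH$ from the sapling $T$ and uses Condition~\ref{N6} to manufacture a triangle in $T$), and the reverse direction is iterative refinement from a base net. The gap is in the reverse direction, which is internally inconsistent and mischaracterizes the mechanism that makes the argument go through.

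You propose the potential $|V(G)\setminus X|$, which decreases strictly only if $X$ grows monotonically, but then you concede the pop-out problem, which is precisely that $X$ may shrink; you cannot hold both. (Incidentally, the pop-out in Chudnovsky and Seymour is not, as you describe it, about eliminating temporary parallel arcs --- they never create any; rather, absorbing $Y$ into a no-parallel-arc net may force vertices of $X$ out of the net to keep Conditions~\ref{condition:N}, which is why their potential is the more intricate $|X|+(n{+}1)\cdot|V(\HH)|$.) The paper earns the simple $O(n)$-iteration bound, but not via a ``final post-processing step'' that excises parallel arcs. The parallel-arc-free local net handed to Theorem~\ref{theorem:theorem2.2} is the aiding net $\HHdagger$, obtained from $\HH$ by $\textsc{merge}$-ing every maximal chunk --- a much coarser collapse than deleting parallel arcs, since a chunk is any union of split components sharing a split pair. $\HHdagger$ is maintained conceptually throughout the iteration, not built at the end, and the absorption step (Subroutine~\ref{subroutine:B}) is what prevents pop-out: $\HH$ is kept a web (every arc simple or flexible, with flexibility defined via sprouts), an $\HHdagger$-wild $Y$ that is $\HH$-sticky is absorbed by $\textsc{merge}$-ing a minimal pod into one new flexible arc, and the flexibility invariant is exactly what lets this merge succeed without ejecting anything from $X$. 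Lemma~\ref{lemma:lemma3.5}\eqref{item2:lemma3.5} then upgrades ``$\HHdagger$ taming'' to ``$\HHdagger$ local'' because $\HHdagger$ has no parallel arcs. Your phrase ``extra witness structure'' stands in for webs, sprouts, flexible arcs, chunks, pods, wild and sticky sets, and the two-level $\HH$/$\HHdagger$ interplay --- and that structure is the substance of the proof, not a detail that can be deferred.
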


The proof of Theorem~\ref{theorem:theorem2.2} in~\cite{ChudnovskyS10}
takes up more than 20 pages. We will here present a stronger
characterization with a shorter proof, which moreover leads to a much
faster implementation.  Our results throughout the paper do not rely
on Theorem~\ref{theorem:theorem2.2}. Moreover, our paper delivers an
alternative self-contained proof for Theorem~\ref{theorem:theorem2.2}.

Chudnovsky and Seymour's proof of Theorem~\ref{theorem:theorem2.2} is
algorithmic maintaining an $X$-net $\HH$ with $X\subseteq V(G)$ having
no parallel arcs until a sapling of $G$ is found or $\HH$ becomes
local, implying that $G$ is sapling-free by the if direction of
Theorem~\ref{theorem:theorem2.2}.  In each iteration, if $\HH$ is not
local, they find a minimal set $Y\subseteq V(G-X)$ with connected
$G[Y]$ such that $Y$ is $\HH$-nonlocal.  Their proof for the only-if
direction of Theorem~\ref{theorem:theorem2.2} shows that if $G[X\cup
  Y]$ is sapling-free, then $\HH$ can be updated to an $X'$-net with
$Y\subseteq X'\subseteq X\cup Y$.  Although $Y$ joins the resulting
$X'$-net $\HH$, a subset of $X$ may have to be moved out of $\HH$ to
preserve Conditions~\ref{condition:N} for $\HH$.  To bound the number
of iterations, Chudnovsky and Seymour showed that the potential $|X| +
(n + 1) \cdot |V(\HH)|$ of $\HH$ stays $O(n^2)$ and is increased by
each iteration, implying that the total number of iterations is
$O(n^2)$. In the next section, we will present a new stronger
characterization that using parallel arcs with particular properties
avoids the aforementioned in-and-out situation. More precisely, our
$X$ will grow in each iteration, reducing the number of iterations to
at most $n$.

\section{Our stronger characterization}
\label{section:section3}

A base net of $G$ contains only simple arcs.  However, we do need
other more complex arcs, but we will show that it suffices that all
non-simple arcs are ``flexible'' in the sense defined below.  For
vertex sets $S$, $V_1$, and $V_2$, an {\em $(S,V_1,V_2)$-sprout} is an
induced subgraph of $G$ in one of the following {\em
  Types~\ref{type:S}}:
\begin{enumerate}[label={},ref={S},leftmargin=0pt]
\item 
\label{type:S}
\begin{enumerate}[label={\em \ref{type:S}\arabic*:},ref={S\arabic*},leftmargin=*]
\item 
\label{S1} 
A tree intersecting each of $S$, $V_1$, and $V_2$ at exactly one
vertex.

\item 
\label{S2} 
An $SV_1$-rung  
not intersecting $V_2$
plus a disjoint $SV_2$-rung
not intersecting $V_1$.

\item 
\label{S3}
A $V_1V_2$-rung 
not intersecting $S$
plus a disjoint $SV$-rung with $V=V_1\cup V_2$.
\end{enumerate}
\end{enumerate}

\begin{figure}[t]
\centerline{\scalebox{0.47}{\includegraphics{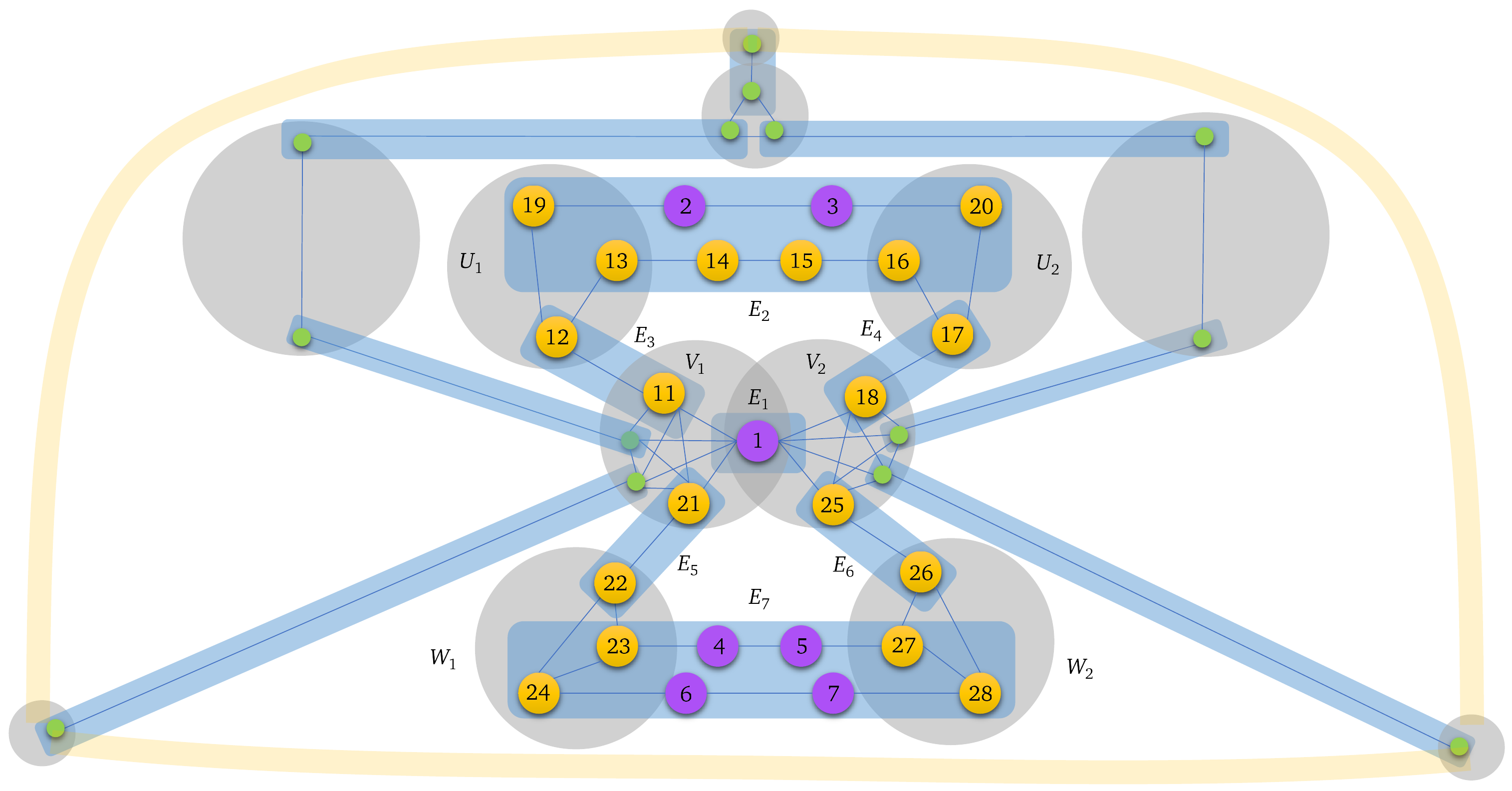}}}
\caption{A web $\HH$. The arcs of $\nabla(\HH)$ between the three
  leaves of $\HH$ are in yellow.}
\label{figure:figure4}
\end{figure}

Let $S=\{1,\ldots,7\}$ for the example in Figure~\ref{figure:figure4}.
Vertex $1$ is an $(S,V_1,V_2)$-sprout of Type~\ref{S1}.  The set
$\{2,19,12,11,13,14,15,16\}$ induces an $(S,V_1,U_2)$-sprout of
Type~\ref{S1}.  The only $(S,U_1,U_2)$-sprout and $(S,W_1,W_2)$-sprout
of Type~\ref{S1} contain vertex $1$.  The set $\{23,4,7,28\}$ induces
an $(S,W_1,W_2)$-sprout of Type~\ref{S2}.  The set
$\{19,2,13,14,15,16\}$ induces an $(S,U_1,U_2)$-sprout of
Type~\ref{S3}.  An arc $E=UV$ of $\HH$ is {\em flexible} if $G[E]$
contains an $(S,U,V)$-sprout for each nonempty vertex set $S\subseteq
E$.  For the example in Figure~\ref{figure:figure4}, arcs
$E_1,E_3,E_4,E_5,E_6$ are simple and arcs $E_1,E_2,E_7$ are flexible.
An $X$-net $\HH$ is an {\em $X$-web} if all arcs of $\HH$ are simple
or flexible.  A {\em web} is an $X$-web for some $X$.  A base net of
$G$ is a web of $G$.  Let $\HH$ be a net.  A {\em split component}
$\GG$ for $\HH$ is either an arc $UV$ of $\HH$ or a subgraph of $\HH$
containing a cutset $\{U,V\}$ of $\nabla(\HH)$ such that $\GG$ is a
maximal subgraph of $\nabla(\HH)$ in which $U$ and $V$ are nonadjacent
and do not form a cutset~\cite{SPQR}.  For both cases, we call
$\{U,V\}$ the {\em split pair} of $\GG$ for $\HH$.  The split
components having split pair $\{V_1,V_2\}$ in
Figure~\ref{figure:figure4} are (1) the $V_1V_2$-path with an arc
$E_1$, (2) the $V_1V_2$-path with arcs $E_3,E_2,E_4$, and (3) the
$V_1V_2$-path with arcs $E_5,E_7,E_6$.  Thus, even if $\HH$ has no
parallel arcs, there can be more than one split components sharing a
common split pair.  One can verify that each split component $\GG$ of
$\HH$ contains at most one leaf node of $\HH$ and, if $\GG$ contains a
leaf node $V$ of $\HH$, then $V$ belongs to the split pair of $\GG$.
A vertex subset $C$ of $G$ is a {\em chunk} of $\HH$ if $C$ is the
union of the arcs of one or more split components for $\HH$ that share
a common split pair $\{U,V\}$ for $\HH$. In this case, we call
$\{U,V\}$ the {\em split pair} of $C$ for $\HH$ and call $C$ a {\em
  $UV$-chunk} of $\HH$.  A chunk of $\HH$ is {\em maximal} if it is
not properly contained by any chunk of $\HH$.  A node of $\HH$ is a
{\em maximal split node} if it belongs to the split pair of a maximal
chunk for $\HH$.  For the net $\HH$ of $G$ in
Figure~\ref{figure:figure4}, $E_1$, $E_3$, $E_3\cup E_2$, $E_3\cup
E_2\cup E_4$, and $E_1\cup E_3\cup E_2\cup E_4$ are all chunks of
$\HH$.  If we consider only the subsets of $V(G)$ that intersect the
numbered vertices, then $E_1\cup \cdots\cup E_7$ is the only maximal
chunk and $V_1$ and $V_2$ are the only maximal split nodes.  Given an
$X$-net $\HH$, a subset $S$ of $X$ is {\em $\HH$-tamed} if every pair
of vertices from $S$ is either in the same arc or together in some
node of $\HH$.  A set $Y\subseteq V(G-X)$ is {\em $\HH$-tamed} if
$N(Y,X)$ is $\HH$-tamed.  $\HH$ is {\em taming} if every $Y\subseteq
V(G-X)$ with connected $G[Y]$ is $\HH$-tamed.  If $S\subseteq X$ is
$\HH$-local, then $S$ is $\HH$-tamed. The converse does not hold: If
$\HH$ has simple arcs $E$ and $F$ between nodes $U$ and $V$, $G[E]$ is
an edge $uv$ with $u\in U$ and $v\in V$, and $G[F]$ is a vertex $w\in
U\cap V$, then $\{u,v,w\}$ is $\HH$-tamed and $\HH$-nonlocal.
However, if $\HH$ has no parallel arcs, then each $\HH$-tamed subset
of $X$ is $\HH$-local, as shown in
Lemma~\ref{lemma:lemma3.5}\eqref{item2:lemma3.5}.

A {\em non-trivial} $V_1V_2$-chunk $C$ of $\HH$ is one that is not an
arc in $\HH$.  We then define the operation $\textsc{merge}(C)$ which
for a $V_1V_2$-chunk $C$ of $\HH$ replaces all arcs of $\HH$
intersecting $C$ by an arc $E = V_1V_2$ with $E = C$ and deletes the
nodes whose incident arcs are all deleted. We shall prove that this
$\textsc{merge}$ operation preserves that $\HH$ is a net (see
Lemma~\ref{lemma:lemma3.4}).  Let $\HHdagger$ denote the $X$-net
obtained from $\HH$ by applying $\textsc{merge}(C)$ on $\HH$ for each
maximal chunk $C$ of $\HH$. We call $\HHdagger$ the $X$-net that
{\em aids} $\HH$. Such an aiding net has no non-trivial chunks and no
parallel arcs.  See Figure~\ref{figure:figure5} for examples.  The
simple graph $\nabla(\HHdagger$) is triconnected.  $V$ is node of
$\HHdagger$ if and only if $V$ is a maximal split node of $\HH$.
$E$ is an arc of $\HHdagger$ if and only if $E$ is a maximal chunk
of $\HH$ (respectively, $\HHdagger$).  The next theorem is our
characterization, which is the basis for our much more efficient
near-linear time algorithm.

\begin{figure}[t]
\centerline{\scalebox{0.47}{\includegraphics{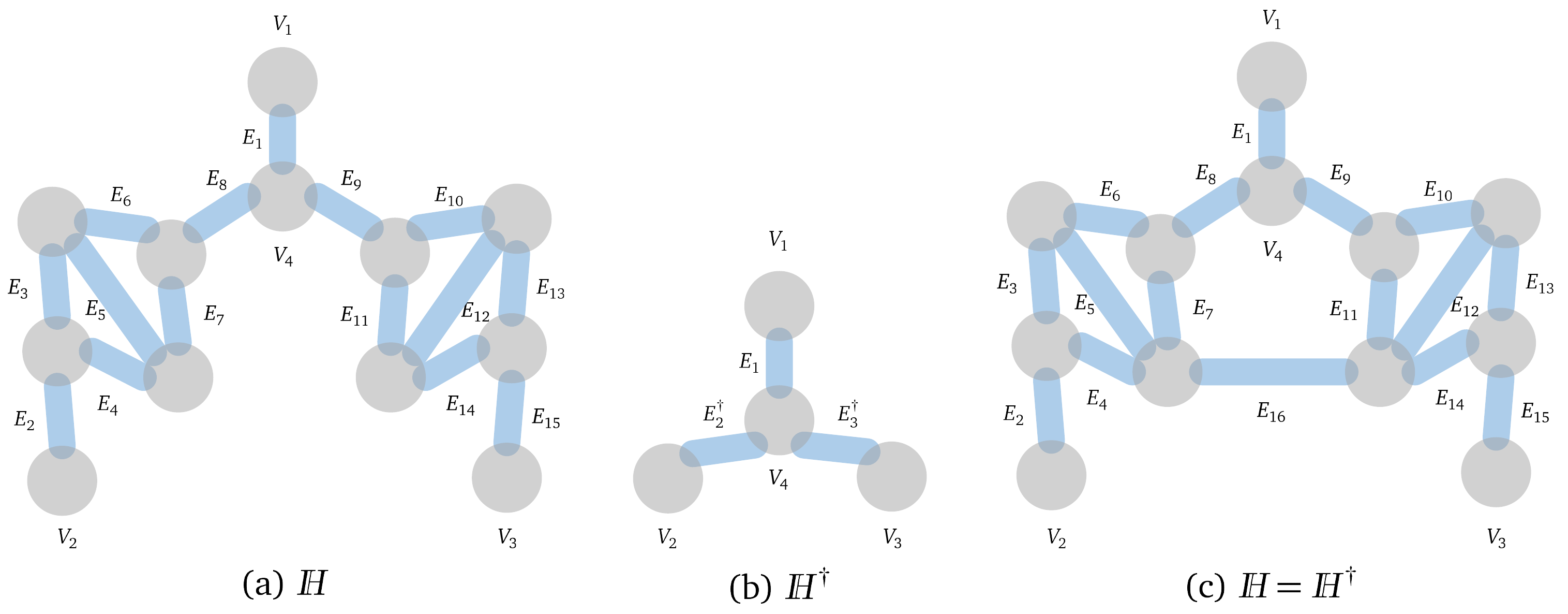}}}
\caption{The aiding net of the $\HH$ in (a) is the $\HHdagger$ in
  (b) with $E^\dagger_2=E_2\cup \cdots\cup E_8$ and
  $E^\dagger_3=E_9\cup\cdots\cup E_{15}$. The net $\HH$ in (c) aids
  itself.}
\label{figure:figure5}
\end{figure}

\begin{thm}
\label{theorem:theorem3.1}
$G$ is sapling-free if and only if $G$ admits a web $\HH$ with a
taming aiding net $\HHdagger$.
\end{thm}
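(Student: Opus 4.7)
The theorem is a bidirectional structural characterization, so my plan is to split the proof into the two directions, with the ``if'' direction being a local contradiction argument and the ``only if'' direction being constructive and inductive.

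For the ``if'' direction, I would assume for contradiction that $G$ has a sapling $T$ and simultaneously admits a web $\HH$ whose aiding net $\HHdagger$ is taming. By~\ref{N3} the three leaves of $G$ are leaf nodes of $\HH$, and since leaf nodes always lie in the split pair of some maximal chunk, they are also nodes of $\HHdagger$. Let $Y = V(T)\setminus X$ and consider each connected component $C$ of $G[Y]$: the taming property of $\HHdagger$ places $N(C,X)$ either inside a single arc or inside a single node of $\HHdagger$. Combined with Conditions~\ref{N4}--\ref{N6} and the triconnectivity of $\nabla(\HHdagger)$, this forces the restriction of $T$ to each maximal chunk $E^\dagger$ of $\HH$ to be tied to a single split pair $\{U,V\}$. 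Flexibility of $E^\dagger$, applied to the set $S=V(T)\cap E^\dagger$, then yields an $(S,U,V)$-sprout inside $G[E^\dagger]$ which rerouted through the split pair and grafted onto the rest of $T$ produces a second induced tree on $T$'s leaf set; the discrepancy between the sprout and the path $T$ actually takes through $E^\dagger$ can be leveraged to build an induced tree smaller than $T$ or else to contradict that $T$ is induced, which is the desired contradiction.

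For the ``only if'' direction, I would proceed by induction, using Lemma~\ref{lemma:lemma2.1} as the base case: since $G$ is sapling-free, Lemma~\ref{lemma:lemma2.1} already yields an $X$-web whose arcs are all simple. Maintaining an $X$-web $\HH$, I would check whether $\HHdagger$ is taming; if not, pick a minimal $Y\subseteq V(G-X)$ with connected $G[Y]$ such that $N(Y,X)$ is not $\HHdagger$-tamed. Since $G[X\cup Y]$ is an induced subgraph of $G$, it is also sapling-free, and I would use minimality of $Y$ together with the structure of $\HHdagger$ to decide where $Y$ attaches: either inside a single maximal chunk (extending an arc) or straddling one split pair of $\HHdagger$ (creating a new possibly parallel arc, or merging along a split pair). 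In each case the three sprout types are exactly designed to witness flexibility of the new or extended arc---Type~\ref{S1} when $Y$ connects to a common node, Type~\ref{S2} when $Y$ spans to both endpoints via disjoint rungs, and Type~\ref{S3} when a rung across the split pair is combined with a bridge to $Y$. Since each iteration strictly increases $|X|$ by at least $|Y|\ge 1$, the process terminates in at most $n$ iterations.

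The main obstacle will be the flexibility-preserving extension step. The key innovation over Chudnovsky--Seymour is that nothing pops out of $X$, and this is only possible because the sprout condition is strong enough to survive arc modifications. Concretely, one must argue that for every nonempty $S$ contained in the newly formed arc $E'=E\cup Y$ (or in a newly created parallel arc), an appropriate $(S,U,V)$-sprout can be exhibited inside $G[E']$; this requires a careful case analysis, using sapling-freeness of $G[X\cup Y]$ to rule out any configuration in which no sprout of the three admissible types exists, and using the minimality of $Y$ to control how $Y$ attaches to the existing chunk structure. Verifying that the resulting structure satisfies all of~\ref{N1}--\ref{N6} (in particular, that $\nabla(\HH)$ remains biconnected after the extension) is the other technical burden, and I expect this to be where the bulk of Section~\ref{section:section4} is spent.
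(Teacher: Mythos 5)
Your sketch of the ``only if'' direction tracks the paper's actual argument fairly well: start from the base web of Lemma~\ref{lemma:lemma2.1}, grow $X$ by repeatedly finding an $\HHdagger$-wild $Y$, and use sapling-freeness of $G[X\cup Y]$ to extend the web to an $X\cup Y$-web so that $|X|$ strictly increases. This is precisely what the paper encapsulates in Lemma~\ref{lemma:lemma3.2}: the dichotomy between $\HH$-nonsticky $Y$ (which forces a sapling) and $\HH$-sticky $Y$ (which allows the extension), and your remark that the bulk of the work is in verifying flexibility of the new arc and re-checking~\ref{N1}--\ref{N6} is accurate.

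Your ``if'' direction, however, has a genuine gap and is also misdirected. You invoke ``flexibility of $E^\dagger$'' where $E^\dagger$ is a maximal chunk of $\HH$, i.e., an arc of $\HHdagger$. But flexibility is a property guaranteed only for the arcs of the web $\HH$; maximal chunks of $\HH$ (arcs of $\HHdagger$) are not arcs of $\HH$ in general, and nothing in the hypotheses says they are simple or flexible. So the asserted $(S,U,V)$-sprout inside $G[E^\dagger]$ need not exist. Even setting that aside, the ``reroute through the split pair and graft onto $T$ to get a smaller induced tree or a non-induced contradiction'' step is not justified and there is no minimality assumption on $T$ to make a descent argument work. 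The paper instead proves the much simpler and stronger statement Lemma~\ref{lemma:lemma3.5}\eqref{item1:lemma3.5}: if $G$ admits \emph{any} taming net (no flexibility needed), then $G$ is sapling-free. The proof is a direct local argument on the taming net $\HHdagger$ itself: by the taming property and Condition~\ref{N6}, any component of $T-X$ has all its attachments in one arc; the nodes and arcs of $\HHdagger$ that meet $T$ then form a three-leaf connected subgraph, so $T$ must pass through a degree-$\geq 3$ node, and Condition~\ref{N6} forces a triangle in $T$---contradiction. You should replace your ``if'' direction with this argument; it needs none of the chunk or sprout machinery.
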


Theorem~\ref{theorem:theorem3.1} is stronger than Chudnovsky and
Seymour's Theorem~\ref{theorem:theorem2.2} in that our proof of
Theorem~\ref{theorem:theorem3.1} provides as a new shorter proof of
Theorem~\ref{theorem:theorem2.2}. To quantify the difference, the
proof of Theorem~\ref{theorem:theorem2.2} in~\cite{ChudnovskyS10}
takes up more than 20 pages while our proof of our stronger
Theorem~\ref{theorem:theorem3.1} is self-contained and takes up~13
pages
(pages~\pageref{section:section2}--\pageref{endofcharacterization})
including the review of their structure, many more figures, and a
simpler $O(mn)$-time algorithm.  For the relation between the two
structural theorems, we will prove in
Lemma~\ref{lemma:lemma3.5}\eqref{item2:lemma3.5} that every taming net
of $G$ having no parallel arcs is local. Since the aiding net
$\HHdagger$ in Theorem~\ref{theorem:theorem3.1} has no parallel
arcs, $\HHdagger$ is local as required by
Theorem~\ref{theorem:theorem2.2}.  The algorithmic advantage of
Theorem~\ref{theorem:theorem3.1} is that we know that $\HHdagger$ is
the aiding net of a web $\HH$ which has more structure than an
arbitrary net.

To get a self-contained proof of the easy if-direction of
Theorem~\ref{theorem:theorem3.1}, we prove more generally that if $G$
admits a taming net, then $G$ is sapling-free
(Lemma~\ref{lemma:lemma3.5}\eqref{item1:lemma3.5}). This proof holds
for any net including nets with parallel arcs like our web $\HH$.
Proving the only-if direction is the hard part for both structural
theorems. Our new proof follows the same general pattern as the old
one stated after the statement of Theorem~\ref{theorem:theorem2.2},
but with crucial differences to be detailed later.

We grow an $X$-web $\HH$ with $X\subseteq V(G)$ until a sapling of $G$
is found or $\HHdagger$ becomes taming, implying that $G$ is
sapling-free by the if direction of Theorem~\ref{theorem:theorem3.1}.
In each iteration, if $\HHdagger$ is not taming, we find a minimal
set $Y\subseteq V(G-X)$ with connected $G[Y]$ such that $Y$ is not
$\HHdagger$-tamed.  To prove the only-if direction of
Theorem~\ref{theorem:theorem3.1}, we show that if $G[X\cup Y]$ is
sapling-free, then $\HH$ can be expanded to an $X'$-web with $X'=X\cup
Y$.

Comparing with the proof of Chudnovsky and Seymour that we sketched
below Theorem~\ref{theorem:theorem2.2}, we note that in their case,
their new $X'$-net would be for some $Y\subseteq X'\subseteq X\cup Y$,
whereas we get $X'=X\cup Y$. This is why we can guarantee termination
in $O(n)$ rounds while they need a more complicated potential function
to demonstrate enough progress in $O(n^2)$ rounds.

Another important difference is that we operate both on a web $\HH$
and its aiding net $\HHdagger$. Recall that the web $\HH$ is a net
allowing parallel arcs, but with the special structure that all arcs
are simple or flexible. This special structure is crucial to our
simpler inductive step where we can always add $Y$ as above to get a
new web over $X'=X\cup Y$.  If we just used $\HH$, then we would have
too many untamed sets. This is where we use the aiding net
$\HHdagger$ which generally has fewer untamed sets. It is only for
the minimally $\HHdagger$-untamed sets $Y\subseteq V(G-X)$ that we
can guarantee progress as above. Thus we need the interplay between
the well-structured fine grained web $\HH$ and its more coarse grained
aiding net $\HHdagger$ to get our shorter more constructive proof of
Theorem~\ref{theorem:theorem3.1}.  On its own, our more constructive
characterization buys us a factor $n$ in speed.  This has to be
combined with efficient data structures to get down to near-linear
time.

\begin{figure}[t]
\centerline{\scalebox{0.47}{\includegraphics{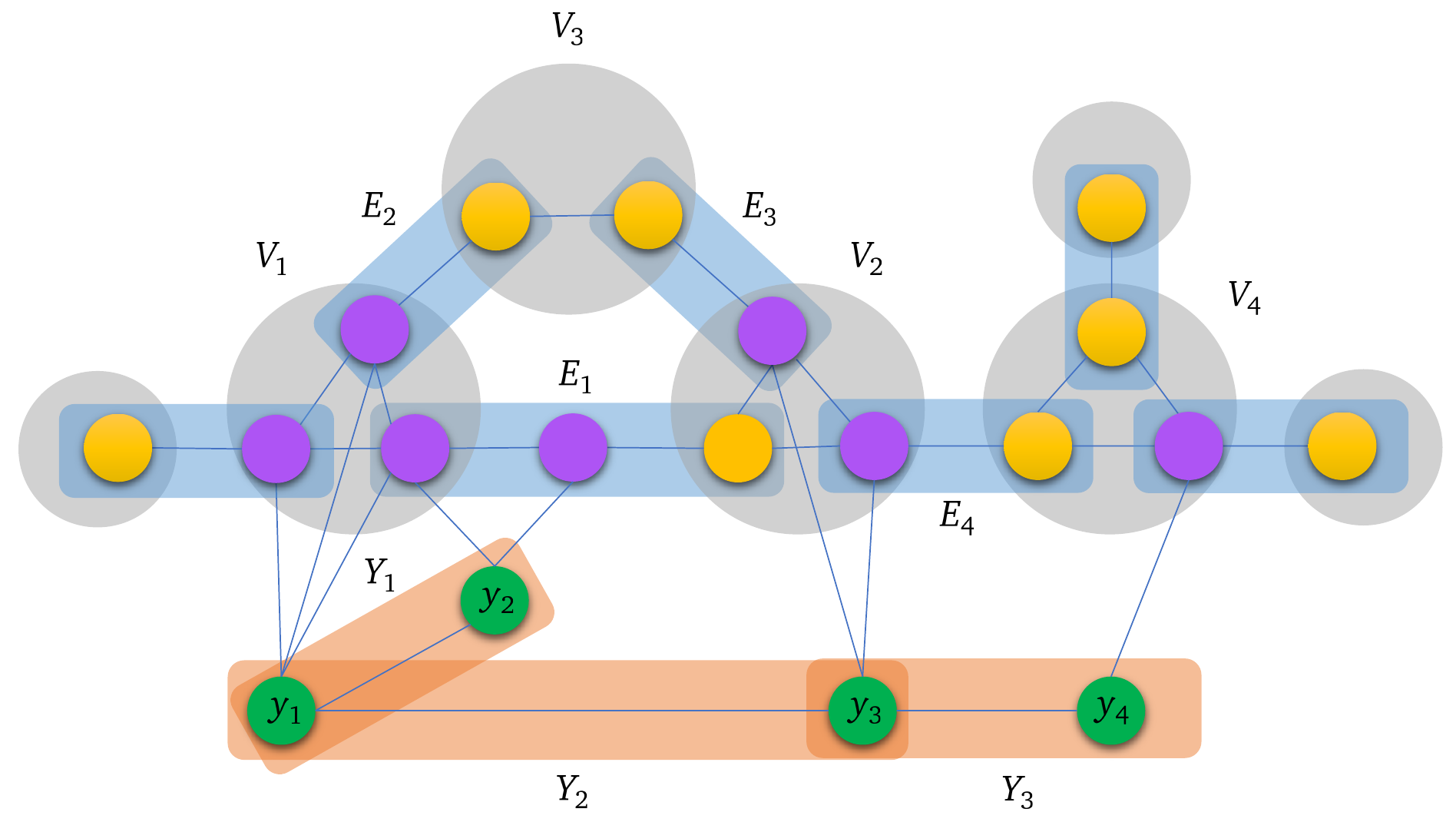}}}
\caption{An $X$-web $\HH$, where $X$ consists of the vertices other
  than $y_1,y_2,y_3,y_4$.  Vertices $y_1,\ldots,y_4$ are all
  $\HH$-tamed and $\HHdagger$-tamed.  
  $Y_1$ and $Y_2$ are $\HH$-wild 
and $\HHdagger$-nonwild.
  $Y_3$ is $\HH$-wild and $\HHdagger$-wild.  
$Y_1$ is $\HH$-solid.  $Y_2$ and
  $Y_3$ are $\HH$-nonsolid.  $E_1$, $E_1\cup E_2\cup E_3$, and
  $E_1\cup E_2\cup E_3\cup E_4$ are pods of $Y_1$ and $Y_2$ in $\HH$.
  $Y_3$ is $\HH$-unpodded.  $Y_1$ and $Y_2$ are $\HH$-sticky and $Y_3$
  is $\HH$-nonsticky.}
\label{figure:figure6}
\end{figure}

\subsection{Two major lemmas and our algorithm for detecting saplings}
\label{subsection:subsection3.1}

Let $\HH$ be an $X$-net.  An {\em $\HH$-wild} set is a minimally
$\HH$-untamed $Y\subseteq V(G-X)$ such that $G[Y]$ is a path.  In
Figure~\ref{figure:figure6}, $Y_1\cup Y_2$ is $\HH$-untamed but not
$\HH$-wild, since $Y_1\subsetneq Y_1 \cup Y_2$ is $\HH$-untamed.
$\HH$ is not taming if and only if $G$ admits an $\HH$-wild set.  An
$S\subseteq X$ is {\em $\HH$-solid} if $S$ is a node of $\HH$ or $S$
is a subset of an arc $E=UV$ of $\HH$ such that $G[E]$ contains no
$(S,U,V)$-sprout.  If $S$ is a subset of a simple arc of $\HH$, then
$S$ is $\HH$-solid if and only if $G[S]$ is an edge, since a sprout
has to be an induced subgraph of $G$.  Let $Y\subseteq V(G-X)$ such
that $G[Y]$ is a path.  $Y$ is {\em $\HH$-solid} if (1) $N(Y,X)$ is
the union of two $\HH$-solid sets and (2) $N(y,X)=\varnothing$ for
each internal vertex $y$, if any, of path $G[Y]$.  A {\em pod} of $Y$
in $\HH$ is a $V_1V_2$-chunk $C$ of $\HH$ with the following {\em
  Conditions~\ref{condition:P}}:
\begin{enumerate}[label={}, ref={P}, leftmargin=0pt]
\item 
\label{condition:P}
\begin{enumerate}[label={\em \ref{condition:P}\arabic*:},ref={\ref{condition:P}\arabic*},leftmargin=*]
\item 
\label{P1}
$N(Y,X)\subseteq V_1\cup C\cup V_2$.

\item 
\label{P2}
For each $i\in\{1,2\}$, $N(y,V_i)\subseteq C$ or $V_i\subseteq C\cup
N(y)$ holds for an end-vertex $y$ of path $G[Y]$. \end{enumerate}
\end{enumerate}
$Y$ is {\em $\HH$-podded} if $Y$ admits a pod in $\HH$.  $Y$ is {\em
  $\HH$-sticky} if $Y$ is $\HH$-solid or $\HH$-podded.  See
Figure~\ref{figure:figure6}.
 
\begin{lem}
\label{lemma:lemma3.2}
Let $Y$ be an $\HHdagger$-wild set for an $X$-web $\HH$.
\begin{enumerate*}[label=(\arabic*), ref=\arabic*]
\item 
\label{item1:lemma3.2}
If  $Y$ is 
$\HH$-nonsticky, then $G[X\cup  Y]$ contains a sapling.
\item 
\label{item2:lemma3.2}
If $Y$ is $\HH$-sticky, then 
$\HH$ can be expanded to an $X \cup Y$-web.
\end{enumerate*}
\end{lem}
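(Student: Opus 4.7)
The plan is to handle the two parts separately: Part~2 is a constructive extension of the web, while Part~1 recovers a sapling from the structural obstructions to stickiness. In both parts I would first unpack what it means for $Y$ to be $\HHdagger$-wild so that the minimality of $Y$ and the coarse-to-fine relationship between $\HHdagger$ and $\HH$ can be fully exploited.

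For Part~2, I would split into the two stickiness subcases. If $Y$ is $\HH$-solid with $N(Y,X)=S_1\cup S_2$ where each $S_i$ is either a node of $\HH$ or a subset of an arc $U_iV_i$ admitting no $(S_i,U_i,V_i)$-sprout, then solidity forces internal vertices of the path $G[Y]$ to carry no $X$-neighbors, so $Y$ attaches cleanly only at its two endpoints. Depending on whether each $S_i$ sits in a node or inside a simple or flexible arc, the extension either grows the node by adjoining a new simple arc along a subpath of $Y$, or splices $Y$ into the arc; the absence of an $(S_i,U_i,V_i)$-sprout guarantees that no forbidden configuration is created, so updated arcs stay simple or become flexible. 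If $Y$ is $\HH$-podded with pod $C$ between nodes $V_1$ and $V_2$, I would perform a \textsc{merge}-style operation, replacing all arcs contained in $C$ together with $Y$ by a single new arc $E'=V_1V_2$ whose induced subgraph is $G[C\cup Y]$, and then verify the net axioms \ref{N1}--\ref{N6} directly from \ref{P1} and \ref{P2}.

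For Part~1, I would use that $Y$ is $\HHdagger$-wild: there exist $u,v\in N(Y,X)$ not together in any arc or node of $\HHdagger$, and $Y$ is minimal with connected $G[Y]$ witnessing this. Because $Y$ is neither $\HH$-solid nor $\HH$-podded, the obstructions split into a few subcases. If some internal vertex of $G[Y]$ has an $X$-neighbor $w$, then minimality forces $w$ to lie so that two subpaths of $G[Y]$ together with induced paths through the three leaf nodes of $\HH$ (guaranteed by \ref{N3}) form an induced tree connecting the three leaves of $G$. Otherwise both endpoints of $G[Y]$ attach to $X$ but in a configuration that cannot be packaged into two $\HH$-solid sets nor contained in any pod; using the triconnectivity of $\nabla(\HHdagger)$ together with the flexibility of the web arcs, I would route three induced paths from the three leaf vertices of $G$ that meet at a central vertex or edge of $G[Y]$, yielding a sapling in $G[X\cup Y]$.

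The main obstacle will be verifying the flexibility condition in Part~2's podded subcase: for every nonempty $S\subseteq C\cup Y$ one must exhibit an $(S,V_1,V_2)$-sprout of one of Types~\ref{S1}--\ref{S3}. This requires a delicate case split according to how $S$ distributes across $C$ and $Y$, combining sprouts already present inside the flexible arcs making up $C$ (available because $\HH$ is a web) with the path $G[Y]$ and its controlled endpoint attachments described by \ref{P2}. A secondary challenge in Part~1 is to pin down, via the minimality of $Y$ and the structure of $\HHdagger$, exactly which combination of web arcs and subpaths of $G[Y]$ gives an \emph{induced} subgraph connecting all three leaves of $G$ without inadvertently introducing chords from the tight adjacency constraints of \ref{N6}.
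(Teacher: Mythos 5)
Your high-level decomposition is correct and matches the paper's: Part~2 splits into the $\HH$-solid and $\HH$-podded subcases with a \textsc{subdivide}/\textsc{merge}-style extension, and Part~1 aims to extract a sapling from the failure of stickiness. However, there are two genuine gaps that your sketch does not address, and both sit precisely at the hard points you flag.

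For Part~1, the paper does not argue the way you describe. It first uses an intermediate lemma (Lemma~\ref{lemma:lemma4.1}) to transfer $\HHdagger$-nonstickiness to $\HH$-nonstickiness, and then proves a dedicated statement (Lemma~\ref{lemma:lemma4.2}) for a \emph{self-aiding} net $\HH_0$, which is itself proven by a three-way case split on the invariant $\ell(Y,\HH_0,G)$ (the minimum number of $\HH_0$-tamed sets covering $N_G(Y,X)$), handled separately by Lemmas~\ref{lemma:lemma4.3},~\ref{lemma:lemma4.4}, and~\ref{lemma:lemma4.5}. A further reduction to \emph{simple} self-aiding nets via ``inducing sets'' is needed before Lemmas~\ref{lemma:lemma4.4} and~\ref{lemma:lemma4.5} can be applied. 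Your proposal collapses all of this into ``route three induced paths from the three leaves,'' but the real difficulty is controlling which of the many candidate routings is chordless --- this is where the bulk of the case analysis (triads, evading arcs, the $H_{i,j}$ gadgets in Lemma~\ref{lemma:lemma4.5}) happens, and your sketch does not say how to choose the paths so that Condition~\ref{N6} does not force extra edges.

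For Part~2 in the podded subcase, the obstacle you correctly identify --- showing the new arc $E=C\cup Y$ is flexible --- is not resolved by ``combining sprouts already present inside the flexible arcs making up $C$.'' The flexible arcs of $\HH$ are flexible \emph{relative to their own end-nodes}, not relative to $V_1$ and $V_2$, so their $(S,U,V)$-sprouts do not assemble into $(S,V_1,V_2)$-sprouts of $G[E]$ in any direct way. The paper's Case~2 (the case $S\subseteq C$) instead builds an auxiliary graph $G_0$ and an auxiliary $X_0$-web $\HH_0$ around the chunk $C$ with added gadget vertices, shows $Y$ is $\HH_0$-unpodded and $\HH_0$-nonsolid there, and then invokes Lemma~\ref{lemma:lemma3.2}\eqref{item1:lemma3.2} on $(G_0,\HH_0)$ to obtain a sapling, which after deleting the gadget becomes the required $(S,V_1,V_2)$-sprout. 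That cross-use of Part~1 inside the proof of Part~2 is the crux that your plan would have to discover to close the argument; without it, verifying flexibility for $S\subseteq C$ has no clear route.
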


By Lemmas~\ref{lemma:lemma2.1} and \ref{lemma:lemma3.2} and
Theorem~\ref{theorem:theorem3.1}, the following algorithm detects
saplings in $G$:

\fbox{\em Algorithm~\ref{algorithm:A}}\\[-18pt]
\begin{enumerate}[label={}, ref={A}, leftmargin=0pt]
\item 
\label{algorithm:A}
\begin{enumerate}[label={\em Step~\ref{algorithm:A}\arabic*:},ref={\ref{algorithm:A}\arabic*},leftmargin=*,
itemindent=1.5cm]
\item 
\label{A1}
If a sapling of $G$ is found (Lemma~\ref{lemma:lemma2.1}), then exit
the algorithm.
\item 
\label{A2}
Let $X$-web $\HH$ be the obtained base net of $G$ and then repeat the
following steps: 
\begin{enumerate}[leftmargin=*, label={(\alph*)},
    ref={\ref{A2}(\alph*)}]
\item 
\label{A2a}
If $\HHdagger$ is taming, then report that $G$ is sapling-free
(if-direction of Theorem~\ref{theorem:theorem3.1}) and exit.

\item 
\label{A2b}
If $\HHdagger$ is not taming, then obtain an $\HHdagger$-wild set
$Y$.

\item 
\label{A2c}
If $Y$ is $\HH$-nonsticky, then report that $G[X \cup Y]$ contains a
sapling (Lemma~\ref{lemma:lemma3.2}\eqref{item1:lemma3.2}) and exit.

\item 
\label{A2d}
If $Y$ is $\HH$-sticky, then expand $\HH$ to an $X\cup Y$-web
(Lemma~\ref{lemma:lemma3.2}\eqref{item2:lemma3.2}).
\end{enumerate}
\end{enumerate}
\end{enumerate}

\begin{lem}
\label{lemma:lemma3.3}
Algorithm~\ref{algorithm:A} can be implemented to run in $O(m\log^2
n)$ time.
\end{lem}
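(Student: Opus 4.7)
The plan is to bound the number of outer iterations of Step~\ref{A2} at $n$ and to show that the total work is $O(m\log^2 n)$ by amortizing across iterations. Because Step~\ref{A2d} grows $X$ by the nonempty wild set $Y$ and $|X|\le n$, the loop runs at most $n$ times and $\sum|Y|\le n$; the base net of Lemma~\ref{lemma:lemma2.1} is built in $O(m)$ time up front, so only the loop needs to be analysed.

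Three data structures are maintained in parallel. First, the web $\HH$ is stored together with, for each flexible arc $E=UV$, a succinct representation of its sprouts so that an $(S,U,V)$-sprout of Types~\ref{S1}--\ref{S3} can be located for any queried $S\subseteq E$. Second, the aiding net $\HHdagger$ is maintained as an SPQR-tree of $\nabla(\HH)$: maximal chunks of $\HH$ correspond to triconnected components and maximal split nodes to virtual cut pairs, so the nodes, arcs, and triads of $\HHdagger$ relevant to tamedness are directly readable from the tree. Third, a fully dynamic connectivity structure on $G-X$ in the Holm--Lichtenberg--Thorup style processes the deletions triggered when $Y$ is absorbed into $X$ in $O(\log^2 n)$ amortized time each; every edge is deleted at most once, contributing $O(m\log^2 n)$ in total.

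With these structures, Step~\ref{A2a} locates a component $Y'$ of $G-X$ whose neighborhood is not contained in any node, arc, or triad of $\HHdagger$, a test that reduces to a constant number of locality queries on the SPQR-tree. Step~\ref{A2b} shrinks $Y'$ to a minimally $\HHdagger$-untamed $Y$ by peeling off boundary vertices while monitoring the same test. Step~\ref{A2c} uses the sprout representation on the unique chunk of $\HH$ containing $N(Y,X)$ to decide $\HH$-solidity, and checks Conditions~\ref{condition:P} on candidate pods to decide $\HH$-poddedness. Step~\ref{A2d} then expands $\HH$ over $X\cup Y$ by modifying only the arcs touching that chunk, patches the SPQR-tree locally, and rebuilds the affected flexibility witnesses.

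The main obstacle is the amortization. Dynamic connectivity already accounts for $O(m\log^2 n)$, but the remaining work must be charged against the edges and vertices that permanently enter $X$ through Step~\ref{A2d}. The plan is to give each edge a ``touch credit'' paid when it moves out of $G-X$ or is incorporated into a new arc of $\HH$, using that each such transition happens at most once. The delicate point is that Step~\ref{A2d} may have to rebuild nontrivial portions of the flexibility witnesses of the affected arcs; controlling this rebuild so that it costs only $O(|Y|\log^2 n)$ plus $O(\log^2 n)$ per newly absorbed edge, together with bounding the SPQR-tree surgery within the same budget, is where the bulk of the implementation effort of Section~\ref{section:section5} will be concentrated.
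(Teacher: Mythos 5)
Your high-level scaffolding is right: bound iterations by $n$, use a Holm--de~Lichtenberg--Thorup decremental connectivity structure on $G-X$, and use an SPQR-tree to track the triconnected structure of $\nabla(\HH)$. But there are two substantive gaps that go beyond implementation detail.

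First, your plan to store ``a succinct representation of sprouts'' for each flexible arc, with rebuilds whenever the arc is touched, is both unnecessary and problematic. The paper never materializes sprouts at all. The crucial observation (Section~\ref{subsection:subsection5.2}) is that the $\HH$-solidity and $\HH$-poddedness tests in Step~\ref{A2c} reduce, via Lemma~\ref{lemma:lemma4.1}, to purely combinatorial conditions expressible in terms of per-vertex \emph{colors} (which arc/node/chunk/split-node contains the vertex) plus a handful of cardinality counters like $|E\cap V|$. These are $O(1)$ lookups per neighbor, so the whole of Step~\ref{A2c} costs $O(d(Y))$ per iteration and $O(m)$ overall because the $Y$'s are disjoint. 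Any scheme that tries to store and maintain sprout witnesses under arc merges would have to justify a nontrivial rebuild bound, and you have not indicated how to do that; the paper simply avoids the issue.

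Second, and more seriously, your treatment of Steps~\ref{A2a}--\ref{A2b} --- ``locality queries on the SPQR-tree,'' then ``peeling off boundary vertices'' to minimize $Y$ --- does not yield the required $O(d(Y)\log^2 n)$ amortized bound and misattributes the role of the SPQR-tree. In the paper the SPQR-tree (together with the dynamic LCA/subtree structure of Lemma~\ref{lemma:lemma5.3}) is used in Step~\ref{A2d} to locate minimal pods efficiently, not to detect untamed sets. Detecting and minimizing an $\HHdagger$-wild $Y$ is handled by an entirely different machine (Section~\ref{subsection:subsection5.4}): a spanning forest $F$ of the decremental graph $G-X$, a top forest over $F$ maintaining a constant-size \emph{representative set} $R_C$ for every cluster $C$ (so that untamedness of $N(V(C)\setminus\partial C,X)$ is decidable in $O(1)$ from $R_C$), and two recursive top-tree routines $\textsc{tree-wild}$ and $\textsc{closest}$ that zoom in on a minimal untamed path in $O(\log n)$ per level. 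Naive peeling would be $\Theta(|Y|)$ per removed vertex without the top-tree/representative-set indexing, which is too slow. Your amortization also omits the two halving/doubling arguments the paper leans on: $\HHdagger$-arc colors change $O(\log n)$ times per vertex because a vertex that gets a new $\HHdagger$-arc color moves to an arc of at most half the size, and $\HH$-arc recoloring during $\textsc{merge}$ doubles the cardinality of the containing arc, so it too happens $O(\log n)$ times per vertex. Without naming these, the ``touch credit'' scheme is not an argument.
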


\subsection[]{Reducing
Theorems~\ref{theorem:theorem1.1},~\ref{theorem:theorem2.2},
and~\ref{theorem:theorem3.1} to Lemmas~\ref{lemma:lemma3.2}
and~\ref{lemma:lemma3.3} via aiding net}
\label{subsection:subsection3.2}

We need a relationship between simple paths in $\HH$ and induced paths
in $G$. For any simple $UV$-path $\PP$ of $\HH$ (i.e., $U$ and $V$ are
the end-nodes of $\PP$ in $\HH$), we define a {\em $\PP$-rung} of $G$
as a $UV$-rung of $G$ where all edges are contained in the arcs of
$\PP$.  Such a $\PP$-rung always exists by Conditions~\ref{N4}
and~\ref{N6} of $\HH$ as long as $U\ne V$.  For the degenerate case
$U=V$, let {\em $\PP$-rung} be defined as the empty vertex set.  For
any distinct nodes $U_1$ and $U_2$ of $\HH$ intersecting a
$V_1V_2$-chunk $C$ of $\HH$, there are disjoint $\UU\VV$-rungs $\PP_1$
and $\PP_2$ of $\HH$ with $\UU=\{U_1,U_2\}$ and $\VV=\{V_1,V_2\}$ by
Condition~\ref{N1} of $\HH$.  Since $\PP_1$ and $\PP_2$ are disjoint,
any $\PP_1$-rung and $\PP_2$-rung of $G$ are disjoint and nonadjacent
by Conditions~\ref{N2} and~\ref{N6} of $\HH$.  Consider the
$V_1V_2$-chunk $C=E_1\cup\cdots\cup E_7$ in
Figure~\ref{figure:figure4}. Let $\VV=\{V_1,V_2\}$.  Let $\PP_1$ be
the path of $\HH$ with arc $E_3$.  Let $\PP_2$ be the path of $\HH$
with arc $E_4$.  Let $\PP_3$ be the path of $\HH$ with arcs $E_6$ and
$E_7$.  Let $\PP_4$ be the degenerate path of $\HH$ consisting of a
single node $V_1$.  If $\UU=\{U_1,U_2\}$, then $\PP_1$ and $\PP_2$ are
disjoint $\UU\VV$-rungs.  If $\UU=\{U_1,W_1\}$, then $\PP_1$ and
$\PP_3$ are disjoint $\UU\VV$-rungs of $\HH$.  If $\UU=\{V_1,W_1\}$,
then $\PP_3$ and $\PP_4$ are disjoint $\UU\VV$-rungs of $\HH$.  The
path of $G$ induced by vertex set $\{11,12\}$ is the unique
$\PP_1$-rung of $G$.  The path of $G$ induced by vertex set
$\{17,18\}$ is the unique $\PP_2$-rung of $G$.  The paths induced by
vertex sets $\{25,26,27,5,4,23\}$ and $\{25,26,28,7,6,24\}$ are the
two $\PP_3$-rungs of $G$.  The empty vertex set is the unique
$\PP_4$-rung of $G$.

\begin{lem}
\label{lemma:lemma3.4}
If $C$ is a $V_1V_2$-chunk of an $X$-net $\HH$, then applying
$\textsc{merge}(C)$ on $\HH$ results in an $X$-net.
\end{lem}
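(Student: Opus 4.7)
The plan is to verify each of Conditions~\ref{condition:N} for the graph $\HH'$ obtained from $\HH$ by $\textsc{merge}(C)$. Write $E_{*}$ for the new arc of $\HH'$ that replaces every arc of $\HH$ meeting $C$; by definition $E_{*}=C$ and its end-nodes are $V_1$ and $V_2$. One preliminary observation drives the whole proof: any node $W\notin\{V_1,V_2\}$ lying in a split component that makes up $C$ has all its incident arcs inside that split component, because $\{V_1,V_2\}$ is a cutset of $\nabla(\HH)$ separating $W$ from everything else. Hence exactly these interior nodes, together with all arcs contained in $C$, get deleted by the merge; every surviving node other than $V_1,V_2$ lies in $\HH$ outside $C$, and \ref{N5} applied to $\HH$ gives $W\cap F=\varnothing$ for every arc $F\subseteq C$ and hence $W\cap C=\varnothing$.

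From this, Conditions~\ref{N2}, \ref{N3}, and \ref{N5} fall out at once. \ref{N2}: the arcs of $\HH'$ still partition $X$, because $E_{*}$ is the disjoint union of the arcs it replaces. \ref{N3}: as noted right before the definition of $\textsc{merge}$, each split component contains at most one leaf node of $\HH$ and, when it does, that leaf lies in its split pair; so every leaf node of $\HH$ that happens to lie in $C$ is one of $V_1,V_2$ and survives. \ref{N5}: for arcs other than $E_{*}$ the condition is inherited verbatim from $\HH$, while for $E_{*}$ it follows from the preliminary observation together with the fact that $V_1\cap C$ and $V_2\cap C$ are nonempty (each $V_i$ is the end-node in $\HH$ of some arc $F\subseteq C$, so \ref{N5} for $\HH$ yields $V_i\cap F\ne\varnothing$).

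For \ref{N1}, connectivity of $\HH'$ is immediate since collapsing the connected chunk $C$ into a single arc between $V_1$ and $V_2$ preserves the connectivity of $\HH$. Biconnectivity of $\nabla(\HH')$ is preserved because $\{V_1,V_2\}$ is a split pair of $\nabla(\HH)$ in the SPQR sense: on either side of this separating pair the induced structure stays biconnected, and one side is now contracted to a single $V_1V_2$-arc; the three leaves of $\HH'$ coincide with those of $\HH$, so $\nabla$ adds the same three edges. Condition~\ref{N6} is a short case analysis: for $u\in E_{*}$ in the original arc $E'\subseteq C$ and $v$ in some other arc $F$ of $\HH'$, a common end-node of $E'$ and $F$ must lie in $\{V_1,V_2\}$ (an interior node of a split component of $C$ cannot be an end-node of $F$, by the preliminary observation), and \ref{N6} for $\HH$ applied to $E'$ and $F$ translates directly into \ref{N6} for $\HH'$ applied to $E_{*}$ and $F$.

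The main obstacle is \ref{N4}: given any $u\in C$, I must produce a $V_1V_2$-rung of $G[C]$ through $u$. Let $E'\subseteq C$ be the arc of $\HH$ containing $u$, with end-nodes $W_1,W_2$, and let $R$ be a $W_1W_2$-rung of $G[E']$ through $u$, supplied by \ref{N4} for $\HH$. The paragraph just before the lemma supplies, inside $C$, two vertex-disjoint paths $\PP_1,\PP_2$ of $\HH$ linking $\{W_1,W_2\}$ with $\{V_1,V_2\}$; pick a $\PP_i$-rung $R_i$ of $G$ for each. Conditions~\ref{N2} and \ref{N6} of $\HH$, together with the observation that each rung meets the end-nodes of its own arcs only at the rung's endpoints, imply that no edges of $G$ cross between $R_1$, $R$, and $R_2$ except at most at the two corner pairs in $W_1$ and $W_2$. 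Concatenating gives an induced $V_1V_2$-path in $G[C]$ through $u$ whose only internal edges are the path edges themselves; it is therefore vertex-minimal, i.e., a $V_1V_2$-rung of $G[C]$ containing $u$.
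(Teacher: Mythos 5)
Your proof is correct and follows essentially the same route as the paper: both reduce the non‑trivial work to Condition~\ref{N4} and discharge it with the same construction (two disjoint $\{U_1,U_2\}\{V_1,V_2\}$-rungs of $\HH$ inside $C$, realized as $G$-rungs, concatenated with a $U_1U_2$-rung of $G[E]$ through the given vertex). Your treatment of Conditions~\ref{N1}--\ref{N3}, \ref{N5}, \ref{N6} is a more verbose unpacking of what the paper dismisses in a sentence, but it is the same argument.
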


\begin{proof}
Let $\HHprime$ be the resulting $\HH$.  Since any node cutset of $\HHprime$ is
also a node cutset of $\HH$, Conditions~\ref{N1} of $\HHprime$ holds.
Conditions~\ref{N2} and~\ref{N3} of $\HHprime$ hold trivially.
Conditions~\ref{N5} and~\ref{N6} of $\HHprime$ follow from those of $\HH$.
To see Condition~\ref{N4} of $\HHprime$, let $x$ be a vertex in $C$.  Let
$E=U_1U_2$ be the arc of $\HH$ containing $x$.  There are disjoint
$\UU\VV$-rungs $\PP_1$ and $\PP_2$ of $\HH$ with $\UU=\{U_1,U_2\}$ and
$\VV=\{V_1,V_2\}$.  Let each $P_i$ with $i\in\{1,2\}$ be a
$\PP_i$-rung of $G[C]$.  Let $Q$ be a $U_1U_2$-rung of $G[E]$
containing $x$.  $G[P_1\cup Q\cup P_2]$ is a $V_1V_2$-rung of $G[C]$
containing $x$.
\end{proof}

\begin{lem}
\label{lemma:lemma3.5}
\begin{enumerate*}[label=(\arabic*), ref=\arabic*]
\item 
\label{item1:lemma3.5}
If $G$ admits a taming net, then $G$ is sapling-free.

\item 
\label{item2:lemma3.5}
If an $X$-net $\HH$ has no parallel arcs, then every $\HH$-tamed
subset of $X$ is $\HH$-local.
\end{enumerate*}
\end{lem}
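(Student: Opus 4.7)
\textbf{Plan for part~\eqref{item2:lemma3.5}.} I would case-analyze on $k$, the number of arcs of $\HH$ that meet $S$. By Condition~\ref{N2} the arcs partition $X$, and by Condition~\ref{N5} any node containing a vertex $v\in X$ must be an end-node of the (unique) arc containing $v$. Hence whenever two vertices of $S$ lying in distinct arcs share a node, that node is a common end-node of both arcs; under the no-parallel-arcs hypothesis, any two distinct arcs share at most one end-node. If $k=1$ then $S$ lies in a single arc. If $k=2$ then tameness forces all cross-pairs to share the unique common end-node of the two arcs, so $S$ lies in one node. If $k=3$ then the three pairwise common end-nodes are either all the same node (again $S$ in a node) or three distinct nodes, in which case the three arcs form a triangle in $\HH$ and each $S\cap E_i$ is pinched inside the intersection of the two relevant end-nodes, placing $S$ in the corresponding triad. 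If $k\ge 4$, a pigeon-hole argument against any three already-chosen arcs shows that no-parallel-arcs forces every further arc to share a single common end-node with all the others, so $S$ again lies in one node. In every case $S$ is $\HH$-local.

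\textbf{Plan for part~\eqref{item1:lemma3.5}.} I would argue by contradiction: suppose $\HH$ is a taming $X$-net of $G$ and $T$ is a sapling of $G$. By Condition~\ref{N3} the terminals $s_1,s_2,s_3$ are the three leaf nodes of $\HH$, and each $s_i$ lies in the unique arc of $\HH$ incident to $\{s_i\}$. Since each $s_i$ has degree one in $G$ and therefore in $T$, the sapling $T$ is a subdivided star with a degree-three Steiner vertex $v^\star$ and three internally disjoint branches $P_1,P_2,P_3$ to the $s_i$'s. The core local observation is that any two consecutive $X$-vertices along $T$ form an $\HH$-tamed pair: if they are joined by a $T$-edge, Condition~\ref{N6} places them in the same arc or in a common end-node; if they bracket a maximal subpath $Y$ of $T$ lying in $V(G)\setminus X$, then $G[Y]$ is a connected subgraph of $G-X$ and the taming hypothesis gives $N(Y,X)$ $\HH$-tamed.

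To convert this local structure into a global contradiction, I would focus on the three $X$-vertices $u_1,u_2,u_3$ on branches $P_1,P_2,P_3$ closest to $v^\star$, taken either as the $X$-neighbors of $v^\star$ in $T$ (when $v^\star\in X$) or as the $X$-neighbors of the $T-X$ component containing $v^\star$ (otherwise). In both cases pairwise application of the core observation shows $\{u_1,u_2,u_3\}$ is $\HH$-tamed. Whenever two of the $u_i$'s lie in distinct arcs of $\HH$, they share a common end-node, and Condition~\ref{N6} then provides an edge $u_iu_j$ of $G$; but $u_i$ and $u_j$ lie on different branches of $T$ and are non-adjacent in $T$, so this edge is a chord in $G[V(T)]$, contradicting that $T$ is an induced tree. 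The main obstacle is the degenerate subcase where $u_1,u_2,u_3$ all sit inside one arc $E$ of $\HH$; I expect to handle it by walking further along each branch (using Condition~\ref{N4}'s $UV$-rung description of $G[E]$ to control how each branch lives inside $E$) until the three branches must enter distinct arcs in order to reach three distinct leaf nodes of $\HH$, at which point the chord argument above applies at the first point of divergence.
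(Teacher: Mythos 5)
Part~\eqref{item2:lemma3.5} of your plan is essentially the paper's argument: the paper argues by contradiction and yours is a direct case split, but both observe that no-parallel-arcs plus pairwise-sharing forces the arcs to share a single node or to form a triangle, and then tameness pins $S$ into that node or into the triad. That part is fine.

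Part~\eqref{item1:lemma3.5} starts from the same core observation the paper uses (adjacent $X$-vertices of $T$ in distinct arcs share a node by Condition~\ref{N6}; the $X$-neighbourhood of any $T-X$ component is, by tameness and then N6, forced into a single arc, else an induced-tree chord appears). Where you diverge is in how the contradiction is extracted, and this is where there is a genuine gap. You focus on $u_1,u_2,u_3$ near $v^\star$ and invoke the chord argument when two of them sit in distinct arcs. But, as you note, the degenerate case where all three lie in one arc $E$ is ``the main obstacle,'' and in fact it is the \emph{generic} case: whenever $v^\star\notin X$, the second core observation already forces $u_1,u_2,u_3$ to lie in a single arc. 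Your ``walk further until the branches enter distinct arcs'' is not a proof step, and the chord argument does not automatically apply at the first arc-transition: two branches can exit $E$ through the same end-node $U$ into the \emph{same} arc $F$, in which case the two new vertices $f_1,f_2\in F\cap U$ live in one arc and N6 gives no edge between them; and if the common exit vertex is $v^\star$ itself, $v^\star f_i$ is already a tree edge so there is no chord either. One then needs a recursion on a shrinking tree (or a separate pigeonhole on the three exit nodes), and you have not set that up or argued termination.

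The paper resolves exactly this with a different, cleaner device: instead of tracing branches, it forms the subgraph $\TT$ of $\HH$ whose arcs and nodes are those meeting $T$, argues via Conditions~\ref{N2},~\ref{N3},~\ref{N5} and the two core observations that $\TT$ is connected with precisely the three leaf nodes as leaves, so $\TT$ has a node $V$ of degree at least three, and then N6 forces a triangle among $T$-vertices in the three incident arcs at $V$. This sidesteps the branch bookkeeping entirely. If you want to stay with your branch-tracing plan, you would need to make precise the maximal $E$-subtree $T_E\ni v^\star$, show it has at least two exit edges each landing in a common end-node of $E$, handle the subcase where two exits use the same end-node and the same target arc by recursing on a strictly smaller configuration, and argue termination from the fact that the three leaves lie in three pairwise distinct arcs of $\HH$; none of that is in your sketch.
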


Since any $\HH$-local subset of $X$ for any $X$-net $\HH$ is
$\HH$-tamed, Lemma~\ref{lemma:lemma3.5}\eqref{item1:lemma3.5} implies
the if direction of Chudnovsky et al.'s
Theorem~\ref{theorem:theorem2.2}. Moreover, by
Lemma~\ref{lemma:lemma3.5}\eqref{item2:lemma3.5}, the only-if
direction of Theorem~\ref{theorem:theorem3.1} implies the only-if
direction of Theorem~\ref{theorem:theorem2.2}.  Thus, our proofs for
Lemma~\ref{lemma:lemma3.5} and the only-if direction of
Theorem~\ref{theorem:theorem3.1} form a self-contained proof for
Theorem~\ref{theorem:theorem2.2}.

\begin{proof}[{Proof of Lemma~\ref{lemma:lemma3.5}}]
Statement~\ref{item1:lemma3.5}: Assume a taming net $\HH$ and a
sapling $T$ of $G$ for contradiction.  By Condition~\ref{N6} of $\HH$,
any two adjacent vertices in $T$ contained by distinct arcs of $\HH$
belong to a node.  If $G[Y]$ is a connected component of $T-X$, then
vertices $u$ and $v$ of $T$ in $N(Y,X)$ belong to an arc of $\HH$: If
$u$ and $v$ were in distinct arcs, then $\{u,v\}$ would be contained
by a node of $\HH$, since $\HH$ is taming.  By Condition~\ref{N6} of
$\HH$, $uv$ is an edge of $G$, contradicting that $T$ is an induced
tree.  By Conditions~\ref{N2},~\ref{N3}, and~\ref{N5} of $\HH$, the
nodes and arcs of $\HH$ intersecting $T$ form a three-leaf connected
subgraph $\TT$ of $\HH$. Thus, $T$ intersects a node of $\TT$ and
three of its incident arcs in $\TT$.  Condition~\ref{N6} implies a
triangle in $T$, contradiction.

Statement~\ref{item2:lemma3.5}: Assume an $\HH$-tamed $\HH$-nonlocal
$S\subseteq X$ for contradiction.  Let $E_1,\ldots,E_\ell$ with
$\ell\geq 2$ be the arcs of $\HH$ intersecting $S$.  Since $S$ is
$\HH$-tamed, any $E_i$ and $E_j$ with $1\leq i<j\leq\ell$ share a
common end-node. If there is a common end-node $V$ of
$E_1,\ldots,E_\ell$, then the other end-nodes of $E_i$ with
$i\in\{1,\ldots,\ell\}$ are pairwise distinct, since $\HH$ has no
parallel arcs.  Since $S$ is $\HH$-tamed, Condition~\ref{N6} implies
$S\subseteq V$, contradicting that $S$ is $\HH$-nonlocal.  Thus,
$\ell=3$ and $E_1,E_2,E_3$ form a triangle of $\HH$. Since $S$ is
$\HH$-nonlocal, there a vertex $x_i\in E_i\setminus V_i$ with
$i\in\{1,2,3\}$ for an end-node $V_i$ of $E_i$.  Let $x_j$ be a vertex
of $S$ in the arc $E_j$ with $j\in\{1,2,3\}\setminus \{i\}$ incident
to $V_i$. $\{x_i,x_j\}\subseteq S$ is $\HH$-untamed, contradicting
that $S$ is $\HH$-tamed.
\end{proof}

\begin{proof}[Proof of Theorems~\ref{theorem:theorem1.1} and~\ref{theorem:theorem3.1}]
The if direction of Theorem~\ref{theorem:theorem3.1} follows from
Lemma~\ref{lemma:lemma3.5}\eqref{item1:lemma3.5}.  To see the only-if
direction of Theorem~\ref{theorem:theorem3.1}, let $\HH$ be an $X$-web
with maximum $|X|$ as ensured by Lemma~\ref{lemma:lemma2.1}.  If
$\HHdagger$ were not taming, then any $\HHdagger$-wild $Y$ would
be $\HH$-sticky by Lemma~\ref{lemma:lemma3.2}\eqref{item1:lemma3.2},
which in turn implies an $X\cup Y$-web by
Lemma~\ref{lemma:lemma3.2}\eqref{item2:lemma3.2}, contradicting the
maximality of $\HH$.  Thus Theorem~\ref{theorem:theorem3.1} follows.
By Lemmas~\ref{lemma:lemma2.1} and \ref{lemma:lemma3.2} and the if
direction of Theorem~\ref{theorem:theorem3.1},
Algorithm~\ref{algorithm:A} correctly detects saplings in $G$.  Thus,
Theorem~\ref{theorem:theorem1.1} follows from
Lemma~\ref{lemma:lemma3.3}.
\end{proof}

Lemma~\ref{lemma:lemma3.3} is not needed in the above reduction of
Theorem~\ref{theorem:theorem3.1} or else our proof of
Theorem~\ref{theorem:theorem2.2} would not be shorter than that
in~\cite{ChudnovskyS10}.  To complete proving
Theorems~\ref{theorem:theorem2.2} and~\ref{theorem:theorem3.1}, we
prove Lemma~\ref{lemma:lemma3.2} in~\S\ref{section:section4}.  After
that, to complete proving Theorem~\ref{theorem:theorem1.1}, we prove
Lemma~\ref{lemma:lemma3.3} in~\S\ref{section:section5}.
 
\section[]{Proving Lemma~\ref{lemma:lemma3.2}}
\label{section:section4}
The following lemma is needed in the proofs of
Lemma~\ref{lemma:lemma3.2}\eqref{item1:lemma3.2}
in~\S\ref{subsection:subsection4.1} and
Lemma~\ref{lemma:lemma3.2}\eqref{item2:lemma3.2}
in~\S\ref{subsection:subsection4.2}.  For any chunk $C$ of a net
$\HH$, the {\em arc set} $\CC$ of $\HH$ for $C$ consists of the arcs
of $\HH$ that intersect $C$.

\begin{lem}
\label{lemma:lemma4.1}
Let $\HH$ be an $X$-web.
\begin{enumerate*}[label=(\arabic*), ref=\arabic*]
\item  
\label{item1:lemma4.1}
If $Y$ is an $\HHdagger$-wild set, then $Y$ is $\HHdagger$-podded
if and only if $Y$ is $\HH$-podded.

\item 
\label{item2:lemma4.1}
Each $\HHdagger$-solid subset of $X$ is $\HH$-solid.
\end{enumerate*}
\end{lem}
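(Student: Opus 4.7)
The plan uses the correspondence that the arcs of $\HHdagger$ are the maximal chunks of $\HH$, the nodes of $\HHdagger$ are the maximal split nodes of $\HH$, and (since $\nabla(\HHdagger)$ is triconnected and $\HHdagger$ has no parallel arcs) the only chunks of $\HHdagger$ are its arcs. For part~(1), the direction $\HHdagger$-podded $\Rightarrow$ $\HH$-podded is immediate: an $\HHdagger$-pod is a chunk of $\HHdagger$, hence an arc of $\HHdagger$, hence a maximal chunk of $\HH$ with the same split pair, and Conditions~\ref{P1},~\ref{P2} transfer verbatim. For the converse, let $C$ be an $\HH$-pod of $Y$ with split pair $\{V_1,V_2\}$ and let $C^\dagger$ be the maximal chunk of $\HH$ containing $C$, with split pair $\{V_1^\dagger,V_2^\dagger\}$; I will show $C^\dagger$ is an $\HHdagger$-pod of $Y$. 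Every node of $\HH$ is either a maximal split node or is internal to a unique maximal chunk (with all its incident arcs in that chunk), so each $V_i$ either lies in $\{V_1^\dagger,V_2^\dagger\}$ or satisfies $V_i\subseteq C^\dagger$. The hypothesis that $Y$ is $\HHdagger$-wild precludes both $V_1,V_2$ being internal, since otherwise $N(Y,X)\subseteq V_1\cup C\cup V_2\subseteq C^\dagger$ would lie in a single arc of $\HHdagger$, making $Y$ $\HHdagger$-tamed. With at least one $V_i\in\{V_1^\dagger,V_2^\dagger\}$, the inclusion $V_1\cup C\cup V_2\subseteq V_1^\dagger\cup C^\dagger\cup V_2^\dagger$ gives~\ref{P1}; for~\ref{P2}, the boundary case (a $V_i$ equaling some $V_j^\dagger$) inherits directly from~\ref{P2} for $C$, while if one $V_i$ is internal (say $V_2$ internal with $V_1=V_1^\dagger$), then $N(Y,X)\subseteq V_1^\dagger\cup C^\dagger$ together with the observation that $V_1^\dagger\cap V_2^\dagger\subseteq C^\dagger$ (shared vertices lie on arcs between the two boundary nodes, which are inside $C^\dagger$) force $N(y,V_2^\dagger)\subseteq C^\dagger$.

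For part~(2), if $S$ is a node of $\HHdagger$, then $S$ is a maximal split node hence a node of $\HH$ and therefore $\HH$-solid. Otherwise $S\subseteq E^\dagger$ for an arc $E^\dagger=U^\dagger V^\dagger$ of $\HHdagger$ with no $(S,U^\dagger,V^\dagger)$-sprout in $G[E^\dagger]$, where $E^\dagger$ is a maximal chunk of $\HH$. I plan to argue the contrapositive: assuming $S$ is not $\HH$-solid, I will construct an $(S,U^\dagger,V^\dagger)$-sprout in $G[E^\dagger]$, contradicting $\HHdagger$-solidity. There are two failure modes: (a)~$S$ spans two distinct arcs $E_1,E_2$ of $\HH$ inside $E^\dagger$, or (b)~$S\subseteq E=UV$ for a single arc $E$ of $\HH$ but $G[E]$ contains an $(S,U,V)$-sprout. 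In case~(b), I extend the given sprout outward by attaching two disjoint $\HH$-rungs from $\{U,V\}$ out to $\{U^\dagger,V^\dagger\}$ through the remainder of $E^\dagger$, which are available from the chunk infrastructure noted after Lemma~\ref{lemma:lemma3.4}. In case~(a), using that $\HH$ is a web so $E_1$ is simple or flexible, I obtain an $(\{s\},U_1,V_1)$-sprout in $G[E_1]$ for some $s\in S\cap E_1$, and combine it with disjoint rungs through $E^\dagger$ reaching $\{U^\dagger,V^\dagger\}$, either routed to avoid $E_2$ (yielding a Type~\ref{S1} sprout) or else made to incorporate a vertex of $S\cap E_2$ (yielding a Type~\ref{S2} or~\ref{S3} sprout).

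The main obstacle is the sprout-construction in part~(2): the combined induced subgraph must match exactly one of Types~\ref{S1}--\ref{S3}, so the extending rungs must avoid both unintended adjacencies and unintended intersections with $S$. Conditions~\ref{N2} and~\ref{N6} restrict edges between vertices in distinct arcs to arise only when those arcs share an end-node containing both vertices, so careful routing of the rungs within $E^\dagger$ should preserve the required clean sprout structure.
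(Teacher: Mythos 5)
Your part~(\ref{item1:lemma4.1}) and your handling of part~(\ref{item2:lemma4.1}) in the sub-case where $S$ lies inside a single arc of $\HH$ match the paper's argument in substance: for (\ref{item1:lemma4.1}) you correctly observe that the nontrivial direction is $\HH$-podded $\Rightarrow$ $\HHdagger$-podded, pass to the maximal chunk $C^\dagger$ containing $C$, use $\HHdagger$-wildness to force at least one of $V_1,V_2$ to be a boundary node of $C^\dagger$, and inherit~\ref{P1} and~\ref{P2} (the observation $V_1^\dagger\cap V_2^\dagger\subseteq C^\dagger$, via the absence of parallel arcs in $\HHdagger$, is exactly the right thing). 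For~(\ref{item2:lemma4.1}) with $S$ confined to one arc $E_i=UV$ of $\HH$, splicing a putative $(S,U,V)$-sprout of $G[E_i]$ with two disjoint rungs out to $\{U^\dagger,V^\dagger\}$ is precisely the paper's construction with $P_{i,i}$ and $Q_{i,i}$, read contrapositively.

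The genuine gap is in part~(\ref{item2:lemma4.1}), case~(a), where $S$ meets two or more arcs of $\HH$ inside $E^\dagger$. Your plan (``obtain an $(\{s\},U_1,V_1)$-sprout in $G[E_1]$, combine with rungs through $E^\dagger$, either routed to avoid $E_2$ for Type~\ref{S1} or made to pick up a vertex of $S\cap E_2$ for Type~\ref{S2}/\ref{S3}'') vastly understates what has to be proved. First, $S$ may meet many arcs $E_1,\dots,E_\ell$, not just two, so ``avoiding $E_2$'' is not the right target. Second, nothing guarantees that either option is available: the extending rungs might be forced to pass through further vertices of $S$, and the result need not be any of the three sprout types; Types~\ref{S2} and~\ref{S3} have rigid shapes (two vertex-disjoint rungs, or a $V_1V_2$-rung disjoint from $S$ plus a separate $SV$-rung), which a haphazard union will not generally satisfy. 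The paper closes this gap in two nontrivial stages: it first shows (via a Type~\ref{S2} construction applied to arbitrary $r\in S\cap E_i$, $s\in S\cap E_j$) that every two vertices of $S$ in distinct arcs must be adjacent, so by~\ref{N6} the set $S$ lies in a single node $U$ (or in a triad, which is ruled out by a Type~\ref{S1} sprout), and then $S\subsetneq U$ because $S$ is not a node of $\HH$. With $S\subsetneq U$ in hand, one still needs a careful combinatorial argument — the paper introduces the ``evades'' relation among arcs incident to $U$, and shows that there must exist an arc $E_k$ with $E_k\cap U\subseteq S$ and $E_k\cap S=\varnothing$ simultaneously, a contradiction. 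Your proposal never establishes $S\subseteq U$ and has no analogue of this evades-style routing argument; as written, the sprout you propose to build may fail to be any of Types~\ref{S1}--\ref{S3}, and you give no mechanism for what to do when it does.
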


\begin{proof}
Statement~\ref{item1:lemma4.1}: The only-if direction is
straightforward, since each $V_1V_2$-chunk of $\HHdagger$ is a
$V_1V_2$-chunk of $\HH$.  For the if direction, let $C$ be a
$V_1V_2$-chunk of $\HH$ that satisfies all
Conditions~\ref{condition:P} for $Y$.  The maximal chunk of $\HH$
containing $C$ is an arc $E^\dagger=W_1W_2$ of $\HHdagger$.  By
$N(Y,X)\subseteq V_1\cup C\cup V_2\subseteq W_1\cup E^\dagger \cup
W_2$, Condition~\ref{P1} holds for $E^\dagger$ in $\HHdagger$.
Since $Y$ is $\HHdagger$-untamed, $N(Y,X)$ intersects $(W_1\cup
W_2)\setminus E^\dagger$, implying $\{V_1,V_2\}\cap
\{W_1,W_2\}\ne\varnothing$.  Let $V_1=W_1$ and $W_1\setminus
E^\dagger\subseteq V_1\setminus C\subseteq N(Y,X)$ without loss of
generality.  If $N(Y,X)$ does not intersect $W_2\setminus E^\dagger$,
then Condition~\ref{P2} holds for $Y$ in $\HHdagger$.  Otherwise, we
have $V_2=W_2$ and $W_2\setminus E^\dagger\subseteq V_2\setminus
C\subseteq N(Y,X)$, also implying Condition~\ref{P2} of $Y$ in
$\HHdagger$.  Thus, $E^\dagger$ is a pod of $Y$ in $\HHdagger$.

Statement~\ref{item2:lemma4.1}: It suffices to consider the case that
the $\HHdagger$-solid subset $S$ of $X$ is not a node of $\HH$,
implying that $S$ is not a node of $\HHdagger$.  Let
$\WW=\{W_1,W_2\}$ for the arc $C=W_1W_2$ of $\HHdagger$ with
$S\subseteq C$.  $G[C]$ contains no $(S,W_1,W_2)$-sprout.  The rest of
the proof lets all sprouts be $(S,W_1,W_2)$-sprouts unless explicitly
specified otherwise.  Let $E_i$ with $1\leq i\leq|\CC|$ be the arcs in
the arc set $\CC$ of $\HH$ for $C$.  Let $\VV_i$ consist of the
end-nodes of $E_i$.  For any $i$ and $j$ that may not be distinct, let
$\PP_{i,j}$ and $\QQ_{i,j}$ be disjoint $\WW\VV_i$-rung and
$\WW\VV_j$-rung of $\HH$.  Let $P_{i,j}$ be a $\PP_{i,j}$-rung of $G$.
Let $Q_{i,j}$ be a $\QQ_{i,j}$-rung of $G$.  Let $U_{i,j}$ be the
end-node of $\PP_{i,j}$ in $\VV_i$.  Let $V_{i,j}$ be the end-node of
$\QQ_{i,j}$ in $\VV_j$.  If $S\subseteq E_i$ for an
$i\in\{1,\ldots,|\CC|\}$, then $G[E_i]$ contains no
$(S,U_{i,i},V_{i,i})$-sprout $T$ or else $G[T\cup P_{i,i}\cup
  Q_{i,i}]$ would be a sprout of $G[C]$.  Thus, $S$ is $\HH$-solid.
The rest of the proof assumes for contradiction that $S$ intersects
two or more arcs of $\CC$.

We first show that $S$ is contained by a node of $\HH$.  For any
distinct $i$ and $j$ such that $S$ intersects both $E_i$ and $E_j$,
let $r$ be an arbitrary vertex in $S\cap E_i$ and $s$ be an arbitrary
vertex in $S\cap E_j$.  Let $P=G[P_{i,j}\cup P']$ and $Q=G[Q_{i,j}\cup
  Q']$ for arbitrary $rU_{i,j}$-rung $P'$ of $G[E_i]$ and
$sV_{i,j}$-rung $Q'$ of $G[E_j]$.  By Conditions~\ref{N2}
and~\ref{N5}, $P-r$ and $Q-s$ are disjoint and nonadjacent, implying
that $r$ and $s$ are adjacent or else $G[P\cup Q]$ would contain a
sprout of Type~\ref{S2} in $G[C]$.  Since $r$ and $s$ are arbitrary,
Condition~\ref{N6} implies $S\subsetneq U$ for a node $U$ of $\HH$: If
$S$ is not contained by any node of $\HH$, then $S$ is contained by
$\Delta(V_1,V_2,V_3)$ and intersects $V_1\cup V_2$, $V_2\cap V_3$, and
$V_3\cap V_1$ for nodes $V_1,V_2,V_3$ of $\HH$.  Let
$\VV=\{V_1,V_2,V_3\}$.  Let $\PP_i$ and $\PP_j$ with $\{i,j\}\subseteq
\{1,2,3\}$ be disjoint $\VV\WW$-rungs of $\CC$ such that $V_i$ and
$V_j$ are the end-nodes of $\PP_i$ and $\PP_j$ in $\VV$.  $G[P_i\cup
  \{v\}\cup P_j]$ for $v\in S\cap V_i\cap V_j$ and $\PP_i$-rung $P_i$
and $\PP_j$-rung $P_j$ of $G[C]$ is a sprout of Type~\ref{S1},
contradiction.

For any arcs $E_i=UV_i$ and $E_j=UV_j$ of $\CC$ with $V_i\ne V_j$, we
say that $E_i$ {\em evades} $E_j$ if there are disjoint $\VV\WW$-rungs
$\PP$ and $\QQ$ of $\HH$ with $\VV=\{V_i,V_j\}$ such that $\PP\cup
\QQ$ does not intersect~$U$.  $E_i$ evades $E_j$ if and only if $E_j$
evades $E_i$.  If $E_i$ evades $E_j$ and $E_i$ intersects $S$, then
$E_j\cap U\subseteq S$ or else $G[C]$ would contain a sprout
$G[P_i\cup Q_j\cup P\cup Q]$ of Type~\ref{S1}, where $P_i$ is an
$E_i$-rung intersecting $S$, $Q_j$ is an $E_j$-rung intersecting
$U\setminus S$, $P$ is a $\PP$-rung, and $Q$ is a $\QQ$-rung.

By $S\subsetneq U$, $E_j\cap U\not\subseteq S$ holds for an arc
$E_j=UV_j$.  If each arc $E_i=UV_i$ intersecting $S$ satisfies
$V_i=V_j$, then $G[P_{i,i}\cup Q_{i,i}\cup R]$ for any $UV_i$-rung $R$
of $G[E_i]$ that intersects $S$ is a sprout of Type~\ref{S1},
contradiction.  Thus, an arc $E_i=UV_i$ with $V_i\ne V_j$ intersects
$S$. By $E_j\cap U\nsubseteq S$ and $E_i\cap S\ne\varnothing$, $E_i$
does not evade $E_j$.  We show contradiction by identifying an arc
$E_k=UV_k$ with $V_k\notin\{V_i,V_j\}$ such that $E_i$ evades $E_k$,
implying $E_k\cap U\subseteq S$, and $E_k$ evades $E_j$, implying
$E_k\cap S=\varnothing$.  Let $\PP_i$ and $\PP_j$ be disjoint
$\VV\WW$-rungs of $\HH$ with $\VV=\{V_i,V_j\}$.  Since $E_i$ does not
evade $E_j$, $\PP_i\cup \PP_j$ intersects $U$. Let $\PP_j$ intersect
$U$ without loss of generality.  $U$ is the neighbor of $V_j$ in
$\PP_j$.  Let $E_k=UV_k$ be the incident arc of $U$ in $\PP_j$ with
$V_k\ne V_j$.  Let $\QQ=\PP_j-\{U,V_j\}$.  Since $\PP_i$ and $\QQ$ are
disjoint $\VV\WW$-rungs of $\HH$ with $\VV=\{V_i,V_k\}$ and $\PP_i\cup
\QQ$ does not intersect $U$, $E_i$ evades $E_k$.  Let $\RR'$ be a rung
of $(\CC \cup \{W_1W_2\})-U$ between $V_j$ and $\PP_i$.  $\RR'$ does
not intersect $\QQ$ or else $E_i$ would evade $E_j$.  Let $\RR$ be the
$V_j\WW$-rung of $\PP_i\cup \RR'$.  Since $\QQ$ and $\RR$ are disjoint
$\VV\WW$-rungs of $\HH$ with $\VV=\{V_k,V_j\}$ and $\QQ\cup \RR$ does
not intersect $U$, $E_k$ evades $E_j$.
\end{proof}

\subsection[]{Proving Lemma~\ref{lemma:lemma3.2}\eqref{item1:lemma3.2}}
\label{subsection:subsection4.1}

A net {\em self-aids} if it aids itself.  Since the aiding net of any
web self-aids, Lemma~\ref{lemma:lemma3.2}\eqref{item1:lemma3.2} is
immediate from Lemma~\ref{lemma:lemma4.2} by
Lemma~\ref{lemma:lemma4.1}.

\begin{lem}
\label{lemma:lemma4.2}
For self-aiding $X$-net $\HH_0$ and $\HH_0$-wild $\HH_0$-nonsticky set
$Y$, $G[X\cup Y]$ contains a sapling.
\end{lem}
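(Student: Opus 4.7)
The plan is to build an induced tree in $G[X\cup Y]$ containing the three leaves $s_1,s_2,s_3$ of $G$; each $s_i$ is a singleton leaf node of $\HH_0$ by Condition~\ref{N3}. Since $\HH_0$ self-aids, it has no parallel arcs, every chunk of $\HH_0$ is a single arc, and $\nabla(\HH_0)$ is triconnected. Write $P=G[Y]$ for the induced path with endpoints $y_1,y_2$, and $N=N(Y,X)$. The minimality of the wild set $Y$ gives the key rigidity: for every connected proper $Y'\subsetneq Y$, the set $N(Y',X)$ is $\HH_0$-tamed, while $N$ itself is not.

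Step~1: use non-solidity of $Y$ to produce a three-way branching structure. Non-solidity means either (i)~some internal vertex $y$ of $P$ has an $X$-neighbor, in which case splitting $P$ at $y$ yields two $\HH_0$-tamed sub-paths together with a third $X$-arm from $y$; or (ii)~$N=N(y_1,X)\cup N(y_2,X)$ and this decomposition cannot be refined to any pair of $\HH_0$-solid sets, forcing at least one of $N(y_i,X)$ to be $\HH_0$-non-solid. In case~(ii), non-solidity of $N(y_i,X)$ either spreads it across different arcs or nodes of $\HH_0$ (providing branching directly in the net) or confines it to a single arc $E=UV$ whose $G[E]$ contains an $(N(y_i,X),U,V)$-sprout of Type~\ref{S1},~\ref{S2}, or~\ref{S3}; each of these sprout types exhibits a three-way induced connection among $N(y_i,X)$, $U$, and $V$ inside $G[E]$, which is exactly what is needed.

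Step~2: use non-poddedness of $Y$ to separate the three arms inside $\HH_0$. Since chunks of $\HH_0$ are single arcs, non-poddedness says that no arc $E=V_1V_2$ of $\HH_0$ simultaneously satisfies $N\subseteq V_1\cup E\cup V_2$ and Condition~\ref{P2} for an endpoint of $P$. This rules out the only configuration that could force the three branches from Step~1 into a single arc-plus-endnodes region, and together with the triconnectivity of $\nabla(\HH_0)$ and the three-leaf structure of $\HH_0$, it lets me route the three branches into three internally-disjoint sub-regions of $\HH_0$. Step~3: from the heads of these three branches, extend via $\HH_0$-rungs in $G$ (guaranteed by Condition~\ref{N4}) to the leaf nodes $s_1,s_2,s_3$, and splice these rungs onto the branching structure from Steps~1--2.

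The resulting subgraph is connected and contains $s_1,s_2,s_3$; the main obstacle is showing it is induced. Chords between $X$-vertices in distinct arcs are controlled by Condition~\ref{N6}; chords from $Y$-vertices into the routed rungs are excluded using the $\HH_0$-tamedness of every proper connected sub-path of $P$, where the minimality of $Y$ is crucially consumed; and chords between two different arms are avoided because the routed rungs can be chosen through pairwise-disjoint arcs of $\HH_0$, which is precisely what Step~2's non-podded hypothesis provides. The delicate bookkeeping is the sub-case split inside Step~1 --- especially splicing a Type-\ref{S1}, \ref{S2}, or \ref{S3} sprout into the rung routing without creating a short-cut between the sprout and the rest of the net --- and I expect this to be the technical heart of the argument.
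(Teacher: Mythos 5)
Your proposal is a plausible high-level outline of the intuition, but it does not amount to a proof of Lemma~\ref{lemma:lemma4.2}, and you acknowledge as much when you defer ``the technical heart of the argument'' to Step~2's routing and the ``delicate bookkeeping'' of splicing sprouts. The paper does not leave that heart open: it reduces to a tight case analysis via three dedicated lemmas (Lemmas~\ref{lemma:lemma4.3}, \ref{lemma:lemma4.4}, \ref{lemma:lemma4.5}) together with the device of \emph{inducing sets}, and your sketch is missing all of this machinery.

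Concretely, the gaps are these. First, the paper's Lemmas~\ref{lemma:lemma4.4} and~\ref{lemma:lemma4.5} — which together deliver exactly the ``route three branches into three internally-disjoint sub-regions'' step you assert — hold only for \emph{simple} self-aiding nets. A general self-aiding $X$-net $\HH_0$ can have arbitrary (non-simple, non-flexible) arcs, and routing rungs through such arcs is not controlled by Conditions~\ref{N4} and~\ref{N6} alone. The paper's fix is to pick an inducing set $D\subseteq X$ that pares each arc $E_0=U_0V_0$ down to a single $U_0V_0$-rung, producing a simple self-aiding net $\HH_0(D)$ of $G[Y\cup D]$; the choice of $D$ is then tuned (twice, with a second set $D'$) to extract a contradiction. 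Your plan works directly with $\HH_0$ and gives no substitute for this reduction. Second, your Step~1 only envisions $N$ splitting across arcs/nodes or collapsing into a single arc admitting a sprout; you omit the triad case entirely. When $\ell(Y,\HH_0,G)=2$ but $N$ is not a union of two node-or-arc pieces, the paper shows $N$ must contain a triad $\Delta(V_1,V_2,V_3)$ and invokes Lemma~\ref{lemma:lemma4.4} to produce a sapling — this is a genuinely separate configuration that your Step~2 (``rules out the only configuration... arc-plus-endnodes region'') does not cover. Third, your Step~2 reads the non-podded hypothesis as directly supplying three pairwise-disjoint arcs to route through; but non-poddedness is a negative condition on chunks (Conditions~\ref{condition:P}), and turning it into a positive routing statement is precisely the content of Lemma~\ref{lemma:lemma4.5}, whose proof is a full page of case analysis on whether the $N_i$'s are solid, which arc/node contains each, and where the two sprouted rungs meet. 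Without that argument (or an equivalent), the claim that ``the routed rungs can be chosen through pairwise-disjoint arcs'' is unsupported, and the proposed proof does not go through.
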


The rest of the subsection proves Lemma~\ref{lemma:lemma4.2} using
Lemmas~\ref{lemma:lemma4.3},~\ref{lemma:lemma4.4},
and~\ref{lemma:lemma4.5}.  Let $\LL$ consist of the leaves of the
self-aiding net $\HH$ in
Lemma~\ref{lemma:lemma4.3},~\ref{lemma:lemma4.4},
or~\ref{lemma:lemma4.5}.  Since $\nabla(\HH)$ is triconnected, each
nonleaf node of $\HH$ has degree at least three in $\HH$ and any
three-node set $\UU$ of $\HH$ admits pairwise disjoint $\UU\LL$-rungs
$\PP_1,\PP_2,\PP_3$ of $\HH$.  By Condition~\ref{N6} of $\HH$, any
$\PP_i$-rungs $P_i$ of $G$ with $i\in\{1,2,3\}$ are pairwise disjoint
and nonadjacent.

\begin{lem}
\label{lemma:lemma4.3}
If $Y$ is an $\HH$-wild $\HH$-nonsticky set for a self-aiding $X$-net
$\HH$ of $G$ such that $N_G(Y,X)=M_1\cup M_2$ and each of $M_1$ and
$M_2$ is contained by a node or arc of $\HH$, then $G[X\cup Y]$
contains a sapling.
\end{lem}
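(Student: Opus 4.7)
The plan is to construct a sapling of $G[X\cup Y]$ by combining the wild path $G[Y]$ with three internally disjoint induced paths that reach the three leaves of $G$. The central tool is self-aidedness of $\HH$, which by the paragraph preceding the lemma guarantees that $\nabla(\HH)$ is triconnected and that any three-node set $\UU$ of $\HH$ admits pairwise disjoint $\UU\LL$-rungs $\PP_1,\PP_2,\PP_3$ whose corresponding $G$-rungs are pairwise disjoint and nonadjacent via Condition~\ref{N6}.

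First, because $N_G(Y,X)=M_1\cup M_2$ is $\HH$-untamed while each $M_i$ is contained in a single node or arc and is therefore $\HH$-tamed, I can select $v_1\in M_1$ and $v_2\in M_2$ such that $\{v_1,v_2\}$ is itself $\HH$-untamed; combined with the endpoints of the wild path $G[Y]$ this yields an induced $v_1v_2$-path $P$ in $G[X\cup Y]$ whose internal vertices lie in $Y$. I then split into cases according to whether each $M_i$ sits in a node or in an arc. In each case I identify a three-element node set $\UU$ of $\HH$ (made from the end-nodes of the arcs housing the $M_i$, or from the nodes containing them, adding an auxiliary node if the natural choices collapse), invoke the triconnectivity-supplied rungs $\PP_1,\PP_2,\PP_3$ to route three disjoint induced branches of $G$ toward the three leaves, and splice them to $P$ through appropriate rungs inside the arcs that hold $M_1$ and $M_2$.

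The main obstacle is verifying inducedness of the resulting assembly and pinpointing precisely where $\HH$-nonstickiness is spent. When $M_i$ lies inside an arc $E_i=U_iV_i$, the failure of $\HH$-solidness for $Y$ produces an $(M_i,U_i,V_i)$-sprout of one of Types~\ref{S1}, \ref{S2}, or \ref{S3} in $G[E_i]$, and the type of the sprout supplies exactly the branch point or disjoint pair of sub-rungs needed to reach all three leaves without creating a chord or closing into a hole; this is where the hypothesis that $Y$ is nonsolid is essential. The failure of $\HH$-poddedness rules out the complementary degeneracy in which a chunk of $\HH$ swallows $N(Y,X)$ and forces two of the intended rungs to $\LL$ to coincide or to chord into $P$. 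Chord avoidance between the $\PP_i$-rungs, $P$, and the sprout pieces is handled uniformly by Conditions~\ref{N2}, \ref{N5}, and \ref{N6}, via the same evasion and rung-exchange reasoning already used in the proof of Lemma~\ref{lemma:lemma4.1}. With these checks in hand, the three leaf-reaching branches glued to $P$ form a three-leaf induced tree, i.e., a sapling of $G[X\cup Y]$.
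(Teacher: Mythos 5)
The central logical step in your proposal does not hold. You claim that ``the failure of $\HH$-solidness for $Y$ produces an $(M_i,U_i,V_i)$-sprout'' in $G[E_i]$. That inference is false: $Y$ is $\HH$-nonsolid if either (1) $N(Y,X)$ is not the union of two $\HH$-solid sets, or (2) some \emph{internal} vertex $y$ of $G[Y]$ has $N(y,X)\ne\varnothing$. It is entirely possible for both $M_1$ and $M_2$ to be $\HH$-solid (so \emph{no} sprouts exist in the corresponding arcs) while $Y$ is still $\HH$-nonsolid purely because condition~(2) fails. In that situation your plan of obtaining a branch point from a sprout has nothing to work with, yet the lemma must still hold. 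The paper handles this by arguing in the contrapositive direction: Claim~2 shows that sapling-freeness \emph{forces} each $M_i$ to be $\HH$-solid (not the other way around), and then the residual nonsolidness of $Y$ must come from an internal vertex $y$ with nonempty $N(y,X)\subseteq M_1\cap M_2$, which together with Conditions~\ref{N2} and~\ref{N5} forces a pod, contradicting that $Y$ is $\HH$-nonsticky. Your description of nonpoddedness as ``ruling out degeneracy in the rung choice'' does not capture this; the pod contradiction is a separate terminal case, not a sanitizer for the rung construction.

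A second, independent gap is in the assembly of the three leaf branches. You invoke ``triconnectivity-supplied rungs $\PP_1,\PP_2,\PP_3$'' for a three-node set $\UU$, with a parenthetical ``adding an auxiliary node if the natural choices collapse.'' That parenthetical is precisely where the paper's Claim~1 does its real work: when $M_i\subsetneq U$ for a node $U$, one must route one branch toward $M_i\cap U$ and another toward $U\setminus M_i$, \emph{both} through the same node $U$. Pairwise disjoint $\UU\LL$-rungs of $\HH$ cannot do this because $U$ is a single node. The paper's construction replaces $U$ by a triangle $\{W_1,W_2,W_3\}$ in an auxiliary triconnected graph $\HHprime$, distributing the arcs of $U$ among the $W_i$'s according to whether they intersect $M_i$; only then can one extract three disjoint rungs that serve the three required roles. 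Without this device your construction cannot handle the strict-node case $M_i\subsetneq U$, which is exactly the case Claim~1 is stated for and Claim~2 reduces the general case to.
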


\begin{proof}
Let $N=N_G(Y,X)$.  We start with proving the following statement.
 
Claim~1: {\em If $M_i\subseteq U$ with $\{i,j\}=\{1,2\}$ holds for a
  node $U$ and $M_j\subseteq U_1\cup F$ holds for an end-node $U_1$ of
  an arc $F$ with $U_1\ne U$, $U\setminus F\nsubseteq M_i$, and
  $M_i\nsubseteq F$, then $G[X\cup Y]$ contains a sapling.}

Let $\RR_1=\{U_1\}$.  Since the degree of $U$ is at least three,
$U\setminus F\nsubseteq M_i$ and $M_i\nsubseteq F$ imply that the node
set consisting of the neighbors of $U$ other than $U_1$ in $\HH$
admits a nonempty disjoint partition $\RR_2$ and $\RR_3$ such that (a)
each arc between $U$ and $\RR_2$ intersects $M_i$ and (b) each arc
between $U$ and $\RR_3$ intersects $U\setminus M_i$.  Let $\HHprime$ be
the triconnected graph obtained from $\nabla(\HH)$ by (1) replacing
node $U$ and its incident arcs with a triangle on a set
$\WW=\{W_1,W_2,W_3\}$ of three new nodes and (2) adding an arc between
$W_i$ and each node in $\RR_i$ for all $i\in\{1,2,3\}$.  There are
pairwise disjoint $\WW\LL$-rungs $\PP_1,\PP_2,\PP_3$ of $\HHprime$ such
that each $\PP_i$ with $i\in\{1,2,3\}$ is a $W_iL_i$-rung with $L_i\in
\LL$.  Let $\QQ_1$ be the path of $\HH$ consisting of arc $F$ and path
$\PP_1-W_1$.  Let $Q_1$ be the $NL_1$-rung of a $\QQ_1$-rung of $G$
intersecting $M_j$.  Let $\QQ_2$ be the $L_2L_3$-path of $\HH$
obtained from $\PP_2\cup \PP_3$ by replacing the two arcs $W_2U_2$ and
$W_3U_3$ with the two arcs $UU_2$ and $UU_3$.  Let $Q_2$ be a
$\QQ_2$-rung of $G$ intersecting exactly one vertex in $M_i\cap U$.
$N$ intersects each of $Q_1$ and $Q_2$ at exactly one vertex.  Thus,
$G[Y\cup Q_1\cup Q_2]$ contains a sapling of $G[X\cup Y]$.  Claim~1 is
proved.

Claim~2: {\em If $G[X\cup Y]$ is sapling-free, then each $M_i$ with
  $i\in\{1,2\}$ is $\HH$-solid}.

To prove Claim~2 by Claim~1, let each $M_i$ with $i\in\{1,2\}$ be
contained by a node $V_i$ or an arc $E_i$. We first show that if
$M_1\subseteq V_1$ and $M_2\subseteq V_2$, then $V_1V_2$ is not an
arc.  Assume an arc $E=V_1V_2$ for contradiction.  Since $Y$ is
$\HH$-wild and $\HH$-unpodded, we have $V_i\nsubseteq (E\cup M_i)$ and
$M_i\nsubseteq E$ for $\{i,j\}=\{1,2\}$, contradicting Claim~1 with
$U=V_i$, $U_1=V_j$, and $F=E$.

To see Claim~2(a): {\em $M_i\subseteq V_i$ for $\{i,j\}=\{1,2\}$
  implies $M_i=V_i$}, assume $V_i\nsubseteq M_i$.  If $M_j\subseteq
V_j$, then $V_iV_j$ is not an arc, contradicting Claim~1 with $U=V_i$,
$U_1=V_j$, and $F$ being an incident arc of $V_j$.  $M_j\subseteq E_j$
contradicts Claim~1 with $U=V_i$, $F=E_j$, and $U_1$ being an end-node
of $E_j$ that is not $V_i$.  To see Claim~2(b): {\em $M_i\subseteq
  E_i=UV$ for $\{i,j\}=\{1,2\}$ implies that $M_i$ is $\HH$-solid},
assume an $(M_i,U,V)$-sprout $T$ of $G[E_i]$.  If $M_j\subseteq V_j$,
then let $W=V_j$.  By Claim~2(a), $W$ is not incident to $E_i$ or else
$E_i$ would be a pod of $Y$ in $\HH$.  If $M_j\subseteq E_j$, then let
$W$ be an end-node of $E_j$ not incident to $E_i$.  Let
$\UU=\{U,V,W\}$.  Let $\PP_1,\PP_2,\PP_3$ be pairwise disjoint
$\UU\LL$-rungs of $\HH$.  Let each $P_k$ with $k\in\{1,2,3\}$ be a
$\PP_k$-rung of $G$.  $G[P_1\cup P_2\cup P_3\cup Y\cup T\cup E_j]$
contains a sapling.

To prove the lemma by Claim~2, assume for contradiction that $G[X\cup
  Y]$ is sapling-free.  Since $Y$ is $\HH$-nonsolid, Claim~2 implies
an internal vertex $y$ of path $G[Y]$ with nonempty $N_G(y,X)\subseteq
M_1\cap M_2$.  By Condition~\ref{N2}, $M_i=V_i$ holds for
$\{i,j\}=\{1,2\}$.  By Condition~\ref{N5}, if $M_j\subseteq V_j$, then
$\HH$ has an arc $E=V_iV_j$, which is a pod of $Y$ in $\HH$; and if
$M_j\subseteq E_j$, then $V_i$ is incident to $E_j$, which is thus a
pod of $Y$ in $\HH$. Both cases contradict that $Y$ is
$\HH$-nonsticky.
\end{proof}

If $Y$ is $\HH$-wild for an $X$-net $\HH$, then let $\ell(Y,\HH,G)$
denote the minimum number of $\HH$-tamed subsets of $X$ whose union is
$N_G(Y,X)$.  A net is {\em simple} if all of its arcs are simple.  If
$\HH$ is a simple self-aiding $X$-net of $G$, then $G[X]$ is
isomorphic to the line graph of a subdivision of $\HH$.

\begin{lem}
\label{lemma:lemma4.4}
If $Y$ is an $\HH$-wild set for a simple self-aiding $X$-net $\HH$ of
$G$ with $\ell(Y,\HH,G)=2$ such that $N_G(Y,X)$ contains a triad of
$\HH$, then $G[X\cup Y]$ contains a sapling.
\end{lem}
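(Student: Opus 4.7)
\emph{Plan proposal.}

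My plan is to prove Lemma~\ref{lemma:lemma4.4} by explicit sapling construction, exploiting three ingredients: the triconnectivity of $\nabla(\HH)$ (guaranteed because $\HH$ self-aids), the clique structure that the triad forces in $G$, and the linear structure of $Y$.

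First, I would translate the triad hypothesis into concrete $G$-adjacencies. Write $\Delta(V_1,V_2,V_3)\subseteq N_G(Y,X)$ with $V_1V_2V_3$ a triangle in $\HH$. By Conditions~\ref{N2} and~\ref{N5}, any nonempty intersection $V_i\cap V_j$ forces the arc $V_iV_j$ to be the single-vertex arc $\{v_{ij}\}$, so $\Delta\subseteq\{v_{12},v_{23},v_{13}\}$; by Condition~\ref{N6}, $\Delta$ induces a clique in $G$ and each $v_{ij}$ has at least one neighbor in $Y$. Using triconnectivity I would extract pairwise disjoint $V_iL_i$-paths $\PP_1,\PP_2,\PP_3$ in $\HH$, whose $\PP_i$-rungs $P_1,P_2,P_3$ in $G$ are pairwise vertex-disjoint and pairwise non-adjacent by Conditions~\ref{N2},~\ref{N5}, and~\ref{N6}. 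Write $u_i\in V_i$ for the endpoint of $P_i$ on the $V_i$-side; by Condition~\ref{N6}, $u_i$ is adjacent in $G$ to every triad vertex lying in $V_i$.

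The construction then combines $P_1,P_2,P_3$ with at most one triad vertex and an appropriately chosen subpath of $Y$, possibly together with an extra vertex from $N_G(Y,X)\setminus\Delta$, to yield an induced tree through the three leaves. The main obstacle is precisely the $G$-clique on $\Delta$: including two triad vertices $v_{ij},v_{ik}$ both lying in a common $V_i$ together with $u_i$ immediately creates an induced triangle via $v_{ij}\sim u_i\sim v_{ik}$ and $v_{ij}\sim v_{ik}$, so at most one triad vertex can be combined with all three rungs. Picking, say, $v_{23}$ as branching hub merges $P_2$ and $P_3$ via $u_2\sim v_{23}\sim u_3$; the remaining task is to attach $P_1$. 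Here $\ell(Y,\HH,G)=2$ becomes essential: $\Delta$ alone is $\HH$-tamed, so $N_G(Y,X)=\Delta$ would give $\ell=1$, hence $N_G(Y,X)$ strictly exceeds $\Delta$, and by Lemma~\ref{lemma:lemma3.5}\eqref{item2:lemma3.5} (applicable since self-aiding implies no parallel arcs) each of the two tamed pieces of $N_G(Y,X)$ is $\HH$-local, supplying the structural hook needed to reach $P_1$.

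The hardest step will be a case analysis on how the tamed decomposition $N_G(Y,X)=M_1\cup M_2$ sits relative to the triangle $V_1V_2V_3$. In the easy cases some $M_i$ reaches into a node or arc incident to $V_1$, giving a direct $V_1$-side connection from which one extends an induced path through $Y$ to $v_{23}$. In the hard cases both $M_i$ hug the triangle region, and one must use the minimality of $Y$ (every proper subpath of $Y$ has $\HH$-tamed neighborhood) together with the linear structure of $G[Y]$ to isolate a subpath and a non-triad vertex of $V_1\cap N_G(Y,X)$ realizing the branch without introducing spurious edges. Inducedness and acyclicity of the resulting candidate subgraph would then be verified by repeated application of Conditions~\ref{N2},~\ref{N5}, and~\ref{N6}.
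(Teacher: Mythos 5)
Your plan identifies the right raw materials: disjoint $\UU\LL$-rungs from triconnectivity of $\nabla(\HH)$, the single-vertex nature of the triangle arcs for a simple self-aiding net, and the fact that at most one triad vertex can coexist with all three rung endpoints. So far this matches the paper. However, there is a genuine gap in how you deploy the hypothesis $\ell(Y,\HH,G)=2$. You use it only to conclude $N_G(Y,X)\supsetneq\Delta$, but that already follows from $Y$ being $\HH$-wild (hence $N$ untamed, while the triad $\Delta$ is local hence tamed); no hypothesis on $\ell$ is needed for it. The crucial consequence of $\ell=2$ in the paper's proof is sharper: combined with simplicity (so that any arc meeting $N\setminus\Delta$ is incident to at most one node of $\UU$), it forces $N\setminus\Delta$ to intersect \emph{at most one} of the rungs $P_1,P_2,P_3$. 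That single observation drives the entire case split, and your proposal never derives it.

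Without it, the case analysis you sketch (``easy cases'' where some $M_i$ reaches an arc incident to $V_1$, ``hard cases'' otherwise) does not pin down a construction. You fix $v_{23}$ as the branching hub and try to attach $P_1$, but that is only the right shape when $N$ meets $P_1$; if $N$ meets $P_2$ or $P_3$ instead the hub must be $v_{13}$ or $v_{12}$ respectively. More importantly, when $(N\setminus\Delta)\cap V(P_1\cup P_2\cup P_3)=\varnothing$ the paper requires two further constructions depending on whether the arc $E$ meeting $N\setminus\Delta$ is a single-vertex arc bridging two of the $\PP_i$'s or not; the second of these builds the sapling out of $E$, a shortest $\HH$-path from $E$ to $\PP_i$, and pieces of the rungs, and does not use $Y$ as the hub at all. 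Your plan of ``extending an induced path through $Y$ to $v_{23}$'' does not anticipate this, and your appeal to Lemma~\ref{lemma:lemma3.5}\eqref{item2:lemma3.5} (which just upgrades tamed pieces to local pieces) does not supply the missing structure. In short, the ingredients are right but the lynchpin---the ``at most one $P_i$'' consequence of $\ell=2$---is absent, and the cases you would actually have to verify are not the ones you describe.
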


\begin{proof}
Let $\UU=\{U_1,U_2,U_3\}$ for nodes with
$\Delta=\Delta(U_1,U_2,U_3)\subseteq N=N_G(Y,X)$.  Let
$\PP_1,\PP_2,\PP_3$ be pairwise disjoint $\UU\LL$-rungs of $\HH$.  For
each $i\in\{1,2,3\}$, let $L_i\in\LL$ such that the $\PP_i$-rung $P_i$
of $G$ is a $U_iL_i$-rung.  Since $N$ is untamed, $N\setminus
\Delta\ne\varnothing$.  Since $\HH$ is simple, each arc intersecting
$N\setminus \Delta$ is incident to at most one node of $\UU$.  By
$\ell(Y,\HH,G)=2$, $N\setminus \Delta$ intersects at most one of
$P_1$, $P_2$, and $P_3$.  If $N$ intersects $P_i$ for
$\{i,j,k\}=\{1,2,3\}$, then $G[Y\cup Q_i\cup (U_j\cap U_k)\cup P_j\cup
  P_k]$ contains a sapling for the $NL_i$-rung $Q_i$ of $P_i$.  It
remains to consider $(N\setminus \Delta)\cap V(P_1\cup P_2\cup
P_3)=\varnothing$.

Case~1: Each arc $E$ intersecting $N\setminus\Delta$ satisfies $|E|=1$
and is incident to $\PP_i$ and $\PP_j$ for $\{i,j,k\}=\{1,2,3\}$.  Let
$E$ be an arc intersecting $N\setminus \Delta$.  Let $V_i\in V(\PP_i)$
and $V_j\in V(\PP_j)$ be end-nodes of $E$ with $U_i\ne V_i$.  Let
$\QQ_1$ be the $U_iL_k$-path of $\HH$ consisting of arc $E_j=U_iU_k$
and $\PP_k$.  Let $Q_1$ be the $\QQ_1$-rung of $G$.  Let $\QQ_2$ be
the $L_iL_j$-path of $\HH$ consisting of $E$, the $V_iL_i$-rung of
$\PP_i$, and the $V_jL_j$-rung of $\PP_j$.  Let $Q_2$ be the
$\QQ_2$-rung of $G$.  By $U_i\ne V_i$, $\QQ_1$ and $\QQ_2$ are
disjoint.  By $(N\setminus\Delta)\cap V(P_1\cup P_2\cup
P_3)=\varnothing$, $Q_1$ (respectively, $Q_2$) intersects $N$ exactly
at the vertex in arc $E_j$ (respectively, $E$).  Thus, $G[Y\cup
  Q_1\cup Q_2]$ contains a sapling of $G[X\cup Y]$.

Case~2: An arc $E$ intersecting $N\setminus\Delta$ violates the
condition of Case~1.  Let $\QQ$ be a shortest path of $\HH$ between
$V(E)$ and $V(\PP_1\cup \PP_2\cup \PP_3)$. Since $E$ violates the
condition of Case~1, we may require that if $U\in V(E)$ and $V_i\in
V(\PP_i)$ with $\{i,j,k\}=\{1,2,3\}$ are the end-nodes of $\QQ$, then
the $NU$-rung $Q_i$ of $G[E]$ is not adjacent to $P_j\cup P_k$.  Let
$E_i=U_jU_k$.  Let $Q$ be the $\QQ$-rung of $G$.  Let $R_i$ be the
$V_iL_i$-rung of $P_i$.  $G[P_j\cup P_k\cup E_i\cup Q_i\cup Q\cup
  R_i]$ contains a sapling of $G[X\cup Y]$.
\end{proof}

\begin{lem}
\label{lemma:lemma4.5}
Let $Y$ be an $\HH$-wild set for a simple self-aiding $X$-net $\HH$ of
graph $G$ with $\ell(Y,\HH,G)\geq 3$.  If $G[X\cup Y]$ is
sapling-free, then $Y$ is $\HH$-podded for $G$.
\end{lem}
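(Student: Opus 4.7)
The plan is to find a split pair $\{V_1,V_2\}$ of $\HH$ and a $V_1V_2$-chunk $C$ witnessing that $Y$ is $\HH$-podded, exploiting triconnectivity of $\nabla(\HH)$ to convert any violation of Conditions~\ref{condition:P} into a sapling of $G[X\cup Y]$. Since $\HH$ self-aids, $\nabla(\HH)$ is triconnected, so for any three-node set $\UU$ of $\HH$ there are pairwise disjoint $\UU\LL$-rungs $\PP_1,\PP_2,\PP_3$ of $\HH$, and by Conditions~\ref{N2} and~\ref{N6} any $\PP_i$-rungs of $G$ are pairwise disjoint and nonadjacent. This is the engine that turns ``spread out'' neighborhoods of $Y$ in $\HH$ into saplings.

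First I would establish Condition~\ref{P1}. The support of $N(Y,X)$ in $\HH$ cannot, by $\ell(Y,\HH,G)\geq 3$, be covered by two arcs or nodes. I would choose $\{V_1,V_2\}$ and a $V_1V_2$-chunk $C$ so that $C$ absorbs every arc of $\HH$ meeting $N(Y,X)$; if no such pair exists, triconnectivity of $\nabla(\HH)$ lets me pick three nodes $U_1,U_2,U_3$ in three pairwise non-co-chunk regions such that each $U_i$ is incident to an arc $E_i$ with $E_i\cap N(Y,X)\ne\varnothing$. Using pairwise disjoint $\UU\LL$-rungs and attaching their $G$-rungs via short paths through the $E_i$'s to the end-vertices $y_1,y_2$ of the path $G[Y]$, one assembles an induced subdivision of a three-leaf star, contradicting sapling-freeness.

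Next I would establish Condition~\ref{P2}. Since $Y$ is $\HH$-wild, the internal vertices of $G[Y]$ have no neighbors in $X$, so $N(Y,X)=N(y_1,X)\cup N(y_2,X)$. Suppose for contradiction that \ref{P2} fails at $V_1$: for each $j\in\{1,2\}$ both $N(y_j,V_1)\nsubseteq C$ and $V_1\nsubseteq C\cup N(y_j)$, so each $y_j$ is adjacent to some $z_j\in V_1\setminus C$ while missing some $w_j\in V_1\setminus(C\cup N(y_j))$. Using Conditions~\ref{N2},~\ref{N5}, and~\ref{N6} at $V_1$ together with the arcs attached to $V_1$ outside $C$, I would construct induced paths from $y_1$ and from $y_2$ into two distinct leaf rungs of $\HH$ through $V_1\setminus C$; combining with a $V_2\LL$-rung escaping $C$ (whose existence follows from triconnectivity) and with $G[Y]$, this yields a sapling, a contradiction. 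A symmetric argument rules out failure at $V_2$.

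The main obstacle will be the assembly step for Condition~\ref{P2}: producing the two rungs out of $V_1\setminus C$ from $y_1$ and $y_2$ so that they are disjoint, pairwise nonadjacent, and nonadjacent to the escape rung through $V_2$, so that the full union induces a tree. This will demand delicate use of simplicity of $\HH$ (every arc is an induced $UV$-rung of $G$), of Condition~\ref{N6} to forbid unintended cross-arc edges, and of the existence of the ``missed'' vertices $w_1,w_2$ to rule out the short-circuiting edges from $y_j$ that would destroy induced-ness. This is also where I expect the detailed case analysis, branching on how $z_1,z_2,w_1,w_2$ coincide or differ and on which arcs they lie in, to be the most technical part of the argument.
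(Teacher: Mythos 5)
Your proposal has two genuine gaps, one a missed elementary observation and one a missing strategic idea that the paper's proof crucially depends on.

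First, you work throughout with ``the end-vertices $y_1,y_2$ of the path $G[Y]$,'' but under the hypothesis $\ell(Y,\HH,G)\geq 3$ this setup cannot arise for $|Y|\geq 2$: since $Y$ is $\HH$-wild it is \emph{minimally} $\HH$-untamed, so both $Y\setminus\{y_1\}$ and $Y\setminus\{y_k\}$ are $\HH$-tamed, and $N(Y,X)$ is the union of their $X$-neighborhoods, forcing $\ell(Y,\HH,G)\leq 2$. Hence $Y$ is a single vertex $y$. The paper states this in its first sentence; your proposal never notices it, and much of your routing machinery through ``two end-vertices'' would collapse (or at best degenerate) once it is taken into account.

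Second, and more substantively, you attempt to establish Conditions~\ref{P1} and~\ref{P2} directly by converting any violation into a sapling, acknowledging the ``assembly step'' as the hard, unresolved part. The paper avoids exactly this by a reduction trick you do not have: write $N=N_1\cup\cdots\cup N_\ell$ as a union of $\ell$ pairwise disjoint $\HH$-tamed sets, form for each pair $i<j$ the graph $H_{i,j}$ by deleting the $y$-to-$N\setminus(N_i\cup N_j)$ edges, prove (this is the ``claim'') that each $H_{i,j}$ is still sapling-free, and then observe that $\HH$ is a simple self-aiding $X$-net of $H_{i,j}$ with $\ell(Y,\HH,H_{i,j})=2$, so Lemmas~\ref{lemma:lemma4.3} and~\ref{lemma:lemma4.4} already apply. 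This pushes all the delicate rung assembly you are worried about into the already-proved $\ell=2$ lemmas; what remains is a finite case analysis on which $N_i$ are $\HH$-solid, nodes, or arc-contained, from which $\ell=3$ and a pod for $Y$ fall out. Your approach, as written, would have to re-derive from scratch the content of Lemmas~\ref{lemma:lemma4.3} and~\ref{lemma:lemma4.4} inside the ``most technical part'' you defer, so the proposal is not a complete proof nor a genuinely different route -- it is an outline that stops precisely where the real argument begins.

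Finally, a smaller point: since $\HH$ is self-aiding, it has no non-trivial chunks, so the only candidate pods are arcs of $\HH$. Your plan to ``choose a $V_1V_2$-chunk $C$ absorbing every arc meeting $N(Y,X)$'' therefore simplifies drastically (a pod must be a single arc), and some of the triconnectivity machinery you invoke to search for $C$ is not needed in the form you describe. This is another sign that the structure of a self-aiding net is not being fully exploited.
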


\begin{proof}
Since $Y$ is $\HH$-wild with $\ell=\ell(Y,\HH,G)\geq 3$, $Y$ consists
of a vertex $y$.  Let $N_1,\ldots,N_\ell$ be pairwise disjoint
$\HH$-tamed subsets of $X$ whose union is $N=N_G(Y,X)$.  Let $L$
consist of the leaves of $H=G[X\cup Y]$.  Let each graph $H_{i,j}$
with $1\leq i<j\leq \ell$ be obtained from $H$ by deleting the edges
between $y$ and $N\setminus (N_i\cup N_j)$.  We claim that {\em each
  $H_{i,j}$ is sapling-free}.  Since $\HH$ is a simple self-aiding
$X$-net of $H_{i,j}$ with $\ell(Y,\HH,H_{i,j})=2$, $Y$ is $\HH$-sticky
for $H_{i,j}$ by Lemmas~\ref{lemma:lemma4.3} and~\ref{lemma:lemma4.4}.
Thus, each $N_i$ with $i\in\{1,\ldots,\ell\}$ is either contained by a
node or arc of $\HH$.  Assume that $N_1,\ldots,N_k$ are $\HH$-solid
and $N_{k+1},\ldots,N_\ell$ are not.  If $k<\ell$, then $Y$ is
$\HH$-podded for all $H_{i,\ell}$ with $i\in\{1,\ldots,\ell-1\}$.  If
$N_\ell$ is contained by a node $U$, then there is exactly one vertex
$u$ in $U\setminus N_\ell$, implying $\ell=3$ and that the arc
containing $u$ is a pod of $Y$ in $\HH$ for $G$.  If $N_\ell$ is not
contained by a node, then $\ell=3$ and the arc containing $N_\ell$ is
a pod of $Y$ in $\HH$ for $G$.  As for $k=\ell$, observe that there
cannot be a $3$-node set $\UU=\{U_{i_1},U_{i_2},U_{i_3}\}$ with
$\{i_1,i_2,i_3\}\subseteq \{1,\ldots,\ell\}$ such that each node
$U_{i_j}$ with $j\in\{1,2,3\}$ is either a solid set $N_{i_j}$ or an
end-node of the arc $E_{i_j}$ containing a solid set $N_{i_j}$: Assume
for contradiction that such a $\UU$ exists.  Let vertex set $E$ be the
union of the arcs $E_{i_j}$ with $N_{i_j}\subseteq E_{i_j}$. Let
$\PP_1$, $\PP_2$, and $\PP_3$ be pairwise disjoint $\UU\LL$-rungs of
$\HH$.  For each $j\in\{1,2,3\}$, let $P_j$ be a $\PP_j$-rung of $G$.
$G[Y\cup P_1\cup P_2\cup P_3\cup E]$ contains a sapling,
contradiction.  The observation implies $\ell=3$ and that $Y$ is
$\HH$-podded for $G$.

To prove the claim, assume a sapling $T$ of $H_{i,j}$.  Since any edge
in $H[T]\setminus T$ is between $y$ and $N_{i,j}$, the following
statements hold or else $H$ would contain a sapling in which $y$ is
the degree-$3$ vertex: (1) The degree of $y$ in $T$ is two. (2)
$H[T]\setminus T$ has exactly one edge~$e$, implying that $y$ and a
vertex $u_1\in N_{i,j}$ are the end-vertices of $e$.  (3) The
degree-$3$ vertex $u_2$ of $T$ is adjacent to $y$ and $u_1$ in $T$,
implying $u_2\in N_i\cup N_j$.  That is, $H[T]$ consists of a triangle
on $U=\{y,u_1,u_2\}$ and pairwise disjoint $UL$-rungs $P_1,P_2,P_3$ of
$T$ with $N\cap V(P_1)=\{u_1\}\subseteq N_{i_1}$, $N\cap
V(P_2)=\{u_2\}\subseteq N_{i_2}$, and $y\in V(P_3)$ for distinct $i_1$
and $i_2$ in $\{1,\ldots,\ell\}$.  By
Lemma~\ref{lemma:lemma3.5}\eqref{item2:lemma3.5}, each $N_{i_k}$ with
$k\in\{1,2\}$ is contained by a node, arc, or triad $S_k$.  $N\cap
(S_1\cup S_2)$ is $\HH$-untamed or else there would be $\ell-1$
pairwise disjoint $\HH$-tamed subsets of $X$ whose union is $N$.  Let
each $E_k$ with $k\in\{1,2\}$ be the simple arc with $u_k\in E_k$.  We
show that $H$ contains a sapling in which $y$ is the degree-$3$
vertex.

Case~1: If $E_1=E_2$.  Since $N\cap (S_1\cup S_2)$ is $\HH$-untamed, a
vertex $v_k\in S_k\cap (N\setminus E_k)$ with $k\in\{1,2\}$.  Since
$\HH$ is simple and $\{u_k,v_k\}\subseteq S_k$, $S_k$ is not a triad.
$S_k$ is not an arc or else $E_k=S_k$ would intersect $N\setminus
E_k$.  Thus, $S_k$ is an end-node of $E_k$ with $\{u_k,v_k\}\subseteq
S_k$ and $u_{3-k}\notin S_k$.  By $u_k\in S_k$ and Condition~\ref{N6},
$S_k$ is not adjacent to $(P_3-y)\cup (P_{3-k}-u_{3-k})$ in $H$.
Since $\HH$ is simple, $v_k\in N\cap S_k$ implies that $H[(T-u_k)\cup
  \{v_k\}]$ is a sapling of $H$.

Case~2: $E_1\ne E_2$.  By Condition~\ref{N5}, $\{u_1,u_2\}\subseteq V$
for a common end-node $V$ of arcs $E_1$ and $E_2$.  By
Condition~\ref{N6}, $E_k\subseteq V(P_k)$ for each $k\in\{1,2\}$.
Since $N\cap (S_1\cup S_2)$ is $\HH$-untamed, a vertex $v_k\in S_k\cap
(N\setminus V)$ with $k\in\{1,2\}$.  Since $\HH$ is simple and
$\{u_k,v_k\}\subseteq S_k$, $S_k$ is not a triad.  $S_k$ is not an arc
or else $\{u_k,v_k\}\subseteq S_k=E_k\subseteq V(P_k)$ would
contradict $N\cap V(P_k)=\{u_k\}$.  Thus, $S_k$ is a node.  By $u_k\in
E_k\cap S_k$, $S_k$ is an end-node of $E_k$ containing $u_k$.  By
$v_k\in S_k\setminus V$, $S_k\ne V$.  By $u_k\in S_k$ and
Condition~\ref{N6}, $S_k$ is not adjacent to $V(P_3-y)\cup
V(P_{3-k}-u_{3-k})$ in $H$.  Since $\HH$ is simple, $v_k\in N\cap S_k$
implies that $H[(T-u_k)\cup \{v_k\}]$ is a sapling of $H$.
\end{proof}

\begin{proof}[Proof of Lemma~\ref{lemma:lemma4.2}]
Assume for contradiction that $G[X\cup Y]$ is sapling-free.  A vertex
set $D\subseteq X$ is an {\em inducing} set of $\HH_0$ if $G[E_0\cap
  D]$ for each arc $E_0=U_0V_0$ is an $U_0V_0$-rung of $G[E_0]$.  For
any inducing set $D$ of $\HH_0$, let $\HH_0(D)$ denote the simple
self-aiding $D$-net of graph $H_0(D)=G[Y\cup D]$ obtained from $\HH_0$
by replacing each arc $E_0$ of $\HH_0$ with the arc $E=E_0\cap D$ and
replacing each node $V_0$ of $\HH_0$ with the node $V=V_0\cap D$.  Let
$N=N_G(Y,X)$.  Let $\ell=\ell(Y,\HH_0,G)$. If $\ell=2$, then
Lemma~\ref{lemma:lemma4.3} implies $N\nsubseteq S_1\cup S_2$ for any
node or arc $S_i$ of $\HH_0$ with $i\in\{1,2\}$.  Thus, $N$ contains a
triad $\Delta$ and $N\setminus \Delta$ is not contained by any arc of
$\HH_0$ between two nodes of $\Delta$.  By $\ell=2$, there is an
inducing set $D$ of $\HH_0$ with $\ell(Y,\HH_0(D),H_0(D))=2$ and
$\Delta\subseteq N_{H_0(D)}(Y,D)$, contradicting
Lemma~\ref{lemma:lemma4.4}.  Thus, $\ell\geq 3$, implying a
three-vertex set $S\subseteq N$ such that every two-vertex subset of
$S$ is $\HH_0$-untamed.  Let $D$ be an inducing set of $\HH_0$ with
$S\subseteq D$, implying $\ell(Y,\HH_0(D),H_0(D))\geq 3$.  By
Lemma~\ref{lemma:lemma4.5}, there is a pod $E=UV$ of $Y$ in $\HH_0(D)$
for $H_0(D)$ such that $N_{H_0(D)}(Y)$ intersects $E\setminus (U\cup
V)$, $U\setminus E$, and $V\setminus E$.  Let $E_0=U_0V_0$ be the arc
of $\HH_0$ with $E=E_0\cap D$, $U=U_0\cap D$, and $V=V_0\cap D$.
Since $E_0$ is not a pod of $Y$ in $\HH_0$ and $N$ intersects
$E_0\setminus (U_0\cup V_0)$, $U_0\setminus E_0$, and $V_0\setminus
E_0$, a vertex $x$ belongs to $N\setminus (U_0\cup E_0\cup V_0)$ or
$(U_0\cup V_0)\setminus (E_0\cup N)$.  Let $D'$ be an inducing set
$(D\setminus E_0)\cup V(P)$ of $\HH_0$, where $E_0=U_0V_0$ is the arc
of $\HH_0$ containing $x$ and $P$ is a $U_0V_0$-rung of $G[E_0]$
containing $x$.  One can verify that $Y$ is $\HH_0(D')$-unpodded for
$H_0(D')$ with $\ell(Y,\HH_0(D'),H_0(D'))\geq 3$, contradicting
Lemma~\ref{lemma:lemma4.5}.
\end{proof}

\subsection[]{Proving Lemma~\ref{lemma:lemma3.2}\eqref{item2:lemma3.2}}
\label{subsection:subsection4.2}
This subsection shows that if $Y$ is $\HH$-sticky for an $X$-web
$\HH$, then $\HH$ can be expanded to an $X\cup Y$-web via
Subroutine~\ref{subroutine:B} below.  Let $\HH$ be an $X$-net.  For
any $\HH$-solid subset $S$ of $X$ contained by a simple arc $F=U_1U_2$
of $\HH$, define {\em Operation $\textsc{subdivide}(S)$} to (1) create
a new node $S$ and (2) replace the simple arc by new simple arcs
$SU_i$ with $i\in\{1,2\}$ consisting of the vertices of the
$SU_i$-rung of $G[F]$.  Define {\em Subroutine~\ref{subroutine:B}}
with $N=N(Y,X)$ as follows (see Figure~\ref{figure:figure7}):

\fbox{\em Subroutine~\ref{subroutine:B}}\\[-20pt]
\begin{enumerate}[label={}, ref={B}, leftmargin=0pt]
\item 
\label{subroutine:B}
\begin{enumerate}[label={\em Step~\ref{subroutine:B}\arabic*:},ref={\ref{subroutine:B}\arabic*},leftmargin=*,
itemindent=1.2cm]
\item
\label{B1}
{\em $Y$ is $\HH$-solid}.  Let $S_1$ and $S_2$ be $\HH$-solid sets
with $N=S_1\cup S_2$.

\begin{enumerate}[label={(\alph*)}, ref={\ref{B1}(\alph*)}, itemindent=-0.27cm]
\item 
\label{B11}
For each $i\in\{1,2\}$, if $S_i$ is contained by a simple arc, then
create node $S_i$ by $\textsc{subdivide}(S_i)$.

\item 
\label{B12}
Add each end-vertex $y$ of path $G[Y]$ into the nodes $S_i$ with
$i\in\{1,2\}$ and $S_i\subseteq N(y)$.

\item 
\label{B13}
Make a simple arc $Y=S_1S_2$.
\end{enumerate}

\item 
\label{B2}
{\em $Y$ is $\HH$-nonsolid}.  Thus, $Y$ is $\HH$-podded.  Let
$V_1V_2$-chunk $C$ of $\HH$ be a minimal pod of $Y$ in $\HH$. Since
$Y$ is $\HHdagger$-wild, assume $V_1\in V(\HHdagger)$ and
$V_1\subseteq C\cup N$ without loss of generality.
\begin{enumerate}[label={(\alph*)}, ref={\ref{B2}(\alph*)}, itemindent=-0.27cm]
\item 
\label{B21}
If $V_2$ is incident to exactly one arc $F=VV_2$ in the arc set for
$C$, $N\cap V_2\subseteq F$, and $F$ is simple, then $N$ intersects
$F\setminus V$ by the minimality of $C$.  Let $v_2$ be the end-vertex
of the $NV_2$-rung $P$ of $G[F]$ in $N$.  Let $v$ be the neighbor of
$v_2$ not in $P$.  Call $\textsc{subdivide}(\{v,v_2\})$ to create a
new node $V_2=\{v,v_2\}$.  Delete $V(P)$ from $C$ to preserve that $C$
is a $V_1V_2$-chunk that is a minimal pod of $Y$ in $\HH$.

\item 
\label{B22}
Update $\HH$ by $\textsc{merge}(C)$.  Let $E=V_1V_2$ be the arc of
$\HH$ with $E=C$.

\item
\label{B23}
Add $Y$ to arc $E$ and add each end-vertex $y$ of path $G[Y]$ to the
nodes $V_i$ with $V_i\subseteq C\cup N(y)$.
\end{enumerate}
\end{enumerate}
\end{enumerate}

\begin{figure}[t]
\centerline{\scalebox{0.47}{\includegraphics{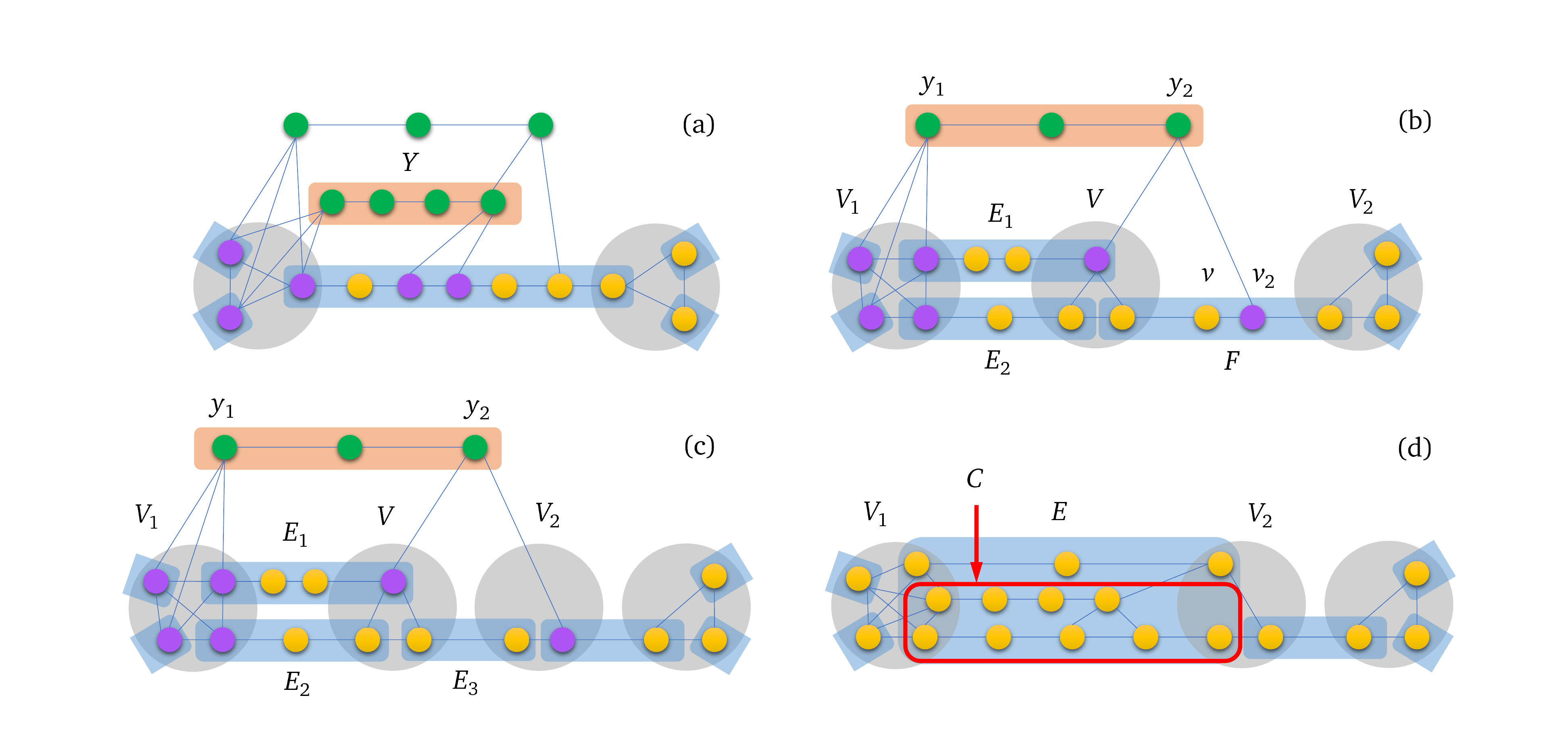}}}
\caption{Applying Step~\ref{B1} on the example in (a) results in the
  example in (b), in which $E_1\cup E_2\cup F$ is a minimal pod of the
  green $y_1y_2$-rung.  Applying Step~\ref{B21} on the example in (b)
  results in the example in (c), in which $E_1\cup E_2\cup E_3$ is a
  minimal pod of the green $y_1y_2$-rung. Applying Steps~\ref{B22}
  and~\ref{B23} on the example in (c) results in the example in (d).}
\label{figure:figure7}
\end{figure}

\begin{proof}[Proof of Lemma~\ref{lemma:lemma3.2}\eqref{item2:lemma3.2}] 
The resulting $\HH$ of Step~\ref{B1} is an $X\cup Y$-web, since all
steps preserve Conditions~\ref{condition:N} and all new arcs are
simple.  The rest of the proof shows that the resulting $\HH$ of
Step~\ref{B2} is also an $X\cup Y$-web.  At the beginning of
Step~\ref{B22} one can verify that, no matter whether $\HH$ is updated
by Step~\ref{B21} or not, $Y$ is $\HH$-nonsolid and $\HH$-podded and
$\HH$ is an $X$-web with the following {\em Condition~F}: If $V_2$ is
incident to exactly one arc $F$ in the arc set for the minimal pod $C$
of $Y$ in $\HH$ and $F$ is simple, then $N(Y,V_2)$ intersects
$V_2\setminus C$.  By Lemma~\ref{lemma:lemma3.4}, $\HH$ is an $X$-net
(respectively, $X\cup Y$-net) at the end of Step~\ref{B22}
(respectively, Step~\ref{B23}).  It remains to show that $E=C\cup Y$
is a flexible arc by identifying an $(S,V_1,V_2)$-sprout of $G[E]$ for
any nonempty subset $S$ of $E$.  The rest of the proof lets $\HH$
denote the $X$-web at the beginning of Step~\ref{B22} and lets all
sprouts be $(S,V_1,V_2)$-sprouts of $G[E]$ unless specified otherwise.
Let $y_1$ and $y_2$ be the end-vertices of path $G[Y]$ with
$V_1\subseteq C\cup N(y_1,X)$.  If $|Y|=1$, then $y_1=y_2$.  If
$|Y|\geq 2$, then let $N_i=N(y_i,X)$ and $M_i=N(Y\setminus
\{y_{3-i}\},X)$ for each $i\in\{1,2\}$ and let $M=M_1\cap M_2$.  Let
$S_C=S\cap C$ and $S_Y=S\cap Y$.  If $S_C\ne\varnothing$, then $S_C$
is assumed to be $\HH$-solid, since any $(S_C,V_1,V_2)$-sprout of
$G[C]$ is a sprout.  If $S_Y \ne\varnothing$, then let each $P_i$ with
$i\in\{1,2\}$ be the $Sy_i$-rung of $G[Y]$.  Let $C^*=W_1W_2$ with
$W_1=V_1$ be the arc of $\HHdagger$ containing $C$.

Case~1: $S_C=\varnothing$.  $G[S]$ is an edge in $G[Y]$ or else a
$V_1V_2$-rung of $G[E]$ containing $Y$ contains a sprout of
Type~\ref{S1} or~\ref{S2}.  By $|S|=2$, $|Y|\geq 2$.  Since $Y$ is
$\HHdagger$-wild, $M_1\subseteq V_1$.  We may assume $M_1=V_1$,
since otherwise $G[P_1\cup Q]$ is a spout of Type~\ref{S3} for a
$V_1V_2$-rung $Q$ of $G[C]$ intersecting $V_1\setminus M_1$.
Case~1(a): {\em $M_2$ is $\HH$-nonsolid}.
Lemma~\ref{lemma:lemma4.1}\eqref{item2:lemma4.1} implies an
$(M_2,W_1,W_2)$-sprout $T^*$ of $G[C^*]$. Let $T=G[T^*[C]\cup P_2]$.
If $T^*$ is of Type~\ref{S1} or~\ref{S2}, then $T$ contains a sprout
of Type~\ref{S1}.  If $T^*$ is of Type~\ref{S3}, then $T$ is a sprout
of Type~\ref{S3}.  Case~1(b): {\em $M_2$ is $\HH$-solid}.  Since $M_1$
is $\HH$-solid and $Y$ is $\HH$-nonsolid, we have $M\ne\varnothing$
and $M\subseteq V_1\cap M_2$.  If $M_2$ were contained by a simple arc
$F$ of $\HH$, then $F=V_1V_2$ by $V_1\cap M_2\ne\varnothing$ and
minimality of $C$, contradicting Condition~F.  Thus, $M_2$ is a node
of $\HH$. By $V_1\cap M_2\ne\varnothing$, $F=V_1M_2$ is an arc of
$\HH$. By $M\subseteq V_1$, we have $M_2\subseteq F\cup N_2$.  By
minimality of $C$, $M_2=V_2$.  By $M\ne\varnothing$ and $|Y|\geq 3$,
$G[Y\cup M]$ contains a sprout of Type~\ref{S1}.
\begin{figure}[t]
\centerline{\scalebox{0.47}{\includegraphics{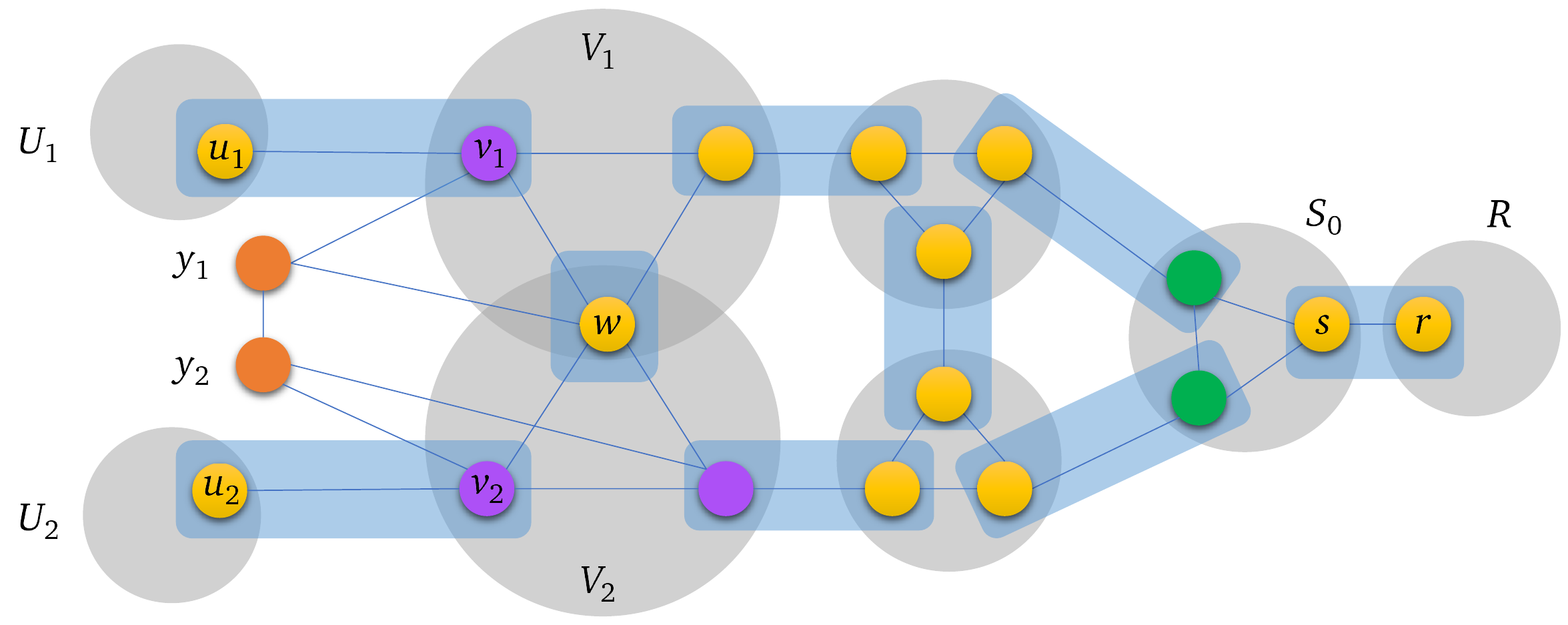}}}
\caption{An example of $\HH_0$.}
\label{figure:figure8}
\end{figure}

Case~2: $S_Y=\varnothing$.  $S=S_C$ is $\HH$-solid.  Let $v_1\in
V_1\setminus C$, $v_2\in V_2\setminus C$, and a set of new vertices
$B=\{r,s,u_1,u_2, w\}$.  Define an $X_0$-net $\HH_0$ of a graph $G_0$
on $X_0\cup Y$ with $X_0=B\cup C\cup \{v_1,v_2\}$ as follows (see
Figure~\ref{figure:figure8}): Initially, let $G_0=G[C\cup Y\cup
  \{v_1,v_2\}]$ and let $\HH_0$ consist of the nodes and arcs of $\HH$
that intersect $C$.  For each $i\in\{1,2\}$, update $V_i$ by deleting
all vertices not in $C$ except for $v_i$ and then adding $w$.  Make a
new simple arc $V_1V_2$ consisting of $w$.  Add a minimum number of
edges to make $N_{G_0}(w)=(\{y_1\}\cup V_1\cup V_2) \setminus \{w\}$.
Make new nodes $R=\{r\}$, $U_1=\{u_1\}$, and $U_2=\{u_2\}$.  If $S$ is
a node, then let $S_0=S$; otherwise, make a new node $S_0$ via
$\textsc{subdivide}(S)$.  Add $s$ into $S_0$. Make a simple arc $RS_0$
consisting of $r$ and $s$.  For each $i\in\{1,2\}$, make a simple arc
$U_iV_i$ consisting of vertices $u_i$ and $v_i$.  Add a minimum number
of edges to make $N_{G_0}(s)=R\cup S$, $N_{G_0}(r)=\{s\}$,
$N_{G_0}(u_1)=\{v_1\}$, and $N_{G_0}(u_2)=\{v_2\}$.

$\HH_0$ is an $X_0$-net of $G_0$ with leaf nodes $R$, $U_1$, and $U_2$
and leaf vertices $r$, $u_1$, and $u_2$.  Since $\HH$ is an $X$-web of
$G$ and all new arcs of $\HH_0$ are simple, $\HH_0$ is an $X_0$-web
of~$G_0$.  Since each $V_i$ with $i\in\{1,2\}$ is the neighbor of
$U_i$ and $V_1V_2$ is an arc of $\HH_0$, $V_1$ is a maximal split node
of $\HH_0$.  Since $Y$ is $\HHdagger$-wild, $Y$ is
$\HH_0^\dagger$-wild.  Since $Y$ is $\HH$-nonsolid, $\{w\}$ is not a
pod of $Y$ in $\HH_0$ and $Y$ is $\HH_0$-nonsolid. Since node $S_0$ is
adjacent to leaf $R$ in $\HH_0$, no $V_1V_2$-chunk of $\HH_0$
intersects $S_0$, implying no pod of $Y$ in $\HH_0$ that is a superset
of $C$.  The minimality of $C$ implies no pod of $Y$ in $\HH_0$ that
is a proper subset of $C$.  Thus, $Y$ is $\HH_0$-unpodded.
Lemma~\ref{lemma:lemma3.2}\eqref{item1:lemma3.2} implies a sapling
$T_0$ of $G_0$.  $T_0-(B\cup \{v_1,v_2\})$ is a sprout.

Case~3: $S_Y\ne\varnothing$ and $S_C\ne\varnothing$.  $S_C$ is
$\HH$-solid.  Assume $S_Y=\{y_2\}$, since otherwise $G[P_1\cup Q]$ for
an $S_CV_2$-rung $Q$ of $G[C]$ not intersecting $V_1$ is a sprout of
Type~\ref{S2}.  Assume that any $N_2V_2$-rung $Q$ of $G[C]$ intersects
$S_C$ exactly at its end-vertex in $N_2$, since otherwise $G[P_1\cup
  Q]$ contains a sprout of Type~\ref{S1} or~\ref{S2}.  Thus, each
vertex $v\in C$ admits a $vV_2$-rung $Q(v)$ of $G[C]$ with $(V_1\cup
S_C)\cap V(Q(v))\subseteq \{v\}$: Assume for contradiction a $v\in C$
such that each $vV_2$-rung $Q(v)$ with $V_1\cap V(Q(v))\subseteq
\{v\}$ intersects $S_C\setminus \{v\}$.  If $S_C$ is a node $V$ of
$\HH$, then graph $\GG(\CC)-V$ is disconnected.  If $S_C$ is contained
by a simple arc $F$ of $\HH$, then graph $\HH[\CC] - \{V_1, V\}$ is
disconnected.  Either way, the minimality of $C$ implies that $N_2$
intersects the connected component of $G[C]-S_C$ that intersects
$V_2$, implying an $N_2(V_2\setminus S_C)$-rung of $G[C]$,
contradicting the above assumption.

Case~3(a): {\em a vertex $v\in N_2\setminus S_C$}.  $Q(v)$ does not
intersect $S_C$, so $G[Y\cup Q(v)]$ is a sprout of Type~\ref{S1}.
Case~3(b): {\em a vertex $v\in S_C\setminus N_2$}.  $Q(v)$ does not
intersect $N_2$ or else the $N_2V_2$-rung of $Q(v)$ does not intersect
$S_C$ at its end-vertex in $N_2$, contradiction.  Thus, $G[Y\cup
  Q(v)]$ is a sprout of Type~\ref{S2}.
Case~3(c): {\em $N_2=S_C$}.  If $M_1\ne V_1$, then $G[P_1\cup Q(v_1)]$
for a $v_1\in V_1\setminus M_1$ contains a sprout of Type~\ref{S3}.
If $M_1=V_1$, then $M$ contains a $v_1$, since $Y$ is
$\HH$-nonsolid. We have $N=M_1\cup N_2$.  $M_1$ and $N_2$ are both
$\HH$-solid. Thus, $G[Y\cup Q(v_1)]$ contains a sprout of
Type~\ref{S1}.
\label{endofcharacterization}
\end{proof} 

This completes the proof of our characterization in
Theorem~\ref{theorem:theorem3.1} as well as Chudnovsky and Seymour's
characterization in Theorem~\ref{theorem:theorem2.2}.
Subroutine~\ref{subroutine:B} can be implemented to run in $O(m)$
time, so Steps~\ref{A2c} and~\ref{A2d} take $O(m)$ time.
Steps~\ref{A1},~\ref{A2a}, and~\ref{A2b} take $O(m)$ time.  Since the
set of vertices of $G$ in $\HH$ is enlarged by Step~\ref{A2d} and not
affected elsewhere, Step~\ref{A2} halts in $O(n)$ iterations.  Thus,
Algorithm~\ref{algorithm:A} can be implemented to run in $O(mn)$ time.
To complete proving Theorem~\ref{theorem:theorem1.1}, it remains to
implement Algorithm~\ref{algorithm:A} to run in $O(m\log^2 n)$ time
in~\S\ref{section:section5} using dynamic graph algorithms and other
data structures.

\section[]{Proving Lemma~\ref{lemma:lemma3.3}}
\label{section:section5}

\begin{figure}[t]
\centerline{\scalebox{0.47}{\includegraphics{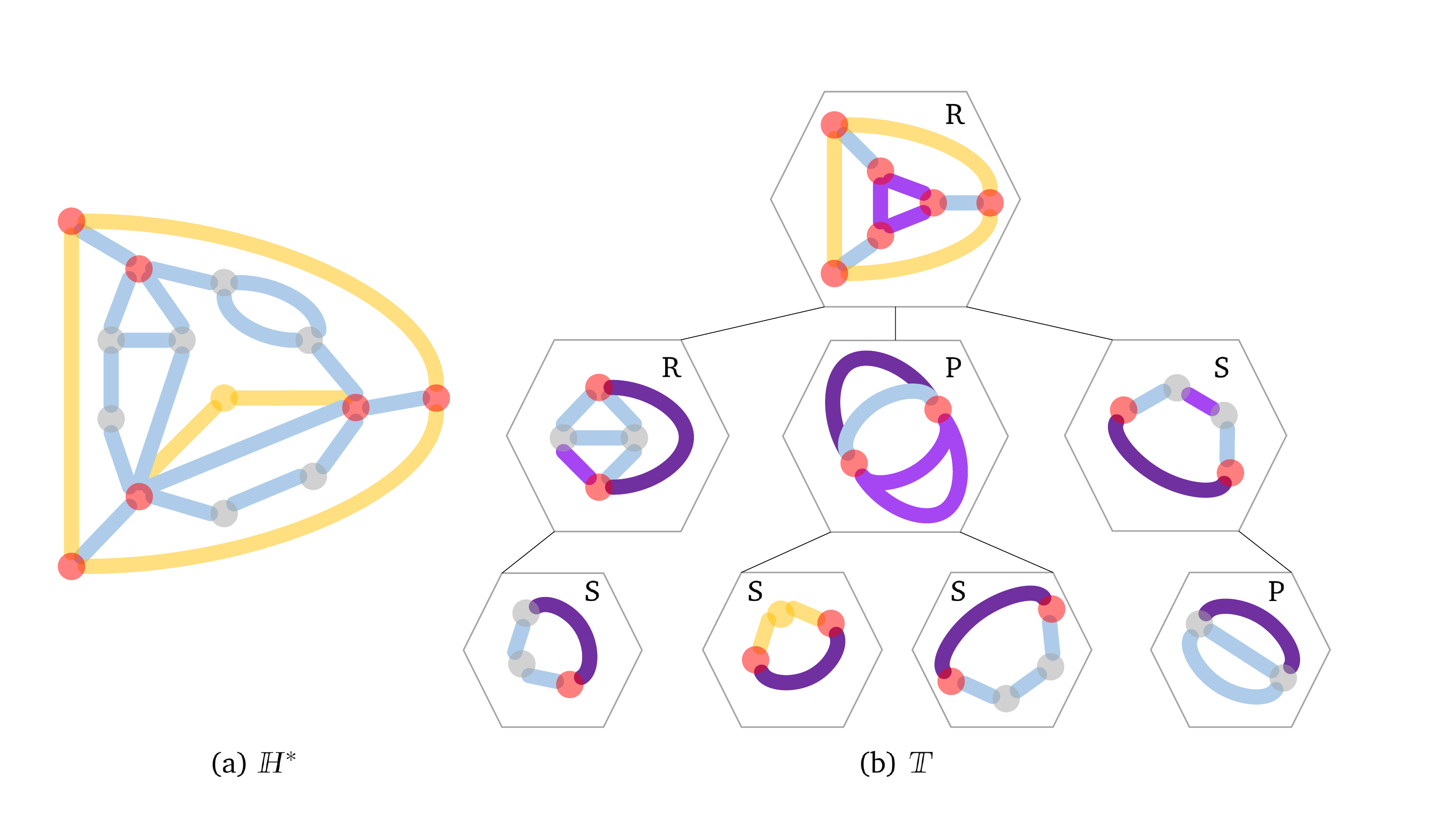}}}
\caption{An example of $\HHstar$ and $\TT$.  The Q-knots are omitted for
  brevity.  The virtual arc in dark purple in a nonroot knot $K$
  matches a light purple arc in the parent of $K$ in $\TT$. They form
  the pair of virtual arcs between the poles of $K$.  Each non-purple
  arc in a knot $K$ is a virtual arc whose corresponding arc of
  $\HHstar$ is contained by a child Q-knot of $K$. A non-purple arc is
  in yellow if and only if its corresponding arc of $\HHstar$ is
  dummy. The dummy nodes of $\HHstar$ are in yellow.  $\HH$ is the
  multigraph obtained from $\HHstar$ by deleting the yellow nodes and
  arcs.  $\HHdagger$ is the simple graph obtained from the one in
  the root of $\TT$ by deleting the yellow arcs.  The maximal split
  nodes of $\HH$, i.e., the nodes of $\HHdagger$ are in red.}
\label{figure:figure9}
\end{figure}

Let $G$ be represented by a static adjacency list.  We use a dynamic
adjacency list to represent an incremental biconnected multigraph
$\HHstar$ with $V(\HHstar)=V(\HH)$ that is a supergraph of $\nabla(\HH)$.
An arc or node of $\HHstar$ is {\em dummy} if it is an empty vertex set
of $G$.  For instance, the three arcs of $\nabla(\HH)$ between the
leaves of $\HH$ are dummy in $\HHstar$.  Other dummy nodes and arcs are
created only via operation $\textsc{merge}$.  The $X$-web $\HH$
maintained by Algorithm~\ref{algorithm:A} is exactly $\HHstar$ excluding
its dummy arcs and nodes. See Figure~\ref{figure:figure9}(a) for an
example of $\HHstar$.  Each node and arc of $\HH$ and $\HHdagger$ is
associated with a distinct {\em color} that is a positive integer such
that two vertices share a common {\em arc color} (respectively, {\em
  node color}) for $\HH$ and $\HHdagger$ if and only if they are
contained by a common arc (respectively, node) of $\HH$ and
$\HHdagger$.  For each vertex $v$ of $G$, we maintain a set of at
most six colors indicating the arc, maximal chunk, nodes, and maximal
split nodes of $\HH$ that contain $v$, which are called the {\em
  $\HH$-arc}, {\em $\HHdagger$-arc}, {\em $\HH$-node}, and {\em
  $\HHdagger$-node colors} of vertex $v$.  For each color $c$, we
store its corresponding arc or node for $\HH$ or $\HHdagger$ and
maintain the number of the vertices having the color $c$ without
keeping an explicit list of these vertices.  For each node $V$ and
each incident arc $E$ of $V$ in $\HH$, we maintain the cardinality of
the vertex set $E\cap V$.  Thus, it takes $O(1)$ time to (1) update
and query the colors of a vertex and (2) add a vertex to an arc or
node of $\HH$.  For each arc of $\HHstar$, we mark whether it is dummy,
simple, or flexible and, for each simple arc $E=V_1V_2$ of $\HHstar$, we
use a doubly linked list to store the $V_1V_2$-rung $G[E]$.  For any
vertex $v$ and vertex set $Y$ of $G$, let $d(v)=|N(v)|$ and
$d(Y)=\sum_{y\in Y}d(y)$ throughout the section.

Based on Lemma~\ref{lemma:lemma5.1}, to be proved
in~\S\ref{subsection:subsection5.4}, Steps~\ref{A2a} and~\ref{A2b} are
implemented in \S\ref{subsection:subsection5.1} to run in overall
$O(m\log^2 n)$ time throughout
Algorithm~\ref{algorithm:A}. Step~\ref{A2c} is implemented in
\S\ref{subsection:subsection5.2} to run in overall $O(m)$ time
throughout Algorithm~\ref{algorithm:A}. Step~\ref{A2d}, i.e.,
Subroutine~\ref{subroutine:B} is implemented in
\S\ref{subsection:subsection5.3} to run in overall $O(m\log
n\cdot\alpha(n,n))$ time throughout Algorithm~\ref{algorithm:A}, where
$\alpha(n,n)$ is the inverse Ackermann function.

\subsection{Steps~\ref{A2a} and~\ref{A2b} of Algorithm~\ref{algorithm:A}}
\label{subsection:subsection5.1}

Although vertex colors change only in Step~\ref{A2d}, the overall
number of changes of the $\HHdagger$-arc and $\HHdagger$-node
colors affects the analysis of our implementation of Steps~\ref{A2a}
and~\ref{A2b}.  Therefore, this subsection analyzes the time for the
change of $\HHdagger$-arc and $\HHdagger$-node colors.  The time
for the change of $\HH$-arc and $\HH$-node colors will be analyzed for
Step~\ref{A2d} in~\S\ref{subsection:subsection5.3}.  A vertex of $G$
stays uncolored until it is added into $X$.  Each vertex of $X$ has
exactly one $\HHdagger$-arc color and at most two $\HHdagger$-node
colors.  Each node $V$ of $\HHdagger$ stays a node of $\HHdagger$
and each vertex in $V$ stays in $V$ for the rest the algorithm.  Thus,
the $\HHdagger$-node colors of each vertex are updated $O(1)$ times
throughout the algorithm, implying that the overall time for updating
$\HHdagger$-node colors of all vertices is $O(n)$.  Although the
$\HHdagger$-arc color of a vertex may change many times, the overall
time for updating the $\HHdagger$-node colors of all vertices can be
bounded by $O(n\log n)$.  Observe that $\HH$ is updated by
Subroutine~\ref{subroutine:B} only via (1) subdividing a simple arc of
$\HH$, (2) merging an $\HH$-podded $Y$ into a minimal pod of $Y$ in
$\HH$, and (3) creating an arc $E=Y$ for an $\HH$-solid $Y$.  If the
simple graph $\HHdagger$ does not change, then each of these updates
takes $O(d(Y))$ time.  If the simple graph $\HHdagger$ changes, then
$Y$ is $\HH$-solid.  For instance, let $\HH$ be as in
Figure~\ref{figure:figure5}(a), implying that $\HHdagger$ is as in
Figure~\ref{figure:figure5}(b). If an $\HH$-solid $Y$ joins $\HH$ as
the arc $E_{16}$ in Figure~\ref{figure:figure5}(c), then all nodes and
arcs of $\HH$ become nodes and arcs of $\HHdagger$.  However, once
two vertices of $X$ have distinct $\HHdagger$-arc colors, they can
no longer share a common arc color for $\HHdagger$ for the rest of
the algorithm.  Thus, one can bound the overall number of changes of
$\HHdagger$-arc colors of all vertices by $O(n\log n)$ as follows:
If $E$ is an arc of the original $\HHdagger$ and $E_1,\ldots,E_k$
are the arcs of the updated $\HHdagger$ with $E_1\cup\cdots\cup
E_k\subseteq E$ and $|E_1|\leq \cdots \leq |E_k|$, then let the
vertices in $E_k$ keep their original $\HHdagger$-arc color and
assign a distinct new $\HHdagger$-arc color to the vertices in each
$E_i$ with $i\in\{1,\ldots,k-1\}$.  Since the cardinality of the arc
of $\HHdagger$ containing a specific vertex of $X$ is halved each
time its $\HHdagger$-arc color changes, its $\HHdagger$-arc color
changes $O(\log n)$ times, implying that the $\HHdagger$-arc colors
of all vertices change $O(n\log n)$ times throughout the algorithm.
With the data structure of Lemma~\ref{lemma:lemma5.1}, to be proved
in~\S\ref{subsection:subsection5.4}, the overall time for
Steps~\ref{A2a} and~\ref{A2b} throughout the algorithm is $O(m\log^2
n)$.

\begin{lem}
\label{lemma:lemma5.1}
If $X$ is an incremental subset of $V(G)$ such that each $x\in X$ has
exactly one $\HHdagger$-arc color $a$ and a set of at most two
$\HHdagger$-node colors corresponding to a subset of the two
end-vertices of $a$, then there is an $O(m+n)$-time obtainable data
structure supporting the following queries and updates:
\begin{enumerate}[leftmargin=*]
\item 
\label{item1:lemma5.1}
Move a vertex $v$ of $G-X$ to $X$ in amortized $O(d(v)\cdot \log^2 n)$
time.

\item 
\label{item2:lemma5.1}
Update the colors of a vertex $v\in X$ in amortized $O(d(v)\cdot\log
n)$ time.

\item
\label{item3:lemma5.1}
Determine if there is a set $Y\subseteq V(G-X)$ with connected $G[Y]$
such that two vertices of $N(Y,X)$ share no color and, for the
positive case, report a minimal such $Y$ in amortized $O(d(Y)\cdot
\log^2 n)$ time.
\end{enumerate}
\end{lem}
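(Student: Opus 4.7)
The plan is to couple an incremental representation of $\HHstar$ with standard dynamic graph data structures, routing each of the three operations to the appropriate structure and paying for it either by the degree of the affected vertex or by a slowly-growing potential. As the central scaffold, I would maintain the SPQR tree $\TT$ of $\HHstar$ as pictured in Figure~\ref{figure:figure9} using the known incremental algorithms for biconnected multigraphs~\cite{SPQR}, so that the $\textsc{merge}$ calls and the arc subdivisions issued by Subroutine~\ref{subroutine:B} correspond respectively to collapsing one or more knots of $\TT$ and to refining an S-knot. With each knot I would store aggregate information (dummy vs.\ simple vs.\ flexible arcs, the induced rung for simple arcs, and counters for underlying $X$-vertices) so that the $\HHdagger$-arc color and the $\HHdagger$-node colors of any vertex of $X$ can be read off by walking $O(1)$ pointers in $\TT$ in $O(\log n)$ worst-case time.

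For item~(\ref{item1:lemma5.1}), when $v\in V(G-X)$ enters $X$, the arc and node colors of $v$ are assigned in $O(1)$ time, and the up to $d(v)$ previously-passive edges at $v$ must be inserted into the auxiliary connectivity structures described below; using Holm--Lichtenberg--Thorup (HLT) dynamic connectivity, each such insertion costs $O(\log^2 n)$ amortized, giving the stated bound. Item~(\ref{item2:lemma5.1}) is analogous but cheaper: a color change at $v\in X$ triggers the same re-insertions, but only within structures living entirely on $X$, where the lighter $O(\log n)$-per-edge dynamic-tree machinery suffices. The paragraph opening \S\ref{subsection:subsection5.1} already bounds the total number of color changes of vertices of $X$ by $O(n\log n)$, which is what keeps the cumulative cost of (\ref{item1:lemma5.1}) and (\ref{item2:lemma5.1}) inside the $O(m\log^2 n)$ target.

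For item~(\ref{item3:lemma5.1}), the task is to find a minimal connected $Y\subseteq V(G-X)$ with $N(Y,X)$ containing two vertices that share no color. Since $\HHdagger$ is simple, two vertices of $X$ share a color iff they lie in a common $\HHdagger$-node or in a common $\HHdagger$-arc, so compatibility of a set $S\subseteq X$ reduces by Lemma~\ref{lemma:lemma3.5}\eqref{item2:lemma3.5} to membership in a single node, arc, or triad. I would first test each boundary vertex: if $N(v,X)$ is by itself $\HHdagger$-untamed, then $\{v\}$ is a minimal witness and we are done. Otherwise every such $N(v,X)$ sits inside some compatibility class $K$ (a node, arc, or triad of $\HHdagger$), and for each $K$ currently present I would maintain, using HLT, a dynamic connectivity structure on $V(G-X)$ together with the $X$-vertices of $K$. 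Growing $Y$ vertex-by-vertex through $G-X$ and checking, via these structures, which classes $N(Y,X)$ still fits into, we detect untamedness the first time no $K$ survives; a standard trimming pass then enforces minimality. Each probe costs $O(\log^2 n)$, and each edge of $G[Y]$ and from $Y$ to $X$ is probed a constant number of times, yielding the $O(d(Y)\cdot\log^2 n)$ bound.

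The main obstacle will be the amortized accounting around Step~\ref{B22} of Subroutine~\ref{subroutine:B}, which can collapse a whole chunk $C$ into a single arc in one shot and therefore threatens to touch every split pair inside $C$ at once. To stay within an $O(\log^2 n)$-per-unit-work budget rather than a $\Theta(|C|\log^2 n)$-per-merge budget, the recoloring scheme must obey the ``keep the color of the largest surviving piece'' rule sketched in \S\ref{subsection:subsection5.1}, and the knots of $\TT$ must be merged top-down with only constant overhead per arc actually moved. Verifying that the various HLT instances can be lazily rebuilt in accord with this charging scheme, and that the three operations do not double-count against one another, is the technical heart of the argument.
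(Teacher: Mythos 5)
Your proposal diverges substantially from the paper's proof and, more importantly, the route you sketch for item~(\ref{item3:lemma5.1}) does not give the claimed bound. Three concrete problems.

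First, the central technical device of the paper's proof of Lemma~\ref{lemma:lemma5.1} is the notion of a \emph{representative set}: a subset $R\subseteq S$ of size at most three such that $R\cup V$ is tamed iff $S\cup V$ is tamed, for all $V\subseteq X$ (Lemma~\ref{lemma:lemma5.5}). This is what lets tamedness information be aggregated hierarchically: a balanced BST $T_y$ over $N(y,X)$ whose internal nodes store representative sets of their subtrees, another BST $T_U$ over each component $U$ of the spanning forest of $G-X$, and representative sets $R_C$ on the clusters of a top forest over that spanning forest (Lemma~\ref{lemma:lemma5.6}). Your proposal never introduces any such compression; without it, there is no way to decide in $O(\log n)$ time whether a subtree's boundary is tamed. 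The SPQR tree you spend your first paragraph on plays no role here; it is the machinery for Lemma~\ref{lemma:lemma5.2} and Step~\ref{A2d}, and the colors Lemma~\ref{lemma:lemma5.1} consumes are already maintained in $O(1)$ per vertex as stated at the start of \S\ref{section:section5}.

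Second, your plan for item~(\ref{item3:lemma5.1}) --- one dynamic-connectivity instance per ``compatibility class'' $K$ (node, arc, or triad of $\HHdagger$) --- does not meet the time bound. There can be $\Theta(n)$ such classes, so when a vertex $v$ moves to $X$ you would have to touch $\Theta(n)$ HLT instances, giving $\Theta(n\cdot d(v)\log^2 n)$ rather than $O(d(v)\log^2 n)$. Moreover, ``growing $Y$ vertex-by-vertex\ldots\ a standard trimming pass then enforces minimality'' is not $O(d(Y)\log^2 n)$: to discover that no class survives you may have to explore a component far larger than the eventual $Y$, and trimming a connected witness in a general graph to a \emph{minimal} one while remaining connected is exactly the nontrivial step. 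The paper solves this with two recursive top-tree routines: $\textsc{tree-wild}$/$\textsc{closest}$ isolate the endpoints $u,w$ of a minimally untamed path inside the spanning tree by descending $O(\log n)$ clusters, each probe checking untamedness of a union of $O(1)$ representative sets; then $\textsc{jump}$ repeatedly uses $\partial T=\{u,w\}$ queries on the top tree to shortcut the tree path to an induced path of $G-X$, paying $O(d(u)\log n)$ per hop. None of this is recoverable from your sketch.

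Third, your explanation of why item~(\ref{item2:lemma5.1}) is a $\log n$ factor cheaper than item~(\ref{item1:lemma5.1}) --- ``structures living entirely on $X$, where the lighter dynamic-tree machinery suffices'' --- is not what happens. The data structures for both items live on $G-X$ (the spanning forest and top forest). Item~(\ref{item1:lemma5.1}) is more expensive because moving $v$ to $X$ deletes $d(v)$ edges from the decremental graph $G-X$, each costing $O(\log^2 n)$ amortized in HLT; item~(\ref{item2:lemma5.1}) only recomputes $R_y$ for the $O(d(v))$ neighbors $y$ of $v$ in $G-X$ and pushes each change up a height-$O(\log n)$ top tree, which is $O(\log n)$ per neighbor, with no HLT edge deletions at all.
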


\subsection{Step~\ref{A2c} of Algorithm~\ref{algorithm:A}}
\label{subsection:subsection5.2}
Let $\SS$ be the $O(d(Y))$-time obtainable set consisting of the nodes
$V$ of $\HH$ with $V\subseteq N(Y,X)$ and the simple arcs $E$ of $\HH$
with $G[E\cap N(Y,X)]$ being an edge.  $Y$ is $\HH$-solid if and only
if $|\SS|=2$, $N(y,X)=\varnothing$ for each internal node $y$ of path
$G[Y]$, and $N(Y,X)$ is contained by the union of the nodes or arcs in
$\SS$.  Therefore, it takes $O(d(Y))$ time to determine whether $Y$ is
$\HH$-solid.  Lemma~\ref{lemma:lemma4.1}\eqref{item1:lemma4.1} implies
that $Y$ is $\HH$-podded if and only if both of the following
conditions hold: (a) $N(Y,X)$ is contained by the union of an arc $E$
of $\HHdagger$ and its end-nodes $V_1$ and $V_2$ in $\HHdagger$
and (b) $E$ is a pod of $Y$ in $\HHdagger$.  Both conditions can be
checked in $O(d(Y))$ time via the $\HHdagger$-arc and
$\HHdagger$-node colors of each vertex in $N(Y,X)$ and the
cardinalities of $V_1\setminus E$ and $V_2\setminus E$.  Therefore, it
takes $O(d(Y))$ time to determine whether $Y$ is $\HH$-podded.  Since
the $\HHdagger$-wild sets $Y$ in all iterations of the algorithm are
pairwise disjoint, it takes overall $O(m)$ time for Step~\ref{A2c} to
determine whether $Y$ is $\HH$-sticky throughout the algorithm.

\subsection[]{Step~\ref{A2d} of Algorithm~\ref{algorithm:A}, i.e., Subroutine~\ref{subroutine:B}}
\label{subsection:subsection5.3}

This subsection shows how to implement Subroutine~\ref{subroutine:B}
so that the overall time of Step~\ref{A2d} throughout
Algorithm~\ref{algorithm:A} is $O(m\log n\cdot\alpha(n,n))$.  Although
we may delete nodes and arcs from $\HH$ via $\textsc{merge}(C)$ for a
minimal pod $C$ of $Y$ in $\HH$, they stay as dummy nodes and arcs in
$\HHstar$ in order to make the multigraph $\HHstar$ incremental.  One can
verify that $\HHdagger$ aids $\HHstar$, even though $\HHstar$ is not an
$X$-net due to its dummy arcs and nodes.  Although Step~\ref{B12} may
change $\HHdagger$, the overall time for updating the
$\HHdagger$-colors has been accounted for
in~\S\ref{subsection:subsection5.1}.  Therefore, this subsection only
analyzes the time required by the change of $\HH$-arc and $\HH$-node
colors and the cardinalities of $E\cap V_1$ and $E\cap V_2$ for each
arc $E=V_1V_2$ of $\HH$.

\begin{figure}[t]
\centerline{\scalebox{0.47}{\includegraphics{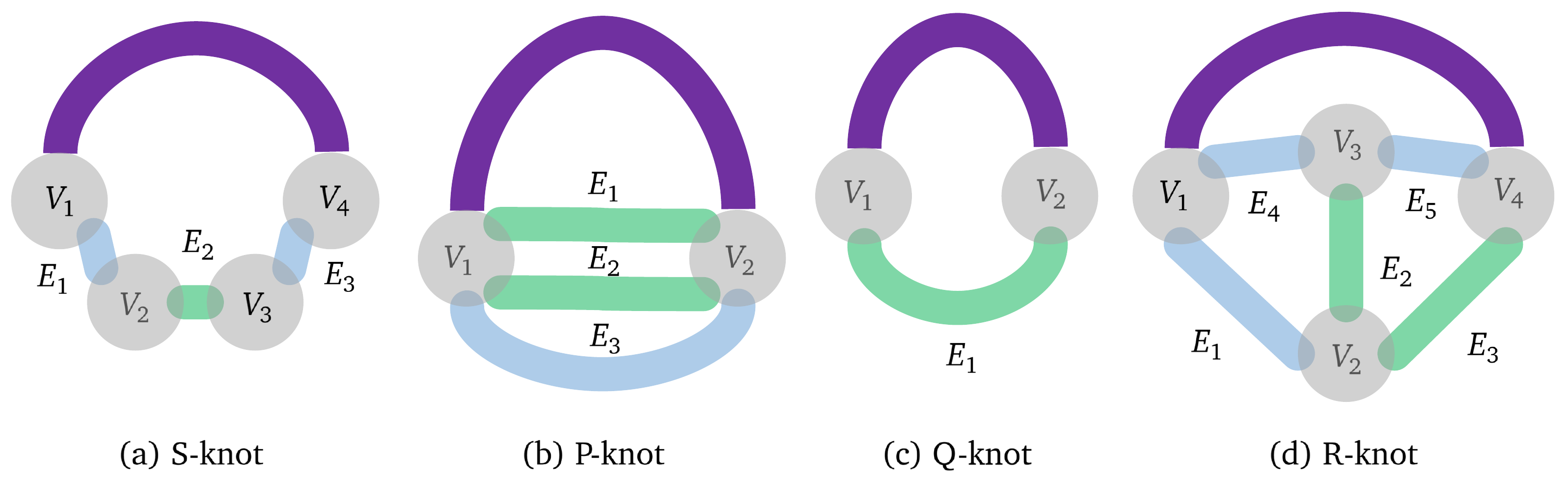}}}
\caption{Four examples of the lowest common ancestor $K$ of the
  Q-knots containing the arcs of $\HH$ in $\CC_1\cup \CC_2$, which
  equals $E_2$ in (a), $E_1\cup E_2$ in (b), $E_1$ in (c), and
  $E_2\cup E_3$ in (d).}
\label{figure:figure10}
\end{figure}

The {\em SPQR-tree} $\TT$ of the incremental multigraph $\HHstar$ is an
$O(n)$-time obtainable $O(n)$-space tree structure representing the
triconnected components of $\HHstar$~\cite{SPQR,Gutwenger}.  Each member
of $V(\TT)$, which we call a {\em knot}, is a graph homeomorphic to a
subgraph of $\HHstar$~\cite[Lemma~3]{SPQR} such that the knots induce a
disjoint partition of the arcs of $\HHstar$.  Specifically, there is a
supergraph $\GG$ of $\HHstar$ with $V(\GG)=V(\HHstar)$, where each arc of
$\GG\setminus \HHstar$ is called {\em virtual}~\cite{Tutte66}, and there
are four types of knots of $\TT$:
(1) {\em S-knot}: a simple cycle on three or more nodes.  
(2) {\em P-knot}: three or more parallel arcs.
(3) {\em Q-knot}: two parallel arcs, exactly one of which is virtual.
(4) {\em R-knot}: a triconnected simple graph that is not a cycle. 
The Q-knots are the leaves of $\TT$ and each arc of $\HHstar$ is
contained by a Q-knot.  No two S-knots (respectively, P-knots) are
adjacent in $\TT$.  Each virtual arc is contained by exactly two
adjacent knots.  Since $\HH$ has $O(n)$ arcs by Condition~\ref{N2},
$\TT$ has $O(n)$ knots.  If $U$ and $V$ are nonleaf nodes of $\HH$
such that $UV$ is a virtual arc, then $\{U,V\}$ is a split pair of
$\HH$.  If distinct nodes $U$ and $V$ admit three internally disjoint
$UV$-paths in $\HHstar$, then $U$ and $V$ are contained by a common
P-knot or R-knot of $\TT$~\cite{SPQR}.  By Condition~\ref{N1} of
$\HH$, there are three internally disjoint paths in $\nabla(\HH)$
between each pair of leaves of $\HHstar$, implying an R-knot of $\TT$
containing the leaves of $\HH$.  Let $\TT$ be rooted at this unique
R-knot.  Figure~\ref{figure:figure9}(b) is the $\TT$ for the $\HHstar$
in Figure~\ref{figure:figure9}(a).  Let $K$ be a nonroot knot of
$\TT$.  The {\em poles}~\cite{Gutwenger} of $K$ are the end-nodes of
the unique virtual arc contained by $K$ and its parent knot in $\TT$.
For the four nonroot knots $K$ in Figure~\ref{figure:figure10}, $V_1$
and $V_4$ (respectively, $V_2$) are the poles of the knots in (a) and
(d) (respectively, (b) and (c)).  Let $\CC(K)$ consist of the arcs of
$\HH$ in the descendant Q-knots of $K$ in $\TT$.  Let $C(K)$ consist
of the vertices of $G$ contained by the arcs of $\CC(K)$.  If $U$ and
$V$ are the poles of a nonroot knot $K$ of $\TT$, then $C(K)$ is a
$UV$-chunk and $\CC(K)$ is the arc set for $C(K)$.  A nonempty vertex
set $C$ is a maximal chunk of $\HH$ if and only if $C=C(K)$ holds for
a child knot $K$ of the root of $\TT$.  For instance, the $X$-net
$\HH$ in Figure~\ref{figure:figure9}(a) has six maximal chunks.  One
of them is $C(K)$ for the child R-knot (respectively, P-knot and
S-knot) $K$ of the root of $\TT$.  The remaining three are $C(K)$ for
three omitted child Q-knots $K$ of the root of $\TT$.  For any nonroot
knot $K$ of $\TT$ with $C(K)\ne\varnothing$, if $K$ is a P-knot, then
$\CC(K)$ is the union of the arc sets of all split components of
$\{U,V\}$ (e.g., three splits components of $\{V_1,V_2\}$ in the
example in Figure~\ref{figure:figure10}(b)); otherwise, $\CC(K)$ is
the arc set of a single split component of $\{U,V\}$, where $U$ and
$V$ are the poles of $K$ (e.g., exactly one split component for
$\{V_1,V_4\}$ in the examples in Figures~\ref{figure:figure10}(a)
and~\ref{figure:figure10}(d) and exactly one split component for
$\{V_1,V_2\}$ in the example in Figure~\ref{figure:figure10}(c)).

\begin{lem}[Di~Battista and Tamassia~\cite{SPQR}]
\label{lemma:lemma5.2}
Each update to $\TT$ corresponding to the following operations on the
incremental biconnected multigraph $\HHstar$ can be implemented to run
in amortized $\alpha(n,n)$ time: 
\begin{enumerate*}[label=(\arabic*), ref=\arabic*]
\item 
\label{item1:lemma5.2}
Add a new node $V$ to subdivide
an arc $V_1V_2$ of $\HHstar$ into two arcs $E_1=VV_1$ and $E_2=VV_2$.
\item
\label{item2:lemma5.2}
Add an arc $UV$ between two nodes $U$ and $V$ of $\HH$.
\end{enumerate*}
\end{lem}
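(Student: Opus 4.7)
The plan is to follow Di~Battista and Tamassia's original framework for incrementally maintaining the SPQR-tree of a biconnected multigraph, where the $\alpha(n,n)$ factor arises from a union--find data structure used to locate relevant knots in $\TT$. At a high level, for each update we first locate a small set of ``affected'' knots of $\TT$, perform a local restructuring among them, and then leave the rest of $\TT$ untouched.

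For operation~\eqref{item1:lemma5.2}, observe that the arc $V_1V_2$ of $\HHstar$ lies in a unique Q-knot $K_0$ of $\TT$, which I locate by maintaining, for each arc of $\HHstar$, a pointer to its containing Q-knot. Let $K_1$ be the parent of $K_0$ in $\TT$. The subdivision does not alter the triconnected-component structure except at $K_0$: the effect is to replace the simple arc $V_1V_2$ by a length-two path $V_1VV_2$. If $K_1$ is an S-knot, I splice $V$ into $K_1$ and hang a new Q-knot for each of the two new arcs off $K_1$. Otherwise I insert a fresh S-knot $K_0'$ on three nodes $V_1,V,V_2$ between $K_1$ and two new Q-knots for the two new arcs, replacing the old $K_0$. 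Either way the restructuring is $O(1)$ once $K_0$ is located, and locating $K_0$ is a single pointer lookup.

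For operation~\eqref{item2:lemma5.2}, I adopt the standard incremental construction: the new arc $UV$ forces a new bundle of internally disjoint $UV$-paths, so the knots along the $\TT$-path $\pi$ between a Q-knot containing $U$ and a Q-knot containing $V$ must be consolidated. I first identify $\pi$ by walking up from $U$ and $V$ to their lowest common ancestor in $\TT$; to keep this fast after many previous merges I use a union--find structure whose elements are the knots, with \textsc{find} returning the current representative. Then I traverse $\pi$ and apply the usual case analysis (adjacent S-knots collapse, an S-knot with a P-knot sibling along $\pi$ fuses into an R-knot, and so on), implementing each fusion by a \textsc{union} in the union--find, and finally attaching a new Q-knot for the arc $UV$ itself. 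Each knot along $\pi$ is either destroyed or absorbed into a larger knot, so by a standard amortized potential argument the total work away from the union--find is $O(|\pi|)$ charged to the destroyed knots, while the union--find contributes the $\alpha(n,n)$ factor.

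The main obstacle is the edge-insertion case: one has to verify that the local rewriting rules really do produce a valid SPQR-tree of the new $\HHstar$, covering all the sub-cases of consecutive knot types along $\pi$, and to ensure that the union--find representatives correctly encode the triconnected components after each rewrite. This case analysis is substantial but routine; the amortization is then immediate from the fact that each structural change strictly decreases either the number of knots or the number of virtual arcs on $\pi$. Since the lemma is exactly the incremental update theorem of~\cite{SPQR}, I would invoke their analysis rather than redo the bookkeeping.
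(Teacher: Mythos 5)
Your proposal is correct and takes essentially the same approach as the paper: the paper states Lemma~\ref{lemma:lemma5.2} purely by citation to Di~Battista and Tamassia, with no proof given, and you likewise conclude by invoking their incremental SPQR-tree analysis rather than redoing it. Your intermediate sketch of the restructuring rules and the amortized union--find bound is a reasonable high-level reconstruction of that reference (the precise bookkeeping in~\cite{SPQR} uses allocation-node pointers rather than explicit LCA walks, but since you explicitly defer to their analysis this does not affect your argument).
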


We first show that, given a vertex set $S$ contained by a simple arc
$E=V_1V_2$ such that $G[S]$ is an edge,
Operation~$\textsc{subdivide}(S)$ in Steps~\ref{B11} and~\ref{B21} can
be implemented to run in amortized $O(\log n)$ time: Let each $P_i$
with $i\in\{1,2\}$ be the $V_iS$-rung of $G[E]$.  Let $j$ be an index
in $\{1,2\}$ with $|V(P_j)|\leq |V(P_{3-j})|$.  Using the doubly
linked list for the $V_1V_2$-rung $G[E]$, it takes $O(|V(P_j)|)$ time
to (1) create a new node $V=S$ with a new $\HH$-node color assigned to
both vertices in $S$, (2) create a new simple arc $E_j=VV_j$
consisting of the vertices of $P_j$, (3) assign a new $\HH$-arc color
for each vertex in $E_j$, (4) let arc $E_{3-j}$ take over the
$\HH$-arc color of $E$, and (5) obtain the doubly linked lists of
$G[E_1]$ and $G[E_2]$ from that of $G[E]$.  Each time a vertex $x$ is
recolored this way, the cardinality of the simple arc of $\HH$
containing $x$ is halved.  Therefore, the overall time for Operation
$\textsc{subdivide}(S)$ in Steps~\ref{B11} and~\ref{B21} is $O(n\log
n)$.

Step~\ref{B1}: By the above analysis for $\textsc{subdivide}$,
Step~\ref{B11} runs in amortized $O(\log n)$ time.  As for
Steps~\ref{B12} and~\ref{B13}, a new $\HH$-arc color is created for
the new arc of $\HH$.  The $\HH$-arc and $\HH$-node colors of the
vertices in $Y$ and the cardinality of each vertex set that is a node,
arc, or the intersection of a node and its incident arc can be updated
in $O(d(Y))$ time.  By Lemma~\ref{lemma:lemma5.2} and the fact that
Subroutine~\ref{subroutine:B} is executed $O(n)$ times, the overall
time for Step~\ref{B1} is $O(m\log n)$.

Step~\ref{B2}: We first assume that we are given a set $\CC$ of arcs
of $\HH$ whose union is a minimal pod $C$ of $Y$ in $\HH$ and show how
to implement Steps~\ref{B21},~\ref{B22}, and~\ref{B23} to run in
overall $O(m\log n)$ time throughout Algorithm~\ref{algorithm:A}.  Let
$C$ be a $V_1V_2$-chunk of $\HH$.

Step~\ref{B21}: It takes $O(|\CC|)$ time to determine whether $V_2$ is
incident to exactly one arc $F=VV_2$ in $\CC$ and $F$ is simple.  We
start from $V$ to traverse the $VV_2$-rung $G[F]$ to obtain the node
$v_2\in N(Y,F)$ that is closest to $V_2$ in $G[F]$. The required time
is linear in the number of traversed edges plus $d(Y)$. Observe that
Step~\ref{B21} in any remaining iteration of
Algorithm~\ref{algorithm:A} does not traverse these edges
again. Moreover, the sum of $|\CC|$ over all iterations of
Algorithm~\ref{algorithm:A} is $O(n)$.  Thus, the overall time of
Step~\ref{B21} including that of calling
$\textsc{subdivide}(\{v,v_2\})$ is $O(m\log n)$.

Step~\ref{B22}: Let $E_1,\ldots,E_k$ with $|E_1|\leq \cdots\leq |E_k|$
be the arcs of $\HH$ in $\CC$.  We show how to implement Operation
$\textsc{merge}(C)$ in Step~\ref{B22} to run in amortized $O(\log n)$
time: We create a new arc $E=V_1V_2$ in $\HHstar$ consisting of all
vertices in $C$ and mark the original arcs $E_1,\ldots,E_k$ of $\HHstar$
intersecting $C$ dummy so that $\HHstar$ is incremental as required by
Lemma~\ref{lemma:lemma5.2}.  The nodes of $\HH$ whose incident arcs
are all dummy are also marked dummy.  The cardinalities of $E$, $V_1$,
$V_2$, $E\cap V_1$, and $E\cap V_2$ can be obtained in $O(k)$ time.
Since we do not keep an explicit list of the vertices in $C$, we
simply let all vertices in $C$ adopt the $\HH$-color of the vertices
in $E_k$.  Each time a vertex $v$ is recolored this way, the
cardinality of the arc of $\HH$ containing $v$ is doubled.  Observe
that once a vertex in $X$ loses its $\HH$-node colors, it stays
without any $\HH$-node color for the rest of the algorithm.  Combining
with Lemma~\ref{lemma:lemma5.2}\eqref{item2:lemma5.2}, Step~\ref{B22}
takes overall $O(n\log n)$ time throughout
Algorithm~\ref{algorithm:A}.

Step~\ref{B23}: The $\HH$-arc and $\HH$-node colors of the vertices of
$Y$ and the cardinalities of $E\cap V_1$ and $E\cap V_2$ can be
updated in $O(d(Y))$ time.

\begin{lem}[{Alstrup, Holm, Lichtenberg, and Thorup~\cite[\S3.3]{Alstrup:2005}}]
\label{lemma:lemma5.3}
For any dynamic rooted $n$-knot tree, there is an $O(n)$-time
obtainable data structure supporting the following operations and
queries on $\TT$ in amortized $O(\log n)$ time for any given distinct
knots $K_1$ and $K_2$ of $\TT$:
\begin{enumerate}[leftmargin=*]
\item 
\label{item1:lemma5.3}
If $K_2$ is not a descendant of $K_1$, then make the subtree rooted at
$K_1$ a subtree of $K_2$ such that $K_2$ becomes the parent of $K_1$.

\item 
\label{item2:lemma5.3}
Obtain the lowest common ancestor of $K_1$ and $K_2$.

\item 
\label{item3:lemma5.3}
If $K_2$ is a descendant of $K_1$, then obtain the child knot of $K_1$
that is an ancestor of $K_2$ in $\TT$.
\end{enumerate}
\end{lem}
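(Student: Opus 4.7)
The plan is to invoke a standard dynamic-tree data structure, namely the top trees of Alstrup~et~al.\ (or equivalently the link-cut trees of Sleator and Tarjan), which after $O(n)$-time preprocessing supports the primitives \emph{link}, \emph{cut}, and \emph{access/expose} (which brings a root-to-node path to the top of the internal splay/tree structure) in amortized $O(\log n)$ time per operation. Once we have such a structure on $\TT$, each of the three required operations reduces to a constant number of these primitives.

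For operation~\eqref{item1:lemma5.3}, to reparent the subtree rooted at $K_1$ under $K_2$, I would perform a \emph{cut} on the edge from $K_1$ to its current parent, followed by a \emph{link} joining $K_1$ under $K_2$. The hypothesis that $K_2$ is not a descendant of $K_1$ guarantees that the link does not create a cycle; cut and link each cost amortized $O(\log n)$ time.

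For operation~\eqref{item2:lemma5.3}, the LCA query is a classical by-product of two successive \emph{access} operations: access $K_1$ first, which exposes the preferred path from $K_1$ to the root, then access $K_2$; the node at which the second access first rejoins the path exposed by the first is precisely $\mathrm{LCA}(K_1,K_2)$. For operation~\eqref{item3:lemma5.3}, after accessing $K_2$ the path from the root to $K_2$ is exposed and, by hypothesis, passes through $K_1$; the desired ancestor of $K_2$ is the child of $K_1$ on this path, and in the link-cut representation this is obtained as the successor of $K_1$ in the auxiliary balanced tree storing the exposed path, again in amortized $O(\log n)$ time via a standard splay-tree successor query.

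The main work, namely the amortized $O(\log n)$ analysis of the underlying link-cut/top-tree primitives, is exactly what is handled by the cited framework, so it is inherited as a black box. The only content of the lemma is thus to observe that each of the three asked-for operations on $\TT$ is expressible as $O(1)$ calls to these primitives; the slight subtlety lies in verifying that operation~\eqref{item3:lemma5.3} really can be answered by a single successor query on the exposed path rather than requiring additional level-ancestor machinery, which is immediate once the access for $K_2$ has been performed.
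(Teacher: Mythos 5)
Your proposal is correct. Since the paper states this lemma as a direct citation to Alstrup, Holm, de Lichtenberg, and Thorup and gives no proof of its own, there is no in-paper argument to compare against; what you supply is a self-contained re-derivation, and it is sound. The reduction of operation~(1) to cut-then-link, of operation~(2) to the classical two-access LCA trick, and of operation~(3) to a successor query on the exposed root-to-$K_2$ path after a single access are all standard link-cut-tree facts, each costing $O(1)$ primitive calls, so the amortized $O(\log n)$ bound is inherited as you say. One small remark worth making explicit: the paper cites the top-tree paper specifically, and top trees do not literally have a notion of ``access/expose a preferred path''; there, LCA and the jump/level-ancestor query of operation~(3) are obtained via the $\textsc{Expose}$ and non-local-search mechanisms of that framework (this is what \S3.3 of that reference provides), while your exposition is phrased in Sleator--Tarjan link-cut language. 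The two implementations give identical interfaces and bounds, so this is a presentational choice rather than a gap; you already flag the equivalence in your opening sentence, so the proof stands.
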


It remains to show that it takes overall $O(m\log n\cdot \alpha(n,n))$
time to obtain the arc set $\CC$ of a minimal pod $C$ of an
$\HH$-podded $Y$ in all iterations of Algorithm~\ref{algorithm:A}.  We
additionally construct a data structure for $\TT$ ensured by
Lemma~\ref{lemma:lemma5.3}.  By Lemmas~\ref{lemma:lemma5.2}
and~\ref{lemma:lemma5.3}\eqref{item1:lemma5.3}, the overall time for
updating the data structure reflecting the updates to $\TT$ throughout
algorithm~\ref{algorithm:A} is $O(n\log n\cdot \alpha(n,n))$.  Let
$C^*=W_1W_2$ be the arc of $\HHdagger$ with $V_1=W_1\subseteq
N(Y,W_1)\cup C^*$.  By Conditions~\ref{condition:P}, $C$ has to
contain all arcs $E$ of $\HH$ with (1) $(E\setminus V_1)\cap
N(Y,X)\ne\varnothing$ or (2) $(E\cap V_1)\setminus
N(Y,X)\ne\varnothing$.  Let $\CC_1$ and $\CC_2$ consist of the arcs of
Types~(1) and~(2), respectively.  It takes $O(d(Y))$ time to obtain
$\CC_1$ and the incident arcs of $V_1$ that are not of Type~(1)
or~(2). It then takes $O(|\CC_2|)$ time to obtain $\CC_2$.  By
Lemma~\ref{lemma:lemma5.3}\eqref{item2:lemma5.3}, it takes
$O(|\CC_1\cup \CC_2|\cdot\log n)$ time to obtain the lowest knot $K$
of $\TT$ with $\CC_1\cup\CC_2\subseteq \CC(K)$.  Since all arcs in
$\CC_1\cup \CC_2$ are merged into a single arc of $\HH$ via
$\textsc{merge}(C)$ at the end of the current iteration, the overall
time for obtaining $K$ throughout Algorithm~\ref{algorithm:A} is
$O(m\log n\cdot\alpha(n,n))$. It remains to show that $\CC$ can be
obtained from $K$ in overall $O(m\log n\cdot \alpha(n,n))$ time
throughout Algorithm~\ref{algorithm:A}.

Case~1: $K$ is an S-knot.  Let $V_1V_2\cdots V_\ell V_1$ with
$\ell\geq 3$ be the cycle of $K$ such that $V_1$ and $V_\ell$ are the
poles of $K$.  For each $i\in\{1,\ldots,\ell-1\}$, let $K_i$ be the
child knot of $K$ with poles $V_i$ and $V_{i+1}$, $\CC_i=\CC(K_1)\cup
\cdots \cup \CC(K_i)$, and let $C_i$ be the union of the arcs in
$\CC_i$.  Let $j$ be the smallest index in $\{2,\ldots,\ell-1\}$ with
$\CC_1\cup \CC_2\subseteq \CC_j$.  If $N(Y,X)\setminus (V_1\cup
C_{j-1})=V_j\setminus C_{j-1}$, then $\CC=\CC_{j-1}$; otherwise,
$\CC=\CC_j$.  For the example in Figure~\ref{figure:figure10}(a), if
$N(X,Y)\setminus (V_1\cup E_1)=V_2\setminus E_1$, then $E_1$ is a
minimal pod of $Y$ in $\HH$; otherwise, $E_1\cup E_2$ is a minimal pod
of $Y$ in $\HH$.  By Lemma~\ref{lemma:lemma5.3}\eqref{item3:lemma5.3},
the time required to obtain the index $j$ and determine whether
$\CC=\CC_{j-1}$ or $\CC=\CC_j$ is dominated by the time of obtaining
$K$ plus the time of $\textsc{merge}(C)$.

Case~2: $K$ is a P-knot.  $\CC$ equals the union of $\CC(K')$ over all
child knots $K'$ of $K$ in $\TT$ with $(\CC_1\cup \CC_2)\cap
\CC(K')\ne\varnothing$. For the example in
Figure~\ref{figure:figure10}(b), $E_1\cup E_2$ is a minimal pod of $Y$
in $\CC$.  By Lemma~\ref{lemma:lemma5.3}\eqref{item3:lemma5.3}, the
time needed to obtain $\CC$ is dominated by that of obtaining $K$.

Case~3: $K$ is a Q-knot.  As illustrated by
Figure~\ref{figure:figure10}(c), $\CC=\CC(K)$ can be obtained in
$O(1)$ time.

Case~4: $K$ is an R-knot.  If there is child knot $K'$ of $K$ in $\TT$
with poles $V_1$ and $V_2$ such that all arcs of $K$ intersecting
$\CC_1\cup\CC_2$ are incident to $V_2$ and $N(Y,X)\setminus (V_1\cup
C(K'))=V_2\setminus C(K')$, then $\CC=\CC(K')$; otherwise,
$\CC=\CC(K)$.  For the example in Figure~\ref{figure:figure10}(d), if
$N(Y,X)\setminus (V_1\cup E_1)=V_2\setminus E_1$, then $E_1$ is a
minimal pod of $Y$ in $\HH$; otherwise, $E_1\cup\cdots\cup E_5$ is a
minimal pod of $Y$ in $\HH$.  By
Lemma~\ref{lemma:lemma5.3}\eqref{item3:lemma5.3}, the time required to
identify all possible vertices $V_2$, which can be at most two, is
dominated by the time of identifying $K$.  If there are no possible
$V_2$, then we have $\CC=\CC(K)$.  Otherwise, for each of the at most
two vertices $V_2$, we spend $O(d(Y))$ time to determine whether the
child knot $K'$ with poles $V_1$ and $V_2$ satisfies $N(Y,X)\setminus
(V_1\cup C(K'))=V_2\setminus C(K')$. For the positive (respectively,
negative) case, we have $\CC=\CC(K')$ (respectively, $\CC=\CC(K)$).

Therefore, the overall time for obtaining the arc set of a minimal pod
of $Y$ in $\HH$ is $O(m\log n\cdot\alpha(n,n))$.  To complete our
proof of Lemma~\ref{lemma:lemma3.3}, it remains to prove
Lemma~\ref{lemma:lemma5.1} in~\S\ref{subsection:subsection5.4}.

\subsection[]{Proving Lemma~\ref{lemma:lemma5.1}}
\label{subsection:subsection5.4}
\newcommand{\vol}{d}
\newcommand{\Oo}{O} 
\newcommand{\boundary}{\partial}
\newcommand{\cF}{{\mathcal F}}
\newcommand{\cT}{{\mathcal T}}

The subsection omits $\HHdagger$ from the terms $\HHdagger$-wild,
$\HHdagger$-tamed, $\HHdagger$-untamed, and $\HHdagger$-node and
$\HHdagger$-arc colors.  Recall that each vertex $x$ of $X$ is
associated with exactly one arc color and at most two node colors from
which we know which arc $E$ of $\HHdagger$ contains $x$ and whether
$x\in E\cap V$ holds for each end-node $V$ of $E$.  For any nonempty
$S\subseteq X$, we say that an $R\subseteq S$ {\em represents} $S$ and
call $R$ a {\em representative set} of $S$ if $|R|\leq 3$ and, for any
$V\subseteq X$, $R\cup V$ is tamed if and only if $S\cup V$ is tamed.
If $S$ is untamed, then each untamed two-vertex subset of $S$
represents $S$.  If $R_1$ represents $S_1$, $R_2$ represents $S_2$,
and $R$ represents $R_1\cup R_2$, then $R$ represents $S_1\cup S_2$.

\begin{lem}
\label{lemma:lemma5.5}
Any nonempty $S\subseteq X$ admits a representative set obtainable
from the colors of the vertices of $S$ in $O(|S|)$ time.
\end{lem}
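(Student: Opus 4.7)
The plan is to first reduce the representative condition to a color-theoretic statement. Define
\[
\mathrm{Reach}(T) = \{v \in X : v \text{ shares a color with every } t \in T\}.
\]
Since $\HHdagger$-tamedness depends only on whether pairs share a color, $R \subseteq S$ represents $S$ if and only if (i) $R$ is tamed exactly when $S$ is tamed and (ii) $\mathrm{Reach}(R) = \mathrm{Reach}(S)$. The inclusion $\mathrm{Reach}(S) \subseteq \mathrm{Reach}(R)$ is automatic from $R \subseteq S$, so only the reverse inclusion is substantive.

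If $S$ is $\HHdagger$-untamed, I will scan $S$ once while maintaining the intersection of the color sets seen so far (each color set has at most three elements, so each update takes $O(1)$ time). As soon as this intersection becomes empty, the current vertex together with a previously tracked witness form an untamed pair, returned as $R$ with $|R|=2$; both $R \cup V$ and $S \cup V$ are then untamed for every $V$. Otherwise the scan confirms that $S$ is tamed, and I continue with the harder case.

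If $S$ is $\HHdagger$-tamed, Lemma~\ref{lemma:lemma3.5}\eqref{item2:lemma3.5} (applicable because $\HHdagger$ has no parallel arcs) implies $S$ is contained by a single node $V$, a single arc $E$, or a single triad $\Delta(V_1,V_2,V_3)$ of $\HHdagger$. A second $O(|S|)$ pass collects (a) the arc color common to all of $S$ if it exists, (b) the node colors common to all of $S$, and (c) one witness vertex of each distinct color-set pattern observed. I will then case-split. When all of $S$ shares arc color $E = U_1 U_2$, every $s \in S$ has node colors contained in $\{U_1, U_2\}$, and the representative need only realize the node-color patterns actually appearing in $S$: a single vertex with empty node part (if any) suffices to cap $\mathrm{Reach}(S)$ at vertices of arc color $E$; otherwise one vertex each of node-type $\{U_1\}$ and $\{U_2\}$ present in $S$ yields $|R| \le 2$. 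When $S$ shares a node color $V$ but no arc color, I pick a witness carrying $V$ plus up to two further vertices realizing the other node colors that appear in $S$, keeping $|R|\le 3$. When $S$ shares neither, pairwise tamedness forces $S$ to hit all three nodes of a triad of $\HHdagger$, and one witness per arc of the triangle gives $|R|=3$.

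The main obstacle will be verifying $\mathrm{Reach}(R) \subseteq \mathrm{Reach}(S)$ in each sub-case, i.e., ruling out an external vertex $v$ that shares a color with every chosen representative yet misses some $s \in S$. This is settled by a careful enumeration of how $v$'s at most three colors can overlap the colors of the chosen representatives, leveraging that $\HHdagger$ has no parallel arcs so that two distinct arcs share at most one end-node and a triad is determined by a unique triangle of nodes. The algorithm reads each vertex's colors once and maintains only constant-size aggregate data, so the total running time is $O(|S|)$.
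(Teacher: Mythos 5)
Your reduction to the color-theoretic condition ``(i) $R$ tamed iff $S$ tamed, (ii) $\mathrm{Reach}(R)=\mathrm{Reach}(S)$'' is sound in the direction you actually use, and your treatment of the tamed case (one arc, one node, or a triad, with at most three well-chosen witnesses) tracks the paper's case analysis. The problem is the first step, where you decide tamedness and extract an untamed pair by maintaining the running intersection of the color sets of the vertices scanned so far.

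The relation ``$u$ and $v$ share a color'' does \emph{not} have the Helly property on the color sets, and the failure is precisely the triad case. Take three nodes $V_1,V_2,V_3$ of $\HHdagger$ forming a triangle with arcs $E_{12},E_{23},E_{13}$, and take $s_1\in E_{12}\cap V_1\cap V_2$, $s_2\in E_{23}\cap V_2\cap V_3$, $s_3\in E_{13}\cap V_1\cap V_3$. Their color sets are $\{E_{12},V_1,V_2\}$, $\{E_{23},V_2,V_3\}$, $\{E_{13},V_1,V_3\}$; the running intersection after $s_2$ is $\{V_2\}$, and after $s_3$ it is empty. Yet every pair shares a color ($s_1,s_2$ share $V_2$; $s_1,s_3$ share $V_1$; $s_2,s_3$ share $V_3$), so $S=\{s_1,s_2,s_3\}\subseteq\Delta(V_1,V_2,V_3)$ is $\HHdagger$-tamed. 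Your scan would halt at $s_3$ and, per the claim ``the current vertex together with a previously tracked witness form an untamed pair,'' output a pair $\{s_3,s_i\}$ that is in fact tamed — and would also wrongly classify $S$ as untamed, so you would never reach the tamed-case branch where you do handle triads correctly. The sentence ``Otherwise the scan confirms that $S$ is tamed'' is therefore false: an empty running intersection does not certify that $S$ is untamed, and a nonempty one is only a sufficient, not a necessary, condition for tamedness.

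The paper sidesteps this by never reasoning about a global intersection of color sets. It reads off the distinct arcs $E_1,\dots,E_\ell$ of $\HHdagger$ met by $S$ and then case-splits on the configuration of those arcs: if two of them share no end-node, that pair of arcs (and any two vertices of $S$, one from each) witnesses untamedness; if all pairs share an end-node, then either there is a common end-node $V$ and tamedness is equivalent to $S\subseteq V$, or $\ell=3$ and the arcs form a triangle, in which case tamedness is equivalent to $S\subseteq\Delta(V_1,V_2,V_3)$. Each branch yields the required two- or three-vertex representative directly. You will need a step of this kind — something that distinguishes ``all pairwise tamed via a triad'' from ``genuinely untamed'' — before your representative construction can be trusted; the running intersection cannot make that distinction.
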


\begin{proof}
Let $E_1,\ldots,E_\ell$ be the arcs of $\HHdagger$ intersecting $S$.
If $\ell=1$, then $S$ is tamed. Let $V_1$ and $V_2$ be the end-nodes
of $E_1$.  Choose an arbitrary vertex from each of the sets $S\cap
V_1$, $S\cap V_2$, and $S\setminus (V_1\cup V_2)$ that are nonempty to
form a representative set of $S$. The rest of the proof assumes
$\ell\geq 2$.  It takes $O(|S|)$ time to either (1) identify distinct
$i$ and $j$ in $\{1,\ldots,\ell\}$ such that $E_i$ and $E_j$ do not
share a common end-node or (2) ensure that $E_i$ and $E_j$ for any
distinct $i$ and $j$ in $\{1,\ldots,\ell\}$ share a common end-node.
Case~1 implies that $S$ is untamed and any two-vertex subset of $S$
intersecting both $E_i$ and $E_j$ represents $S$.

Case~2(a): $E_1,\ldots,E_\ell$ have a common end-node $V$.  If
$S\nsubseteq V$, then $S$ is untamed and any $\{u,v\}\subseteq S$ with
$u\notin V$ intersecting distinct arcs represents $S$.  If $S\subseteq
V$, then $S$ is tamed.  If $\ell=2$, then any two-vertex subset of $S$
intersecting both of $E_1$ and $E_2$ represents $S$.  If $\ell\geq 3$,
then any three-vertex subset of $S$ intersecting all of $E_1$, $E_2$,
and $E_3$ represents $S$.

Case~2(b): $E_1,\ldots,E_\ell$ have no common end-node.  Therefore,
$\ell=3$ and $E_1$, $E_2$, and $E_3$ form a triangle.  For indices
$i,j,k$ with $\{i,j,k\}=\{1,2,3\}$, let $V_i$ and $V_j$ be the
end-nodes of $E_k$.  If $S\subseteq \Delta(V_1,V_2,V_3)$, then $S$ is
tamed and any three-vertex subset of $S$ intersecting all of $E_1$,
$E_2$, and $E_3$ represents $S$.  If $S\nsubseteq
\Delta(V_1,V_2,V_3)$, then $S$ is untamed and $\{u,v\}$ with $u\in
(S\cap E_i)\setminus V_j$ and $v\in S\cap E_k$ for
$\{i,j,k\}=\{1,2,3\}$ represents $S$.
\end{proof}

For each $y\in V(G-X)$, we maintain a balanced binary search tree
$T_y$ on $N(y,X)$.  For each vertex $x$ of $T_y$, we maintain a
representative set $R_y(x)$ of the vertices in the subtree of $T_y$
rooted at $x$.  Thus, $R_y=R_y(\text{root}(T_y))$ represents $N(y,X)$.
We also maintain a doubly linked list $D_1$ for the vertices $y\in
V(G-X)$ with untamed $N(y,X)$.  When a vertex joins $N(y,X)$ or a
vertex in $N(y,X)$ changes color, $R_y$ and $D_1$ can be updated in
$O(\log n)$ time by Lemma~\ref{lemma:lemma5.5}.  Thus, as long as
$D_1\ne\varnothing$, $\HHdagger$ is not taming and an
$\HHdagger$-wild set consisting of a single vertex can be obtained
from $D_1$ in $O(1)$ time, implying
Lemmas~\ref{lemma:lemma5.1}\eqref{item1:lemma5.1},~\ref{lemma:lemma5.1}\eqref{item2:lemma5.1},
and~\ref{lemma:lemma5.1}\eqref{item3:lemma5.1}.  The rest of the
subsection handles the case $D_1=\varnothing$.

\begin{lem}[Holm, de Lichtenberg, and Thorup~\cite{Holm:2001}]
\label{thm:dyn-graph}
A spanning forest of an $n$-vertex dynamic graph can be maintained in
amortized $O(\log^2 n)$ time per edge insertion and deletion such that
each update to the graph only adds and deletes at most one edge in the
spanning forest.
\end{lem}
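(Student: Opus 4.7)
The plan is to maintain, for a parameter $L=\lceil\log_2 n\rceil$, a hierarchy of spanning forests $F=F_0\supseteq F_1\supseteq\cdots\supseteq F_L$ and a level $\ell(e)\in\{0,1,\ldots,L\}$ for every edge $e$, where $F_\ell$ contains exactly the tree edges of level $\geq\ell$. Each non-tree edge is also assigned a level. The whole construction will be governed by the invariant that every connected component of $F_\ell$ has at most $n/2^\ell$ vertices; this caps the number of levels at $L=O(\log n)$. To represent each $F_\ell$, I would use Euler-tour trees (balanced BSTs over the Euler tour), which support in $O(\log n)$ time: linking, cutting, testing whether two vertices are in the same tree, reporting the size of the tree containing a vertex, marking/unmarking a vertex or edge, and \emph{finding} any marked edge or marked vertex in a given tree. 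At each vertex $v$ and each level $\ell$ I would also keep the list of non-tree edges of level $\ell$ incident to $v$, and mark $v$ in $F_\ell$ iff this list is nonempty; similarly mark a tree edge $e$ in $F_{\ell(e)}$.

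For an insertion of $uv$, I would first test in $F_0$ whether $u$ and $v$ lie in different trees: if so, insert $uv$ into $F_0$ as a tree edge at level $0$; otherwise store $uv$ as a non-tree edge at level $0$ and update the per-vertex lists and the marks in $F_0$. For deletion of a non-tree edge, just remove it from the incident lists and refresh the marks. The delicate case is deleting a tree edge $e$ of level $\ell_0$: first cut $e$ from every $F_\ell$ with $\ell\leq\ell_0$; then, for $\ell$ running from $\ell_0$ down to $0$, try to find a replacement edge. At each such level $\ell$, identify the smaller of the two resulting trees $T_u,T_v$ (say $T_u$, with $|T_u|\leq n/2^{\ell+1}$ since $|T_u|+|T_v|\leq n/2^{\ell}$). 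First promote every tree edge of $T_u$ from level $\ell$ to level $\ell+1$ (consistent with the size bound, since it now lies in a tree of size $\leq n/2^{\ell+1}$); use the marked-edge operation on $F_\ell[T_u]$ to enumerate them. Then repeatedly use the marked-vertex operation on $F_\ell[T_u]$ to pick a vertex $x\in T_u$ incident to some level-$\ell$ non-tree edge $xy$; if $y\in T_v$, use $xy$ as the replacement (link it into $F_0,\ldots,F_\ell$ at level $\ell$ and stop); otherwise $y\in T_u$, in which case promote $xy$ from level $\ell$ to level $\ell+1$, update the marks, and continue. If no replacement is found at level $\ell$, descend to level $\ell-1$.

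For the amortization I would use the potential $\Phi=\sum_e(L-\ell(e))$, so insertions deposit $O(L)=O(\log n)$ units of potential, and every promotion (of either a tree or non-tree edge) is paid for by a unit release. Each promotion incurs only $O(\log n)$ actual work (a constant number of ET-tree operations plus list surgery), so the total amortized work per update is $O(\log^2 n)$; the two ET-tree cuts/links per level along the $O(\log n)$ levels contribute another $O(\log^2 n)$. The invariant $|T|\leq n/2^\ell$ in $F_\ell$ is preserved exactly because we always promote inside the smaller side, and the replacement edge, if any, is added at the current level without violating the bound. Correctness of the search follows because any replacement edge $xy$ for the cut has some level $\ell\leq\ell_0$, and when we reach that level, $xy$ is visible on the smaller side (after possible promotions of earlier non-replacements).

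The main obstacle is exactly the interplay between the invariant and the amortized analysis: one must argue that promotions can never violate the component-size bound at the higher level (handled by only promoting inside the smaller side $T_u$), and that the spanning forest is correctly repaired (handled by the monotonicity that a replacement edge's level cannot exceed that of the deleted tree edge, so the top-down search at levels $\ell_0,\ell_0-1,\ldots,0$ is exhaustive). Given the ET-tree primitives and this invariant, the stated bound of $O(\log^2 n)$ amortized per update with at most one tree-edge change per operation falls out.
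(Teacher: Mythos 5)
The paper does not prove this lemma; it is imported directly as a theorem from Holm, de Lichtenberg, and Thorup~\cite{Holm:2001}, so there is no in-paper argument to compare against. Your sketch is a correct and essentially faithful reconstruction of the HDT fully-dynamic connectivity structure: the edge-level hierarchy $F_0\supseteq\cdots\supseteq F_L$ with the $n/2^\ell$ component-size invariant, Euler-tour trees with marked-vertex and marked-edge search, the top-down replacement search that promotes tree edges and failed non-tree candidates inside the smaller side of the cut, and the level-based potential yielding $O(\log^2 n)$ amortized per update with at most one tree-edge change. The one step you invoke without stating its supporting invariant is that a replacement for a deleted level-$\ell_0$ tree edge necessarily has level at most $\ell_0$; in HDT this is guaranteed by a second invariant (the spanning forest is \emph{level-maximal}: every non-tree edge $xy$ closes a tree cycle all of whose edges have level $\geq\ell(xy)$), which must be verified to survive insertions and each promotion step, and without which the exhaustiveness of the downward search at levels $\ell_0,\ell_0-1,\ldots,0$ does not follow.
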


We maintain a spanning forest $F$ of the decremental graph $G-X$ by
Lemma~\ref{thm:dyn-graph}.  For each maximal connected $U\subseteq
V(F)$, we maintain a balanced binary search tree $T_U$ on $U$.  For
each $y\in U$, we maintain a representative set $R_U(y)$ for the union
of $R_z$ over all vertices $z$ in the subtree of $T_U$ rooted at $y$.
It takes $O(1)$ time to determine if $U$ is tamed from
$R_U=R_U(\text{root}(T_U))$.  We also maintain a doubly linked list
$D_2$ for the untamed maximal connected subsets $U$ of $V(F)$.  When
$R_y$ for a vertex $y\in V(G-X)$ changes, $D_2$ and $R_U$ for the
maximal connected $U\subseteq V(F)$ containing $y$ can be updated in
$O(\log n)$ time by Lemma~\ref{lemma:lemma5.5}.  If deleting an edge
of $F$ decomposes a maximal connected $U\subseteq V(F)$ into $U_1$ and
$U_2$ with $|U_1|\leq |U_2|$, then it takes $O(|U_1| \log n)$ time to
delete the vertices of $U_1$ from $T_U$, construct $T_{U_1}$, and
obtain $R_{U_1}$.  The resulting $T_U$ and $R_U$ become $T_{U_2}$ and
$R_{U_2}$.  $D_2$ can be updated in $O(1)$ time.  Whenever a vertex
$y$ moves to a new connected component, the number of vertices of the
connected component containing $y$ is halved.  Hence, the $T_U$ for
all maximal connected sets $U\subseteq V(F)$ are changed overall
$O(n\log n)$ times.  Thus, the overall time throughout the algorithm
to maintain $D_2$ and all representative sets $R_U$ is $O(n\log^2 n)$,
not affecting the correctness of
Lemmas~\ref{lemma:lemma5.1}\eqref{item1:lemma5.1}
and~\ref{lemma:lemma5.1}\eqref{item2:lemma5.1} and the first half of
Lemma~\ref{lemma:lemma5.1}\eqref{item3:lemma5.1}.  It remains to prove
the second half of Lemma~\ref{lemma:lemma5.1}\eqref{item3:lemma5.1}
for the case $D_1=\varnothing$ and $D_2\ne\varnothing$, i.e., each
$N(y,X)$ with $y\in V(G-X)$ is tamed and $\HHdagger$ is not taming.

\begin{figure}[t]
\centerline{\scalebox{0.47}{\includegraphics{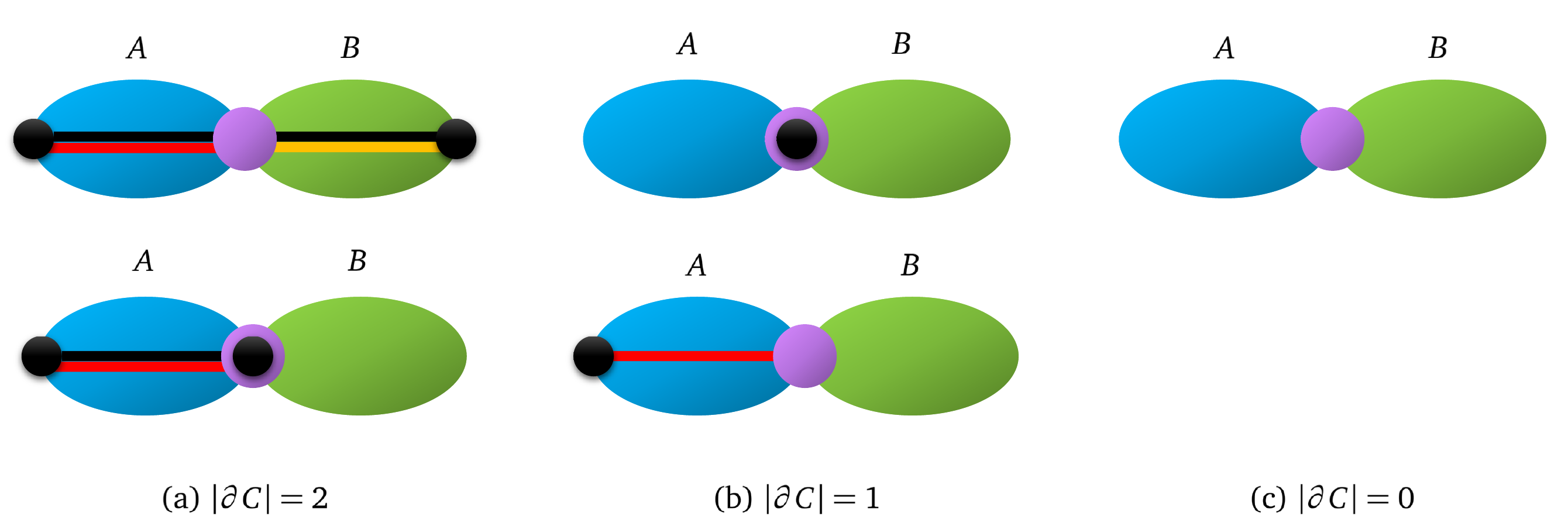}}}
\caption{The cases of joining the child clusters $A$ and $B$ with
  $|\partial A|\geq |\partial B|$ into their parent cluster $C=A\cup
  B$ on a top tree.  The first row shows the three cases with
  $|\partial A|=|\partial B|$.  The second row shows the two cases
  with $|\partial A|>|\partial B|$.  The vertex in $A\cap B$ is in
  purple.  The vertices in $\partial C$ are in black.  If $|\partial
  C|=2$, then the black line indicates $\Pi(C)$.  If $|\partial A|=2$,
  then the red line indicates $\Pi(A)$. If $|\partial B|=2$, then the
  yellow line indicates $\Pi(B)$.}
\label{figure:figure11}
\end{figure}

A top tree is defined over a dynamic tree $T$ and a dynamic set
$\partial T$ of at most two vertices of $T$. For any subtree $C$ of
$T$, $\partial C=\partial_{(T,\partial T)}C$ consists of the vertices
of $C$ belonging to $\partial T$ or adjacent to $V(T)\setminus V(C)$.
A {\em cluster}~\cite{Alstrup:2005} of $(T,\partial T)$ is a subtree
$C$ of $T$ with $|E(C)|\geq 1$ and $|\partial C|\leq 2$.  If
$|\partial C|=2$, then let $\Pi(C)$ denote the path of $T$ between the
vertices of $\partial C$.  If $|E(T)|=0$, then $(T,\partial T)$ admits
no cluster and the {\em top tree} over $(T,\partial T)$ is empty.  If
$|E(T)|\geq 1$, then a {\em top tree} $\cT$ over $(T,\partial T)$ is a
binary tree on clusters of $(T,\partial T)$ such that (1) the root of
$\cT$ is the maximal cluster $T$ of $(T,\partial T)$, (2) the leaves
of $\cT$ are the edges of $T$, i.e., the minimal clusters of
$(T,\partial T)$, and (3) the children $A$ and $B$ of any cluster $C$
of $(T,\partial T)$ on $\cT$ are edge disjoint clusters of
$(T,\partial T)$ with $C=A\cup B$ and $|V(A)\cap V(B)|=1$.
Figure~\ref{figure:figure11} illustrates all possible cases of joining
child clusters $A$ and $B$ into their parent cluster $C$ on $\cT$.  If
$|\partial A|=|\partial C|=2$, then $\Pi(A)\subseteq
\Pi(C)$. Moreover, $\Pi(A)=\Pi(C)$ if and only if $|\partial B|\leq
1$.  For each vertex $v\in V(T)\setminus\partial T$, let $C_v$ denote
the lowest cluster of $(T,\partial T)$ on $\cT$ with $v\in
V(C_v)\setminus \partial C_v$.  If $|\partial C|=2$, then $v\in V(C)$
is an internal vertex of $\Pi(C)$ if and only if $|\partial A|=2$
holds for every cluster $A$ on the $CC_v$-path of $\cT$.  A {\em top
  forest} $\cF$ over a forest $F$ consists of top trees, one for each
maximal subtree of $F$.  According to Lemma~\ref{thm:dyn-graph}, each
update to $F$ either deletes an edge of $F$ or adds an edge between
two maximal subtrees of $F$.  In addition to that, $\cF$ also needs be
modified if $\partial T$ for a maximal subtree $T$ of $F$ is updated.
To accommodate each update to $F$ or $\partial T$, we modify $\cF$ via
a sequence of operations such that there can be temporary top trees
$\cT_C$ rooted at clusters $C$ that are not maximal subtrees of $F$.
Specifically, $\cF$ is modified via the following $O(1)$-time {\em
  top-tree operations}:
\begin{itemize}[leftmargin=*]
\item 
{\em Create} or {\em destroy} a top tree on a single cluster that is
an edge.

\item 
{\em Split} a top tree $\cT_C$ into the two immediate subtrees of
$\cT_C$ by deleting the root $C$.

\item {\em Merge} top trees $\cT_A$ and $\cT_B$ with $|V(A)\cap
  V(B)|=1$ into a top tree $\cT_C$ rooted at $C=A\cup B$.
\end{itemize}

\begin{lem}[Alstrup, Holm, de Lichtenberg, and Thorup~\cite{Alstrup:2005}]
\label{lemma:lemma5.6}
An $n$-vertex forest $F$ admits an $O(n)$-space top forest $\cF$
consisting of $O(\log n)$-height top trees such that for any maximal
subtree $T$ of~$F$,
\begin{enumerate}[leftmargin=*]
\item 
\label{item1:lemma5.6}
it takes $O(1)$ time to obtain on the top tree $\cT$ for $T$ (a) the
cluster $C_v$ for any $v\in V(T)\setminus \partial T$, (b) the parent
of a nonroot cluster, (c) the children of a non-leaf cluster, and (d)
$\partial C$ for a cluster $C$ and

\item 
\label{item2:lemma5.6}
it takes $O(\log n)$ time to identify a sequence of $O(\log n)$
top-tree operations with which $\cF$ can be modified in $O(\log n)$
time with respect to (a) updating $\partial T$, (b) deleting an edge
of $T$, or (c) adding an edge between $T$ and another maximal subtree
of $F$.
\end{enumerate}
\end{lem}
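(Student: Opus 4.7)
The plan is to build, for each maximal subtree $T$ of the forest $F$, a balanced top tree $\cT$ via the standard rake-and-compress construction, augment each cluster with the data needed for $O(1)$ queries, and then handle each update by identifying an $O(\log n)$-length sequence of top-tree operations along a single spine of $\cT$.

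First I would give the static construction. Starting with the edges of $T$ as the leaf clusters and $\partial T$ marked, I would iterate two local moves: a rake that merges a degree-$1$ non-boundary vertex together with its incident edge cluster into the neighboring cluster, and a compress that merges two clusters that share a degree-$2$ non-boundary vertex. If at each round one picks a maximal independent set of such moves, a constant fraction of the current clusters is absorbed, so the resulting binary hierarchy $\cT$ has height $O(\log n)$ and $O(|V(T)|)$ internal clusters; summed over the maximal subtrees of $F$, the space is $O(n)$. At each cluster $C$ I would store pointers to its parent and to its two children, together with the set $\partial C$ (which is computable in $O(1)$ from $\partial A$, $\partial B$, and the unique shared vertex of the children $A$ and $B$), and for each $v \in V(T)\setminus \partial T$ I would keep a direct pointer to the lowest cluster $C_v$ on $\cT$ that contains $v$ as a non-boundary vertex. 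This directly supports all four queries of item~\ref{item1:lemma5.6}.

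For the updates of item~\ref{item2:lemma5.6}, I would exploit the fact that each of the three update types (changing $\partial T$, deleting an edge, or adding an edge between two maximal subtrees) affects $\cT$ only along a single spine of length $O(\log n)$. To modify $\partial T$, one descends from the root to the cluster around the relevant vertex while splitting every cluster on the way down; after relabeling the $\partial$-sets at the bottom, one merges back up, restoring balance by rotations. Edge deletion is handled by locating the leaf cluster for $e$, splitting all its ancestors along the spine, destroying the leaf, and remerging the remaining two pieces into one or two top trees. Edge addition between two maximal subtrees first exposes the endpoints as boundary vertices of their trees (by applying case (a)), then creates a single-edge cluster for $e$ and merges the two top trees through it. In every case the sequence of \emph{split}, \emph{merge}, \emph{create}, and \emph{destroy} operations is identified by an $O(\log n)$-length top-down traversal and then applied bottom-up, each operation taking $O(1)$ time.

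The main obstacle is guaranteeing that the height of $\cT$ stays $O(\log n)$ under an arbitrary sequence of updates. This is exactly where the construction of Alstrup, Holm, de~Lichtenberg, and Thorup does the nontrivial work: they introduce a rank-based balance invariant on clusters (analogous to AVL balance) and show that after any single update, an $O(\log n)$-length sequence of rake/compress-style top-tree operations along a single spine of $\cT$ restores the invariant. Since the spine has logarithmic length by this invariant, both identifying the sequence and executing it cost $O(\log n)$ per update, which yields item~\ref{item2:lemma5.6}. Rather than redo the balance analysis from scratch, I would invoke their construction for this last step; the rest of the lemma follows from the routine pointer bookkeeping sketched above.
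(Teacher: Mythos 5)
This lemma is a citation (to Alstrup, Holm, de~Lichtenberg, and Thorup), and the paper supplies no proof of its own; it treats the top-tree machinery as a black box. Your sketch correctly lays out the rake-and-compress construction for logarithmic height, the pointer bookkeeping for the $O(1)$-time queries of item~(1), and the single-spine split/merge pattern behind the $O(\log n)$-time updates of item~(2), and it explicitly defers the nontrivial dynamic-balance argument to the cited paper, which is exactly what this paper itself does --- so the approach matches. The only minor wrinkle is the phrase ``restoring balance by rotations'': top trees in the AHLT framework are rebalanced by a specific sequence of \emph{merge} and \emph{split} operations governed by a rank-style invariant, not by BST-style rotations, but since you ultimately invoke their construction for that step this is a cosmetic mislabel rather than a substantive gap.
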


We use Lemma~\ref{lemma:lemma5.6} to maintain a top forest $\cF$ over
the spanning forest $F$ of $G-X$ maintained by
Lemma~\ref{thm:dyn-graph}.  For each cluster $C$ on each nonempty top
tree $\cT$ of $\cF$, we maintain a representative set $R_C$ of
$N(V(C)\setminus \partial C,X)$.  We first show that maintaining the
representative sets $R_C$ does not affect the complexity of
maintaining $\cF$ stated in Lemma~\ref{lemma:lemma5.6} and that of
maintaining the colors of the vertices of $X$ stated in
Lemmas~\ref{lemma:lemma5.1}\eqref{item1:lemma5.1}
and~\ref{lemma:lemma5.1}\eqref{item2:lemma5.1}.  By
Lemma~\ref{lemma:lemma5.5}, the following {\em bottom-up update} for a
cluster $B$ on a top tree $\cT$ of $\cF$ takes $O(\log n)$ time: For
each cluster $C$ on the $BT$-path of $\cT$ from $B$ to $T$, if $C$ is
an edge $uv$ of $T$, then an $R_C$ can be obtained from $R_u\cup R_v$
in $O(1)$ time; if $C$ is not an edge of $T$, then an $R_C$ can be
obtained from $R_{C_1}\cup R_{C_2}\cup R_c$ in $O(1)$ time, where
$C_1$ and $C_2$ are the children of $C$ on $\cT$ and $c$ is the vertex
in $V(C_1)\cap V(C_2)$.  Hence, the initial $R_C$ for all clusters $C$
of all top trees $\cT$ of $\cF$ can be obtained in overall $O(m\log
n)$ time by performing a bottom-up update for each leaf cluster of
each top tree.  With respect to each top-tree operation, the
representative sets $R_C$ can be updated in $O(1)$ time: For destroy
and split, we simply delete $R_C$ together with the root $C$ of
$\cT_C$.  For create and merge, we just perform a bottom-up update for
$C$ in $O(1)$ time.  Thus, maintaining the representative sets $R_C$
does not affect the complexity of maintaining $\cF$ stated in
Lemma~\ref{lemma:lemma5.6}.  If a vertex $v\in V(G-X)$ moves to $X$ or
a vertex $v\in X$ changes color, we update $R_C$ for all $O(d(v)\log
n)$ clusters $C$ with $v\in N(V(C)\setminus\partial C,X)$.
Specifically, for each of the $O(d(v))$ vertices $y\in V(G-X)$ with
$v\in N(y,X)$, we perform a bottom-up update for $C_y$ in $O(\log n)$
time.  Thus, maintaining the representative sets $R_C$ does not affect
the correctness of Lemmas~\ref{lemma:lemma5.1}\eqref{item1:lemma5.1}
and~\ref{lemma:lemma5.1}\eqref{item2:lemma5.1}.

The rest of the subsection proves the second half of
Lemma~\ref{lemma:lemma5.1}\eqref{item3:lemma5.1} for the case
$D_1=\varnothing$ and $D_2\ne\varnothing$ in two steps.  Let $T=F[U]$
for an arbitrary $U$ kept in $D_2$.  Step~1 calls
$\textsc{tree-wild}(T)$ to obtain $\{u,w\}$ for distinct vertices $u$
and $w$ of $T$ such that the vertices of the $uw$-path of $T$ is a
minimal untamed connected vertex set of $T$. Step~2 calls
$\textsc{graph-wild}(\{u,w\})$ to obtain a minimal untamed set $Y$
such that $G[Y]$ is a $uw$-path of $G$.

Step~1: Let $\cT$ be the top tree of $\cF$ for $T$.  For any cluster
$C$ on $\cT$, let $R_{\partial C}$ be the union of $R_v$ over the
vertices $v\in \partial C$.  Let $\textsc{closest}(S,C,c)$ for
\begin{itemize}[leftmargin=*]
\item a tamed set $S\subseteq X$ with $|S|\leq 6$,
\item a cluster $C$ on $\cT$ with untamed $S\cup R_C\cup R_{\partial
  C}$, and
\item a vertex $c\in \partial C$ such that $S\cup R_c$ is tamed
\end{itemize} 
be the following $O(\log n)$-time recursive algorithm that outputs a
$y\in V(C)$ such that
\begin{itemize}[leftmargin=*]
\item $S\cup R_y$ is untamed and \item $S\cup R_z$ is tamed for every
  internal vertex $z$ of the $yc$-path of $T$:
\end{itemize}
If $C$ is an edge $bc$, then return $b$.  If $C$ is not an edge, then
let $C_1$ and $C_2$ be the children of $C$ and let $b$ be the vertex
in $C_1\cap C_2$.  
If there is an $i\in\{1,2\}$ with $c\in\partial C_i$ such that $S\cup
R_{C_i}\cup R_{\partial C_i}$ is untamed, then return
$\textsc{closest}(S,C_i,c)$.
Otherwise, we have $b\ne c$ and that $S\cup R_{C_i}\cup R_{\partial
  C_i}$ is untamed for the index $i\in\{1,2\}$ with $c\notin\partial
C_i$.  Return $\textsc{closest}(S,C_i,b)$.

Let $\textsc{tree-wild}(C)$ for a cluster $C$ on $\cT$ with untamed
$R_C\cup R_{\partial C}$ be the following recursive subroutine: If $C$
is an edge $uw$ of $T$, then return $\{u,w\}$.  Otherwise, let $C_1$
and $C_2$ be the children of $C$ on $\cT$.  If there is an $i\in
\{1,2\}$ with untamed $R_{C_i}\cup R_{\partial C_i}$, then return
$\textsc{tree-wild}(C_i)$.  Otherwise, $R_C\cup R_{\partial C}$ is
untamed and $R_{C_1}\cup R_{\partial C_1}$ is tamed.  Let $c$ be the
vertex in $V(C_1)\cap V(C_2)$.  Call $\textsc{closest}(R_{C_1}\cup
R_{\partial C_1},C_2,c)$ to obtain in $O(\log n)$ time a $w\in V(C_2)$
such that
\begin{itemize}[noitemsep,leftmargin=*]
\item $R_{C_1}\cup R_{\partial C_1}\cup R_w$ is untamed and
\item $R_{C_1}\cup R_{\partial C_1}\cup R_v$ is tamed for every
  internal vertex $v$ of the $wc$-path of $T$.
\end{itemize}
Call $\textsc{closest}(R_w,C_1,c)$ to obtain in $O(\log n)$ time a
$u\in V(C_1)$ such that
\begin{itemize}[noitemsep,leftmargin=*]
\item $R_w\cup R_u$ is untamed and
\item $R_w\cup R_v$ is tamed for every internal vertex $v$ of the
  $uc$-path of $T$.
\end{itemize}
Let $P$ be the $uw$-path of $T$.  $V(P)$ is a minimally untamed subset
of $V(T)$ that is connected in~$T$: Let $u'$ and $w'$ be distinct
vertices of $V(P)$ with $\{u',w'\}\ne\{u,w\}$ such that $R_{u'}\cup
R_{w'}$ is untamed and $u'$ is closer to $u$ than $w$ in $P$.  Since
$R_{C_1}\cup R_{\partial C_1}$ and $R_{C_2}\cup R_{\partial C_2}$ are
both tamed, we have $u'\in V(C_1)\setminus \partial C_1$ and $w'\in
V(C_2)\setminus \partial C_2$.  Since $R_{C_1}\cup R_{\partial
  C_1}\cup R_v$ is tamed for every internal vertex $v$ of the
$wc$-path of $T$ and $u'\in V(C_1)$, we have $w'=w$.  Since $R_w\cup
R_v$ is tamed for every internal vertex $v$ of the $uc$-path of $T$,
we have $u'=u$.

Step~2: To obtain in $O(d(Y)\log n)$ time a set $Y$ such that $G[Y]$
is a $uw$-path of $G-X$, it suffices to show an $O(d(u)\log n)$-time
subroutine $\textsc{jump}(u,w)$ returning for any distinct vertices
$u$ and $w$ of $T$ the vertex $v\in N_G(u,V(P))$ that is closest to
$w$ in the $uw$-path $P$ of $T$: With $Y=\{u\}$ initially, we
repeatedly add $v=\textsc{jump}(u,w)$ into $Y$ and let $u=v$ until
$v=w$.  The subroutine $\textsc{jump}(u,w)$ starts with updating $\cT$
for setting $\partial T=\{u,w\}$ in $O(\log n)$ time by
Lemma~\ref{lemma:lemma5.6}\eqref{item2:lemma5.6}.  Recall that
$U=N_G(u,V(P-w))$ consists of the vertices $v\in N_G(u)$ such that
$|\partial B|=2$ holds for every cluster $B$ on the $TC_v$-path of
$\cT$.  By Lemma~\ref{lemma:lemma5.6}\eqref{item1:lemma5.6}, it takes
$O(d(u)\log n)$ time for $\textsc{jump}(u,w)$ to obtain $U$ and the
set $\mathcal{C}$ consisting of the clusters on the $TC_v$-path of
$\cT$ for all vertices $v\in U$.  If $U=\varnothing$, then
$\textsc{jump}(u,w)$ returns $w$, since $uw$ is an edge of $T$.  If
$U\ne\varnothing$, then $\textsc{jump}(u,w)$ returns
$v=\textsc{next}(T,w)$, where $\textsc{next}(C,w)$ for a cluster $C\in
\mathcal{C}$ and a vertex $w\in \partial C$ is the following $O(\log
n)$-time recursive subroutine: If $w\in N_G(u)$, then
$\textsc{next}(C,w)$ returns $w$.  If $w\notin N_G(u)$, then $C$ is
not an edge of $T$.  Let $C_1$ and $C_2$ be the children of $C$ on
$\cT$ with $w\in \partial C_2\setminus \partial C_1$.  Let $c$ be the
vertex in $V(C_1)\cap V(C_2)$.  If $C_2\in \mathcal{C}$, then
$\textsc{next}(C,w)$ returns $\textsc{next}(C_2,w)$; otherwise,
$\textsc{next}(C,w)$ returns $\textsc{next}(C_1,c)$.

\section{Improved graph recognition and detection algorithms}
\label{section:section6}
Section~\ref{subsection:subsection6.1} gives our 
algorithms for detecting thetas, pyramids, and beetles.
Section~\ref{subsection:subsection6.2} gives our 
algorithms for recognizing perfect graphs and detecting odd holes.
Section~\ref{subsection:subsection6.3} gives our 
algorithm for detecting even holes.

\subsection{Improved theta, pyramid, and beetle detection}
\label{subsection:subsection6.1}

Each previous algorithm for detecting a family $\FF$ of graphs in $G$
via the three-in-a-tree algorithm identifies a set $\GG$ of a
polynomial number of subgraphs $H$ of $G$, each associated with a set
$L(H)$ of three terminals, such that $G$ is $\FF$-free if and only if
each graph $H$ in $\GG$ does not admit an induced tree containing
$L(H)$.  In addition to Theorem~\ref{theorem:theorem1.1}, our
improvement are obtained via exploiting that the graphs $H$ in $\GG$
need not be subgraphs of $G$.  For instance, if $\FF$ are thetas, then
Chudnovsky and Seymour~\cite{ChudnovskyS10} obtained a set $\GG$ of
$O(n^7)$ subgraphs of $G$.  Each $H\in \GG$ with
$L(H)=\{a_1,a_2,a_3\}$ is uniquely determined from vertices $b$,
$b_1$, $b_2$, $b_3$, $a_1$, $a_2$, and $a_3$ of $G$ such that $bb_1$,
$bb_2$, $bb_3$, $a_1b_1$, $a_2b_2$, and $a_3b_3$ are the distinct
edges of $G[\{b,b_1,b_2,b_3,a_1,a_2,a_3\}]$.  We observe that the
requirement that $a_1b_1$, $a_2b_2$, and $a_3b_3$ are the distinct
edges of $G[\{a_1,a_2,a_3,b_1,b_2,b_3\}]$ can be achieved by making
the neighbors of each $b_i$ with $i\in\{1,2,3\}$ in $V(G)\setminus
\{b,b_1,b_2,b_3\}$ a clique. As a result, each $H\in \GG$ is
determined from four vertices $b$, $b_1$, $b_2$, and $b_3$ such that
$bb_1$, $bb_2$, and $bb_3$ are the distinct edges of
$G[\{b,b_1,b_2,b_3\}]$.  Thus, there is a set $\GG$ of $O(n^4)$
$n$-vertex graphs $H$ with $L(H)=\{b_1,b_2,b_3\}$ such that $G$ is
theta-free if and only if each graph $H$ in $\GG$ does not admit an
induced tree containing $L(H)$.  An $n^3$-factor is reduced from the
number of the three-in-a-tree problems to be solved in order to
determine whether $G$ is theta-free.  Beetle detection can be improved
similarly.  Improving the algorithm for pyramid detection needs
additional care, since a pyramid has to contain exactly one triangle.

\subsubsection[]{Proving Theorem~\ref{theorem:theorem1.2}}
\label{subsubsection:subsubsection6.1.1}

Theorem~\ref{theorem:theorem1.2} is immediate from
Theorem~\ref{theorem:theorem1.1} and the next lemma.

\begin{lem}
\label{lemma:lemma6.1}
Thetas in an $n$-vertex $m$-edge graph $G$ can be detected by solving
the three-in-a-tree problem on $O(mn^2)$ linear-time-obtainable
$n$-vertex graphs.
\end{lem}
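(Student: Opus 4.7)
The plan is to reduce theta detection in $G$ to $O(mn^2)$ three-in-a-tree queries, one per admissible $4$-tuple $(b, b_1, b_2, b_3)$ of vertices of $G$ for which $bb_1, bb_2, bb_3$ are precisely the edges of $G[\{b, b_1, b_2, b_3\}]$. The count $O(mn^2)$ follows by fixing the edge $bb_1$ in $m$ ways and choosing the pair $(b_2, b_3)$ from $N_G(b) \setminus N_G[b_1]$ in $O(n^2)$ ways. For each such tuple I would construct an auxiliary graph $H = H(b, b_1, b_2, b_3)$ with designated terminals $\{b_1, b_2, b_3\}$ by (i) deleting $N_G[b] \setminus \{b_1, b_2, b_3\}$ from $G$, and (ii) for each $i \in \{1,2,3\}$, turning $N_G(b_i) \setminus N_G[b]$ into a clique in the resulting graph. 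Each $H$ has at most $n$ vertices and is obtainable in time linear in $|V(H)| + |E(H)|$.

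The forward direction is direct. Given any theta in $G$ with apexes $u, u'$ and three internally disjoint induced paths $P_1, P_2, P_3$, set $b = u$ and let $b_i$ be the neighbor of $b$ on $P_i$. Induced-ness forces $b_1, b_2, b_3$ to be distinct and pairwise non-adjacent, so $(b, b_1, b_2, b_3)$ is admissible. The subtree $T = (P_1 \cup P_2 \cup P_3) - b$ is an induced subdivision of $K_{1,3}$ in $G$ with leaves $b_1, b_2, b_3$ and center $u'$. Induced-ness of the theta further guarantees $N_G(b) \cap V(T) = \{b_1, b_2, b_3\}$, so $V(T) \subseteq V(H)$; moreover the unique vertex of $V(T)$ in each $N_G(b_i)$ is the neighbor of $b_i$ on $P_i$ distinct from $b$, so no added clique edge falls inside $V(T)$, and $T$ persists as an induced tree of $H$ through the three terminals.

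The main obstacle is the converse. Assume some $H = H(b, b_1, b_2, b_3)$ admits an induced tree $T$ containing $\{b_1, b_2, b_3\}$, and take $T$ inclusion-minimal. The key point is that the cliquification at $b_i$ forces $b_i$ to be a leaf of $T$: if $b_i$ had two distinct neighbors in $T$, both would lie in $N_H(b_i)$ and hence be adjacent in $H$, contradicting induced-ness of $T$. Thus $T$ is a subdivided $K_{1,3}$ with leaves $b_1, b_2, b_3$ and some center $c$; letting $a_i$ denote the unique neighbor of $b_i$ in $T$, the same argument shows $V(T) \cap N_G(b_i) = \{a_i\}$. Consequently no added-clique edge can have both endpoints in $V(T)$, so $G[V(T)] = T$, i.e., $T$ is induced in $G$ as well.

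To conclude, by construction $V(H) \cap N_G(b) = \{b_1, b_2, b_3\}$, so $G[V(T) \cup \{b\}] = T \cup \{bb_1, bb_2, bb_3\}$. Since $c \notin \{b_1, b_2, b_3\}$ (else $T$ would have fewer than three leaves), we have $c \notin N_G(b)$, and the three paths formed by prepending $b$ to each $b_ic$-path of $T$ are internally disjoint, induced, and each of length at least $2$. This is precisely an induced subdivision of $K_{2,3}$ in $G$, i.e., a theta, completing the reduction.
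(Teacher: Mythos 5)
Your proof is correct and matches the paper's proof essentially verbatim: the same auxiliary graph $G(b,b_1,b_2,b_3)$ obtained by deleting $N_G[b]\setminus\{b_1,b_2,b_3\}$ and cliquifying each $N_G(b_i)$, and the same forward/backward argument in which the clique-edges force $b_1,b_2,b_3$ to be leaves of the induced tree. You are slightly more explicit in the backward direction (taking $T$ inclusion-minimal so that $b_1,b_2,b_3$ are the only leaves, and noting that the center $c$ lies outside $N_G(b)$ so the two branch vertices of the reconstructed theta are nonadjacent), but the underlying idea is identical to that of Lemma~\ref{lemma:lemma6.1}.
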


\begin{proof}
Observe that $H$ is a theta of $G$ if and only if there are vertices
$b$, $b_1$, $b_2$, and $b_3$ of $H$ such that $bb_1$, $bb_2$, and
$bb_3$ are the distinct edges of $G[\{b,b_1,b_2,b_3\}]$ and $H-b$ is
an induced subtree of $G-b$ having exactly three leaves $b_1$, $b_2$,
and $b_3$. See Figure~\ref{figure:figure2}(a).  For each of the
$O(mn^2)$ choices of vertices $b$, $b_1$, $b_2$, and $b_3$ such that
$bb_1$, $bb_2$, and $bb_3$ are the distinct edges in
$G[\{b,b_1,b_2,b_3\}]$, let $G(b,b_1,b_2,b_3)$ denote the graph that
is $O(m+n)$-time obtainable from $G$ by (1) deleting $N[b]\setminus
\{b_1,b_2,b_3\}$ and (2) adding edges to make the remaining vertices
in each $N(b_i)$ with $i\in\{1,2,3\}$ a clique.  We show that $G$
admits a theta $H$ if and only if one of the $O(mn^2)$ graphs
$G^*=G(b,b_1,b_2,b_3)$ admits an induced subtree $T^*$ containing
$\{b_1,b_2,b_3\}$.

($\Rightarrow$)\quad $G^*=G(b,b_1,b_2,b_3)$ exists for the vertices
$b$, $b_1$, $b_2$, and $b_3$ of $H$.  The vertices deleted from $G$ in
Step~(1) are not in $T=H-b$, implying that $T$ is a subtree of $G^*$
containing $\{b_1,b_2,b_3\}$.  Since $b_1$, $b_2$, and $b_3$ are the
leaves of $T$, each edge added by Step~(2) is incident to at most one
vertex of $T$, implying that $T$ is an induced subtree $T^*$ of $G^*$
containing $\{b_1,b_2,b_3\}$.

($\Leftarrow$)\quad The distinct edges of $G[\{b,b_1,b_2,b_3\}]$ are
$bb_1$, $bb_2$, and $bb_3$.  By Step~(2), $b_1$, $b_2$, and $b_3$ are
the leaves of~$T^*$.  Since each edge deleted in Step~(1) is incident
to at most one vertex of $T^*$, $T^*$ is an induced subtree of $G-b$,
implying that $G[T^*\cup \{b\}]$ is a theta $H$ of $G$.
\end{proof}

\subsubsection[]{Proving Theorem~\ref{theorem:theorem1.3}}
\label{subsubsection:subsubsection6.1.2}
A {\em pyramid}~\cite{ChudnovskyS10} of graph $G$ is the subgraph of
$G$ induced by the vertices of an induced subtree $T$ of
$G-\{b_1b_2,b_2b_3,b_3b_1\}$ having exactly three leaves $b_1$, $b_2$,
and $b_3$ such that $G[\{b_1,b_2,b_3\}]$ is the only triangle of
$G[T]$.  See Figure~\ref{figure:figure2}(b).
Theorem~\ref{theorem:theorem1.3} is immediate from
Theorem~\ref{theorem:theorem1.1} and the next lemma.

\begin{lem}
\label{lemma:lemma6.2}
Pyramids in an $n$-vertex $m$-edge graph $G$ can be detected by
solving the three-in-a-tree problem on $O(mn)$ linear-time-obtainable
$n$-vertex graphs.
\end{lem}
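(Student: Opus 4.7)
The plan is to mimic the theta reduction of Lemma~\ref{lemma:lemma6.1}, but enumerate over the base triangle $\{b_1, b_2, b_3\}$ of the pyramid in place of the apex vertex $b$ of the theta. A pyramid of $G$ is uniquely determined by its base triangle together with an induced subtree $T$ of $G - \{b_1b_2, b_2b_3, b_1b_3\}$ whose only leaves are $b_1, b_2, b_3$. For each edge $b_1b_2 \in E(G)$ and each common neighbor $b_3 \in N_G(b_1) \cap N_G(b_2)$ we obtain one triple, giving $O(mn)$ triples overall (at most $n$ choices of $b_3$ per edge).

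For each such triple I would build an $n$-vertex graph $G^* = G^*(b_1, b_2, b_3)$ via three local modifications of $G$: (i) delete the triangle edges $b_1b_2, b_2b_3, b_1b_3$, so that any induced tree of $G^*$ resides in $G - \{b_1b_2, b_2b_3, b_1b_3\}$ as the pyramid definition demands; (ii) delete every vertex of $V(G) \setminus \{b_1, b_2, b_3\}$ adjacent in $G$ to at least two of $b_1, b_2, b_3$, enforcing the pyramid clause that at most one of $P_1, P_2, P_3$ has only one edge (equivalently, that $\{b_1, b_2, b_3\}$ is the \emph{only} triangle of $G[V(T)]$); (iii) for each $i \in \{1,2,3\}$, add edges to turn $N_G(b_i) \setminus \{b_1, b_2, b_3\}$ (restricted to the surviving vertices) into a clique, forcing each $b_i$ to appear as a leaf of any sapling, because two surviving neighbors of $b_i$ inside the sapling would create a triangle through the new clique edge. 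Each $G^*$ has $\le n$ vertices and can be built in time linear in its size, which is $O(n^2)$ due to step (iii).

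The equivalence to prove is: $G$ has a pyramid if and only if some $G^*(b_1, b_2, b_3)$ admits an induced tree containing $\{b_1, b_2, b_3\}$. For the forward direction, given a genuine pyramid subtree $T$ of $G - \{b_1b_2, b_2b_3, b_1b_3\}$, the pyramid condition guarantees that every vertex of $V(T) \setminus \{b_1, b_2, b_3\}$ is adjacent to at most one of the $b_i$, so $V(T) \subseteq V(G^*)$ survives step (ii); moreover, the leaf condition on each $b_i$ inside $T$ forces $|V(T) \cap (N_G(b_i) \setminus \{b_1, b_2, b_3\})| \le 1$, so step (iii) introduces no chord inside $V(T)$ and $T$ is induced in $G^*$. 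For the converse, a minimal induced tree $T^*$ of $G^*$ containing $\{b_1, b_2, b_3\}$ must be Y-shaped with the $b_i$ as its three leaves (two $T^*$-neighbors of $b_i$ would form a triangle via a clique edge from step (iii)); the same clique argument shows $T^*$ uses no edge added by step (iii), so $T^*$ is an induced subtree of $G - \{b_1b_2, b_2b_3, b_1b_3\}$, and step (ii) rules out any vertex that is simultaneously adjacent to two of the $b_i$, so the unique triangle of $G[V(T^*)]$ is $b_1b_2b_3$ and $G[V(T^*)]$ is a pyramid.

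The main obstacle, and the new idea over the theta reduction, is step (ii): one has to verify that deleting exactly the vertices adjacent to two or more of $b_1, b_2, b_3$ both preserves every valid pyramid (because the apex and internal path vertices of a pyramid are never adjacent to two base vertices) and forbids every spurious sapling whose apex would be adjacent to two base vertices (which would create a second triangle in $G[V(T^*)]$ and hence spoil the pyramid definition). Once step (ii) is shown to be ``just right,'' steps (i) and (iii) are the same bookkeeping as in Lemma~\ref{lemma:lemma6.1}, and the $O(mn)$ bound on the number of triples is immediate from the edge-based enumeration.
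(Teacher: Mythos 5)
Your proposal is correct and matches the paper's proof of Lemma~\ref{lemma:lemma6.2} almost verbatim: the paper enumerates the same $O(mn)$ triangles and applies the same three modifications (add clique edges among $N(b_i)\setminus\{b_1,b_2,b_3\}$, delete the triangle edges, delete $(N(b_i)\cap N(b_j))\setminus\{b_1,b_2,b_3\}$), with the correctness argument in each direction identical to yours. The only cosmetic difference is the order in which the three modifications are listed, which produces the same graph $G^*(b_1,b_2,b_3)$.
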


\begin{proof}
For each of the $O(mn)$ choices of distinct vertices $b_1$, $b_2$, and
$b_3$ such that $G[\{b_1,b_2,b_3\}]$ is a triangle, let
$G(b_1,b_2,b_3)$ be the graph obtained from $G$ by (1) adding edges to
make each $N(b_i)\setminus \{b_1,b_2,b_3\}$ with $i\in\{1,2,3\}$ a
clique, (2) deleting edges $b_1b_2$, $b_2b_3$, and $b_3b_1$, and (3)
deleting $(N(b_i)\cap N(b_j))\setminus \{b_1,b_2,b_3\}$ for any
distinct indices $i$ and $j$ in $\{1,2,3\}$.  We show that $G$ admits
a pyramid $H$ if and only if one of the $O(mn)$ graphs
$G^*=G(b_1,b_2,b_3)$ admits an induced subtree $T^*$ containing
$\{b_1,b_2,b_3\}$.

($\Rightarrow$)\quad $G^*$ exists for the vertices $b_1$, $b_2$, and
$b_3$ of $H$.  Since $H[\{b_1,b_2,b_3\}]$ is the only triangle of $H$,
$H$ does not intersect any $(N(b_i)\cap N(b_j))\setminus
\{b_1,b_2,b_3\}$ with $1\leq i<j\leq 3$.  Hence, Steps~(2) and~(3) do
not delete any edge of $T$, implying that $T$ is a subtree of $G^*$.
Since $T$ is an induced tree of $G-\{b_1b_2,b_2b_3,b_3b_1\}$ having
exactly three leaves $b_1$, $b_2$, and $b_3$, each edge added by
Step~(1) is incident to at most one vertex of $T$.  Thus, $T$ is an
induced subtree $T^*$ of $G^*$ containing $\{b_1,b_2,b_3\}$.

($\Leftarrow$)\quad By Step~(1), vertices $b_1$, $b_2$, and $b_3$ are
the leaves of the subtree $T^*$ of $G$.  Since each edge deleted in
Step~(3) is incident to at most one vertex of $T^*$, $T^*$ is an
induced subtree of $G-\{b_1b_2,b_2b_3,b_3b_1\}$ by Step~(2).  By
Steps~(2) and~(3), $G[\{b_1,b_2,b_3\}]$ is the only triangle of
$G[T^*]$.  Thus, $G[T^*]$ is a pyramid $H$ of $G$.
\end{proof}

\subsubsection[]{Proving Theorem~\ref{theorem:theorem1.5}}
\label{subsubsection:subsubsection6.1.3}
A {\em beetle}~\cite{ChangL15} of graph $G$ is an induced subgraph of
$G$ consisting of a cycle $b_1b_2b_3b_4b_1$ with a chord $b_2b_4$
(i.e., a {\em diamond}~\cite{ConfortiCKV02b,KloksMV09} of $G$) and a
tree $T$ of $G-b_4$ having exactly three leaves~$b_1$, $b_2$, and
$b_3$.  See
Figure~\ref{figure:figure2}(c). Theorem~\ref{theorem:theorem1.5} is
immediate from Theorem~\ref{theorem:theorem1.1} and the next lemma.

\begin{lem}
\label{lemma:lemma6.3}
Beetles in an $n$-vertex $m$-edge graph $G$ can be detected by solving
the three-in-a-tree problem on $O(m^2)$ linear-time-obtainable
$n$-vertex graphs.
\end{lem}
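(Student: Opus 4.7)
The plan is to adapt the reduction style of Lemmas~\ref{lemma:lemma6.1} and~\ref{lemma:lemma6.2}, encoding a beetle by its four diamond vertices $b_1,b_2,b_3,b_4$ and then running three-in-a-tree on a suitably modified graph. The key observation is that $b_4b_1$ and $b_2b_3$ are two vertex-disjoint edges of the $4$-cycle $b_1b_2b_3b_4b_1$, so enumerating the diamond vertices reduces to iterating over the $O(m^2)$ ordered pairs of vertex-disjoint edges of $G$ and, for each such pair, over the $4$ ways of choosing which endpoint plays the role of the degree-$3$ diamond vertex ($b_4$ or $b_2$). For each candidate tuple $(b_1,b_2,b_3,b_4)$ I will reject it in $O(1)$ time unless $b_1b_2,b_3b_4,b_2b_4\in E(G)$ and $b_1b_3\notin E(G)$; the surviving tuples are exactly those for which $G[\{b_1,b_2,b_3,b_4\}]$ is a diamond with the prescribed labeling.

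For each surviving tuple I will build, in $O(m+n)$ time, the $n$-vertex graph $G^{*}=G(b_1,b_2,b_3,b_4)$ by (1)~deleting all vertices in $N_G[b_4]\setminus\{b_1,b_2,b_3\}$, (2)~removing the edges $b_1b_2$ and $b_2b_3$, and (3)~for each $i\in\{1,2,3\}$, adding edges so that $N_{G^{*}}(b_i)\setminus\{b_1,b_2,b_3\}$ becomes a clique of $G^{*}$. I will then solve three-in-a-tree on $G^{*}$ with terminals $\{b_1,b_2,b_3\}$. The claim to prove is that $G$ contains a beetle if and only if at least one of these $O(m^2)$ instances is a yes-instance, which yields the lemma.

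The forward direction is routine. Given a beetle with tree $T$ of $G-b_4$ whose leaves are $b_1,b_2,b_3$, the graph $G^{*}$ built from the corresponding labels preserves $T$ as an induced subtree: (a)~the vertices of $V(T)\setminus\{b_1,b_2,b_3\}$ are non-adjacent to $b_4$ because the beetle is an induced subgraph, so they survive step~(1); (b)~the edges $b_1b_2$ and $b_2b_3$ are not in $T$ because $b_1,b_2,b_3$ are leaves of $T$, so step~(2) does not disturb $T$; and (c)~each $b_i$ contributes only its unique $T$-neighbor to $V(T)$, so no edge added by step~(3) has both endpoints in $V(T)$.

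The reverse direction, which I expect to be the main obstacle, needs extra care because three-in-a-tree may return an induced tree $T^{*}$ of $G^{*}$ with more than three leaves. Steps~(1) and~(2) make $b_1,b_2,b_3$ pairwise non-adjacent in $G^{*}$, so any $T^{*}$-neighbor of $b_i$ lies in $N_{G^{*}}(b_i)\setminus\{b_1,b_2,b_3\}$; the clique added in step~(3) then forces the $T^{*}$-degree of $b_i$ to be at most one, for otherwise two such neighbors would form a triangle in the induced tree $T^{*}$. I will then replace $T^{*}$ by its minimal subtree $T'$ through $\{b_1,b_2,b_3\}$; since each $b_i$ is already a leaf of $T^{*}$, it remains a leaf of $T'$, so $T'$ has exactly three leaves $b_1,b_2,b_3$. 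Step~(1) guarantees $V(T')\cap N_G(b_4)=\{b_1,b_2,b_3\}$, so reinserting $b_4$ contributes exactly the three edges $b_1b_4,b_2b_4,b_3b_4$ to $G[V(T')]$; together with $b_1b_2$ and $b_2b_3$, which are restored on $V(T')$, the induced subgraph $G[V(T')\cup\{b_4\}]$ realizes the full diamond on $\{b_1,b_2,b_3,b_4\}$ together with the tree $T'$, hence is a beetle.
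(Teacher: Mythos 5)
Your proof is correct and follows essentially the same reduction as the paper: enumerate labeled diamonds $(b_1,b_2,b_3,b_4)$ in $O(m^2)$ time, build $G^*$ by deleting $N[b_4]\setminus\{b_1,b_2,b_3\}$ and making each $N(b_i)\setminus\{b_1,b_2,b_3\}$ a clique, and query three-in-a-tree on $(G^*;b_1,b_2,b_3)$. You are in fact slightly more careful than the paper's own writeup: you explicitly delete the edges $b_1b_2$ and $b_2b_3$ (without this deletion $b_1,b_2,b_3$ stay adjacent in $G^*$, the $3$-vertex path $b_1b_2b_3$ is always a trivial induced tree through the terminals, and the backward implication fails; the paper deletes the analogous triangle edges in its pyramid reduction of Lemma~\ref{lemma:lemma6.2} but appears to omit the corresponding deletion here), and you explicitly pass to the minimal subtree $T'$ through $\{b_1,b_2,b_3\}$ to guarantee exactly three leaves, a step the paper leaves implicit.
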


\begin{proof}
For each of the $O(m^2)$ choices of vertices $b_1$, $b_2$, $b_3$, and
$b_4$ such that $G[\{b_1,b_2,b_3,b_4\}]$ is a cycle $b_1b_2b_3b_4b_1$
with exactly one chord $b_2b_4$, let $G(b_1,b_2,b_3,b_4)$ be the
$O(m+n)$-time obtainable graph from $G$ by (1) deleting
$N[b_4]\setminus \{b_1,b_2,b_3\}$ and (2) adding edges to make the
remaining vertices in each $N(b_i)\setminus\{b_1,b_2,b_3\}$ with
$i\in\{1,2,3\}$ a clique.  We show that $G$ admits a beetle $H$ if and
only if one of the $O(m^2)$ graphs $G(b_1,b_2,b_3,b_4)$ admits an
induced subtree $T^*$ containing $\{b_1,b_2,b_3\}$.

($\Rightarrow$)\quad $G^*=G(b_1,b_2,b_3,b_4)$ exists for the vertices
$b_1$, $b_2$, $b_3$, and $b_4$ of $H$.  The vertices deleted fro $G$
in Step~(1) are not in $T$, implying that $T$ is a subtree of $G^*$
containing $\{b_1,b_2,b_3\}$.  Since $T$ intersects each
$N(b_i)\setminus\{b_1,b_2,b_3\}$ with $i\in\{1,2,3\}$ at exactly one
vertex, each edge added by Step~(2) is incident to at most one vertex
of $T$.  Thus, $T$ is an induced subtree $T^*$ of $G^*$ containing
$\{b_1,b_2,b_3\}$.

($\Leftarrow$)\quad $G[\{b_1,b_2,b_3,b_4\}]$ is a cycle
$b_1b_2b_3b_4b_1$ with exactly one chord $b_2b_4$.  By Step~(2),
$b_1$, $b_2$, and $b_3$ are the leaves of $T^*$.  Since each edge
deleted in Step~(1) is incident to at most one vertex of $T^*$, we
have $G[T^*]=T^*\cup \{b_1b_2,b_2b_3\}$, implying that $G[T^*\cup
  \{b_4\}]$ is a beetle $H$ of $G$.
\end{proof}

\subsection{Improved perfect-graph recognition and odd-hole detection}
\label{subsection:subsection6.2}

As summarized by Maffray and Trotignon~\cite[\S2]{MaffrayT05}, the
algorithm of Chudnovsky et al.~\cite{ChudnovskyCLSV05} consists of two
$O(n^9)$-time phases.  The first phase (a) detects pyramids in $G$ in
$O(n^9)$ time, (b) detects the so-called $\mathcal{T}_i$
configurations with $i\in\{1,2,3\}$ in $O(n^6)$ time,\footnote{
  in~\cite{MaffrayT05} we omit the complicated definitions of
  $\mathcal{T}_i$ configurations, which are not needed by our improved
  algorithms.}
and (c) detects jewels in $\bar{G}$ in $O(n^6)$ time.  If any of them
is detected, then either $G$ or $\bar{G}$ contains odd holes, implying
that $G$ is not perfect.  Otherwise, each shortest odd hole $C$ of $G$
is amenable, i.e., any anti-connected component of the $C$-major
vertices is contained by $N_G(u)\cap N_G(v)$ for some edge $uv$ of
$C$.  The second phase (a) computes in $O(n^5)$ time a set
$\mathbbmsl{X}$ of $O(n^5)$ subsets of $V(G)$ such that if $G$
contains an amenable shortest odd hole, then $\mathbbmsl{X}$ contains
a near cleaner of $G$ and (b) spends $O(n^4)$ time on each $X\in
\mathbbmsl{X}$ to either obtain an odd hole of $G$ or ensure that $X$
is not a near cleaner of $G$.  Theorem~\ref{theorem:theorem1.3}
reduces the time of detecting pyramids to $O(n^6)$.
Lemma~\ref{lemma:lemma6.5} reduces the time of Phase~2(b) from
$O(n^4)$ to the time of performing $O(n)$ multiplications of Boolean
$n\times n$ matrices~\cite{CoppersmithW90,LeGall14,will12}.
Therefore, the time of recognizing perfect graphs is already reduced
to $O(n^{8.377})$ without resorting to our improved odd-hole detection
algorithm.

Let $G$ be an $n$-vertex $m$-edge graph.  A {\em $k$-hole}
(respectively, {\em $k$-cycle} and {\em $k$-path}) is a $k$-vertex
hole (respectively, cycle and path).  For any odd hole $C$ of $G$, a
vertex $x\in V(G)\setminus V(C)$ is {\em
  $C$-major}~\cite{ChudnovskyCLSV05} if $N_G(x,C)$ is not contained by
any $3$-path of $C$.  Let $M_G(C)$ consist of the $C$-major
vertices. We have $M_G(C)\cap V(C)=\varnothing$.  A shortest odd hole
$C$ of $G$ is {\em clean} if $G$ does not contain any $C$-major
vertex.  A set $X\subseteq V(G)$ is a {\em near
  cleaner}~\cite{ChudnovskyCLSV05} if there is a shortest odd hole $C$
of $G$ such that (1) $C[X]$ is contained by a $3$-path of $C$ and (2)
all $C$-major vertices of $G$ are in $X$.  A jewel of $G$ is an
$O(n^6)$-time detectable induced subgraph of
$G$~\cite{ChudnovskyCLSV05}.  If $G$ contains jewels or beetles, then
$G$ contains odd holes.  Let $\bar{G}$ denote the complement of graph
$G$.

\begin{lem}[{Chudnovsky, Cornu{\'{e}}jols, Liu, Seymour, and Vu\v{s}kovi\'{c}
\cite[4.1]{ChudnovskyCLSV05}}]
\label{lemma:lemma6.4}
Let $u$ and $v$ be distinct vertices of a clean shortest odd hole $C$
of a pyramid-free jewel-free graph $G$.
\begin{enumerate*}[label=(\arabic*), ref=\arabic*]
\item
\label{item1:lemma6.4}
The shortest $uv$-path of $C$ is a shortest $uv$-path of $G$.
\item
\label{item2:lemma6.4}
The graph obtained from $C$ by replacing the shortest $uv$-path of $C$
with a shortest $uv$-path of $G$ remains a clean shortest odd hole of
$G$.
\end{enumerate*}
\end{lem}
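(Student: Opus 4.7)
The plan is to prove both parts by contradiction, turning any putative counterexample into either a strictly shorter odd hole of $G$ (contradicting the minimality of $|C|$) or one of the forbidden induced subgraphs (pyramid or jewel). Throughout, write $P$ for the shortest $uv$-path of $C$ with $|P|=p$, and $P'$ for the complementary $uv$-path of $C$ with $|P'|=p'$, so that $p+p' = |C|$ is odd and $p \le (|C|-1)/2$. Let $Q$ be any shortest $uv$-path of $G$; because it is a shortest path in an undirected graph, $Q$ is induced.

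For statement~(1), I would assume $|Q| < p$ and aim to extract, from $Q$ together with $P$ or $P'$, an induced odd cycle strictly shorter than $C$. The key parity dichotomy is this: either $|Q|$ has the same parity as $p$, in which case $|Q|+p'$ is odd and $|Q|+p' < |C|$; or $|Q|$ has the opposite parity, in which case $|Q|+p$ is odd and $|Q|+p < 2p \le |C|-1$. In both cases the closed walk $Q \cup P'$ (respectively $Q \cup P$) has odd length less than $|C|$ and must be trimmed to an induced odd cycle. Cleanness of $C$ is the engine here: every vertex of $V(Q) \setminus V(C)$ has its neighbourhood on $C$ contained in a $3$-path of $C$, so chords from $Q$ into $P$ or $P'$ are tightly confined, and any such chord lets me shortcut to a strictly shorter odd closed walk whose parity is preserved. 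After finitely many shortcuts I arrive at an induced odd cycle of length strictly less than $|C|$, contradicting that $C$ is a shortest odd hole.

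For statement~(2), part~(1) yields $|Q| = p$, so $C' = Q \cup P'$ has length $|C|$ and is odd. I need to establish (a) $C'$ is induced (a hole) and (b) $C'$ has no $C'$-major vertex. Claim~(a) follows from the same parity/shortcut argument: any chord between $V(Q) \setminus \{u,v\}$ and $V(P') \setminus \{u,v\}$ would, by the dichotomy above, yield a strictly shorter induced odd cycle. For~(b), suppose $x \notin V(C')$ is $C'$-major. If $x \notin V(C)$, then $x$ is not $C$-major by cleanness, so $N(x, C) \subseteq T$ for some $3$-path $T$ of $C$; the hypothesis that $N(x, C')$ is \emph{not} contained in any $3$-path of $C'$ then forces $x$ to have neighbours on $Q \setminus V(C)$, and playing these neighbours off against the two equal-length $uv$-paths $P, Q$ produces either three internally vertex-disjoint paths attached to a triangle (a pyramid) or an induced $K_4$-minus-edge with an attached path (a jewel). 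If instead $x \in V(P) \setminus \{u,v\}$, then $x$ together with $P, P', Q$ yields the same conclusion, because $x$'s neighbourhood on $C'$ spans more than a $3$-path while $x$ itself is joined to $u$ and $v$ along $P$.

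The main obstacle is case~(b) of statement~(2) for a $C'$-major vertex $x \in V(P) \setminus V(C')$: such an $x$ gets no direct restriction from cleanness of $C$ (it lies on $C$), and its neighbourhood on $Q \cup P'$ can be intricate. Extracting the concrete pyramid or jewel requires a careful choice of three internally disjoint paths, or of a specific induced $K_4$-minus-edge, from the union $P \cup Q \cup P'$ together with $x$, and this is precisely where the hypotheses of pyramid-freeness and jewel-freeness are spent. Since the statement is quoted directly from Chudnovsky, Cornu\'ejols, Liu, Seymour, and Vu\v{s}kovi\'c~\cite{ChudnovskyCLSV05}, I would follow their organisation of cases to close off this final step.
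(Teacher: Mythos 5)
This lemma is quoted verbatim from Chudnovsky, Cornu\'ejols, Liu, Seymour, and Vu\v{s}kovi\'c~\cite[4.1]{ChudnovskyCLSV05}; the present paper neither proves it nor sketches a proof, so there is no in-paper argument to compare against --- the reference point has to be~\cite{ChudnovskyCLSV05} itself. Your overall plan (parity dichotomy on $|Q|$ versus $p$, use cleanness of $C$ to control chords from $Q$ into $C$, extract a shorter odd hole or a forbidden configuration) is in the right spirit. But as written it is a plan, not a proof, and two of the places you treat as routine hide real difficulties.

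First, ``trim the odd closed walk to an induced odd cycle'' does not by itself produce a \emph{hole}: the standard chord-trimming of an odd cycle terminates at a shortest odd cycle of the induced subgraph, which is always induced, but could have length three. A triangle is not a hole and does not contradict the minimality of $|C|$, so you owe an argument that the trimmed cycle has at least five vertices. This is exactly where cleanness and the $3$-path bound must be spent concretely (e.g.\ a vertex of $V(Q)\setminus V(C)$ cannot see both $u$ and $v$ once $p\ge 3$, since $\{u,v\}$ would not fit inside a $3$-path of $C$), and you never carry out that accounting. Relatedly, you implicitly assume $Q$ meets $C$ only at $u$ and $v$; a shortest $uv$-path of $G$ may pass through interior vertices of $P$ or $P'$, in which case the ``closed walk'' $Q\cup P'$ is not a cycle and the parity bookkeeping must be redone on the $Q$-segments between consecutive hits of $C$. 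Second, for part~(\ref{item2:lemma6.4}) you isolate the case of a $C'$-major vertex $x\in V(P)\setminus V(C')$ as ``the main obstacle'' and then explicitly defer it (``I would follow their organisation of cases''). That case is precisely where the pyramid-freeness and jewel-freeness hypotheses are consumed, so the proposal stops short of the step that carries the weight of the lemma. Until the triangle issue, the $Q\cap V(C)$ issue, and the deferred case for~(\ref{item2:lemma6.4}) are closed, this remains a sketch pointing at~\cite{ChudnovskyCLSV05} rather than a self-contained proof.
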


\subsubsection{An improved algorithm for recognizing perfect graphs}
\label{subsubsection:subsubsection6.2.1}

Although Theorem~\ref{theorem:theorem1.4}\eqref{item1:theorem1.4}
implies Theorem~\ref{theorem:theorem1.4}\eqref{item2:theorem1.4}, this
subsection shows that we already have an improved algorithm for
recognizing perfect graphs without resorting to
Theorem~\ref{theorem:theorem1.4}\eqref{item1:theorem1.4}.  The next
lemma reduces the time of Chudnovsky et al.'s algorithms~\cite[4.2 and
  5.1]{ChudnovskyCLSV05} from $O(n^4)$ to $O(n^{3.377})$.

\begin{lem}
\label{lemma:lemma6.5}
For any given vertex set $X$ of an $n$-vertex pyramid-free jewel-free
graph $G$, it takes the time of performing $O(n)$ multiplications of
$n\times n$ Boolean matrices to either obtain an odd hole of $G$ or
ensure that $X$ is not a near cleaner of a shortest odd hole of $G$.
\end{lem}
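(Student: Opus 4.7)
The plan is as follows. Suppose $X$ is a near cleaner of some shortest odd hole $C$ of $G$, so there exists a $3$-path $P = x_1 x_2 x_3$ of $C$ with $V(C) \cap X \subseteq V(P)$ and every $C$-major vertex of $G$ lies in $X$. Writing $Q$ for the subpath $C - \{x_1,x_2,x_3\}$, we have $V(Q) \cap X = \varnothing$ and no vertex of $V(Q)$ is $C$-major. By Lemma~\ref{lemma:lemma6.4}(\ref{item1:lemma6.4}), the arc of $C$ going through $Q$ from $x_1$ to $x_3$ is a shortest $x_1 x_3$-path in $G$; by Lemma~\ref{lemma:lemma6.4}(\ref{item2:lemma6.4}), it may be replaced by any shortest $x_1 x_3$-path in a suitably cleaned subgraph without destroying the fact that we have a clean shortest odd hole.

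The algorithm iterates over all $n$ choices of $w \in V(G)$ as a guess for the middle vertex $x_2$ of $P$, and for each $w$ constructs the subgraph $G_w := G - (X \setminus N[w])$. When the guess $w = x_2$ is correct, $V(C) \subseteq V(G_w)$, since $V(C) \cap X \subseteq \{x_1,x_2,x_3\} \subseteq N[w]$. Within $G_w$ we wish to locate a shortest induced cycle of odd length through $w$ whose two neighbors of $w$ in the cycle are non-adjacent, which amounts to finding, for every non-adjacent pair $(a,b) \in N_{G_w}(w)^2$, a shortest $ab$-path in $G_w - \{w\} - (N_{G_w}(w) \setminus \{a,b\})$. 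For each $w$, a Boolean $n \times n$ matrix multiplication applied to the adjacency matrix of the ``bulk'' graph $K_w := G_w - N_{G_w}[w]$ computes the needed distance extensions, while $O(1)$ further Boolean matrix products handle the boundary hops from $a$ into $K_w$ and from $K_w$ back to $b$. Summing over the $n$ choices of $w$ yields $O(n)$ Boolean $n \times n$ matrix multiplications in total, and any candidate cycle thus produced can be verified in $O(m)$ time to be a genuine induced odd hole.

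The main obstacle is showing that a candidate cycle returned by the above is indeed an \emph{induced} hole rather than merely an odd cycle with chords. Chords incident to $w$ are ruled out by the explicit deletion of $N_{G_w}(w) \setminus \{a,b\}$; chords between two interior vertices of the $ab$-path would, via Lemma~\ref{lemma:lemma6.4}(\ref{item1:lemma6.4}), contradict the shortest-path property of the arc of $C$ between them; and chords from $a$ or $b$ to an interior vertex are prevented by the near-cleaner property, because any such chord endpoint would be $C$-major, hence belong to $X \setminus N[w]$ and already be deleted. Combining these observations with Lemma~\ref{lemma:lemma6.4}, when $w = x_2$ the procedure recovers an odd hole of length exactly $|C|$; for any other $w$, either it returns a certifiable odd hole of $G$ or it reports failure, and if every $w$ reports failure we conclude that $X$ is not a near cleaner of any shortest odd hole of $G$.
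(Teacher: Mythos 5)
Your budget analysis does not hold up. A single Boolean product of the adjacency matrix of $K_w$ gives two-step reachability, not distances, and the shortest-path information you require lives in a \emph{different} graph $K_w = G - (X \cup N_G[w])$ for each of the $n$ choices of $w$; distances in $G-X$ do not descend to $K_w$, so they must be recomputed for each $w$. Even restricting to BFS from the $O(n)$ boundary vertices $\bigcup_{a\in N(w)}\left(N(a)\cap V(K_w)\right)$ costs $O(nm)$ per $w$, hence $O(n^2 m)$ overall, which exceeds the stated budget of $O(n)$ Boolean $n\times n$ products. (The paper instead precomputes all needed distances $d(u,v)$ in $O(n^3)$ total time, and reserves the $n$ matrix products solely for the final reachability test over witness vertices $z$.)

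There is also a parity gap that breaks completeness. The cycle $w$--$a$--$\cdots$--$b$--$w$ is odd only if the $ab$-path has odd length, but you compute the \emph{shortest} $ab$-path, whose parity is unconstrained: when $w=x_2$ and $(a,b)=(x_1,x_3)$, a strictly shorter $x_1x_3$-path of \emph{even} length in $G - (X\setminus\{a,b\}) - (N_G[w]\setminus\{a,b\})$ may exist, yielding a smaller \emph{even} hole through $w$ without contradicting the minimality of the shortest odd hole $C$; in that case your procedure simply fails to recover $C$. What you would actually need is a shortest \emph{odd} $ab$-path, and Lemma~\ref{lemma:lemma6.4}\eqref{item1:lemma6.4} does not directly give this either, since it concerns the \emph{shorter} arc of $C$ between two of its vertices, whereas $C-x_2$ is the longer $x_1x_3$-arc. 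The paper sidesteps both difficulties by iterating over a vertex $y_2$ roughly antipodal to $P$ rather than over the middle vertex $w$: both paths $P(x_1,y_1)$ and $P(x_2,y_2)$ then run along minority arcs of $C$, so Lemma~\ref{lemma:lemma6.4}\eqref{item1:lemma6.4} applies directly, and oddness is enforced combinatorially by the length constraints in Equation~\eqref{equation:near-cleaner} rather than by a parity-constrained shortest-path search.
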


\begin{proof}
It takes overall $O(n^3)$ time to obtain for any distinct vertices $u$
and $v$ of $G$ that are connected in $G(u,v)=G-(X\setminus \{u,v\})$
(i) the length $d(u,v)$ of a shortest $uv$-path $P(u,v)$ in $G(u,v)$
and (ii) the neighbor $N(u,v)$ of $u$ in $P(u,v)$.  Assume
$P(u,v)=P(v,u)$ for all $u$ and $v$ without loss of generality.  If
$u$ and $v$ are not connected in $G(u,v)$, then let $d(u,v)=\infty$.
It takes overall $O(n^3)$ time to compute for any distinct vertices
$x$ and $y$ of $G$ the set $Z(x,y)$ represented by an $n$-bit array,
consisting of the vertices $z$ of $G$ with $d(z,x)=1$ and
$d(z,y)>d(x,y)$.
If
\begin{equation}
\label{equation:near-cleaner}
\begin{array}{rcl}
d(x_1,x_2)&\geq &2\\
d(x_1,y_1)&=&d(x_2,y_2)\ \ =\ \ d(x_1,y_2)-1\ \ =\ \ d(x_2,y_1)-1\\ 
Z(x_1,y_1)\cap Z(x_2,y_2)&\ne&\varnothing
\end{array}
\end{equation}
with $y_1=N(y_2,x_1)$ hold for any distinct vertices $x_1$, $x_2$, and
$y_2$ with minimum $d(x_2,y_2)$, then the $O(n^2)$-time obtainable
$C=G[P(x_1,y_1)\cup P(x_2,y_2)\cup \{z\}]$ for any $z\in
Z(x_1,y_1)\cap Z(x_2,y_2)$ is an odd hole of $G$: Paths $P(x_1,y_1)$
and $P(x_2,y_2)$ are chordless. By $z\in Z(x_1,y_1)\cap Z(x_2,y_2)$,
the only neighbors of $z$ in $C$ are $x_1$ and $x_2$. By
$d(x_1,x_2)\geq 2$, $d(x_i,y_i)=d(x_i,y_{3-i})-1$ for each $i\in
\{1,2\}$, and the minimality of $d(x_2,y_2)$, the only edge between
$P(x_1,y_1)$ and $P(x_2,y_2)$ is $y_1y_2$.  Thus, $C$ is an odd hole
of $G$.  For each $y_2$, we construct a directed acyclic tripartite
graph $G(y_2)$ on three $n$-vertex sets $X_1,Z,X_2$ such that (1)
$x_1z$ with $x_1\in X_1$ and $z\in Z$ is a directed edge of $G(y_2)$
if and only if $z\in Z(x_1,N(y_2,x_1))$ and (2) $zx_2$ with $z\in Z$
and $x_2\in X_2$ is a directed edge of $G(y_2)$ if and only if $z\in
Z(x_2, y_2)$. It takes the time of multiplying two $n\times n$ Boolean
matrices to obtain the $O(n^2)$ pairs of reachability in $G(y_2)$ from
$X_1$ to $X_2$.  Thus, the time required to determine whether there is
a choice of $x_1$, $x_2$, and $y_2$ satisfying
Equation~\eqref{equation:near-cleaner} is that of performing $O(n)$
multiplications for $n\times n$ Boolean matrices.

It remains to show that such a choice of $x_1$, $x_2$, and $y_2$
exists for the case that $X$ is a near cleaner of a shortest odd hole
$C$ of $G$. Let $P$ be a $3$-path of $C$ such that $C-V(P)$ does not
intersect the $C$-major vertices of $G$, implying that $C$ is a clean
shortest odd hole of $H=G-(X\setminus V(P))$.  Let $x_1$ and $x_2$ be
the end-vertices of $P$.  Let $y_2$ be the vertex of $C$ such that the
shortest $x_1y_2$-path of $C$ is one edge longer than the shortest
$x_2y_2$-path of $C$.  By Lemma~\ref{lemma:lemma6.4}, each shortest
$x_iy_2$-path $P_i$ of $C$ with $i\in\{1,2\}$ is a shortest
$x_iy_2$-path of $H$.  Since $X$ does not intersect the interior of
$P_1$ and $P_2$, each $P(x_i,y_2)$ with $i\in\{1,2\}$ is a shortest
$x_iy_2$-path of $H$.  Applying
Lemma~\ref{lemma:lemma6.4}\eqref{item2:lemma6.4} on $C$ to replace
$P_i$ with $P(x_i,y_2)$ for each $i\in\{1,2\}$, we obtain a clean
shortest odd hole $C^*$ of $H$, via which one can verify
Equation~\eqref{equation:near-cleaner} for the chosen $x_1$, $x_2$,
and $y_2$: Let $y_1=N(y_2,x_1)$.  Since $C^*$ is chordless in $G$,
$d(x_1,x_2)\geq 2$. Since $X$ does not intersect the vertices of $C^*$
other than $x_1$, $x_2$, and the internal vertex $z$ of the shortest
$x_1x_2$-path of $C^*$, we have
$d(x_1,y_1)=d(x_2,y_2)=d(x_1,y_2)-1=d(x_2,y_1)-1$ by
Lemma~\ref{lemma:lemma6.4}\eqref{item1:lemma6.4}.  We have
$d(z,x_1)=d(z,x_2)=1$.  By
Lemma~\ref{lemma:lemma6.4}\eqref{item1:lemma6.4}, $d(z,y_i)>
d(x_i,y_i)$ for both $i\in\{1,2\}$ or else the shortest $zy_i$-path of
$C^*$ for an $i\in\{1,2\}$ would not be a shortest $zy_i$-path of $H$.
Thus, $z\in Z(x_1,y_1)\cap Z(x_2,y_2)$.
\end{proof}

\begin{lem}[Chudnovsky,
Cornu{\'{e}}jols, Liu, Seymour, and Vu\v{s}kovi\'{c}~\cite{ChudnovskyCLSV05}]
\label{lemma:lemma6.6}
Let $G$ be an $n$-vertex graph such that $G$ and $\bar{G}$ are
pyramid-and-jewel-free.  It takes $O(n^6)$ time to (1) ensure that $G$
contains odd holes or (2) obtain a set $\mathbbmsl{X}$ of $O(n^5)$
vertex subsets of $G$ such that if $G$ contains odd holes, then
$\mathbbmsl{X}$ contains a near cleaner of $G$.
\end{lem}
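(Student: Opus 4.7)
The plan is to adapt the construction of Chudnovsky, Cornu\'ejols, Liu, Seymour, and Vu\v{s}kovi\'c to our notation. First I would run their Phase~1, which in $O(n^6)$ time either finds an odd hole of $G$ or of $\bar G$ (in the latter case reporting that $G$ has an odd hole via the Strong Perfect Graph Theorem), or certifies that every shortest odd hole $C$ of $G$ is \emph{amenable}: each anti-connected component of $M_G(C)$ is contained in $N_G(u)\cap N_G(v)$ for some edge $uv\in E(C)$. This step is the structural payoff of pyramid- and jewel-freeness (together with the absence of the $\mathcal{T}_i$ configurations, which would also be detected in Phase~1), and is exactly~\cite[\S3--4]{ChudnovskyCLSV05}.

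With amenability secured, a shortest odd hole admits only a controlled number of non-singleton anti-connected components of major vertices, each ``guarded'' by an edge of $C$. The set $\XX$ is then indexed by two guarding edges plus one extra vertex that pinpoints the $3$-path of $C$ allowed to contain $X\cap V(C)$. Concretely, for each $5$-tuple $(a,b,c,d,e)\in V(G)^5$ with $ab,cd\in E(G)$, I would define
\[
X(a,b,c,d,e) \;=\; \bigl((N_G(a)\cap N_G(b)) \cup (N_G(c)\cap N_G(d))\bigr) \setminus N_G[e],
\]
and let $\XX$ be the family of all such sets, which has size $O(n^5)$. Each $X(a,b,c,d,e)$ is computed in $O(n)$ time from the adjacency lists, and all of $\XX$ is built in $O(n^6)$ time, matching the claimed bound.

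For correctness, given any shortest odd hole $C$ of $G$, amenability lets me pick edges $ab,cd\in E(C)$ whose common-neighborhood unions contain all anti-connected components of $M_G(C)$, and pick $e$ to be a vertex of $C$ at maximum distance on $C$ from $\{a,b,c,d\}$. I would then verify two things: that every $C$-major vertex lands in $X(a,b,c,d,e)$, since it belongs to $N(a)\cap N(b)$ or $N(c)\cap N(d)$ and, being non-adjacent to the long arc of $C$ containing $e$, avoids $N_G[e]$; and that the vertices of $C$ that land in $X(a,b,c,d,e)$ are at most the two common neighbors of each guarding edge on $C$ minus those in $N_G[e]$, so they are confined to a $3$-path of $C$.

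The main obstacle is the distance bookkeeping in this last step: choosing $e$ ``far enough'' to strip out the unwanted vertices of $C$ without accidentally excluding a $C$-major vertex requires combining the shortness of $C$ with the amenability certificate to control how close a $C$-major vertex can be to $e$ along $C$. Once these distance inequalities are in hand, the size and runtime claims of Lemma~\ref{lemma:lemma6.6} follow immediately from the counting above.
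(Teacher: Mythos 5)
This lemma is stated in the paper as a direct citation of Chudnovsky, Cornu\'ejols, Liu, Seymour, and Vu\v{s}kovi\'c~\cite{ChudnovskyCLSV05}; the paper gives no proof of it. So there is no in-paper proof to compare against, and what I can do is assess your sketch on its own terms.

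Your Phase~1 outline (pyramid, jewel, and $\mathcal{T}_i$ detection yielding amenability) correctly reflects the cited work, but the concrete construction $X(a,b,c,d,e)=\bigl((N(a)\cap N(b))\cup(N(c)\cap N(d))\bigr)\setminus N_G[e]$ has a real gap. You justify subtracting $N_G[e]$ by asserting that a $C$-major vertex is non-adjacent to the long arc of $C$ containing $e$ and hence avoids $N_G[e]$, but that does not follow from anything you have: the definition of $C$-major only forbids $N_G(x,C)$ from being confined to a $3$-path of $C$, so a major vertex may well be adjacent to $e$, and amenability only places each anti-connected component of $M_G(C)$ inside $N_G(u)\cap N_G(v)$ for some edge $uv$ of $C$ without bounding its other adjacencies on $C$. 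Thus the $\setminus N_G[e]$ step can delete major vertices, breaking condition (2) of the near-cleaner definition. It is also unnecessary for condition (1): if $ab$ is an edge of the hole $C$, then no vertex of $C$ lies in $N(a)\cap N(b)$ (such a vertex would induce a chord of $C$), so for the intended tuple $X\cap V(C)$ is already empty before any subtraction, which suggests your construction is not the one used in~\cite{ChudnovskyCLSV05}. Separately, you assume without argument that two edges of $C$ suffice to cover all anti-connected components of $M_G(C)$; amenability gives one edge per component but does not by itself bound the number of components. Finally, you explicitly flag the distance bookkeeping as unresolved, so the proposal is incomplete by your own account.
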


By Theorem~\ref{theorem:theorem1.3}, it takes $O(n^6)$ time to detect
pyramids or jewels in $G$ and $\bar{G}$.  If $G$ or $\bar{G}$ contains
pyramids or jewels, then $G$ is not perfect.  By
Lemma~\ref{lemma:lemma6.6}, it suffices to consider the case that we
are given a set $\mathbbmsl{X}$ of $O(n^5)$ vertex subsets such that
if $G$ or $\bar{G}$ is not odd-hole-free, then $\mathbbmsl{X}$
contains a near cleaner of $G$ or $\bar{G}$.  By
Lemma~\ref{lemma:lemma6.5}, it takes overall $O(n^{8.377})$
time~\cite{CoppersmithW90,LeGall14,will12} to either obtain an odd
hole of $G$ or $\bar{G}$ or ensure that both $G$ and $\bar{G}$ are
odd-hole-free.

\subsubsection{Proving Theorem~\ref{theorem:theorem1.4}}
\label{subsubsection:subsubsection6.2.2}
The recent odd-hole detection algorithm of Chudnovsky, Scott, Seymour,
and Spirkl has seven $O(n^9)$-time bottleneck subroutines.  One is for
pyramid detection, which is eliminated by
Theorem~\ref{theorem:theorem1.3}.  The remaining six are in two
groups~\cite[\S4]{ChudnovskySSS19}. The first (respectively, second)
group handles the case that the longest $x$-gap (i.e., a path $D$ of
$C$ such that $G[D\cup \{x\}]$ is a hole of $G$) over all $C$-major
vertices $x$ for a shortest odd hole $C$ is shorter (respectively,
longer) than one half of $C$.  We give a two-phase algorithm to handle
both cases in $O(n^8)$ time.  For the first case, Phase~1 tries all
$O(n^5)$ choices of five vertices to obtain an approximate cleaner for
$C$, with which a shortest odd hole can be identified in $O(n^3)$ time
via Lemmas~\ref{lemma:lemma6.5} and~\ref{lemma:lemma6.8}.  For the
second case, Phase~2 tries all $O(n^6)$ choices of six vertices to
obtain an approximate cleaner for $C$, with which a shortest odd hole
can be identified in $O(n^2)$ time via Lemma~\ref{lemma:lemma6.9}.

\begin{lem}[{Chudnovsky, Scott, Seymour, and Spirkl~\cite[Theorem~3.4]{ChudnovskySSS19}}]
\label{lemma:lemma6.7}
Let $G$ be a jewel-free, pyramid-free, and $5$-hole-free graph.  Let
$C$ be a shortest odd hole in $G$.  If $x\in M_G(C)$, then there is an
edge of $C$ adjacent to each vertex of $M_G(C)\setminus N_G(x)$ in
$G$.
\end{lem}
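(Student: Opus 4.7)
The plan is to analyze the structure of $x$-gaps on the shortest odd hole $C$ using both the minimality of $C$ and the forbidden induced subgraphs (jewel, pyramid, $5$-hole) to identify a canonical edge $e \in E(C)$ determined by $x$ alone, and then to show that every $y \in M_G(C) \setminus N_G(x)$ is forced to be adjacent to an endpoint of this same edge $e$ by another appeal to the forbidden substructures.

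First I would enumerate the edge-lengths $\ell_1,\ldots,\ell_k$ of the $x$-gaps $P_1,\ldots,P_k$ around $C$. For each $\ell_i \geq 2$ the subgraph $G[V(P_i) \cup \{x\}]$ is a hole of length $\ell_i+2$, so $5$-hole-freeness forces $\ell_i \neq 3$, and the shortest-odd-hole property forces $\ell_i + 2 \geq |C|$ whenever $\ell_i$ is odd. Combining these inequalities with $\sum_i \ell_i = |C|$ and the parity of $|C|$, together with the hypothesis that $x$ is $C$-major (so $N_G(x) \cap V(C)$ is not contained in any $3$-path of $C$), I can rule out the degenerate configurations and deduce that every $x$-gap has length in $\{1,2\} \cup \{4,6,\ldots\}$, with an odd number of length-$1$ gaps. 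In particular every non-neighbor of $x$ on $C$ is an internal vertex of an even $x$-gap and is sandwiched between two $C$-neighbors of $x$. From this sandwiched structure I can single out a canonical edge $e = uv$ of $C$, namely an edge of $C$ incident to the ``central'' non-neighbor of $x$ forced by the configuration above.

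Second I would apply essentially the same structural analysis to any $y \in M_G(C) \setminus N_G(x)$: since $y$ is $C$-major in the same graph $G$, the $y$-gap lengths obey exactly the same constraints, and in particular no $y$-gap can have length $3$. This severely restricts where the $C$-neighbors of $y$ can sit, and the restriction is uniform in $y$ because it depends only on $C$ and the forbidden subgraphs, not on $y$. For any putative $y$ avoiding $\{u,v\}$ I would exhibit either three consecutive vertices on $C$ forming a $y$-gap of length $3$ (producing a forbidden $5$-hole), or an induced subdivision of $K_{2,3}$ (a theta) or a pyramid or a jewel built from $x$, $y$, the edge $uv$, and suitably chosen sub-paths of $C$, or a shorter odd hole constructed by rerouting part of $C$ through $x$ or $y$ and exploiting parity bookkeeping on the gap lengths. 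Any of these outcomes contradicts either the hypotheses on $G$ or the minimality of $|C|$, so $y$ must meet $\{u,v\}$, which is the desired uniform adjacency.

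The main obstacle will be the case analysis of the second step: several sub-cases arise from the number and lengths of the $x$-gaps and symmetrically of the $y$-gaps, and in each sub-case a different forbidden configuration has to be built. The most delicate sub-cases are those in which $y$ has a $y$-gap that nearly spans $C$, since here we must carry out precise parity bookkeeping combining $x$-gaps and $y$-gaps to produce an odd hole strictly shorter than $|C|$. Five-hole-freeness is indispensable because it eliminates the borderline length-$3$ gaps that would otherwise invalidate the canonical choice of $e$, and jewel- and pyramid-freeness are needed to rule out the non-shortening configurations in which $x$ and $y$ reattach to $C$ through multiple common or interleaved neighbors.
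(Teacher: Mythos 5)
First, note that the paper does not prove Lemma~\ref{lemma:lemma6.7} at all: it is imported verbatim from Chudnovsky, Scott, Seymour, and Spirkl~\cite[Theorem~3.4]{ChudnovskySSS19}, so there is no in-paper proof to compare against. Your proposal is therefore an attempt to reprove that theorem from scratch, and as it stands it has genuine gaps.

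The first gap is that your canonical edge $e=uv$ is never actually defined. Step~1 only yields that every $x$-gap has length in $\{1\}\cup\{2,4,6,\ldots\}$ (the parity and minimality argument there is fine, and ruling out one very long odd gap via majorness is fine). But nothing in that derivation prevents $x$ from having several even gaps of length at least $2$, in which case the non-neighbors of $x$ on $C$ are scattered over several disjoint subpaths and there is no ``central non-neighbor'' to point at. The published proof first establishes a much stronger structural fact --- essentially that a major vertex of a shortest odd hole in a jewel-free, pyramid-free, $5$-hole-free graph has at most one gap of length greater than $1$ --- and it is precisely in that step that pyramid- and jewel-freeness are used (by building a pyramid or jewel from $x$ and two long gaps). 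Only after this uniqueness is available can one choose the edge whose ends dominate $M_G(C)\setminus N_G(x)$. Your proposal invokes pyramid- and jewel-freeness only as a vague catch-all at the end, so the load-bearing step is missing. The second gap is that Step~2 is entirely deferred: enumerating possible contradictions (``a $5$-hole, or a theta, or a pyramid, or a jewel, or a shorter odd hole'') without constructing any of them in any case is not an argument, and one of the listed outcomes is not even a contradiction --- $G$ is not assumed theta-free, and a theta need not contain an odd hole or a $5$-hole, so producing a theta refutes nothing. To make this a proof you would need to (i) prove the single-long-gap lemma, (ii) define $uv$ from it, and (iii) carry out the interaction analysis between the $x$-gaps and the $y$-gaps for a major $y$ nonadjacent to both $u$ and $v$, which is the bulk of the work in~\cite{ChudnovskySSS19}.
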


A vertex set $X\subseteq V(G)$ is an {\em approximate cleaner} of $C$
if $X$ contains all $C$-major vertices and $X\cap V(C)\subseteq
\{c_1,c_2\}$ holds for two vertices $c_1$ and $c_2$ with
$d_C(c_1,c_2)=3$. The second statement of the next lemma reduces the
running time of an $O(n^8)$-time subroutine of Chudnovsky et
al.~\cite[Theorem~2.4]{ChudnovskySSS19} to $O(n^5)$.

\begin{lem}
\label{lemma:lemma6.8}
For any given vertex set $X$ of an $n$-vertex $m$-edge pyramid-free
jewel-free $5$-hole-free graph $G$,
\begin{enumerate*}[label=(\arabic*), ref=\arabic*]
\item  
\label{item1:lemma6.8}
it takes $O(n^3)$ time to obtain an odd hole of $G$ or ensure that $X$
is not an approximate cleaner of any shortest odd hole of $G$ and
\item  
\label{item2:lemma6.8}
it takes $O(mn^3)$ time to either obtain an odd hole of $G$ or ensure
that there is no shortest odd hole $C$ of $G$ such that an edge of $C$
is adjacent to all $C$-major vertices of $G$.
\end{enumerate*}
\end{lem}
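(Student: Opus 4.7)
For Statement~(1) the plan is to adapt the approach of Lemma~\ref{lemma:lemma6.5}, exploiting that an approximate cleaner pins down more of the target hole than a near cleaner does. First I would enumerate all $O(n^2)$ ordered pairs $(c_1,c_2)$ of vertices as candidates for $V(C)\cap X$, and for each pair set $H_{c_1,c_2}=G-(X\setminus\{c_1,c_2\})$. If $X$ is an approximate cleaner of a shortest odd hole $C$ with the guessed $c_1,c_2$, then $C$ is a clean shortest odd hole of $H_{c_1,c_2}$, so by Lemma~\ref{lemma:lemma6.4}\eqref{item1:lemma6.4} we have $d_{H_{c_1,c_2}}(c_1,c_2)=3$ and every subarc of $C$ realizes a shortest path in $H_{c_1,c_2}$. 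Since $|C|$ is odd and the short arc has the known length $3$, the long arc has even length $|C|-3$ and thus carries a unique midpoint vertex $y$ on $C$ satisfying $d_{H_{c_1,c_2}}(c_1,y)=d_{H_{c_1,c_2}}(c_2,y)=(|C|-3)/2$, whence the resulting cycle length $3+2\cdot(|C|-3)/2=|C|$ is odd automatically. After BFS from $c_1$ and from $c_2$ in $H_{c_1,c_2}$, I would scan for any such $y$ together with a shortest $3$-path $c_1{-}a{-}b{-}c_2$ and shortest $c_1y$- and $c_2y$-paths, and verify via the chord-free checks enforced by Lemma~\ref{lemma:lemma6.4} that their union is an induced odd hole.

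Statement~(2) then reduces to Statement~(1). Suppose some shortest odd hole $C$ of $G$ has an edge $uv\in E(C)$ adjacent to every $C$-major vertex, and let $u'$ and $v'$ be the neighbors of $u$ and $v$ on $C$ distinct from $v$ and $u$. Defining $X_{u,v}=(N_G(u)\cup N_G(v))\setminus\{u,v\}$, the set $X_{u,v}$ contains $M_G(C)$ by the hypothesis on $uv$, while $V(C)\cap X_{u,v}=\{u',v'\}$. Since $G$ is $5$-hole-free and $|C|$ is odd, $|C|\geq 7$, so the path $u'{-}u{-}v{-}v'$ of length $3$ is strictly shorter than the complementary arc of length $|C|-3\geq 4$, forcing $d_C(u',v')=3$. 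Thus $X_{u,v}$ is an approximate cleaner of $C$. I would loop over all $m$ edges $uv$ of $G$, build each $X_{u,v}$ in $O(n)$ time, and apply Statement~(1) to it, for a total of $O(mn^3)$.

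The hardest part will be meeting the $O(n^3)$ budget of Statement~(1): a fresh BFS on $H_{c_1,c_2}$ per pair costs $\Omega(n^2m)$ in total and overshoots $n^3$ for dense graphs. To get around this I would precompute, in $O(nm)=O(n^3)$ time, an all-sources BFS in the fixed graph $G-X$, and then for each pair $(c_1,c_2)$ patch in the contributions of the at-most-two restored vertices $c_1,c_2\in X$ in $O(n)$ amortized time to obtain the distances in $H_{c_1,c_2}$ actually needed. The subsequent midpoint scan and chord-free verification should likewise fit in $O(n)$ per pair, giving the stated $O(n^3)$ total.
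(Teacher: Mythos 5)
Your reduction of Statement~(2) to Statement~(1) is exactly the paper's: enumerate edges $b_1b_2$, set $X=(N_G(b_1)\cup N_G(b_2))\setminus\{b_1,b_2\}$, observe that induced-ness of $C$ forces $X\cap V(C)$ to be the two far neighbours $u',v'$, and use $5$-hole-freeness to get $|C|\geq7$ hence $d_C(u',v')=3$. No issues there.

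For Statement~(1) the high-level plan (guess two vertices $c_1,c_2$ as the cleaner's footprint on $C$, then locate a third "midpoint'' vertex and reassemble an odd hole from three shortest paths) matches the paper's in spirit, but there are two genuine gaps in your sketch. First, the complexity argument via ``patching'' is not justified and does not work as stated. When both $c_1,c_2\in X$, a shortest $c_1y$-path in $H_{c_1,c_2}=G-(X\setminus\{c_1,c_2\})$ may route through $c_2$, so you cannot recover $d_{H_{c_1,c_2}}(c_1,\cdot)$ from all-sources BFS in $G-X$ with an $O(n)$ local fix; the new vertices can create shortcuts that rewrite the distance function. The paper sidesteps this entirely by never using the distance in $H_{c_1,c_2}$: it works with $d(u,v)=d_{G(u,v)}(u,v)$ where $G(u,v)=G-(X\setminus\{u,v\})$, so that every shortest $uv$-path has its interior inside $G-X$. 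This lets one BFS from each source $u$ in $G-(X\setminus\{u\})$ (plus an $O(\deg v)$ patch only for $v\in X$) compute all needed $d(u,v)$ in $O(nm)=O(n^3)$ total. One can show $d_{H_{c_1,c_2}}(c_1,y)=\min\bigl(d(c_1,y),\,d(c_1,c_2)+d(c_2,y)\bigr)$, but this identity is exactly the nontrivial observation your proposal skips, and even granting it you would still need to track which branch was attained when you reconstruct the hole. Second, the chord-free verification is hand-waved. Lemma~\ref{lemma:lemma6.4} tells you the arcs of a clean shortest odd hole are shortest paths, but it does not by itself guarantee that three arbitrary shortest paths $P(c_1,c_2)$, $P(c_1,b)$, $P(c_2,b)$ glue into an induced cycle. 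The paper enforces this with the explicit tests $d(c_1,c_2)=3$, $d(c_1,N(c_2,b))>3$, $d(c_2,N(c_1,b))>3$, $d(c_1,b)=d(c_2,b)=d(c_1,N(b,c_2))-1=d(c_2,N(b,c_1))-1$, together with choosing the triple minimizing $d(c_1,b)$; that minimality is what rules out chords between $P(c_1,b)$ and $P(c_2,b)$ away from $b$. These conditions are the substance of the proof and need to be supplied, not inferred from the lemma.
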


\begin{proof}
We first show that Statement~\ref{item1:lemma6.8} implies
Statement~\ref{item2:lemma6.8}: For each edge $b_1b_2$ of $G$, we
apply Statement~\ref{item1:lemma6.8} with $X=(N_G(b_1)\cup
N_G(b_2))\setminus \{b_1,b_2\}$ in overall $O(mn^3)$ time.  If no odd
hole is detected, then report that there is no shortest odd hole $C$
of $G$ such that an edge of $C$ is adjacent to all $C$-major vertices
of $G$.  To see the correctness, observe that if $C$ is a shortest odd
hole of $G$ such that an edge $b_1b_2$ is adjacent to all $C$-major
vertices of $G$, then $(N_G(b_1)\cup N_G(b_2))\setminus \{b_1,b_2\}$
is an approximate cleaner of $C$. Thus, Statement~\ref{item2:lemma6.8}
holds.

It remains to prove Statement~\ref{item1:lemma6.8}.  It takes overall
$O(n^3)$ time to obtain for any distinct vertices $u$ and $v$ of $G$
that are connected in $G(u,v)=G-(X\setminus \{u,v\})$ (i) the length
$d(u,v)$ of a shortest $uv$-path $P(u,v)$ in $G(u,v)$ and (ii) the
neighbor $N(u,v)$ of $u$ in $P(u,v)$.  Assume $P(u,v)=P(v,u)$ for all
$u$ and $v$ without loss of generality.  If $u$ and $v$ are not
connected in $G(u,v)$, then let $d(u,v)=\infty$.  It takes overall
$O(n^3)$ time to determine whether $C=G[P(c_1,c_2)\cup P(c_1,b)\cup
  P(c_2,b)]$ is a $7$-hole or the following equation holds for any
distinct vertices $b$, $c_1$, and $c_2$ of $G$:
\begin{equation}
\label{equation:approximate-cleaner}
\begin{array}{rcl}
d(c_1,c_2)&=&3\\
d(c_1,N(c_2,b))&>&3\\
d(c_2,N(c_1,b))&>&3\\
 d(c_1,b)&=&d(c_2,b)\ \ =\ \ d(c_1,N(b,c_2))-1\ \ =\ \ d(c_2,N(b,c_1))-1.
\end{array}
\end{equation}
If Equation~\eqref{equation:approximate-cleaner} holds for distinct
vertices $b$, $c_1$, and $c_2$ with minimum $d(c_1,b)$, then $C$ is an
odd hole of $G$: Both $P(b,c_1)$ and $P(b,c_2)$ are chordless.  By
$d(c_1,b)=d(c_2,b)=d(c_1,N(b,c_2))-1=d(c_2,N(b,c_1))-1$ and the
minimality of $d(c_1,b)$, paths $P(b,c_1)-b$ and $P(b,c_2)-b$ are
disjoint and nonadjacent.  The interior of $P(c_1,c_2)$ is disjoint
from and nonadjacent to $P((c_1,b)-c_1)\cup (P(c_2,b)-c_2)$, since
otherwise $d(c_i,N(c_{3-i},b))\leq 3$ or $d(c_i,b)\geq
d(c_i,N(b,c_{3-i}))$ would hold for an $i\in\{1,2\}$. Thus, $C$ is an
odd hole of $G$.  It remains to show that if $X$ is an approximate
cleaner for a shortest odd hole $C$ of $G$, then there is a choice of
$b$, $c_1$, and $c_2$ such that
Equation~\eqref{equation:approximate-cleaner} holds or
$C^*=G[P(c_1,c_2)\cup P(c_1,b)\cup P(c_2,b)]$ is a $7$-hole.  Let
$c_1$ and $c_2$ be two vertices of $C$ with $X\cap
V(C)\subseteq\{c_1,c_2\}$).  Thus, $C$ is a clean shortest odd hole of
$H=G-(X\setminus \{c_1,c_2\})$.  By $d_C(c_1,c_2)=3$, $|V(C)|\geq 7$,
and Lemma~\ref{lemma:lemma6.4}, we have $d(c_1,c_2)=3$.  Let $b$ be
the vertex of $C$ with $d_C(b,c_1)=d_C(b,c_2)$.  Apply
Lemma~\ref{lemma:lemma6.4} on $C$ to replace the shortest $bc_1$-path
of $C$ with $P(b,c_1)$, replace the shortest $bc_2$-path of $C$ with
$P(b,c_2)$, and replace the shortest $c_1c_2$-path of $C$ with
$P(c_1,c_2)$.  We obtain the clean shortest odd hole $C^*$ of
$H$. Suppose $|V(C^*)|\geq 9$.  By $X\cap V(C)\subseteq \{c_1,c_2\}$,
$|V(C^*)|\geq 9$, and Lemma~\ref{lemma:lemma6.4}, we have
$d(c_1,b)=d(c_2,b)=d(c_1,N(b,c_2))-1=d(c_2,N(b,c_2))-1$.  By
Lemma~\ref{lemma:lemma6.4} and $|V(C^*)|\geq 9$, we have
$d(c_i,N(c_{3-i},b))> 3$ for both $i\in\{1,2\}$.  Thus,
Equation~\eqref{equation:approximate-cleaner} holds.
\end{proof}

\begin{lem}
\label{lemma:lemma6.9}
Let $d$, $b_1$, and $b_2$ be distinct vertices of an $n$-vertex graph
$G$.  Let each $T_i$ with $i\in\{1,2\}$ be a subtree of
$G-\{b_1,b_2\}$ containing $d$.  It takes $O(n^2)$ time to determine
whether there is a leaf $c_i$ of $T_i$ for each $i\in\{1,2\}$ such
that if each $P_i$ with $i\in\{1,2\}$ is the $dc_i$-path of $T_i$,
then $G[P_1\cup \{b_1,b_2\}\cup P_2]$ is an odd hole of $G$.
\end{lem}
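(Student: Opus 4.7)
First I would root both $T_1$ and $T_2$ at $d$ and fix the following characterisation. Writing $P_i=P_i(c_i)$ for the $dc_i$-path of $T_i$, the induced subgraph $G[V(P_1)\cup\{b_1,b_2\}\cup V(P_2)]$ is an odd hole iff (i) $b_1b_2\in E(G)$, (ii) each $P_i$ is an induced path of $G$, (iii) the only neighbour of $b_i$ in $V(P_1)\cup V(P_2)$ is $c_i$, (iv) $V(P_1)\cap V(P_2)=\{d\}$ and no edge of $G$ joins $V(P_1)\setminus\{d\}$ to $V(P_2)\setminus\{d\}$, and (v) $|V(P_1)|+|V(P_2)|+1\geq 5$ is odd. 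Condition~(i) is a single lookup. In $O(m)$ time I would next do a DFS on $T_2$ that precomputes, for every $v\in V(T_2)$, the length $|V(P_2(v))|$ together with a boolean $\mathrm{valid}_2(v)$ asserting that $P_2(v)$ is induced in $G$, that $b_1$ has no neighbour on $P_2(v)$, and that $v$ is the only neighbour of $b_2$ on $P_2(v)$. Inducedness is detected by a standard chord test against an array marking the currently active ancestors, and the two $b_j$-neighbour counts are maintained by increments on descent and decrements on ascent.

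Now I would perform a DFS on $T_1$ while maintaining an integer counter $\mathrm{bad}[u]$ for every $u\in V(G)\setminus\{d\}$ that equals the number of strict ancestors $v\neq d$ of the currently visited vertex in $T_1$ for which $v=u$ or $uv\in E(G)$. On descending to $v$ I increment $\mathrm{bad}[v]$ and $\mathrm{bad}[w]$ for every $w\in N_G(v)\setminus\{d\}$, reversing these updates on ascent; in parallel I maintain an analogous $\mathrm{valid}_1$ state. Whenever the DFS reaches a leaf $c_1$ of $T_1$ with $\mathrm{valid}_1(c_1)$ true, I run a BFS from $d$ on $T_2$ restricted to vertices with $\mathrm{bad}[\,\cdot\,]=0$ (with $d$ declared non-bad), and for every $T_2$-leaf $c_2$ the BFS reaches I test $\mathrm{valid}_2(c_2)$ together with condition~(v). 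If any $c_2$ passes I output YES; if every leaf of $T_1$ fails I output NO.

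Correctness follows because a $T_2$-leaf $c_2$ is BFS-reachable precisely when every vertex of $V(P_2(c_2))\setminus\{d\}$ is currently non-bad, which by the definition of $\mathrm{bad}$ is equivalent to the two halves of~(iv). Combined with $\mathrm{valid}_1(c_1)$, $\mathrm{valid}_2(c_2)$, (i), and~(v) this captures (i)--(v) exactly. For complexity, the DFS of $T_1$ together with all $\mathrm{bad}$ and $\mathrm{valid}_1$ updates costs $O(\sum_{v\in V(T_1)}\deg_G(v))=O(m)$ in total; at each of the at most $n$ leaves of $T_1$ the BFS on $T_2$ costs $O(n)$, giving an overall $O(m+n^2)=O(n^2)$ bound. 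The main obstacle is convincing oneself that the single counter $\mathrm{bad}[u]$, updated only on descent and ascent of $T_1$, simultaneously tracks the vertex-disjointness of $P_1$ and $P_2$ beyond $d$ and the absence of $G$-edges between their interiors, so that plain BFS-reachability in $T_2$ after excising the currently-bad vertices is the exact filter on candidate paths $P_2(c_2)$; once this is nailed down the rest is routine tree-DP bookkeeping.
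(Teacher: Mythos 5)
Your proposal is correct and matches the paper's approach essentially exactly: the paper's proof also does a DFS of $T_1$ rooted at $d$ while maintaining, for each $v\in V(T_2)\setminus\{d\}$, a set $S(v)$ (with a size counter) of current-path vertices $u$ with $u=v$ or $uv\in E(G)$ — this is precisely your $\mathrm{bad}[\cdot]$ restricted to $T_2$ — and at each leaf $c_1$ of $T_1$ scans $T_2$ for a $d$-to-leaf path avoiding vertices with $S(v)\ne\varnothing$. The only cosmetic difference is that the paper folds your conditions (i)–(iii) and (v) into a preprocessing step (splitting each $T_i$ by parity into $T_i'/T_i''$ and restricting to leaves whose attachments at $b_1,b_2$ are induced), whereas you carry them as $\mathrm{valid}_1,\mathrm{valid}_2$ flags and an explicit parity test; both yield $O(m+n^2)=O(n^2)$.
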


\begin{proof}
For each $i\in\{1,2\}$, let $T'_i$ (respectively, $T''_i$) be the
union of all $d$-to-leaf paths of $T_i$ with odd (respectively, even)
lengths.  In order for $G[P_1\cup \{b_1,b_2\}\cup P_2]$ to be an odd
hole, if $P_1$ is path of $T'_1$ (respectively, $T''_1$), then $P_2$
is a path of $T'_2$ (respectively, $T''_2$).  Therefore, it suffices
to work on the case that if each $c_i$ with $i\in\{1,2\}$ is a leaf of
$T_i$, then (1) the union of path $c_1b_1b_2c_2$ and the $dc_1$-path
$P_1$ of $T_1$ is an induced path of $G$, (2) the union of path
$c_1b_1b_2c_2$ and the $dc_2$-path $P_2$ of $T_2$ is an induced path
of $G$, and (3) $|E(P_1)|+|E(P_2)|$ is even.  It remains to show how
to determine in $O(n^2)$ time whether there is an induced
$c_1c_2$-path $P_1\cup P_2$.  For each vertex $v$ of $T_2-d$, let set
$S(v)$, implemented by an $n$-bit array associated with a counter for
$|S(v)|$, be initially empty.  Perform a depth-first traversal of
$T_1$.  When a vertex $u$ of $T_1-d$ is reached from its parent in
$T_1$, insert $u$ into $S(v)$ for each vertex $v$ of $T_2-d$ with
$u=v$ or $uv\in E(G)$ in overall $O(n)$ time.  When the traversal is
about to leave a vertex $u$ of $T_1-d$ for its parent in $T_1$, run
the following $O(n)$-time steps: If $u$ is a leaf $c_1$ of $T_1$, then
check whether there is a $dc_2$-path $P_2$ of $T_2$ for some leaf
$c_2$ of $T_2$ such that $S(v)=\varnothing$ holds for all vertices $v$
of $P_2-d$.  If there is such a $P_2$, then quit the traversal and
report an odd hole $G[P_1\cup \{b_1,b_2\}\cup P_2]$.  If $u$ is not a
leaf of $T_1$ or there is no such a $P_2$, then delete $u$ from $S(v)$
for each vertex $v$ of $T_2-d$ with $u\in S(v)$.  If the traversal
ends normally, then report negatively.  The overall running time is
$O(n^2)$.  To see the correctness, let $c_1$ be a traversed leaf of
$T_1$.  Let $c_2$ be an arbitrary leaf of $T_2$.  Let each $P_i$ with
$i\in\{1,2\}$ be the $dc_i$-path of $T_i$.  Consider the moment when
the traversal is about to leave $c_1$ for its parent in $T_1$.  By the
depth-first nature of the traversal, $S(v)\subseteq V(P_1)$ holds for
each vertex $v$ of $T_2-d$.  Therefore, $P_1\cup P_2$ is an induced
$c_1c_2$-path if and only if $S(v)=\varnothing$ holds for each vertex
$v$ of $P_2-d$.
\end{proof}

\begin{proof}[Proof of Theorem~\ref{theorem:theorem1.4}]
It suffices to prove Statement~\ref{item1:theorem1.4}.  By
Theorem~\ref{theorem:theorem1.3} and
Lemma~\ref{lemma:lemma6.8}\eqref{item1:lemma6.8}, and the fact that
jewels and $5$-holes are $O(n^6)$-time detectable, we may assume that
$G$ does not contain pyramids, jewels, $5$-holes, and clean shortest
odd holes.  By Lemma~\ref{lemma:lemma6.8}\eqref{item2:lemma6.8}, we
may further assume that $G$ does not contain any shortest odd hole $C$
such that an edge of $C$ is adjacent to all $C$-major vertices.  The
algorithm consists of two $O(m^2n^4)$-time phases. If none of them
identifies an odd hole of $G$, then report that $G$ is odd-hole-free.
Let $x$, $d$, $d_1$, $d_2$, $c_1$, $b_1$, and $b_2$ be vertices of $G$
that are not necessarily distinct.  Let
\begin{eqnarray*}
X_1&=&(N_G(b_1)\cup N_G(b_2))\setminus \{b_1,b_2\}\\
X_2&=&N_G(d_1)\cap N_G(d_2)\\
S_0&=&\{d_1,d_2\}\\
S_1&=&\{d_1,d_2,c_1\}\\
S_2&=&\{d_1,d_2,c_1,b_1\}.
\end{eqnarray*}
For each $k\in\{0,1,2\}$, let
\begin{displaymath}
H_k=G-((X_1\cup N_G(x))\setminus S_k),
\end{displaymath}
let $I_k$ consist of the internal vertices of all shortest
$d_1d_2$-paths of $H_k$, let $J_k$ consist of vertex $d$ and the
internal vertices of all shortest $dd_1$-paths and $dd_2$-paths of
$H_k$, let $Y_k=N_G(x)\cap N_G(I_k)$, and let $Z_k=N_G(x)\cap
N_G(J_k)$.  If no odd hole of $G$ is identified via the following two
phases, then report that $G$ is odd-hole-free.

Phase~1:
\begin{itemize}[leftmargin=*]
\item 
For each of the $O(m^2n)$ choices of vertices $x,d_1,d_2,b_1,b_2$ with
$x\in N_G(d_1)\cap N_G(d_2)$ and $b_1b_2\in E(G)$, apply
Lemma~\ref{lemma:lemma6.8}\eqref{item1:lemma6.8} with $X=(X_1\cup X_2
\cup Y_0)\setminus S_0$ in $O(n^3)$ time.

\item 
For each of the $O(m^2n)$ choices of vertices $x,c_1,b_1=d_1, b_2,
d_2$ with $x\in N_G(d_1)\cap N_G(d_2)$ and $b_1b_2\in E(G)$, apply
Lemma~\ref{lemma:lemma6.8}\eqref{item1:lemma6.8} on $X=(X_1\cup X_2
\cup Y_1)\setminus S_1$ in $O(n^3)$ time.

\item 
For each of the $O(m^2n)$ choices of vertices $x,c_1,b_1,b_2=d_1,d_2$
with $x\in N_G(d_1)\cap N_G(d_2)$ and $b_1b_2\in E(G)$, apply
Lemma~\ref{lemma:lemma6.8}\eqref{item1:lemma6.8} on $X=(X_1\cup X_2
\cup Y_2)\setminus S_2$ in $O(n^3)$ time.
\end{itemize}
Phase~2:
\begin{itemize}[leftmargin=*]
\item 
For each of the $O(m^2n^2)$ choices of vertices $x,d,d_1,d_2,b_1,b_2$
with $x\in N_G(d_1)\cap N_G(d_2)$ and $b_1b_2\in E(G)$, apply the
following procedure with $X=(X_1\cup X_2 \cup Z_0)\setminus S_0$ in
$O(n^2)$ time.

\item 
For each of the $O(m^2n^2)$ choices of vertices $x,d,c_1,b_1=d_1, b_2,
d_2$ with $x\in N_G(d_1)\cap N_G(d_2)$ and $b_1b_2\in E(G)$, apply the
following procedure on $X=(X_1\cup X_2 \cup Z_1)\setminus S_1$ in
$O(n^2)$ time.

\item 
For each of the $O(m^2n^2)$ choices of vertices
$x,d,c_1,b_1,b_2=d_1,d_2$ with $x\in N_G(d_1)\cap N_G(d_2)$ and
$b_1b_2\in E(G)$, apply the following procedure on $X=(X_1\cup X_2
\cup Z_2)\setminus S_2$ in $O(n^2)$ time.
\end{itemize}

Let $C_1$ (respectively, $C_2$) consist of the vertices $c$ such that
$cb_1b_2$ (respectively, $b_1b_2c$) is an induced path of $G$.  Let
$T^*_1$ be a tree that is the union of a shortest $dc$-path in
$G-(X\setminus \{c,d\})$ over all vertices $c\in C_1$.  Let each $T_i$
with $i\in\{1,2\}$ be a tree that is the union of a shortest
$dd_i$-path and a shortest $d_ic$-path in $G-(X\setminus \{c,d\})$
over all vertices $c\in C_i$.  Apply Lemma~\ref{lemma:lemma6.9} on
$d$, $b_1$, $b_2$, $T_1$ (respectively, $T^*_1$), and $T_2$ to
identify an odd hole of $G$ in $O(n^2)$ time.

The rest of the proof assumes that $C$ is a shortest odd hole of $G$
and shows that the above $O(m^2n^4)$-time algorithm outputs an odd
hole of $G$. Since $G$ does not contain any clean shortest odd hole,
$M_G(C)\ne\varnothing$.  For any $x\in M_G(C)$, a path $D$ of $C$ is
an {\em $x$-gap}~\cite{ChudnovskySSS19} if $G[D\cup \{x\}]$ is a hole
of $G$.  There is an $x\in M_G(C)$ with an $x$-gap or else each edge
of $C$ would be adjacent to all vertices of $M_G(C)$.  Let $x\in
M_G(C)$ maximize the length of a longest $x$-gap $D$.  Let $b_1b_2$ be
an edge of $C$ adjacent to each vertex of $M_G(C)\setminus N_G(x)$ as
ensured by Lemma~\ref{lemma:lemma6.7}, implying $M_G(C)\setminus
X_1\subseteq N_G(x)$.  Let $d_1$ and $d_2$ be the end-vertices of $D$.
By the maximality of $D$, each vertex of $M_G(C)\setminus X_2$ is
adjacent to the interior of $D$.  Thus, each vertex of
$M_G(C)\setminus (X_1\cup X_2)$ is adjacent to $x$ and the interior of
$D$.  Let $c_1$ and $c_2$ be the vertices such that $c_1b_1b_2c_2$ is
a path of $C$.  We have $k=|V(D)\cap \{b_1,b_2\}|\in\{0,1,2\}$.  If
$k=0$, then $S_k=\{d_1,d_2\}$ and the interior of $D$ is disjoint from
$c_1b_1b_2c_2$.  If $k=1$, then assume without loss of generality
$d_1=b_1$ and that $c_1$ is the neighbor of $d_1$ in $D$, implying
$S_k=\{c_1,b_1=d_1,d_2\}$.  If $k=2$, then assume without loss of
generality $d_1=b_2$, by $x\in N_G(b_1)\cup N_G(b_2)$ and that $b_1$
is the neighbor of $d_1$ in $D$, implying
$S_k=\{c_1,b_1,b_2=d_1,d_2\}$.

For each $k\in \{0,1,2\}$, $D$ is a path of $H_k$: We have $N_G(x)\cap
V(D)=\{d_1,d_2\}\subseteq S_k$.  By $X_1\cap V(D)=\{c_1,c_2\}\cap
V(D)\subseteq S_k$, we have $D\subseteq H_k$.  By $M_G(C)\cap
S_k=\varnothing$ and $M_G(C)\setminus X_1\subseteq N_G(x)$, we have
$M_G(C)\subseteq (X_1\cup N_G(x))\setminus S_k$, implying
$H_k\subseteq G-M_G(C)$.

Phase~1 handles the case $|E(D)|<0.5\cdot |E(C)|$: By
Lemma~\ref{lemma:lemma6.4}\eqref{item1:lemma6.4}, $D$ is a shortest
$d_1d_2$-path of $G-M_G(C)$, implying that $D$ is a shortest
$d_1d_2$-path of $H_k$.  Since no edge of $C$ is adjacent to all
$C$-major nodes of $G$, we have $|E(D)|\geq 3$ by the maximality of
$D$.  Thus, all internal vertices of $D$ are contained by $I_k$,
implying $M_G(C)\setminus (X_1\cup X_2)\subseteq Y_k$ by the
maximality of $D$.  Let $D^*$ be an arbitrary shortest $d_1d_2$-path
of $H_k$.  By $|E(D^*)|=|E(D)|$ and $H_k\subseteq G-M_G(C)$, $D^*$ is
a shortest $d_1d_2$-path of $G-M_G(C)$.  By
Lemma~\ref{lemma:lemma6.4}\eqref{item2:lemma6.4}, the graph $C^*$
obtained from $C$ by replacing $D$ with $D^*$ is a clean shortest odd
hole of $G-M_G(C)$.  Therefore, the interior of $D^*$ is disjoint from
and nonadjacent to $C-V(D)$, implying that $I_k$ is disjoint from and
nonadjacent to $C-V(D)$.  One can verify that $X=(X_1\cup X_2\cup
Y_k)\setminus S_k$ is either an approximate cleaner for $C$ with
$X\cap V(C)=\{c_1,c_2\}$ or $X\cap V(C)=\{c_2\}$.  Thus, Phase~1
outputs an odd hole of $G$.

Phase~2 handles the case $|E(D)|>0.5\cdot |E(C)|$: Let $d$ be a middle
vertex of $D$.  For each index $i\in\{1,2\}$, the $dd_i$-path $D_i$ of
$C$ is a shortest $dd_i$-path of $G-M_G(C)$ by
Lemma~\ref{lemma:lemma6.4}\eqref{item1:lemma6.4}, implying that $D_i$
is a shortest $dd_i$-path of $H_k$.  Thus, all internal vertices of
$D$ are contained by $J_k$, implying $M_G(C)\setminus (X_1\cup
X_2)\subseteq Z_k$.  Let each $D^*_i$ with $i\in\{1,2\}$ be an
arbitrary shortest $dd_i$-path of $H_k$.  By $|E(D^*_i)|=|E(D_i)|$ and
$H_k\subseteq G-M_G(C)$, $D^*_i$ is a shortest $dd_i$-path of
$G-M_G(C)$.  By Lemma~\ref{lemma:lemma6.4}\eqref{item2:lemma6.4}, the
graph $C^*$ obtained from $C$ by replacing $D$ with $D^*_1\cup D^*_2$
is a clean shortest odd hole of $G-M_G(C)$.  Therefore, the interior
of the $d_1d_2$-path $D^*_1\cup D^*_2$ is disjoint from and
nonadjacent to $C-V(D)$, implying that $J_k$ is disjoint from and
nonadjacent to $C-V(D)$.  One can verify that $X=(X_1\cup X_2\cup
Z_k)\setminus S_k$ is an approximate cleaner for $C$ with $X\cap
V(C)=\{c_1,c_2\}$ or $X\cap V(C)=\{c_2\}$.  We have $c_1\in C_1$ and
$c_2\in C_2$.
\begin{itemize}[leftmargin=*]
\item 
If $k=0$, then the $dc_1$-path $P_1$ of $T_1$ is the union of a
shortest $dd_1$-path $P'_1$ and a shortest $d_1c_1$-path $P''_1$ of
$G-(X\setminus \{c_1,d\})$ even if $c_1=d_1$.  By $M_G(C)\subseteq X$,
$X\cap V(C)\subseteq \{c_1,c_2\}$, and the fact that the shortest
$dd_1$-path and $d_1c_1$-path of $C$ are in $G-(X\setminus\{c_1,d\})$,
Lemma~\ref{lemma:lemma6.4}\eqref{item1:lemma6.4} implies that $P'_1$
(respectively, $P''_1$) is a shortest $dd_1$-path (respectively,
$d_1c_1$-path) of $G-M_G(C)$.

\item 
If $k\in \{1,2\}$, then $c_1$ is an internal vertex of $D$.  The
$dc_1$-path $P_1$ of $T^*_1$ is a shortest $dc_1$-path of
$G-(X\setminus \{c_1,d\})$.  By $M_G(C)\subseteq X$ and $X\cap
V(C)=\{c_2\}$, Lemma~\ref{lemma:lemma6.4}\eqref{item1:lemma6.4}
implies that $P_1$ is a shortest $dc_1$-path of $G-M_G(C)$.
\end{itemize}
The $dc_2$-path $P_2$ of $T_2$ is the union of a shortest $dd_2$-path
$P'_2$ and a shortest $d_2c_2$-path $P''_2$ of $G-(X\setminus \{c_2,d
\})$ even if $k=0$ and $c_2=d_2$.  By $M_G(C)\subseteq X$, $X\cap
V(C)\subseteq \{c_1,c_2\}$, and the fact that the shortest $dd_2$-path
and $d_2c_2$-path of $C$ are in $G-(X\setminus \{c_2,d\})$,
Lemma~\ref{lemma:lemma6.4}\eqref{item1:lemma6.4} implies that $P'_2$
(respectively, $P''_2$) is a shortest $dd_2$-path (respectively,
$d_2c_2$-path) of $G-M_G(C)$.  By applying
Lemma~\ref{lemma:lemma6.4}\eqref{item2:lemma6.4} at most four times on
$C$, $G[P_1\cup \{b_1,b_2\}\cup P_2]$ is a clean shortest odd hole of
$G-M_G(C)$.  Thus, Phase~2 outputs an odd hole of $G$.
\end{proof}

\subsection{Improved even-hole detection}
\label{subsection:subsection6.3}

Chang and Lu's algorithm consists of two $O(n^{11})$-time phases. The
first phase detects beetles in $O(n^{11})$ time, which is now reduced
to $O(n^7)$ time by Theorem~\ref{theorem:theorem1.5}.  The second
phase maintains a set $\TT$ of induced subgraphs of $G$ with the
property that if $G$ is even-hole-free, then so is each graph in $\TT$
until either $\TT$ becomes empty or an $H\in \TT$ is found to contain
even holes.  The initial $\TT$ consists of $O(n^5)$ graphs obtained
from guesses of (1) a $3$-path $P$ on a shortest even hole $C$ of $G$,
(2) an $X\subseteq V(G)$ that contains the major vertices of $C$
without intersecting $C$, and (3) a $Y\subseteq V(G)$ that contains
$N_G^{2,2}(C)$ (see~\S\ref{subsubsection:subsubsection6.3.2} for
definition) without intersecting $C$.  Each iteration of Phase~2 takes
$O(n^4)$ time to either ensure that an $H\in\TT$ is an extended clique
tree that contains even holes or replaces $H$ with $0$ (respectively,
$1$ and $2$) smaller graphs via ensuring that $H$ is an even-hole-free
extended clique tree (respectively, decomposing $H$ by a star-cutset
and decomposing $H$ by a $2$-join).  The guessed $P$ and $Y$ are
crucial in arguing that $H$ can be decomposed by a star-cutset without
increasing $|\TT|$, implying that each initial $H\in \TT$ incurs
$O(n)$ decompositions by star-cutsets.  Therefore, the overall time
for decompositions by star-cutsets is $O(n^{10})$, i.e., $O(n^5)$
times the initial $|\TT|$.  Each initial $H\in\TT$ incurs $O(n^2)$
decompositions by $2$-joins, implying that the overall time for
detecting even holes in extended clique trees and decompositions by
$2$-joins is $O(n^{11})$, i.e., $O(n^6)$ times the initial $|\TT|$.
We reduce the time of Phase~2 from $O(n^{11})$ to $O(n^9)$.  As in the
proof of Lemma~\ref{lemma:lemma6.10}, a factor of $n$ is removed by
reducing the initial $|\TT|$ from $O(n^5)$ to $O(n^4)$ via ignoring
$Y$ and the internal vertex of $P$.  Guessing only $X$ and the
end-vertices of $P$ does complicate the task of decomposing $H$ by a
star-cutset, but we manage to handle each decomposition by a
star-cutset in the same time bound (see the proof of
Lemma~\ref{lemma:lemma6.11}).  Another factor of $n$ is removed by
reducing the number of decompositions by $2$-joins incurred by each
initial $H\in \TT$ from $O(n^2)$ to $O(n)$ via carefully handling the
boundary cases (see the proof of Lemma~\ref{lemma:lemma6.12}).

Let $G$ be an $n$-vertex $m$-edge graph.  A {\em major
  vertex}~\cite{ChudnovskyKS05} of an even hole $C$ is a $v\in
V(G)\setminus V(C)$ with at least three distinct vertices in
$N_G(v)\cap V(C)$ that are pairwise nonadjacent in $G$.  Let $M_G(C)$
consist of the major vertices of an even hole $C$.  A hole without
major vertices is {\em clear}.  A {\em $v_1v_2$-hole} of $G$ is a
clear shortest even hole $C$ of $G$ such that $v_1$ and $v_2$ are the
end-vertices of a $3$-path of $C$.  A {\em tracer} of $G$ is a triple
$\langle H,v_1,v_2\rangle$ such that $v_1$ and $v_2$ are vertices of
an induced subgraph $H$ of $G$.  A tracer~$\langle H,v_1,v_2\rangle$
of $G$ is {\em lucky} if $H$ contains a $v_1v_2$-hole.  A set $\TT$ of
tracers of $G$ is {\em reliable} if $\TT$ satisfies the condition that
if $G$ contains even holes, then $\TT$ contains lucky tracers.

\begin{lem}
\label{lemma:lemma6.10}
If $G$ is beetle-free, then it takes $O(m^2n^2)$ time to either ensure
that $G$ contains even holes or obtain a reliable set of $O(mn^2)$
tracers of $G$.
\end{lem}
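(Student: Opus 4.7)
The plan is to enumerate, over all edges $ab\in E(G)$ and all ordered pairs $(v_1,v_2)$ of vertices of $G$, the tracer $\langle G-X,\,v_1,v_2\rangle$ with $X=(N_G(a)\cap N_G(b))\setminus\{v_1,v_2\}$. There are $O(m)$ choices for $ab$ and $O(n^2)$ for $(v_1,v_2)$, giving $O(mn^2)$ tracers. Each $N_G(a)\cap N_G(b)$ is computable in $O(m)$ time via a marking array, so the total construction time is $O(m^2n^2)$. The edge $ab$ is intended to ``trap'' the major vertices of some shortest even hole $C$ inside $N_G(a)\cap N_G(b)$, while $(v_1,v_2)$ specifies the end-vertices of a $3$-path of $C$ that sits in this common neighborhood. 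As a preamble I would also run a standard $O(m^2)$-time test for induced $4$-cycles; if one is found, report ``$G$ contains even holes'' and skip the enumeration.

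Reliability of the tracer set rests on the following structural claim: for every shortest even hole $C$ of a beetle-free graph $G$, there exist an edge $ab\in E(G)$ and a $3$-path $v_1wv_2$ of $C$ such that $M_G(C)\subseteq N_G(a)\cap N_G(b)$ while $V(C)\cap (N_G(a)\cap N_G(b))\subseteq\{v_1,v_2\}$. Granting this claim, the matching triple $(ab,v_1,v_2)$ produces an $X$ containing every major vertex of $C$ yet disjoint from $V(C)$. Hence $C$ is an induced hole of $H=G-X$; any vertex $v\in V(H)\setminus V(C)$ that is major for $C$ in $H$ would satisfy $N_H(v)\cap V(C)\subseteq N_G(v)\cap V(C)$, so $v$ would also be major for $C$ in $G$ and thus lie in $X$, a contradiction. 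So $C$ is clear in $H$. Since any even hole of $H$ shorter than $C$ would also be an even hole of $G$, the hole $C$ is a shortest even hole of $H$, and $\langle H,v_1,v_2\rangle$ is therefore lucky.

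The main obstacle is the structural claim itself. Chang and Lu~\cite{ChangL15} prove a weaker form by additionally guessing the full $3$-path $v_1wv_2$ and an auxiliary set $Y$ approximating $N_G^{2,2}(C)$, which is exactly what inflates their initial tracer set to $O(n^5)$. To remove the extra factor of $n$, the argument must show, from beetle-freeness alone, that any candidate major vertex $u\notin N_G(a)\cap N_G(b)$ can be combined with three pairwise non-adjacent neighbors of $u$ on $C$ and a diamond supplied by $a,b$ together with two nearby vertices of $C$ to produce a beetle subgraph of $G$, contradicting the hypothesis. A symmetric argument rules out a third vertex of $V(C)$ beyond $\{v_1,v_2\}$ lying inside $N_G(a)\cap N_G(b)$ without also yielding a beetle. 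The case analysis is localized to configurations near a single edge $ab$ and a shortest even hole, and uses only induced-subgraph arguments, with no need for the global decomposition theorems invoked later in Phase~2.
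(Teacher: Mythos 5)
There is a genuine gap: your entire reliability argument rests on a structural claim that you explicitly leave unproved — that every shortest even hole $C$ of a beetle-free graph admits an edge $ab$ and a $3$-path $v_1wv_2$ of $C$ with $M_G(C)\subseteq N_G(a)\cap N_G(b)$ and $V(C)\cap\bigl(N_G(a)\cap N_G(b)\bigr)\subseteq\{v_1,v_2\}$. This claim is not only missing a proof, it is doubtful as stated. The relevant dichotomy (Lemma~\ref{lemma:lemma6.14}, from Chang--Lu) says that, after ruling out $4$-holes, either $M_G(C)\subseteq N_G(v)$ for some single vertex $v$ of $C$, or $G[M_G(C)]$ is a clique. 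In the first branch, $M_G(C)$ need not be a clique, and there is no reason why some \emph{edge} $ab$ should have all of $M_G(C)$ in its common neighborhood; two non-adjacent major vertices both seeing $v$ might have no common neighbor besides $v$, and $v$ by itself is not an edge. In the clique branch, if you try to take $a,b\in M_G(C)$ then $a\notin N_G(a)$, so $a\notin N_G(a)\cap N_G(b)$ and the inclusion $M_G(C)\subseteq N_G(a)\cap N_G(b)$ fails; taking $a,b$ outside $M_G(C)$ again requires an edge whose common neighborhood swallows the whole clique, which is not guaranteed. So a single ``trap inside $N_G(a)\cap N_G(b)$'' parameterization does not cover both cases, and the ``combine with a diamond to produce a beetle'' plan you sketch for closing the gap is not substantiated.

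The paper takes a different and cleaner route that you should compare against. It first tests for $4$-holes in $O(m^2)$ time, then invokes Lemma~\ref{lemma:lemma6.13} (da~Silva--Vu\v{s}kovi\'{c}) to either report an even hole or enumerate all $O(m)$ maximal cliques. It then forms two families of tracers: (i) $\langle G-(N_G(v)\setminus\{v_1,v_2\}),v_1,v_2\rangle$ over triples $v,v_1,v_2$, which handles the branch $M_G(C)\subseteq N_G(v)$ by taking $v_1,v_2$ to be the $C$-neighbors of $v$; and (ii) $\langle G-V(K),v_1,v_2\rangle$ over maximal cliques $K$ and vertex pairs, which handles the clique branch by choosing the maximal clique containing $M_G(C)$ and observing $V(K)\cap V(C)=\varnothing$. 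The reliability of this tracer set is then an almost-immediate consequence of Lemma~\ref{lemma:lemma6.14}. The count $O(n^3)+O(mn^2)=O(mn^2)$ gives the stated bound, and the bottleneck is computing the maximal cliques. If you want to pursue your edge-parameterization, you would at minimum need to (a) prove your structural claim, including the neighborhood branch where $M_G(C)$ is not a clique, and (b) explain how beetle-freeness forces the existence of such an edge $ab$; as it stands these steps are absent, while the paper sidesteps them entirely via the two-case decomposition.
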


Subset $S$ of $V(H)$ is a {\em star-cutset}~\cite{Chvatal85} of a
graph $H$ if $S\subseteq N_H[s]$ holds for an $s\in S$ and the number
of connected components of $H-S$ is more than that of $H$.

\begin{lem}
\label{lemma:lemma6.11}
For any tracer $T$ of a beetle-free graph $G$, it takes~$O(mn^3)$ time to
complete one of the following tasks.  
Task~1: ensure that $G$ contains even holes.
Task~2: ensure that $T$ is not lucky.  
Task~3: obtain a star-cutset-free induced subgraph $H$ of $G$ such
that if $T$ is lucky, then $H$ contains even holes.
\end{lem}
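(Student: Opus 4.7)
The plan is iterative star-cutset elimination guided by the anchors $v_1, v_2$. Let $H_0$ be the graph in the input tracer $T=\langle H_0, v_1, v_2\rangle$, and set $H := H_0$. I repeat: search for a star-cutset $S=N_H[s]$ in $H$; if none exists, return $H$ and declare Task~3; otherwise use the anchors to identify a unique connected component $K$ of $H-S$ that must contain the ``bulk'' of any postulated $v_1v_2$-hole of $H$, replace $H$ by $H[S\cup V(K)]$, and loop. If I stumble on a concrete even hole during this process, I halt with Task~1; if the anchors make the existence of any $v_1v_2$-hole in $H$ impossible, I halt with Task~2. Since $|V(H)|$ strictly decreases each round, there are at most $n$ rounds.

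The structural principle underlying the choice of $K$ is that any $v_1v_2$-hole $C$ of $H$ is clear: for $s\notin V(C)$ one has $|N_H(s)\cap V(C)|\le 2$ with the two $C$-neighbors lying on a common $3$-subpath of $C$, while for $s\in V(C)$ the two $C$-neighbors of $s$ are consecutive on $C$. Consequently $V(C)\setminus S$ lies in a single connected component of $H-S$. I split by how many of $v_1, v_2$ belong to $S$. When both lie outside $S$, they must occupy the same component, which either pins down $K$ or yields Task~2. When exactly one lies in $S$, the other pins down $K$. When both lie in $S$, a short argument rules out $s\in\{v_1,v_2\}$ (otherwise the edge $sv_{3-i}$ would be a chord of $C$) and forces $v_1, v_2\in N_H(s)$; I then enumerate the at most $O(n)$ candidate middle vertices $w$ of the $3$-subpath $v_1 w v_2$ of $C$, each determining the correct component $K$ as the unique component of $H-S$ containing a neighbor of both $v_1$ and $v_2$ that is consistent with $w$.

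Per round, scanning every candidate centre $s\in V(H)$ and testing whether $N_H[s]$ is a cutset by a BFS on $H-N_H[s]$ costs $O(mn)$; the component-selection logic is $O(mn)$; and the ``both anchors in $S$'' subcase costs $O(mn)$ across its $O(n)$ candidate middle vertices. Multiplying by $O(n)$ rounds yields the target $O(mn^3)$. The main obstacle I anticipate is the structural claim that replacing $H$ by $H[S\cup V(K)]$ retains a $v_1v_2$-hole whenever the original $H$ contained one: this is where the beetle-freeness of $G$ is crucial, ruling out configurations in which a $v_1v_2$-hole of $H$ would have to visit two distinct components of $H-S$, and where the corner cases of one or both anchors landing on the cutset itself must be handled without silently discarding the only surviving $v_1v_2$-hole.
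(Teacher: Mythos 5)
Your overall architecture — iterate, detect full star-cutsets $S=N_H[s]$, use the anchors to pick a component, shrink $H$ — matches the paper's in spirit, but there are two genuine gaps.

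First, your stopping condition is too weak. You declare Task~3 as soon as $H$ has no full star-cutset $N_H[s]$, but a graph with no full star-cutset can still have a (non-full) star-cutset. The paper invokes Chv\'atal's criterion (Lemma~\ref{lemma:lemma6.15}): star-cutset-freeness requires \emph{both} the absence of full star-cutsets \emph{and} the absence of dominated vertices. Accordingly, the paper's Claim~1 interleaves a dominated-vertex elimination phase, where the subtle point is that deleting a dominated vertex $y$ might destroy the only $v_1v_2$-hole $C$ containing $y$; the paper uses Lemma~\ref{lemma:lemma6.17} together with beetle-freeness to show that the dominator $x$ must lie in $N_H^3(C)$, so $C-y+x$ is still a clear shortest even hole, and the anchors can be relocated. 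Your proposal never addresses dominated vertices, so the subgraph you return may not be star-cutset-free.

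Second, your structural claim is wrong. You assert that for $s\notin V(C)$ with $C$ a clear shortest even hole, $|N_H(s)\cap V(C)|\le 2$ with both $C$-neighbors on a common $3$-subpath, and conclude $V(C)\setminus S$ sits inside one component of $H-S$. Clearness only rules out \emph{major} vertices; by Lemma~\ref{lemma:lemma6.17} a non-major $s$ adjacent to $C$ may lie in $N_H^3(C)$ (three consecutive $C$-neighbors), $N_H^{1,1}(C)$, or $N_H^{2,2}(C)$. In the $N_H^{1,1}$ and $N_H^{2,2}$ cases $C[N_H(s,C)]$ has two components, and $C-S$ can indeed straddle two components of $H-S$, which is precisely what your shrinking step cannot handle. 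The paper turns this into an opportunity rather than an obstruction: those two cases yield the explicit Conditions~1 and~2 in the proof, each of which directly produces a theta centered at $s$ and hence an even hole (Task~1), after which the ``one component'' reasoning applies only to the surviving cases. Without this case split you would silently discard a $v_1v_2$-hole when it crosses the cutset, violating the invariant that shrinking preserves luckiness. (As a side note, your per-round cost estimate of $O(mn)$ times $O(n)$ rounds is $O(mn^2)$, not the stated $O(mn^3)$; the paper needs the extra factor precisely for the dominated-vertex sweep you omit.)
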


The next lemma improves upon the $O(mn^4)$-time algorithm of
Chang and Lu.~\cite[Lemma~4.2]{ChangL15}.

\begin{lem}
\label{lemma:lemma6.12}
It takes $O(mn^3)$ time to detect even holes in an $n$-vertex $m$-edge
star-cutset-free graph.
\end{lem}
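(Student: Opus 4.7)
The plan is to follow the $2$-join decomposition approach of Chang and Lu~\cite[Lemma~4.2]{ChangL15} but to save a factor of~$n$ in two places: reducing the number of recursive $2$-join decompositions from $O(n^2)$ to $O(n)$, and implementing each recursive step in $O(mn^2)$ time rather than $O(mn^3)$.

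First I would handle the base case. In $O(mn)$ time, check whether $G$ is an extended clique tree, and if so, detect even holes in it using the standard routine for this basic class, which can be implemented in $O(mn^2)$ time. Otherwise, since $G$ is star-cutset-free and non-basic, a standard structural theorem guarantees a $2$-join $(X_1,X_2)$, which I would find in $O(mn^2)$ time. From this $2$-join I construct two blocks $G_1, G_2$, where each $G_i$ is the induced subgraph on $X_i$ augmented by a marker path of length and parity chosen so that $G$ contains an even hole if and only if some $G_i$ does. A parity-preservation lemma of Chang and Lu ensures correctness: any even hole crossing the $2$-join uses exactly one ``rung'' on each side and hence corresponds to an even hole in exactly one block after the marker replacement.

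The main obstacle, and where I expect the factor-$n$ saving to come from, is bounding the total number of recursive calls by $O(n)$ rather than the $O(n^2)$ of Chang and Lu. Their bound is loose because certain \emph{boundary configurations} of the $2$-join---for example, when one side of the $2$-join has very small attachment, or when the marker path has to be attached in a degenerate way that reuses existing vertices---shrink the problem only by a constant and are charged only a weak potential. I would handle each such boundary case explicitly by a dedicated linear-time subroutine that either resolves the block directly or absorbs it into a structural step that strictly decreases a stronger monotone potential; the candidate potential is the number of vertices of $G$ that have not yet been ``pinned'' by a $2$-join side, which is at most $n$. With this dichotomy, every non-trivial recursive call decreases the potential by at least one, so the recursion depth (and the total number of blocks produced across the entire recursion) is $O(n)$.

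Combining the two savings, the total time is $O(n) \cdot O(mn^2) = O(mn^3)$. The delicate part to verify is that star-cutset-freeness is preserved, up to the boundary-case handling, under the marker replacement; this should follow from the fact that a $2$-join in a star-cutset-free graph has both sides structurally ``rich'' enough that the marker path cannot itself become a star-cutset in the resulting block, but checking this carefully for every boundary configuration is the main technical content.
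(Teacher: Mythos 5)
Your high-level plan matches the paper's: use parity-preserving $2$-join decompositions (Trotignon--Vu\v{s}kovi\'{c}, Chang--Lu), treat the extended clique tree as the base case, and cut the number of recursive steps from $O(n^2)$ to $O(n)$ by isolating the degenerate $2$-joins that would otherwise make only constant progress. You also correctly intuit that the ``boundary configurations'' are where the saving must come from, and that star-cutset-freeness of the blocks is a known fact (Lemma~\ref{lemma:lemma6.21}). So the route is the same; the question is whether your accounting actually delivers $O(n)$ blocks.

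Here is the gap. You argue: every non-degenerate decomposition strictly decreases a monotone potential, hence recursion depth is $O(n)$, \emph{hence} the total number of blocks is $O(n)$. That last implication does not follow. Each non-degenerate step replaces one graph by \emph{two} blocks, so a binary recursion tree with strict potential decrease at every node still admits $2^{\Theta(n)}$ nodes. To get a linear bound you need an \emph{additive} constraint on the pair of children, not just a max constraint. This is exactly what the paper proves: with $h(H)=|V(H)|+|W(H)|$ (where $W(H)$ is the set of vertices of degree $\geq 3$), either a $6$-hole is directly exhibited (so the recursion stops), or
\begin{align*}
h(H_1)+h(H_2) &\leq h(H)+14, \\
\max\{h(H_1),h(H_2)\}&\leq h(H)-1,
\end{align*}
and the recurrence $f(h)\leq \max\{1+f(h_1)+f(h_2) : h_1,h_2\leq h-1,\ h_1+h_2\leq h+14\}$ resolves to $f(h)\leq\max(h-15,0)=O(n)$ by induction. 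Your proposed potential (``vertices not yet pinned by a $2$-join side'') is never shown to satisfy anything like the first inequality, and without it the linear bound is unproven. The second thing your proposal leaves open is what the ``dedicated linear-time subroutine'' for the boundary cases actually does: the paper's concrete resolution is a short case analysis showing that when $h(H_i)=h(H)$ (i.e., $|V_j|=p_j\in\{4,5\}$), one of $H[X_i\cup V_j]$, $H[Y_i\cup V_j]$, $H[X_i\cup Y_i\cup V_j]$ is a $6$-hole, so an even hole can be reported in $O(1)$ time. Without naming some concrete witness like this, ``absorb it into a structural step'' is not an argument. A smaller note: the per-call cost for the extended-clique-tree base case is $O(n^4)$ (Lemma~\ref{lemma:lemma6.23}), not $O(mn^2)$; the paper handles this by observing that the total vertex count over all blocks stays $O(n)$, so all base-case calls sum to $O(n^4)=O(mn^3)$.
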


We first reduce Theorem~\ref{theorem:theorem1.6} via
Theorem~\ref{theorem:theorem1.5} to Lemmas~\ref{lemma:lemma6.10},
\ref{lemma:lemma6.11}, and \ref{lemma:lemma6.12}.

\begin{proof}[Proof of Theorem~\ref{theorem:theorem1.6}]
By Theorem~\ref{theorem:theorem1.5}, it takes $O(m^2n^3)$ time to
detect beetles in $G$.  If $G$ contains beetles, then $G$ contains
even holes.  Otherwise, we apply Lemma~\ref{lemma:lemma6.10} on the
beetle-free $G$ in $O(m^2n^2)$ time.  If $G$ is ensured to contain
even holes, then the theorem is proved.  Otherwise, we have a reliable
set $\TT$ of $O(mn^2)$ tracers of $G$.  It takes overall $O(m^2n^5)$
time to apply Lemma~\ref{lemma:lemma6.11} on all $T\in\TT$.  If Task~1
is completed for any $T\in \TT$, then $G$ contains even holes.  If
Task~2 is completed for all $T\in \TT$, then $G$ is even-hole-free.
Otherwise, we apply Lemma~\ref{lemma:lemma6.12} in overall $O(m^2n^5)$
time on each of the $O(mn^2)$ star-cutset-free induced subgraphs $H$
of $G$ corresponding to the tracers $T\in \TT$ for which Task~3 is
completed.  If an $H$ contains even holes, then so does~$G$.
Otherwise, $G$ is even-hole-free.
\end{proof}

Lemmas~\ref{lemma:lemma6.10}, \ref{lemma:lemma6.11}, and
\ref{lemma:lemma6.12} are proved in
\S\ref{subsubsection:subsubsection6.3.1},
\S\ref{subsubsection:subsubsection6.3.2}, and
\S\ref{subsubsection:subsubsection6.3.3}, respectively.

\subsubsection{Proving Lemma~\ref{lemma:lemma6.10}}
\label{subsubsection:subsubsection6.3.1}

\begin{lem}[{da~Silva and Vu\v{s}kovi\'{c}~{\cite{daSilvaV07}}}]
\label{lemma:lemma6.13}
Let $G$ be an $n$-vertex $m$-edge graph.  It takes $O(mn^2)$ time to
either ensure that $G$ contains even holes or obtain all $O(m)$
maximal cliques of $G$.
\end{lem}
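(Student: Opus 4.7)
The plan is to invoke the following structural theorem of da~Silva and Vu\v{s}kovi\'{c}: every even-hole-free graph contains a vertex whose (open) neighborhood induces a chordal graph. The contrapositive drives a peeling algorithm. Initialize $G' = G$. At each iteration, scan every vertex $v$ of $G'$ and test whether $G'[N_{G'}(v)]$ is chordal using a standard linear-time chordal recognition algorithm~\cite{RoseTL76}. If no such $v$ exists, then $G'$ is not even-hole-free by the contrapositive, and since $G'$ is an induced subgraph of $G$, so is $G$; report that $G$ contains an even hole and halt. Otherwise, pick any such vertex $v_i$.

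Having $v_i$ with $G'[N_{G'}(v_i)]$ chordal, I would next enumerate every maximal clique of $G$ that contains $v_i$ but none of the previously removed vertices $v_1,\ldots,v_{i-1}$. These are exactly the sets $\{v_i\}\cup K$, where $K$ ranges over the maximal cliques of the chordal graph $G'[N_{G'}(v_i)]$ (if $v_i$ is isolated in $G'$, output the singleton $\{v_i\}$). Using a perfect elimination ordering of $G'[N_{G'}(v_i)]$, this enumeration runs in $O(d_{G'}(v_i)^{2})$ time and outputs at most $\max(1,d_{G'}(v_i))$ cliques, since a chordal graph on $k$ vertices has at most $k$ maximal cliques. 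Then delete $v_i$ from $G'$ and iterate. Every maximal clique of $G$ is output exactly once, namely at the iteration processing its earliest member in the elimination order.

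For the count of maximal cliques, summing the per-iteration bound gives $\sum_i \max(1,d_{G'_i}(v_i)) \leq n + \sum_i d_{G'_i}(v_i) \leq n+m = O(m)$, which justifies the $O(m)$ figure in the statement. For the running time, chordality testing at one vertex $v$ of $G'_i$ costs $O(d_{G'_i}(v)^{2})$; summed over $v$ this is $O\!\left(\Delta(G'_i)\cdot 2|E(G'_i)|\right)=O(mn)$ per iteration, hence $O(mn^2)$ across the at most $n$ iterations. The enumeration work is dominated by the chordality tests.

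The main obstacle is the structural theorem itself---the fact that a non-chordal neighborhood at every vertex forces an even hole---which is the deep content and is used as a black box citing da~Silva and Vu\v{s}kovi\'{c}. A secondary issue is that one must avoid false positives: since we never explicitly exhibit the even hole in the ``report'' branch, the correctness of that branch rests entirely on the structural theorem applied to the induced subgraph $G'$; the inheritance of even-hole-freeness under vertex deletion ensures that the certificate transfers back to $G$. No dynamic data structures are needed, as the loose $O(mn^2)$ budget accommodates recomputing chordality from scratch each iteration.
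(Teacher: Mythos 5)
The paper itself gives no proof of Lemma~\ref{lemma:lemma6.13}: it is imported verbatim from da~Silva and Vu\v{s}kovi\'{c}~\cite{daSilvaV07}, whose ``triangulated neighborhoods'' theorem and peeling scheme are exactly what you reconstruct. So your overall route --- repeatedly find a vertex with chordal neighborhood, harvest the maximal cliques through it, delete it, and declare an even hole if no such vertex exists --- is the intended one, and your accounting of the clique count ($\sum_i \max(1,d_{G'_i}(v_i)) = O(n+m)$) and of the running time is sound.

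There is, however, one genuine gap in the harvesting step. You assert that the maximal cliques of $G$ containing $v_i$ and avoiding $v_1,\ldots,v_{i-1}$ ``are exactly the sets $\{v_i\}\cup K$'' for $K$ a maximal clique of $G'[N_{G'}(v_i)]$. Only one inclusion holds: every such maximal clique of $G$ does arise this way (this gives your ``output exactly once'' claim and the completeness of the list), but a set $\{v_i\}\cup K$ produced at iteration $i$ is maximal only in the current peeled graph $G'_i$, not necessarily in $G$ --- a previously deleted vertex $v_j$ with $j<i$ may be adjacent to all of $\{v_i\}\cup K$. Concretely, if $G$ is a triangle $v_1v_2v_3$ processed in that order, iteration $2$ outputs $\{v_2,v_3\}$, which is not a maximal clique of $G$. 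As written your algorithm therefore returns a superset of the maximal cliques of $G$, not ``all $O(m)$ maximal cliques of $G$.'' The fix is cheap and fits the budget: post-filter each of the $O(m)$ candidates $C$ for maximality in $G$ (e.g., with an adjacency matrix, check in $O(n|C|)$ time whether some vertex outside $C$ is complete to $C$); since $\sum_C |C| = O(\sum_i d_{G'_i}(v_i)^2) = O(mn)$, the filtering costs $O(mn^2)$ overall. You should add this step (or argue maximality differently); everything else in the proposal stands.
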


\begin{lem}[{Chang and Lu~\cite[Lemma~3.4]{ChangL15}}]
\label{lemma:lemma6.14}
If $C$ is a shortest even hole of a $4$-hole-free graph $G$, then
either $M_G(C)\subseteq N_G(v)$ holds for a vertex $v$ of $C$ or
$G[M_G(C)]$ is a clique.
\end{lem}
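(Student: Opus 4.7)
The plan is a proof by contradiction: assume $G[M_G(C)]$ is not a clique (so there exist nonadjacent $x, y \in M_G(C)$) while no single vertex of $C$ dominates $M_G(C)$, and derive either a $4$-hole of $G$ or an even hole of $G$ of length strictly less than $|V(C)|$, both of which contradict the hypotheses.

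First I would pin down the structure of $N_G(x) \cap V(C)$ for a single major vertex $x$. The $4$-hole-free assumption applied to the $4$-cycle $x\text{-}a\text{-}c\text{-}b\text{-}x$ (when $a, b \in N_G(x) \cap V(C)$ are at cyclic distance $2$ on $C$ with $c$ between them) forces $c \in N_G(x)$; hence ``gaps'' of $C$ between consecutive neighbors of $x$ have length $1$ or $\geq 3$. For a gap $P$ of length $\geq 3$, the subgraph $x$ together with $P$ and the bracketing neighbors forms an induced hole of length $|E(P)|+2$, and minimality of $|V(C)|$ constrains the parity and size of each such gap. The definition of majority (three pairwise nonadjacent neighbors on the chordless $C$) then forces $\geq 3$ ``runs'' of consecutive neighbors of $x$ on $C$, separated by gaps of length $\geq 3$.

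Second I would analyze how the neighborhoods of $x$ and $y$ on $C$ interact. Since $x$ and $y$ are nonadjacent, for any subpath $Q$ of $C$ whose endpoints are a neighbor of $x$ and a neighbor of $y$ and whose interior avoids $N_G(x) \cup N_G(y)$, the cycle $x\text{-}Q\text{-}y\text{-}x$ is an induced cycle of length $|E(Q)|+2$. Minimality of $|V(C)|$ forces each such cycle to be odd or have length $\geq |V(C)|$, which tightly restricts how the runs of $x$ interleave with those of $y$ around $C$. A case analysis of the admissible interleavings either directly produces an even hole shorter than $C$ (by closing off a suitable subpath of $C$ with the edges $xa$ and $yb$), or pins down a vertex $v \in V(C)$ that is adjacent to both $x$ and $y$ and, by the rigidity of the configuration, must be adjacent to every major vertex of $C$.

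Third, to reach the final contradiction, I would extend the argument from the pair $\{x, y\}$ to all of $M_G(C)$. For any $z \in M_G(C)$ nonadjacent to $x$ (respectively, $y$), re-applying the pairwise argument to $(x, z)$ (resp.\ $(y, z)$) yields a common neighbor on $C$ that the rigid interleaving structure forces to coincide with $v$; a $z$ adjacent to both $x$ and $y$ is handled by producing an induced even cycle through $\{v, x, z, y\}$ together with appropriate subpaths of $C$ and a short parity argument. This forces $M_G(C) \subseteq N_G(v)$, contradicting our assumption. The main obstacle will be the interleaving case analysis in step~2: the parity bookkeeping along the gaps, combined with the simultaneous need to avoid $4$-holes and even holes shorter than $C$, requires careful handling of boundary cases where gaps have length $1$, runs are singletons, or the induced cycle $x\text{-}Q\text{-}y\text{-}x$ has length exactly $|V(C)|$.
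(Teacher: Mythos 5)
The paper does not prove this lemma --- it is quoted directly from Chang and Lu (their Lemma~3.4) --- so there is no in-paper proof to compare against; evaluating your proposal on its own, the opening moves are sound but the load-bearing steps are unproved. A gap of length $2$ between consecutive $C$-neighbors of a major vertex would create a $4$-hole, so every gap has length $1$ or $\geq 3$; and a subpath $Q$ of $C$ between a private $C$-neighbor of $x$ and a private $C$-neighbor of $y$, with interior avoiding $N_G(x)\cup N_G(y)$, closes into an induced cycle $x$-$Q$-$y$-$x$ (since $xy\notin E(G)$) whose parity and length are constrained by minimality of $|V(C)|$. So far so good.

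The genuine gap is exactly where you flag it. In step~2, the conclusion that the interleaving analysis ``pins down a vertex $v\in V(C)$ adjacent to both $x$ and $y$ and [\ldots] must be adjacent to every major vertex of $C$'' cannot follow from examining only the pair $\{x,y\}$; the placement of $N_G(x,C)$ and $N_G(y,C)$ says nothing directly about where a third major $z$ attaches to $C$. In step~3, the claim that the common $C$-neighbor produced by the pair $(x,z)$ ``is forced to coincide with $v$'' is the crux of the lemma and is asserted rather than argued: nothing in the sketch excludes $x$ having two $C$-neighbors $v$ and $v'$ with $y\in N_G(v)\setminus N_G(v')$ and $z\in N_G(v')\setminus N_G(v)$. (Your handling of a $z$ adjacent to both $x$ and $y$ is correct but over-engineered: $v$-$x$-$z$-$y$-$v$ is a $4$-cycle in which $xy$ is a nonedge, so $4$-hole-freeness alone already forces $vz\in E(G)$; no subpaths of $C$ or parity bookkeeping are needed for that sub-case.) Finally, the proposal never invokes the parity fact (Lemma~\ref{lemma:lemma6.16} in this paper) that $|N_G(x,C)|$ is even for every major $x$ of a shortest even hole, which is precisely the kind of constraint needed to make the ``case analysis of the admissible interleavings'' in step~2 actually close; without some such structural input I do not see how the argument as sketched reaches a contradiction.
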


\begin{proof}[Proof of Lemma~\ref{lemma:lemma6.10}]
It takes $O(m^2)$ time to detect $4$-holes in $G$, so we assume that
$G$ is $4$-hole-free.  By Lemma~\ref{lemma:lemma6.13}, it suffices to
consider that the set $\KK$ of $O(m)$ maximal cliques of $G$ is
available.  It takes $O(m^2n^2)$ time to obtain the set
$\mathbbmsl{T}$ of $O(mn^2)$ tracers of $G$ in the form of (1)
$\langle G-(N_G(v)\setminus \{v_1,v_2\}), v_1,v_2\rangle$ with
$\{v_1,v,v_2\}\subseteq V(G)$ or (2) $\langle G-V(K),v_1,v_2\rangle$
with $K\in \KK$ and $\{v_1,v_2\}\subseteq V(G)$.  To see that $\TT$ is
reliable, let $C$ be a shortest even hole of $G$.  Case~1:~$M_G(C)
\subseteq N_G(v)$ holds for a vertex $v$ of $C$.  Let $v_1$ and $v_2$
be the neighbors of $v$ in~$C$. By $M_G(C) \subseteq N_G(v) \setminus
\{v_1,v_2\}$ and $(N_G(v)\setminus\{v_1,v_2\})\cap C=\varnothing$, $C$
is a $v_1v_2$-hole of $G-(N_G(v)\setminus\{v_1,v_2\})$.
Case~2:~$M_G(C)\not\subseteq N_G(v)$ holds for all vertices $v$ of
$C$.  By Lemma~\ref{lemma:lemma6.14}, $G[M_G(C)]$ is a clique.  Let
$K$ be a maximal clique with $M_G(C)\subseteq V(K)$.  We have
$V(K)\cap C= \varnothing$ or else $M_G(C)\cap C=\varnothing$ would
imply $M_G(C) \subseteq V(K) \setminus \{v\} \subseteq N_G(v)$ for any
$v\in V(K) \cap C$, contradiction.  Thus, $C$ is a $v_1v_2$-hole of
$G-V(K)$ for any $v_1v_2$-path of $C$ with $3$ vertices.
\end{proof}

\subsubsection{Proving Lemma~\ref{lemma:lemma6.11}}
\label{subsubsection:subsubsection6.3.2}

Vertex $x$ {\em dominates} vertex~$y$ in graph $H$ if $x \neq y$
and~$N_H[y] \subseteq N_H[x]$.  Vertex $y$ is {\em dominated} in $H$
if some vertex of $H$ dominates $y$ in $H$.  A star-cutset $S$ of
graph $H$ is {\em full} if $S=N_H[s]$ holds for some vertex $s$ of
$S$.

\begin{lem}[{Chv\'{a}tal~\cite[Theorem~1]{Chvatal85}}]
\label{lemma:lemma6.15}
A graph without dominated vertices and full star-cutsets is
star-cutset-free.
\end{lem}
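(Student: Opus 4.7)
The plan is to prove the contrapositive: if $H$ has a star-cutset then $H$ has a dominated vertex or a full star-cutset. Let $S \subseteq N_H[s]$ be a star-cutset of $H$ centered at some $s \in S$, i.e., $H-S$ has strictly more connected components than $H$. The natural candidate for a full star-cutset is the closed neighborhood $T = N_H[s] \supseteq S$, so my first step is to dispose of the easy case: if $T$ itself is a cutset, then $T = N_H[s]$ is already a full star-cutset and we are done.

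The remaining case is that $T$ is not a cutset, and here the goal is to locate a dominated vertex. I would first reduce to the situation $H$ connected by passing to the component of $H$ that contains $s$; since $S \subseteq N_H[s]$ lies in that component, the star-cutset property is inherited, while dominance and fullness of a star-cutset transfer back to $H$ unchanged. Next I would compare the components $\mathcal{C}_1,\dots,\mathcal{C}_k$ of $H-S$ (with $k \geq 2$) to those of $H-T$. Each $\mathcal{C}_i$, upon further deletion of the vertices of $T\setminus S$, either vanishes (if $V(\mathcal{C}_i) \subseteq T$) or contributes at least one component to $H-T$. Since $H-T$ has at most one component under the assumption that $T$ is not a cutset of a connected $H$, this forces at least $k-1 \geq 1$ of the $\mathcal{C}_i$ to satisfy $V(\mathcal{C}_i) \subseteq T \setminus S$.

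Fixing such a component $\mathcal{C}_j$ and any $y \in V(\mathcal{C}_j)$, I observe that $y \notin S$ (so in particular $y \neq s$, because $s \in S$), while every neighbor of $y$ in $H$ lies either in $S \subseteq N_H[s]$ or, being in $H-S$, stays inside the component $\mathcal{C}_j \subseteq N_H[s]$. Consequently $N_H[y] \subseteq N_H[s]$, so $s$ dominates $y$ in $H$, completing the contrapositive. The only slightly delicate point is the component-counting bookkeeping between $H-S$ and $H-T$ and the reduction to the component of $s$; once those are in place the identification of the dominated vertex is immediate from chasing neighborhoods.
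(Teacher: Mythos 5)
Your proof is correct, and it is essentially Chv\'{a}tal's original argument (the paper itself does not re-prove this lemma; it simply cites Chv\'{a}tal~\cite[Theorem~1]{Chvatal85}). The key step — that when $T=N_H[s]$ is not a cutset, $H-T$ has at most one component, so at least one component of $H-S$ must be swallowed entirely by $T\setminus S$ and its vertices are then dominated by $s$ — is precisely Chv\'{a}tal's. The bookkeeping you flag as delicate (each component of $H-T$ sits inside a unique component of $H-S$, so the number of components of $H-S$ that survive deletion of $T\setminus S$ is bounded by the number of components of $H-T$) is sound, and the reduction to the connected case is clean and correct.
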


\begin{lem}[{Chudnovsky, Kawarabayashi, and Seymour~{\cite[Lemma 2.2]{ChudnovskyKS05}}}]
\label{lemma:lemma6.16}
If $x$ is a major vertex of a shortest even hole $C$ of graph $G$,
then $|N_G(x, C)|$ is even.
\end{lem}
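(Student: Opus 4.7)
The plan is to analyze the cyclic arrangement of the neighbors of $x$ on $C$ and use the minimality of $|V(C)|$ to force every ``gap'' between consecutive neighbors to have odd length. Set $k=|N_G(x,C)|$, list these neighbors in cyclic order around $C$ as $u_1,u_2,\ldots,u_k$, and for each $i$ (indices modulo $k$) let $G_i$ be the $u_iu_{i+1}$-subpath of $C$ and $\ell_i=|E(G_i)|$. The $G_1,\ldots,G_k$ partition the edges of $C$, so $\sum_i\ell_i=|E(C)|$, which is even.

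The key observation is that $D_i=G[\{x\}\cup V(G_i)]$ is a chordless cycle of length $\ell_i+2$ for every $i$: the subpath $G_i$ is chordless because $C$ is a hole, and the interior of $G_i$ contains no neighbor of $x$ by the cyclic choice of $u_i,u_{i+1}$ as consecutive neighbors of $x$ on $C$. Hence $D_i$ is a triangle when $\ell_i=1$ and a hole of length $\ell_i+2$ when $\ell_i\geq 2$. It therefore suffices to prove that each $\ell_i$ is odd, since $k$ odd numbers sum to an even number only when $k$ is even.

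The case $\ell_i=1$ is immediate, so I would suppose for contradiction that some $\ell_i\geq 2$ is even. Then $D_i$ is an even hole, and the minimality of $|C|$ gives $\ell_i+2\geq |C|$, i.e., $\ell_i\geq |C|-2$. Since $x$ is major, $k\geq 3$; combining $\sum_j\ell_j=|C|$ with $\ell_j\geq 1$ for every $j$ yields $\ell_i\leq |C|-(k-1)\leq |C|-2$. So $\ell_i=|C|-2$ and $k=3$, with the other two gaps both equal to $1$. Writing the three neighbors as $u_1,u_2,u_3$ with $\ell_1=|C|-2$ and $\ell_2=\ell_3=1$, the vertex $u_3$ is adjacent in $G$ to both $u_1$ and $u_2$, so $N_G(x,C)=\{u_1,u_2,u_3\}$ contains no three pairwise-nonadjacent vertices, contradicting that $x$ is major.

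The main obstacle will be this extremal-configuration analysis: it is the only place where the full \emph{major} hypothesis (three pairwise-nonadjacent neighbors on $C$, rather than merely $|N_G(x,C)|\geq 3$) is used, and without it one could not exclude the case $\ell_i=|C|-2$ in which $D_i$ would be another shortest even hole instead of a strictly shorter one.
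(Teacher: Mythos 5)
The paper does not supply a proof of this lemma; it is cited directly from Chudnovsky, Kawarabayashi, and Seymour, so there is no in-paper argument for your proposal to be compared against. Your proof is correct and self-contained. The decomposition into gaps $G_i$, the observation that each $D_i=G[\{x\}\cup V(G_i)]$ is a chordless cycle of length $\ell_i+2$, and the parity count ($k$ odd numbers summing to the even quantity $|E(C)|$ forces $k$ even) are all sound. You are also right that the pinch point is the extremal case $\ell_i=|C|-2$, $k=3$: here $D_i$ is another \emph{shortest} even hole rather than a strictly shorter one, so minimality alone is not enough, and the full ``three pairwise-nonadjacent neighbors'' content of \emph{major} (not merely $|N_G(x,C)|\geq 3$) is what rules it out, since in that configuration two of the three gaps have length $1$ and hence two of the three pairs among $u_1,u_2,u_3$ are edges of $C$. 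One small remark worth recording for a reader: the argument tacitly uses $|C|\geq 6$, which is automatic because a $4$-hole cannot contain three pairwise-nonadjacent vertices and hence admits no major vertex.
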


Let $N_G^{i}(C)$ consist of the vertices $x\in N_G(C)\setminus M_G(C)$
such that $|N_G(x,C)|=i$ and $C[N_G(x,C)]$ is connected.  Let
$N_G^{i,i}(C)$ consist of the vertices $x\in N_G(C)\setminus M_G(C)$
such that $C[N_G(x,C)]$ has two connected components, each of which
has $i$ vertices.

\begin{lem}[{Chang and Lu~\cite[Lemma~2.2]{ChangL15}}]
\label{lemma:lemma6.17}
For any clear shortest even hole $C$ of a beetle-free graph~$G$, we
have
\begin{displaymath}
N_G(C) \subseteq N_G^1(C)\cup N_G^2(C)\cup N_G^3(C) \cup
N_G^{1,1}(C)\cup N_G^{2,2}(C).
\end{displaymath}
\end{lem}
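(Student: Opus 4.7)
I would proceed by a case analysis on the structure of $C[N_G(x,C)]$ for an arbitrary $x \in N_G(C)$. Since $C$ is clear, $x$ is not $C$-major, so $N_G(x,C)$ contains no three pairwise nonadjacent vertices of $C$. Decompose $C[N_G(x,C)]$ into its maximal arcs (connected components along the cycle) of sizes $a_1, \ldots, a_k$; the independence number equals $\sum_i \lceil a_i / 2 \rceil$ and must be at most $2$. Enumerating solutions: $k=1$ with $a_1 \in \{1,2,3,4\}$, or $k=2$ with $(a_1,a_2) \in \{(1,1),(1,2),(2,1),(2,2)\}$; three or more arcs force the independence number to be at least $3$. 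Thus $x$ lies in $N_G^i(C)$ for $i \in \{1,2,3,4\}$ or in $N_G^{i,j}(C)$ for some admissible pair, and it remains to exclude $N_G^4$, $N_G^{1,2}$, and $N_G^{2,1}$.

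For $N_G^{1,2}$ and $N_G^{2,1}$, I would invoke the minimality of $C$ as a shortest even hole. Write $N_G(x,C) = \{v\} \cup \{u_1, u_2\}$ with $u_1u_2 \in E(C)$ and $v$ $C$-nonadjacent to both $u_1,u_2$ (the latter forced by $v$ being in a distinct component of $C[N_G(x,C)]$). Decompose $C$ as the edge $u_1u_2$ together with paths $P_1$ from $u_2$ to $v$ and $P_2$ from $v$ to $u_1$, of lengths $p_1 \geq 2$ and $p_2 \geq 2$. Since $|V(C)| = p_1 + p_2 + 1$ is even, $p_1$ and $p_2$ have opposite parities. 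The cycles $C_1 = x u_2 P_1 v x$ and $C_2 = x u_1 P_2 v x$ are each induced (their only potential chords, from $x$, are ruled out by $N_G(x,C) = \{v,u_1,u_2\}$ and by $C$ being induced) and have lengths $p_1+2$ and $p_2+2$. Exactly one of the two is even, and both are strictly shorter than $|V(C)|$, contradicting the shortness of $C$.

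For $N_G^4$, I would construct a beetle and contradict beetle-freeness. Writing $N_G(x,C) = \{v_1,v_2,v_3,v_4\}$ as a $C$-subpath and $C = v_1 v_2 \cdots v_{2k} v_1$, take the diamond on $\{v_2, v_3, v_4, x\}$ with labeling $b_1 = v_2$, $b_2 = x$, $b_3 = v_4$, $b_4 = v_3$: the cycle $v_2 x v_4 v_3 v_2$ has chord $x v_3$. I would then take the tree $T$ in $G - v_3$ with vertex set $\{v_2, v_1, x, v_{2k}, v_{2k-1}, \ldots, v_5, v_4\}$ and tree-edges $v_1 v_2$, $v_1 x$, $v_1 v_{2k}$, $v_{2k} v_{2k-1}$, $\ldots$, $v_5 v_4$; the branch vertex is $v_1$ (degree $3$), and its three leaves are exactly $v_2$, $x$, and $v_4$. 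Since $v_3 \notin V(T)$ while $v_3$ is adjacent to each of $b_1,b_2,b_3$ via the diamond edges, this exhibits a beetle rooted at $b_4 = v_3$.

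The main obstacle is the $N_G^4$ step, which requires identifying both the correct diamond labeling (with the odd vertex $b_4 = v_3$ chosen to be the interior neighbor) and a 3-leaf tree that actually connects $v_2$, $x$, and $v_4$ through $G - v_3$: the natural short star fails because no single vertex of $C \cup \{x\}$ other than $v_3$ is adjacent to all three, which forces the tree to use the long arc of $C$ avoiding $\{v_2,v_3\}$ together with $v_1$ as the unique branch. One must also verify that this configuration qualifies as a beetle under the definition in~\S\ref{subsubsection:subsubsection6.1.3}. A secondary subtlety in the $N_G^{1,2}$ case is checking that the two shortcut cycles are induced, which hinges on the isolation of $v$ in $C[N_G(x,C)]$ together with the inducedness of $C$.
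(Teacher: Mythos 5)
Your proof is correct. Note that the paper does not prove this lemma at all --- it imports it verbatim from Chang and Lu --- so there is no in-paper argument to compare against; your reconstruction (independence number at most $2$ forces the listed types plus two bad ones; a trace with components of sizes $1$ and $2$ yields a shorter even hole by the parity split $p_1+p_2$ odd with $p_1,p_2\ge 2$; a trace of four consecutive vertices yields a beetle with diamond $v_2xv_4v_3$, chord $xv_3$, and the tree branching at $v_1$ along the long arc) is sound and matches the standard argument from the cited source, including the correct handling of the allowed extra edges $b_1b_2$ and $b_2b_3$ in the induced beetle.
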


\begin{proof}[Proof of Lemma~\ref{lemma:lemma6.11}]
We first prove the lemma using the following two claims for any tracer
$T=\langle H,v_1,v_2\rangle$ of an $n$-vertex $m$-edge beetle-free
connected graph $G$:
\begin{enumerate}[label={\em Claim~\arabic*:}, ref={Claim~\arabic*}, leftmargin=*]
\item 
\label{claim:star-cutset1}
It takes~$O(mn^2)$ time to obtain a tracer $T'=\langle
H',v'_1,v'_2\rangle$ of $G$, where $H'$ is an induced subgraph of $H$
having no dominated vertices, such that if $T$ is lucky, then so is
$T'$.

\item
\label{claim:star-cutset2}
It takes $O(mn^2)$ time to (1) ensure that $H$ is
full-star-cutset-free, (2) obtain an even hole of $G$, or (3) obtain a
proper induced subgraph $H'$ of $H$ such that if $T$ is lucky, then so
is $\langle H',v_1,v_2\rangle$.
\end{enumerate}
The algorithm proceeds in $O(n)$ iterations to update $T=\langle
H,v_1,v_2\rangle$.  Each iteration starts with
applying~\ref{claim:star-cutset1} to update $T$ without destroying its
luckiness by replacing $\langle H,v_1,v_2\rangle$ with the ensured
$\langle H',v'_1,v'_2\rangle$ such that $H'$ is an induced subgraph of
$H$ that does not contain any dominated vertex.  It then
applies~\ref{claim:star-cutset2} on the resulting $T=\langle
H,v_1,v_2\rangle$.  If $H$ is ensured to be full-star-cutset-free,
then Task~3 is completed by Lemma~\ref{lemma:lemma6.15}.  If we obtain
an even hole of $G$, then Task~2 is completed.  Otherwise, it updates
$T$ without destroying its luckiness by replacing $H$ with the
obtained proper induced subgraph $H'$ of $H$ and proceed to the next
iteration.  The overall running time is $O(mn^3)$.

To prove~\ref{claim:star-cutset1}, the $O(mn^2)$-time algorithm
outputs the resulting $T$ after iteratively updating the initial
$T=\langle H,v_1,v_2\rangle$ by the following procedure until $H$
contains no dominated vertices: (1) spend $O(mn)$ time to detect
vertices $x$ and $y$ of $H$ such that $x$ dominates $y$ in $H$, (2)
let $H=H-\{y\}$, and (3) if $y=v_i$ with $i\in\{1,2\}$, then let
$v_i=x$.  The resulting $H$ is an induced subgraph of the initial $H$.
For the correctness, it suffices to prove that if a tracker $T$ is
lucky, then so is the resulting $T$ after an iteration of the loop.
Suppose that a $v_1v_2$-hole $C$ of $H$ contains $y$ or else $C$
remains a $v_1v_2$-hole of $H'=H-\{y\}$.  Since $C$ is an even hole,
we have $x\notin V(C)$ and $|N_C[y]| = 3$, implying a connected
component of~$C[N_G(x,C)]$ with at least $3$ vertices.  By
Lemma~\ref{lemma:lemma6.17}, we have $x\in N_H^{3}(C)$, implying that
$N_G(x,C)$ consists of $y$ and the two neighbors of $y$ in $C$.  Thus,
$C'=H[C\cup \{x\} \setminus \{y\}]$ remains a shortest even hole of
$H'$.  Let $v_0$ be a vertex of $C$ such that $v_1v_0v_2$ is a
$3$-path of $C$.  For each $i\in\{0,1,2\}$, if $y=v_i$, then let
$u_i=x$; otherwise, let $u_i=v_i$.  Clearly, $u_1u_0u_2$ is a $3$-path
of $C'$.  It remains to show that $C'$ is clear.  Assume for
contradiction $z\in M_{H'}(C')$, implying $y\ne z$ and $z\in
M_{H}(C')$.  By Lemma~\ref{lemma:lemma6.16}, $|N_{C'}(z)| \ge 4$ and
$|N_{C'}(z)|\ne 5$.  By Lemma~\ref{lemma:lemma6.17},
$M_H(C)=\varnothing$ implies $|N_{C}(z)| \leq 4$.  By $C - \{y\} = C'
- \{x\}$, exactly one of $x$ and $y$ is adjacent to $z$ in $H$ or else
$z\in M_H(C')$ would imply $z\in M_H(C)$.  Thus, $z\in N_H(x)\setminus
N_H(y)$, implying $|N_{C}(z)|=|N_{C'}(z)|-1=3$.
Lemma~\ref{lemma:lemma6.17} implies $z\in N_H^{3}(C)$.  Since
$C[N_G(z,C)]$ is a $3$-path, $H[C'\cup \{z\}]$ is a beetle $B$ of $H$
in which $B[N_B[z]\setminus\{x\}]$ is a diamond, contradiction.

To prove~\ref{claim:star-cutset2}, it takes $O(mn)$ time to detect
full star-cutsets in $H$.  It suffices to focus on the case that $H$
contains a full star-cutset $S=N_{H}[s]$.  Let $\BB$ consist of the
connected components of $H-S$.  It takes $O(n^3)$ time to obtain, for
every two nonadjacent vertices $s_1$ and $s_2$ of $S$, the list
$L(s_1,s_2)$ of elements in $\BB$ that are adjacent to both $s_1$ and
$s_2$.  It takes $O(m^2)$ time to check whether the following
conditions hold:
\begin{enumerate}[leftmargin=*]
\item 
\label{chunk1}
There are distinct $B_i\in L(s_1,s_2)$ with $\{s_1,s_2\}\subseteq S$
for $i\in\{1,2\}$.

\item 
\label{chunk2}
There are disjoint edges $s_is_{i+2}$ of $H[S]$ with distinct $B_i\in
L(s_{2i-1},s_{2i})$ for $i\in\{1,2\}$.
\end{enumerate}
If Condition~\ref{chunk1} holds, then $H[P_1\cup P_2\cup s]$ (is a
theta and thus) contains even holes for any shortest $s_1s_2$-path
$P_i$ in $H[B_i\cup \{s_1,s_2\}]$.  If Condition~\ref{chunk2} holds,
then $H[P_1\cup P_2\cup s]$ contains even holes for any shortest
$s_{2i-1}s_{2i}$-path $P_i$ in $H[B_i\cup \{s_{2i-1},s_{2i}\}]$.  The
rest of the proof assumes that neither condition holds.  If there were
a $v_1v_2$-hole $C$ of $H$ intersecting distinct $B_1$ and $B_2$ of
$\BB$, then $s\notin C$, implying that $C[N_G(s,C)]$ is not
connected. By Lemma~\ref{lemma:lemma6.17}, either $s\in N_H^{1,1}(C)$,
implying Condition~\ref{chunk1}, or $s\in N_H^{2,2}(C)$, implying
Condition~\ref{chunk2}.  Hence, each $v_1v_2$-hole $C$ of $H$
intersects at most one element of $\BB$.  If a $B\in \BB$ contains one
or both of $v_1$ and $v_2$, then the claim is proved with $H'=H[B\cup
  S]$.  It remains to consider the case $\{v_1,v_2\}\subseteq S$.  Let
$C$ be a $v_1v_2$-hole intersecting exactly one $B\in \BB$.  If $s\in
C$, then $V(C)\cap S=\{v_1,s,v_2\}$, implying $B\in L(v_1,v_2)$.  If
$s\notin C$, then $s\in N_H^{3}(C)\cup N_H^{1,1}(C)\cup N_H^{2,2}(C)$
by Lemma~\ref{lemma:lemma6.17}, also implying $B\in L(v_1,v_2)$.
Since Condition~\ref{chunk1} does not hold, $|L(v_1,v_2)|\leq 1$.
Therefore, if $|L(v_1,v_2)|=1$, then the claim is proved with
$H'=H[B\cup S]$, where $B$ is the only element in $L(v_1,v_2)$.  If
$|L(v_1,v_2)|=0$, then the claim is proved with $H'=H[S]$.
\end{proof} 

\subsubsection{Proving Lemma~\ref{lemma:lemma6.12}}
\label{subsubsection:subsubsection6.3.3}

\begin{figure}[t]
\centering
\centerline{\scalebox{0.38}{{\includegraphics{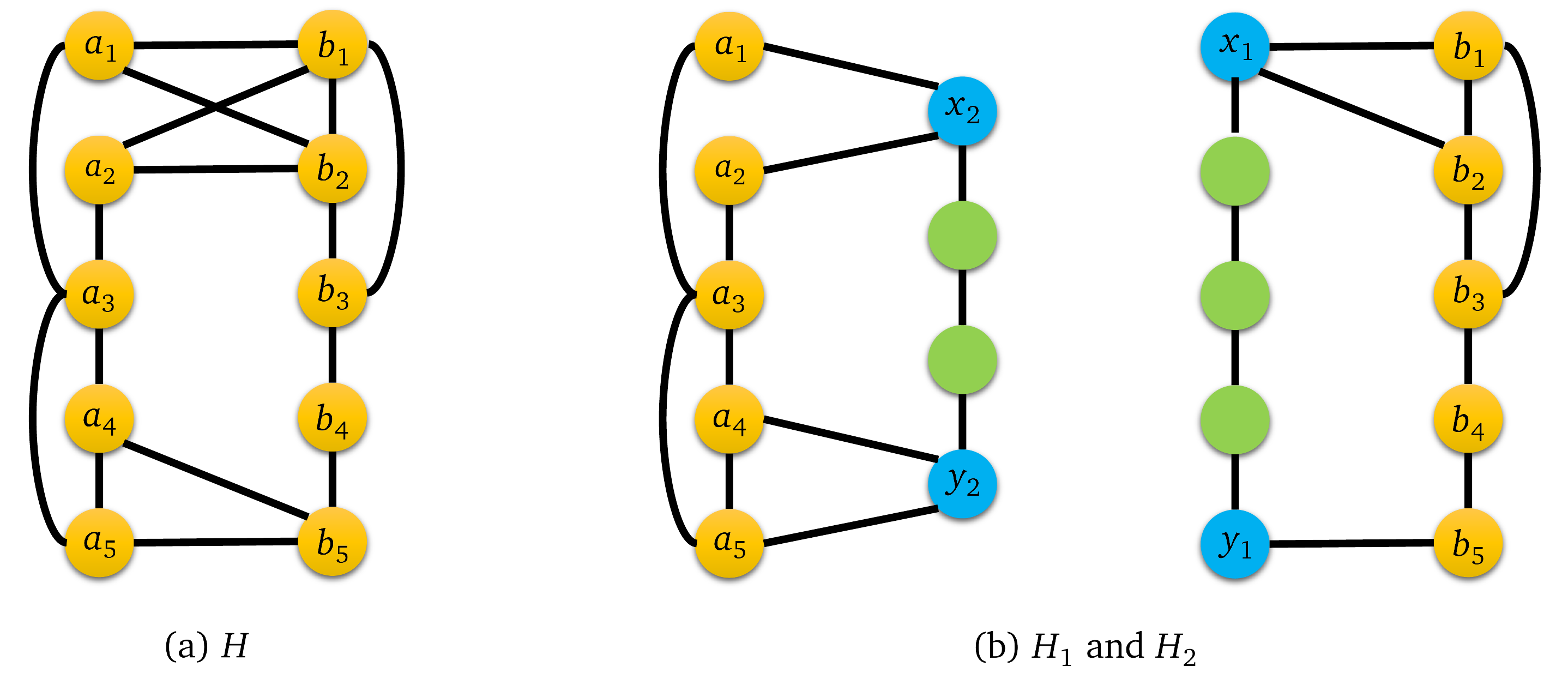}}}}
\caption{A $2$-join $J=(V_1,V_2,X_1,X_2,Y_1,Y_2)$ of $H$ with
  $V_1=\{a_1,\ldots,a_6\}$, $V_2=\{b_1,\ldots,b_6\}$,
  $X_1=\{a_1,a_2\}$, $X_2=\{b_1,b_2\}$ $Y_1=\{a_5,a_6\}$, and
  $Y_2=\{b_6\}$ and and the parity-preserving blocks of decomposition
  $H_1$ and $H_2$ for $J$.}
\label{figure:figure12}
\end{figure}

$(V_1,V_2,X_1,X_2,Y_1,Y_2)$ is a {\em
  $2$-join}~\cite[\S1.3]{daSilvaV13} (which is called a non-path
$2$-join in, e.g.,~\cite{ChangL15,Trotignon08,TrotignonV12}) of a
connected graph $H$ if
\begin{enumerate}[leftmargin=*]
\item 
$V_1$ and $V_2$ form a disjoint partition of $V(H)$ with $|V_1|\geq 3$
  and $|V_2|\geq 3$,

\item 
$X_i$ and $Y_i$ are disjoint nonempty subsets of $V_i$ for each $i$,

\item 
$H[V_i]$ is not a minimal $X_iY_i$-path for each $i$, and 

\item 
if $v_i\in V_i$ for each $i$, then $v_1v_2\in E(H)$ if and only if
$v_i\in X_i$ for each $i$ or $v_i\in Y_i$ for each $i$.
\end{enumerate}
See Figure~\ref{figure:figure12}(a) for an example.

\begin{lem}[Trotignon and Vu\v{s}kovi\'{c}~{\cite[Lemma~3.2]{TrotignonV12}}]
\label{lemma:lemma6.18}
If $(V_1,V_2,X_1,Y_1,X_2,Y_2)$ is a $2$-join of a star-cutset-free
connected graph $H$, then the following statements hold for each
$i\in\{1,2\}$:
\begin{enumerate}[leftmargin=*]
\item 
\label{item1:lemma6.18}
Each connected component of $H[V_i]$ intersects both $X_i$ and $Y_i$.

\item 
\label{item2:lemma6.18}
Each vertex of $X_i$ (respectively, $Y_i$) has a non-neighbor of $H$
in $Y_i$ (respectively, $X_i$).
\end{enumerate}
\end{lem}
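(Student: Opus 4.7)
The plan is to prove both statements by contradiction: whenever the stated conclusion fails, I will pinpoint a vertex $v$ of $H$ for which $N_H[v]$ is a full star-cutset of $H$, contradicting the hypothesis that $H$ is star-cutset-free. Throughout, I will exploit the rigid cross structure of a $2$-join: every edge of $H$ with one endpoint in $V_1$ and the other in $V_2$ lies either in $X_1\times X_2$ or in $Y_1\times Y_2$.

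For statement~(1), assume without loss of generality $i=1$, and suppose $K$ is a connected component of $H[V_1]$ that misses $X_1$. Since $K$ is a component of $H[V_1]$, its only outgoing edges go into $V_2$; and since $K\cap X_1=\varnothing$, those outgoing edges must all pass from $K\cap Y_1$ into $Y_2$. If $K$ also missed $Y_1$, then $K$ would have no outgoing edges at all, contradicting that $H$ is connected and $V_2\neq\varnothing$. Hence $K\cap Y_1\neq\varnothing$, and for a suitably chosen $y\in K\cap Y_1$ I will argue that $N_H[y]$ is a full star-cutset: it contains $\{y\}$, the $V_1$-neighbors of $y$ (all inside $V(K)$), and the whole of $Y_2$ by the $2$-join structure, so in $H-N_H[y]$ the surviving part of $V(K)$ is completely cut off from $V_1\setminus V(K)\supseteq X_1$, since the only bridges would re-enter $V_2$ through $Y_2$, which is gone. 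The key point to verify is that $V(K)\setminus N_H[y]\neq\varnothing$ for the chosen $y$, which I plan to handle by picking $y\in K\cap Y_1$ that is not a universal vertex of $H[V(K)]$; the only residual case is $|V(K)|=1$, where I instead use $N_H[y']$ for some $y'\in Y_2$ to form the star-cutset, again invoking the complete bipartite $Y_1$-$Y_2$ pattern to absorb $K$.

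For statement~(2), again take $i=1$ and suppose some $x\in X_1$ is adjacent to every vertex of $Y_1$. Then $N_H[x]$ contains $\{x\}\cup(N_H(x)\cap V_1)\cup X_2\supseteq\{x\}\cup X_1\cup Y_1\cup X_2$. In $H-N_H[x]$, every surviving vertex of $V_1$ lies outside $X_1\cup Y_1$ and therefore has no edge to $V_2$, while $V_2\setminus N_H[x]=V_2\setminus X_2\supseteq Y_2\neq\varnothing$, so the two sides fall in different components. The delicate point is ensuring that the $V_1$-side of the cut is nonempty; I will derive this by combining statement~(1) (so that the component of $H[V_1]$ containing $x$ is forced to meet $Y_1$ via an induced $X_1Y_1$-path inside $V_1$, along which some internal vertex escapes $N_H[x]$) with the $2$-join hypothesis that $H[V_1]$ is not a minimal $X_1Y_1$-path.

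The main obstacle will be navigating the boundary configurations in which the proposed closed neighborhood accidentally swallows a whole side of the intended cut. These are exactly the degenerate cases where $|V(K)|$ is very small or the cross structure of the $2$-join collapses, and I expect the bulk of the proof to consist of ruling them out by invoking $|V_i|\geq 3$, the nonemptiness of all four of $X_1$, $Y_1$, $X_2$, $Y_2$, and the non-minimal-path condition on each $H[V_i]$, together with the already-established statement~(1) when proving statement~(2).
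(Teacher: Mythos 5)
Your strategy---exhibit a star-cutset of the form $N_H[v]$ (or a sub-star of it) and contradict the hypothesis---is the right one, and your handling of the generic cases is sound. (The paper cites this lemma to Trotignon and Vu\v{s}kovi\'{c} without reproducing a proof, so there is no in-paper argument to compare against.) But the degenerate cases, which you flag as the main obstacle, are exactly where the sketch is wrong. In (1), after your reductions the true residual configuration is that $K\subseteq Y_1$ and $K$ is a clique, not merely $|V(K)|=1$: a two-vertex clique $K=\{y,y'\}\subseteq Y_1$ already escapes your case analysis, since both of its vertices are universal in $H[K]$. Your fallback of taking $N_H[y']$ with $y'\in Y_2$ does not help there, because then $K\subseteq Y_1\subseteq N_H[y']$ is entirely absorbed into the star and nothing on the $V_1$ side is left to be cut off. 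What actually closes the clique case is a domination argument: for $|K|\ge 2$ any two vertices of $K$ are twins with closed neighborhood $K\cup Y_2$, so $N_H[y']\setminus\{y\}$ (for distinct $y,y'\in K$) isolates $y$ while $X_2$ survives; for $|K|=1$ one must cascade further, using a vertex of $Y_1\setminus K$ (if nonempty) as the star center, then noting that every remaining component of $H[V_1]$ would have to be a singleton inside $X_1$, and finally invoking $|V_1|\ge 3$ to force a twin pair in $X_1$.

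In (2), the asserted inclusion $N_H[x]\supseteq \{x\}\cup X_1\cup Y_1\cup X_2$ is not justified ($x$ need not dominate $X_1\setminus\{x\}$), though only $\{x\}\cup Y_1\cup X_2\subseteq N_H[x]$ is actually used, so that slip is harmless. The substantive gap is your route to $V_1\setminus N_H[x]\ne\varnothing$: you propose to find an internal vertex of an induced $X_1Y_1$-path that escapes $N_H[x]$, but since $x$ is by hypothesis adjacent to \emph{all} of $Y_1$, a shortest induced path from $x$ into $Y_1$ is a single edge with no internal vertex, so the ``escaping'' vertex you need does not exist. Nor does the non-minimal-path condition rule out $x$ being universal in $H[V_1]$. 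That residual case requires its own split---either $V_1\supsetneq\{x\}\cup Y_1$ (take the sub-star $\{x\}\cup Y_1\cup X_2$), or $Y_1$ is not a clique (take $N_H[y]$ for a non-universal $y\in Y_1$), or $H[V_1]$ is complete (twins again, this time in $Y_1$)---and the last branch again relies on $|V_1|\ge 3$; none of this appears in the sketch.
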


\begin{lem}[Charbit, Habib, Trotignon, and Vu\v{s}kovi\'{c}~{\cite[Theorem~4.1]{CharbitHTV12}}]
\label{lemma:lemma6.19}
Given an $n$-vertex $m$-edge connected graph $H$, it takes $O(mn^2)$
time to either obtain a $2$-join of $H$ or ensure that $H$ is
$2$-join-free.
\end{lem}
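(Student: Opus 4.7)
The plan is a guess-and-propagate strategy. Any $2$-join $(V_1,V_2,X_1,X_2,Y_1,Y_2)$ of $H$ is determined, up to the $X/Y$ and $V_1/V_2$ symmetries, by its assignment of each vertex to one of the six cells $X_1$, $Y_1$, $V_1\setminus(X_1\cup Y_1)$, $X_2$, $Y_2$, $V_2\setminus(X_2\cup Y_2)$. The cross-edge condition $v_1v_2\in E(H) \Leftrightarrow (v_1,v_2)\in(X_1\times X_2)\cup(Y_1\times Y_2)$ is strong enough that, once a small seed of vertices is correctly placed, the remaining cells are forced by propagation. Lemma~\ref{lemma:lemma6.18} provides the structural hooks needed for this propagation to terminate quickly.

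Concretely, I would iterate over seed triples $(v_1,v_2,w)$ with $v_1v_2\in E(H)$ and $v_2w\notin E(H)$, intending $v_1\in X_1$, $v_2\in X_2$, and $w\in Y_1$; by Lemma~\ref{lemma:lemma6.18}\eqref{item2:lemma6.18} every $2$-join of $H$ certainly yields such a triple, and the number of triples is $O(mn)$. For each seed, I would classify every other vertex $u$ by its adjacency pattern with $\{v_1,v_2,w\}$; each of the eight patterns is compatible with at most a few of the six cells, so a large fraction of vertices is immediately pinned down. The remaining ambiguous vertices are resolved by a propagation step that iteratively reassigns a vertex whenever its tentative cell conflicts with the cross-edge condition against an already-committed neighbor, using Lemma~\ref{lemma:lemma6.18}\eqref{item1:lemma6.18} to cascade reassignments across connected components of the tentative $H[V_1]$ and $H[V_2]$. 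Once propagation stabilizes, an $O(m+n)$ verification checks the four defining axioms of a $2$-join (disjoint nonempty $X_i,Y_i$, the correct cross-edge pattern, the non-minimal-path condition, and $|V_i|\ge 3$).

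The main obstacle will be meeting the $O(mn^2)$ bound. A standalone run of the propagation costs $O(m+n)$ per seed, which gives only $O(mn(m+n))$ overall; so the implementation must amortize across seeds. I would fix the edge $v_1v_2$ and sweep $w$ through $V(H)\setminus N[v_2]$ in an order that changes the classification of only $O(d(w))$ vertices between consecutive seeds, maintaining the cell assignments and propagation state incrementally. A careful monotonicity argument, again leaning on Lemma~\ref{lemma:lemma6.18}\eqref{item1:lemma6.18}, is needed to show that the incremental propagation's greediest reassignments never discard a valid $2$-join, so that correctness is preserved. Summing $O(d(w))$ over all $w$ for a fixed edge gives $O(m)$ per edge; combined with an initial $O(m+n)$ preprocessing per edge and $O(m)$ edges, this yields $O(m^2)$ in the sparse regime and is absorbed by $O(mn^2)$ using $m=O(n^2)$, delivering the claimed bound.
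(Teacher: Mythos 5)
The paper does not prove this lemma: it is cited verbatim from Charbit, Habib, Trotignon, and Vu\v{s}kovi\'{c}'s paper on $2$-join detection, so there is no in-paper proof to compare against. Your proposal is therefore a from-scratch reconstruction, and while the high-level scheme (guess a small seed, propagate the forced cell assignments, verify) is indeed the general shape of known $2$-join algorithms, the argument as written has two substantive gaps.

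First, the propagation step is not actually well-defined, and its correctness is not argued. Fixing $v_1\in X_1$, $v_2\in X_2$, $w\in Y_1$ does not determine, for a generic vertex $u$, which side of the partition $u$ lies on. For instance, a vertex adjacent to $v_2$ but to neither $v_1$ nor $w$ is equally consistent with $u\in X_1$ and with $u\in V_2\setminus(X_2\cup Y_2)$ adjacent to $v_2$ within $H[V_2]$; the adjacency pattern to three seeds in general leaves two live cells on opposite sides. Resolving such vertices requires committing one of the two possibilities and then checking whether the commitment forces a contradiction, and there is no argument that this terminates at a canonical ``minimal'' $2$-join, nor that it finds some $2$-join whenever one consistent with the seed exists. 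The known algorithms handle this by fixing a fourth seed (one vertex in each of $X_1,X_2,Y_1,Y_2$), which pins down the sides of cross-adjacent vertices, and by working with a notion of minimally split $2$-joins so that the propagation has a well-defined target.

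Second, and more seriously, the amortization over $w$ is the entire content of the complexity claim and it is asserted without proof. You sweep $w$ through $V(H)\setminus N[v_2]$ and claim that consecutive seeds differ in only $O(d(w))$ classifications, appealing to an unspecified ``careful monotonicity argument.'' But changing $w$ changes the candidate $Y_1$, which changes $Y_2$ (via cross-adjacency), which cascades through $H[V_2]$ and back into $V_1$; there is no reason for the induced assignment to move by only $O(d(w))$. If this monotonicity fails, you are back to $O(m+n)$ per seed and $O(m^2 n)$ overall, which for dense graphs exceeds $O(mn^2)$ by a factor of $n$. The actual route in the cited paper does not amortize across seeds; it proves a structural theorem bounding the number of starting configurations that need to be tried to $O(n^2)$, and runs an $O(m)$ fixpoint computation per configuration. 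That structural reduction is the real work of the theorem and is absent from your proposal.
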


\begin{lem}[da~Silva and~Vu\v{s}kovi\'{c}~{\cite[Corollary~1.3]{daSilvaV13}}]
\label{lemma:lemma6.20}
A connected even-hole-free star-cutset-free $2$-join-free graph is an
extended clique tree.
\end{lem}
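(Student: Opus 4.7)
The plan is to derive Lemma~\ref{lemma:lemma6.20} as an immediate corollary of the master structural decomposition theorem for even-hole-free graphs, which asserts the following trichotomy: \emph{every connected even-hole-free graph either is an extended clique tree, or admits a star-cutset, or admits a $2$-join}. Once the trichotomy is established, the lemma follows by exclusion, since the hypotheses forbid the two cutset outcomes, leaving only the extended-clique-tree case.

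To prove the trichotomy itself, I would argue by contrapositive: take a connected even-hole-free graph $G$ that is neither an extended clique tree nor admits a star-cutset, and exhibit a $2$-join. The standard route is to pick a longest induced odd hole $C$ of $G$ (or, if $G$ is chordal, argue directly that $G$ must be an extended clique tree). For each vertex $v\in V(G)\setminus V(C)$, parity constraints of the flavor of Lemma~\ref{lemma:lemma6.16} and~\ref{lemma:lemma6.17} drastically restrict how $v$ attaches to $C$, giving a short list of attachment types. If any type forces $N_G[v]$ to separate the rest of $G$, we obtain a star-cutset, contradicting the hypothesis. The remaining types partition $V(G)$ into two sides with the crossing edges concentrated in two bipartite ``exchange'' groups, providing the desired $2$-join $(V_1,V_2,X_1,X_2,Y_1,Y_2)$. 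The verification that this partition satisfies the non-degeneracy clauses of the $2$-join definition (each $H[V_i]$ not being a minimal $X_iY_i$-path, and each vertex of $X_i$ having a non-neighbor in $Y_i$, as in Lemma~\ref{lemma:lemma6.18}) would use the assumed star-cutset-freeness to rule out trivial splits.

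The main obstacle will be the substantial structural case analysis that drives the trichotomy. The definition of extended clique tree must be calibrated so as to absorb precisely those configurations that are ``tree-like enough'' to admit no star-cutset and no $2$-join, while ensuring that every other even-hole-free configuration exhibits one of these cutsets. Carrying this out requires delicate auxiliary structural lemmas that describe how $3$-paths, diamonds, and larger attachments can arise in even-hole-free graphs, and in particular ruling out the boundary case where a candidate $2$-join degenerates into a path $2$-join (which is excluded by the non-path $2$-join convention used throughout Section~\ref{section:section6}). Because this is a structural statement about an unrestricted class of graphs — with no algorithmic hook to exploit — the proof is essentially combinatorial, and I would expect the technical heart to mirror the multi-page analysis of da~Silva and Vu\v{s}kovi\'{c}~\cite{daSilvaV13} rather than admit a short self-contained argument.
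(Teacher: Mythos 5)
The paper does not prove this lemma at all; it cites it verbatim as Corollary~1.3 of da~Silva and Vu\v{s}kovi\'{c}~\cite{daSilvaV13} and uses it as a black box (together with Lemmas~\ref{lemma:lemma6.18}, \ref{lemma:lemma6.19}, \ref{lemma:lemma6.21}, \ref{lemma:lemma6.22}, and~\ref{lemma:lemma6.23}) inside the proof of Lemma~\ref{lemma:lemma6.12}. So there is no in-paper argument to compare yours against.

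Your proposal correctly identifies the statement as a direct consequence of the da~Silva--Vu\v{s}kovi\'{c} decomposition theorem for even-hole-free graphs, and it correctly phrases the derivation as a trichotomy read contrapositively. But the remainder of your sketch is an attempt to re-derive that decomposition theorem, which is far beyond the scope of what the paper does or needs, and the sketch has real gaps: the da~Silva--Vu\v{s}kovi\'{c} proof does not proceed by fixing a longest odd hole and classifying attachments to it (that is closer in spirit to the shortest-even-hole/near-cleaner analyses used elsewhere in Section~\ref{section:section6}); it proceeds via a lengthy structural analysis of forbidden configurations and their decompositions, and recovering it from Lemmas~\ref{lemma:lemma6.16}--\ref{lemma:lemma6.18} alone is not feasible. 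You acknowledge this yourself at the end, but that acknowledgment is exactly why, in the context of this paper, the right move is the one the authors took: cite the result and move on.
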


Let $J=(V_1,V_2,X_1,X_2,Y_1,Y_2)$ be a $2$-join of a star-cutset-free
connected graph $H$.  Let $P_i$ with $i\in\{1,2\}$ be a shortest
induced $X_iY_i$-path $P_i$ of $H[V_i]$ as ensured by
Lemma~\ref{lemma:lemma6.18}\eqref{item1:lemma6.18}.  If $|V(P_i)|$ is
even (respectively, odd), then let $p_i=4$ (respectively, $p_i=5$).
The {\em parity-preserving blocks of
  decomposition}~\cite{TrotignonV12} for $J$ are the graphs $H_i$ with
$i\in\{1,2\}$ consisting of $H[V_i]$, a $p_j$-vertex $x_jy_j$-path
with $j=3-i$, edges $xx_j$ for all vertices $x$ of $X_i$, and edges
$yy_j$ for all vertices $y$ of $Y_i$.  See
Figure~\ref{figure:figure12}(b) for an example.

\begin{lem}[{Trotignon and Vu\v{s}kovi\'{c}~\cite[Lemma~3.8]{TrotignonV12}}]
\label{lemma:lemma6.21}
Let $H_1$ and $H_2$ be the parity-preserving blocks of decomposition
for a $2$-join of an $m$-edge star-cutset-free connected graph $H$.
\begin{enumerate}[leftmargin=*]
\item 
Both $H_1$ and $H_2$ are star-cutset-free.

\item 
Both $H_1$ and $H_2$ are even-hole-free if and only if $H$ is
even-hole-free.
\end{enumerate}
\end{lem}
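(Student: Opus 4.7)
The plan is to prove both statements by a case analysis that lifts structure in the blocks $H_1,H_2$ back to $H$, using the two parts of Lemma~\ref{lemma:lemma6.18} and the fact that $p_j\in\{4,5\}$ has the same parity as $|V(P_j)|$, so any cycle of $H$ passing through an induced $X_jY_j$-path in $H[V_j]$ of the same parity as $P_j$ becomes a cycle of the same parity in $H_i$ after that subpath is replaced by the added $p_j$-vertex path.

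For statement~(1), I assume for contradiction that $H_1$ admits a star-cutset $S_1\subseteq N_{H_1}[s_1]$ and construct a star-cutset of $H$. I split on the location of $s_1$ in $H_1$: (a) $s_1$ is internal to the added $x_2y_2$-path, (b) $s_1\in\{x_2,y_2\}$, (c) $s_1\in V_1\setminus(X_1\cup Y_1)$, and (d) $s_1\in X_1\cup Y_1$. In each case I pick a center $s$ in $H$ and define $S\subseteq N_H[s]$ by replacing the portion of $S_1$ lying in the added path by an appropriate subset of $V_2$ such as $X_2\cup N_H(s)$ or $Y_2\cup N_H(s)$. Lemma~\ref{lemma:lemma6.18}\eqref{item2:lemma6.18} supplies a non-neighbor of $s$ in the opposite end of the 2-join that keeps the replacement strictly within $N_H[s]$, while Lemma~\ref{lemma:lemma6.18}\eqref{item1:lemma6.18} ensures that two components of $H_1-S_1$ cannot be rejoined through a walk in $V_2\setminus S$, so $H-S$ has strictly more components than $H$ does, contradicting the star-cutset-freeness of $H$.

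For statement~(2), I will exhibit a parity-preserving correspondence between even holes of $H$ that cross the 2-join and even holes of $H_i$ that pass through the added path. For the only-if direction, suppose $H$ has an even hole $C$ and $V(C)\not\subseteq V_i$ for either $i$. Since $E(H)$ contains no edges between $V_1$ and $V_2$ outside $X_1\times X_2$ and $Y_1\times Y_2$, and any cycle using two $X_1X_2$-edges would have a forbidden chord coming from the complete bipartite join, $C$ uses exactly one $X_1X_2$-edge and one $Y_1Y_2$-edge, so $C\cap V_i$ is an induced $X_iY_i$-path $Q_i$ in $H[V_i]$. Replacing $Q_j$ with the added $p_j$-vertex path yields an induced cycle $C'$ of $H_i$ of the same parity as $C$ by the parity claim below, hence an even hole of $H_i$. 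The if direction is symmetric: an even hole of $H_i$ either lies entirely in $H[V_i]\subseteq H$ or uses the added path, in which case replacing the added path with $P_j$ in $H$ yields an induced cycle of $H$ of the same parity.

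The main obstacle is the parity claim needed in both directions of statement~(2): every induced $X_jY_j$-path of $H[V_j]$ has the same parity as $P_j$. I will prove this by assuming two such paths $Q,Q'$ of opposite parities and using Lemma~\ref{lemma:lemma6.18}\eqref{item2:lemma6.18} to produce, inside $H[V_{3-j}]$, an induced $X_{3-j}Y_{3-j}$-path $R$ whose endpoints have no shortcutting adjacencies into $Q\cup Q'$; concatenating $R$ with $Q$ and with $Q'$ across the 2-join then gives two holes of $H$ of opposite parities, forcing $H$ to contain an even hole. The boundary cases of statement~(1) also require care to rule out that the constructed $S$ accidentally becomes all of $N_H[s]$ on one side of the join with the other side then reconnected through $V_{3-i}$; again Lemma~\ref{lemma:lemma6.18}\eqref{item2:lemma6.18} supplies the non-neighbor needed to avoid this degeneracy.
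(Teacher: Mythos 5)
This lemma is cited from Trotignon and Vu\v{s}kovi\'{c}~\cite[Lemma~3.8]{TrotignonV12}; the paper does not include its own proof, so there is no internal argument to compare against. Evaluating your attempt on its own terms: the overall strategy (lift holes and star-cutsets between $H$ and the blocks $H_1,H_2$) is the right one, and the observation that a hole of $H$ crossing the $2$-join meets each of $X_i,Y_i$ in exactly one vertex is correct and essential.

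There is, however, a genuine gap in how the parity claim is used. You state the claim as ``every induced $X_jY_j$-path of $H[V_j]$ has the same parity as $P_j$'' but your proof of it derives only the conditional statement ``if two such paths have opposite parities, then $H$ has an even hole.'' That conditional form is fine for the direction where you assume $H$ is even-hole-free, and in fact that direction does not need the claim at all: an even hole of $H_i$ through the marker path, when the marker is replaced by $P_j$, keeps its parity automatically since $p_j\equiv|V(P_j)|\pmod 2$ by construction. The problem is the other direction, where you start from an even hole $C=Q_1\cup Q_2$ of $H$ and want an even hole of some $H_i$. There you are operating precisely in the regime where $H$ has an even hole, so your argument for the parity claim buys you nothing, and if $|V(Q_j)|\not\equiv|V(P_j)|\pmod 2$ for the side $j$ you replace, the substituted cycle in $H_i$ has the wrong parity. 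You need to close this case explicitly, e.g.: if $|V(Q_j)|\not\equiv|V(P_j)|$ for both $j$, then together with $|V(Q_1)|\equiv|V(Q_2)|$ this forces $|V(P_1)|\equiv|V(P_2)|$, so the hole $P_1\cup P_2$ of $H$ is even and its image $P_i\cup(\text{the $p_j$-vertex marker path})$ in $H_i$ has length $|V(P_i)|+p_j\equiv|V(P_1)|+|V(P_2)|\equiv 0\pmod 2$, giving the desired even hole of $H_i$. A smaller imprecision: the claim as phrased over ``every induced $X_jY_j$-path'' is too broad, since a path with two vertices in $X_j$ cannot be concatenated with anything in $V_{3-j}$ to form a hole (the join is complete bipartite on the marker sets); it should be restricted to paths meeting $X_j$ and $Y_j$ in exactly one vertex each, which is the case for $P_j$ and for any $Q_j$ coming from a hole. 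Finally, the star-cutset part is stated only at the level of a case split without the verification that the replacement set actually separates $H$; the idea is sound (replace the part of $S_1$ in the marker path by $X_2$ or $Y_2$ inside $N_H[s]$, and use Lemma~\ref{lemma:lemma6.18}\eqref{item1:lemma6.18} to see that $V_2$ cannot reconnect the pieces), but as written it is too thin to check.
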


\begin{lem}[{Chang and Lu~\cite[Lemma~4.12]{ChangL15}}]
\label{lemma:lemma6.22}
Each of the parity-preserving blocks of decomposition for a $2$-join
for an $n$-vertex $m$-edge star-cutset-free connected graph has at
most $n$ vertices and $m$ edges.
\end{lem}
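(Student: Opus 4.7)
The plan is to compute $|V(H_i)|$ and $|E(H_i)|$ directly from the construction of the parity-preserving block and then verify the two required inequalities by case analysis on $|V(P_j)|$. Since $H_i$ is obtained by attaching a $p_j$-vertex $x_jy_j$-path to $H[V_i]$ and adding star edges from $x_j$ to $X_i$ and from $y_j$ to $Y_i$, one has $|V(H_i)| = |V_i| + p_j$ and $|E(H_i)| = |E(H[V_i])| + (p_j-1) + |X_i| + |Y_i|$. Using $n = |V_1| + |V_2|$ and $m = |E(H[V_1])| + |E(H[V_2])| + |X_1||X_2| + |Y_1||Y_2|$, the asserted bounds reduce respectively to (A)~$p_j \leq |V_j|$ and (B)~$p_j - 1 \leq |E(H[V_j])| + |X_i|(|X_j|-1) + |Y_i|(|Y_j|-1)$.

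Both inequalities follow easily from the baseline $|V_j| \geq |V(P_j)| + 1$, which I would derive by noting that $P_j$ is itself a vertex-minimal induced $X_jY_j$-path, while the 2-join definition forbids $H[V_j]$ from being such a path; since $P_j$ is induced, the inclusion $V(P_j) \subseteq V_j$ must be strict. Together with the trivial $|E(H[V_j])| \geq |V(P_j)| - 1$, this dispatches the case $|V(P_j)| \geq 4$ immediately, since $p_j \in \{4,5\}$.

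For $|V(P_j)| = 2$, writing $P_j$ as an edge $xy$ with $x \in X_j$ and $y \in Y_j$, Lemma~\ref{lemma:lemma6.18}\eqref{item2:lemma6.18} supplies a non-neighbor $y' \in Y_j \setminus \{y\}$ of $x$ and a non-neighbor $x' \in X_j \setminus \{x\}$ of $y$; disjointness of $X_j$ and $Y_j$ then makes $x, y, x', y'$ four distinct vertices of $V_j$, so $|V_j| \geq 4 = p_j$ and $|X_j|, |Y_j| \geq 2$, which is more than enough for~(B).

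The main obstacle is the case $|V(P_j)| = 3$, in which $p_j = 5$ but the baseline only gives $|V_j| \geq 4$. Writing $P_j$ as the 2-edge path $xzy$ with $z \in V_j \setminus (X_j \cup Y_j)$, I would assume for contradiction that $|V_j| = 4$ with a fourth vertex $w$ and derive a star-cutset of $H$. The leverage is that any vertex of $V_j \setminus (X_j \cup Y_j)$ has no neighbor in $V_i$ by the 2-join definition, so its closed neighborhood in $H$ is confined to $V_j$; combined with Lemma~\ref{lemma:lemma6.18}, a case analysis over whether $w$ lies in $X_j$, in $Y_j$, or outside both exhibits a vertex $v \in V_j$ such that $H - N_H[v]$ is disconnected (isolating $w$, or a vertex of $V_i$ cut off from $X_i$ or $Y_i$), contradicting star-cutset-freeness. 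This forces $|V_j| \geq 5 = p_j$, and Lemma~\ref{lemma:lemma6.18}\eqref{item1:lemma6.18} applied to the resulting $\geq 5$ vertices then yields enough edges in $H[V_j]$ to settle~(B).
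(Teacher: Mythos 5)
The paper does not prove this lemma; it is imported verbatim from Chang and Lu~\cite[Lemma~4.12]{ChangL15}, so there is no in-paper proof to compare against. What follows is an assessment of your proposal on its own terms.

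Your accounting of $|V(H_i)|$ and $|E(H_i)|$ and the reduction to inequalities~(A) and~(B) is correct, as is the baseline $|V_j|\ge |V(P_j)|+1$ (if $V(P_j)=V_j$ then, $P_j$ being induced, $H[V_j]=P_j$ would itself be a minimal $X_jY_j$-path, violating item~(3) of the $2$-join definition). The $|V(P_j)|\ge 4$ and $|V(P_j)|=2$ cases are handled correctly; in the latter, Lemma~\ref{lemma:lemma6.18}\eqref{item2:lemma6.18} indeed gives the four distinct vertices and forces $|X_j|,|Y_j|\ge 2$. For the crux $|V(P_j)|=3$ you sketch the right idea, and the star-cutset does materialize when $|V_j|=4$: writing $V_j=\{x,z,y,w\}$ with $P_j=xzy$, if $w\notin X_j\cup Y_j$ then $V(P_j)\subseteq N_H[z]$ isolates $w$ in $H-V(P_j)$; if $w\in X_j$ then (noting $w\not\sim y$ by minimality of $P_j$, and $N_H(w)\subseteq\{x,z\}\cup X_i$ by the $2$-join edge rule) the set $\{x,z\}\cup X_i\subseteq N_H[x]$ isolates $w$; the $w\in Y_j$ case is symmetric. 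So $|V_j|\ge 5=p_j$, settling~(A).

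The only place where you wave your hands more than is safe is the final sentence, which claims Lemma~\ref{lemma:lemma6.18}\eqref{item1:lemma6.18} alone ``yields enough edges in $H[V_j]$'' for~(B). If $H[V_j]$ is connected this is fine ($|E(H[V_j])|\ge |V_j|-1\ge 4$), but $H[V_j]$ may be disconnected: nothing rules out a second component $\{w_1,w_2\}$ with $w_1\in X_j$, $w_2\in Y_j$, giving only $|E(H[V_j])|\ge 3$. You then genuinely need the cross terms you wrote into~(B): with $k$ components, Lemma~\ref{lemma:lemma6.18}\eqref{item1:lemma6.18} plus disjointness of $X_j,Y_j$ forces $|X_j|,|Y_j|\ge k$, so
\[
|E(H[V_j])|+|X_i|(|X_j|-1)+|Y_i|(|Y_j|-1)\ \ge\ (|V_j|-k)+(k-1)+(k-1)\ =\ |V_j|+k-2\ \ge\ p_j-1,
\]
using $|V_j|\ge p_j$ and $k\ge 1$. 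This uniform inequality actually closes~(B) in all cases at once, so you may as well state it that way rather than treating the edge count and the cross terms as separate resources. With that patch the proof is complete and correct.
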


Graph $H$ is an {\em extended clique tree}~\cite{daSilvaV13} if there
is a set $S$ of two or fewer vertices of $H$ such that each
biconnected component of $H-S$ is a clique.  It takes $O(mn^2)$ time
to determine whether an $n$-vertex $m$-edge graph is an extended
clique tree.

\begin{lem}[{Chang and Lu~\cite[Lemma~4.6]{ChangL15}}]
\label{lemma:lemma6.23}
It takes $O(n^4)$ time to detect even holes in an $n$-vertex connected
extended clique tree.
\end{lem}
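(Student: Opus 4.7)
The plan is to exploit that a connected extended clique tree $H$ admits a set $S \subseteq V(H)$ with $|S| \le 2$ such that $B = H - S$ is a block graph (every biconnected component is a clique); such an $S$ can be found in $O(n^4)$ time by trying each candidate and computing biconnected components. The key structural fact I would use is that in a block graph, between any two vertices $u, v$ there is a \emph{unique} induced path $\pi(u,v)$: its vertex sequence is $u, c_1, c_2, \ldots, c_{k-1}, v$, where $c_1, \ldots, c_{k-1}$ are the successive cut vertices on the unique block-tree path from $u$ to $v$, and within each traversed clique the induced path is forced to use a single edge. After computing the block-cut tree of $B$, I would precompute $\pi(u,v)$ for all $O(n^2)$ pairs in $O(n^3)$ time, storing the length and vertex sequence.

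Since block graphs are chordal, every hole of $H$ must meet $S$, so I would case-analyze on $|V(C) \cap S| \in \{1,2\}$ for any candidate even hole $C$. If $V(C) \cap S = \{s\}$, then $C = \{s\} \cup V(\pi(u,v))$ with $u, v \in N_H(s)$, $u \ne v$, $|V(\pi(u,v))|$ odd and at least $3$, and the interior of $\pi(u,v)$ avoiding $N_H(s)$; iterating over the $O(n^2)$ pairs and checking each in $O(n)$ gives $O(n^3)$. If $V(C) \cap S = \{s_1, s_2\}$ with $s_1 s_2 \in E(H)$, then $C = \{s_1, s_2\} \cup V(\pi(u, v))$ with $u \in N_H(s_1) \setminus N_H[s_2]$, $v \in N_H(s_2) \setminus N_H[s_1]$, $|V(\pi(u,v))|$ even, and interior of $\pi(u,v)$ avoiding $N_H(s_1) \cup N_H(s_2)$; also $O(n^3)$.

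The delicate case is $V(C) \cap S = \{s_1, s_2\}$ with $s_1 s_2 \notin E(H)$: here $C$ is the union of $s_1, s_2$ and two internally disjoint $s_1 s_2$-paths of the form $s_1, \pi(u_i, v_i), s_2$, where $u_i \in N_H(s_1)$ and $v_i \in N_H(s_2)$ (possibly $u_i = v_i \in N_H(s_1) \cap N_H(s_2)$). The compatibility conditions are: each $\pi(u_i, v_i)$'s interior is disjoint from $N_H(s_1) \cup N_H(s_2)$ and its endpoints do not lie in the opposite neighborhood (so that $s_1, s_2$ each have exactly two hole-neighbors); the two paths are vertex-disjoint and mutually nonadjacent in $B$; $u_1 \ne u_2$ and $v_1 \ne v_2$; and $|V(\pi(u_1,v_1))| + |V(\pi(u_2,v_2))|$ is even. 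To stay within $O(n^4)$, I would enumerate the $O(n^4)$ quadruples $(u_1, v_1, u_2, v_2)$ and reduce each check to $O(1)$ amortized via two precomputed tables: a per-pair ``interior-safe'' flag (in $O(n^3)$), and a ``disjoint-and-nonadjacent'' indicator over pairs of pairs (in $O(n^4)$), built from the block-tree characterization that two induced paths in $B$ are vertex-disjoint and mutually nonadjacent iff their block-tree sub-paths traverse disjoint sets of blocks.

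The main obstacle is this last case: correctly handling the combinatorial degeneracies (a path collapsing to a single vertex, endpoints coinciding with interior cut vertices of the other path, endpoints lying in $N_H(s_1) \cap N_H(s_2)$) and verifying that the block-tree characterization of ``vertex-disjoint and pairwise nonadjacent'' is sharp enough to populate the $O(n^4)$ table in $O(n^4)$ time. With those tables in place, the $O(n^4)$ quadruple scan runs in $O(n^4)$, which combined with the two easier cases yields the $O(n^4)$ bound overall.
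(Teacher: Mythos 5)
First, note that the paper does not prove this lemma at all: it is imported verbatim from Chang and Lu~\cite[Lemma~4.6]{ChangL15}, so there is no in-paper proof to compare against and your argument has to stand on its own. Your overall strategy is sound: finding $S$ by brute force, observing that $B=H-S$ is chordal with a \emph{unique} induced path between any two vertices, and splitting on $|V(C)\cap S|\in\{1,2\}$ is a correct decomposition, and your first two cases (one vertex of $S$ on the hole, or two adjacent ones) are complete and within budget.

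The genuine gap is in the third case, and it is not only the degeneracies you flag but the compatibility criterion itself. The claim that two induced paths of $B$ are vertex-disjoint and mutually nonadjacent iff their block-tree sub-paths traverse disjoint sets of blocks is false as stated: take blocks $B_1=\{a,b,x\}$ and $B_2=\{x,c,d\}$ sharing the cut vertex $x$; then $\pi(a,x)$ traverses only $B_1$ and $\pi(c,d)$ traverses only $B_2$, yet $x$ is adjacent to both $c$ and $d$. The correct criterion is that no block of $B$ contains a vertex of each path, i.e.\ the sets of blocks \emph{containing any vertex} of the two paths are disjoint, and a cut-vertex endpoint contributes every block incident to it. With that correction the sets can have size $\Theta(n)$, so populating the $O(n^4)$-entry table by set intersection costs $O(n^5)$, not $O(n^4)$ --- exactly the step you leave unverified. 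The case is salvageable: for each fixed $(u_1,v_1)$ you can mark $N_B[V(\pi(u_1,v_1))]$, preprocess the block-cut tree so that ``does the tree path underlying $\pi(u_2,v_2)$ avoid all marked vertices'' is an $O(1)$ query, and thus handle all $O(n^2)$ second pairs in $O(n^2)$ time, giving $O(n^4)$ overall. But as written, the criterion is wrong and the $O(n^4)$ bound for the table is unsubstantiated, so the proof is incomplete at its most delicate point.
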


\begin{proof}[Proof of Lemma~\ref{lemma:lemma6.12}]
Let $W(H)$ consist of the $v\in V(H)$ with $|N_H(v)|\geq 3$.  Let
$h(H)=|V(H)|+|W(H)|$.  We first prove the claim that if $H_1$ and
$H_2$ are the parity-preserving blocks of decomposition for a $2$-join
$(V_1,V_2,X_1,X_2,Y_1,Y_2)$ of a star-cutset-free connected graph $H$,
then (a) $X_i\cup V_j$, $Y_i\cup V_j$, or $X_i\cup Y_i\cup V_j$ with
$\{i,j\}=\{1,2\}$ induces a $6$-hole of $H$ or (b) we have
\begin{eqnarray}
h(H_1)+h(H_2)&\leq&h(H)+14\label{eq:eq1}\\
\max\{h(H_1),h(H_2)\}&\leq& h(H)-1.\label{eq:eq2}
\end{eqnarray}
By definition of $H_i$ and $H_j$ with $\{i,j\}=\{1,2\}$, (i) if $v\in
V_i$, then $|N_{H_i}(v)|\leq |N_{H}(v)|$, (ii) if $x_j\in W(H_i)$,
then $X_j\subseteq W(H)$, and (iii) if $y_j\in W(H_i)$, then
$Y_j\subseteq W(H)$.  Thus, $|W(H_i)|\leq |W(H)|$.  By
Lemma~\ref{lemma:lemma6.22}, $h(H_i)\leq h(H)$.  By
$|V(H_i)|=|V_i|+p_j\leq |V_i|+5$ and $W(H_i)\setminus W(H)\subseteq
\{x_j,y_j\}$, Equation~\eqref{eq:eq1} holds.  To see
Equation~\eqref{eq:eq2}, assume $h(H_i)=h(H)$, implying
\begin{eqnarray}
|V(H_i)|&=&|V(H)|\label{eq:eq3}\\
|W(H_i)|&=&|W(H)|\label{eq:eq4}.
\end{eqnarray}
By $|V(H_i)|=|V(H)|-|V_j|+p_j$ and Equation~\eqref{eq:eq3},
$|V_j|=p_j$.  If $|V(P_j)|\in\{4,5\}$, then $|V_j|=p_j=|V(P_j)|$
contradicts $H[V_j]\ne P_j$.  By $p_j\in\{4,5\}$, we have
$|V(P_j)|\in\{2,3\}$.

Case~1: $|V(P_j)|=2$.  $|V_j|=p_j=4$.  By
Lemma~\ref{lemma:lemma6.18}\eqref{item2:lemma6.18}, $|X_j|=|Y_j|=2$.
Thus, $|X_i|=|Y_i|=1$ or else $X_j\subseteq W(H)$ or $Y_j\subseteq
W(H)$, contradicting Equation~\eqref{eq:eq4}.  Hence,
$|N_{H_i}(x_j)|=|N_{H_i}(y_j)|=2$. By Equation~\eqref{eq:eq4},
$X_j\cap W(H)=Y_j\cap W(H)=\varnothing$. By
Lemma~\ref{lemma:lemma6.18}\eqref{item1:lemma6.18}, $H[X_i\cup Y_i
  \cup V_j]$ is a $6$-hole.

Case~2: $|V(P_j)|=3$.  $|V_j|=p_j=5$.  Let $Z=V_j\setminus V(P_j)$.
Thus, $Z\cap (X_j\cup Y_j)\ne\varnothing$ or else $V(P_j)$ is a
star-cutset of $H$.  Let $z\in Z \cap X_j$ without loss of generality.
$|X_i|=1$ or else $X_j\subseteq W(H)$ with $|X_j|\geq 2$ contradicts
Equation~\eqref{eq:eq4}.  Hence, $|N_{H_i}(x_j)|=2$, implying $X_j\cap
W(H)=\varnothing$ by Equation~\eqref{eq:eq4}.  By
Lemma~\ref{lemma:lemma6.18}\eqref{item1:lemma6.18},
$|N_{H}(z)|=2$. Let $z'$ be the neighbor of $z$ in $V_j$.  We know
$z'\notin Y_j$ or else $zz'$ is shorter than $P_j$.  By
Equation~\eqref{eq:eq4}, the internal vertex of $P_j$ has degree $2$
in $H$.  Thus, $Z=\{z,z'\}$ and $z'y_j\in E(H)$ by
Lemma~\ref{lemma:lemma6.18}\eqref{item1:lemma6.18}.  $H[X_i\cup V_j]$
is a $6$-hole.

It suffices to prove the lemma for any given $n$-vertex $m$-edge
star-cutset-free connected graph $H_0$.  Let $\HH$ initially consist
of $H_0$.  Repeat the following loop until $\HH=\varnothing$ or the
current $H$ is ensured to contain an even hole: Each iteration starts
with getting a current $H\in\HH$ and deleting $H$ from $\HH$.  If
$w(H)\leq 15$, then detect even holes in $H$ in $O(1)$ time. If $H$ is
even-hole-free, then proceed to the next iteration; otherwise, exit
the loop.  If $w(H)\geq 16$, then apply Lemma~\ref{lemma:lemma6.19} on
$H$ in $O(mn^2)$ time.
\begin{itemize}[leftmargin=*]
\item 
Case~1: $H$ is $2$-join-free.  Determine whether $H$ is an extended
clique tree in $O(mn^2)$ time.  If $H$ is an extended clique tree,
then apply Lemma~\ref{lemma:lemma6.23} to detect even holes in $H$ in
$O(n^4)$ time; otherwise, $H$ contains an even hole by
Lemma~\ref{lemma:lemma6.20}.  If $H$ contains an even hole, then exit
the loop; otherwise, proceed to the next iteration.

\item 
Case~2: $H$ admits a $2$-join $J=(V_1,V_2,X_1,X_2,Y_1,Y_2)$ of $H$.
Spend $O(1)$ time to detect $6$-holes in $H$ from $H[X_i\cup V_j]$,
$H[Y_i\cup V_j]$, or $H[X_i\cup Y_i\cup V_j]$ with $\{i,j\}=\{1,2\}$.
If $H$ contains a $6$-hole, then exit the loop.  Otherwise, add to
$\HH$ the $O(m)$-time obtainable parity-preserving blocks of
decomposition for $J$, each of which has at most $n$ vertices and $m$
edges according to Lemma~\ref{lemma:lemma6.22}, and proceed to the
next iteration.
\end{itemize}
By Lemma~\ref{lemma:lemma6.21}, if the loop stops with an empty $\HH$,
then $H_0$ is even-hole-free; otherwise, $H_0$ contains an even hole.
We bound the number of iterations by $O(n)$ as follows.  Let Case~2
occur $f(h)$ times with $h=h(H_0)$.  By Equations~\eqref{eq:eq1}
and~\eqref{eq:eq2}, if $h\leq 15$, then $f(h)=0$; otherwise,
$$f(h)\leq\max\{1+f(h_1)+f(h_2):h_1,h_2\leq h-1, h_1+h_2\leq h+14\}.$$
By induction on $h$, we prove $f(h)\leq\max(h-15,0)$, which 
holds for $h\leq 15$.  
For~$h\geq 16$, 
\begin{eqnarray*}
f(h)
&\leq&
\max\{1+\max(h_1-15,0)+\max(h_2-15,0): h_1,h_2\leq h-1, h_1+h_2\leq h+14\}\\
&\leq&\max\{\max(h_1+h_2-29,h_1-14,h_2-14,1): h_1,h_2\leq h-1, h_1+h_2\leq h+14\}\\
&\leq&\max(h-15,h-15,h-15,1)\\
&=&\max(h-15,0).
\end{eqnarray*}
Since the number of iterations is $O(h)=O(n)$, the overall running
time is $O(mn^3)$ except for that of applying
Lemma~\ref{lemma:lemma6.23}.  Since each iteration increases the
overall number of vertices of graphs in $\HH$ by $O(1)$, the overall
number of vertices of the graphs in $\HH$ remains $O(n)$ throughout.
Thus, all $O(n)$ iterations of applying Lemma~\ref{lemma:lemma6.23}
take overall $O(n^4)=O(mn^3)$ time.
\end{proof}

\section{Concluding remarks}
\label{section:section7}
We solve the three-in-a-tree problem on an $n$-vertex $m$-edge
undirected graph in $O(m\log^2 n)$ time, leading to improved
algorithms for recognizing perfect graphs and detecting thetas,
pyramids, beetles, and odd and even holes.  It would be interesting to
see if the complexity of the three-in-a-tree problem can be further
reduced.  The amortized cost of maintaining the connectivity
information for the dynamic graph $G-X$ can be improved to $O(\log^2
n/\log\log n)$ using~\cite{Wulff-Nilsen13} or even to $O(\log n
\log\log^{O(1)}n)$ using~\cite{Thorup00}.  Since $G-X$ is purely
decremental, we can use the randomized algorithm in~\cite{Thorup99}
for further speedup.  However, this is not our only $O(\log^2 n)$
bottleneck: At the moment we pay $O(\log n)$ time for each neighbor of
a vertex in $X$ when it changes color, so if it changes color $O(\log
n)$ times, then it will be hard to beat the $O(\log^2 n)$ factor.

\section*{Acknowledgments}
We thank the anonymous reviewers of STOC 2020 and Evangelos Kipouridis
for helpful comments. We thank Ho-Lin Chen and Meng-Tsung Tsai for
commenting on a preliminary version~\cite{Lai2018} of the paper.

\bibliographystyle{abbrv}
\bibliography{tiat}
\end{CJK*}
\end{document}